\documentclass[11pt]{amsart}
\usepackage{amsmath,amssymb,amsthm}
\usepackage{tikz}
\usetikzlibrary{cd}
\usetikzlibrary{lindenmayersystems}
\usepackage{graphicx}
\usepackage[most]{tcolorbox}
\usepackage{tkz-euclide}
\usepackage{fnpct}
\usepackage[margin=0.8in,centering]{geometry}
\usepackage[short-journals,short-months,initials,backrefs]{amsrefs}
\usepackage[colorlinks=true,pagebackref]{hyperref}
\usepackage{enumitem}
\usepackage{caption, subcaption}
\usepackage{listings}

\theoremstyle{plain}
\newtheorem{theorem}{Theorem}
\newtheorem{lemma}{Lemma}[section]
\newtheorem{proposition}[lemma]{Proposition}
\newtheorem{corollary}[lemma]{Corollary}
\newtheorem{conjecture}[lemma]{Conjecture}

\theoremstyle{definition}
\newtheorem{definition}[lemma]{Definition}

\newtheorem{openquestion}{Open Question}

\theoremstyle{remark}
\newtheorem{remark}[lemma]{Remark}
\newtheorem*{remark*}{Remark}

\usetikzlibrary{arrows}
\newcommand{\midarrow}{\tikz \draw[-triangle 90] (0,0) -- (.3,0);}
\usetikzlibrary{arrows}
\tikzset{
  arrow/.pic={\path[tips,every arrow/.try,->,>=#1] (0,0) -- +(.1pt,0);},
  pics/arrow/.default={triangle 90}
}

\def\VR{\kern-\arraycolsep\strut\vrule &\kern-\arraycolsep}
\def\vr{\kern-\arraycolsep & \kern-\arraycolsep}
\usepackage{setspace}
\newcommand{\mult}{\textup{mult}}
\usepackage{anyfontsize}
\usepackage{setspace}

\usepackage{listings}
\usepackage{color} 
\definecolor{mygreen}{RGB}{28,172,0} 
\definecolor{mylilas}{RGB}{170,55,241}

\newtcolorbox{Box2}[2][]{
                lower separated=false,
                colback=white,
colframe=black,fonttitle=\bfseries,
colbacktitle=black,
coltitle=white,
enhanced,
attach boxed title to top left={yshift=-0.075in,xshift=0.15in},
                 boxed title style={boxrule=0pt,colframe=white,},
title=#2,#1}

\title[Magnetic spectral decimation on the Sierpinski gasket]{Spectral decimation of the magnetic Laplacian on the Sierpinski gasket: Solving the Hofstadter-Sierpinski butterfly}

\author{Joe P.\@ Chen}
\address[Joe P.\@ Chen]{Department of Mathematics, Colgate University, Hamilton, NY 13346, USA.}
\email{jpchen@colgate.edu}
\urladdr{\url{http://math.colgate.edu/~jpchen}}

\author{Ruoyu Guo}
\address[Ruoyu Guo]{Department of Mathematics, Colgate University, Hamilton, NY 13346, USA.}
\curraddr{\textsc{Department of Mathematics, Brandeis University, Waltham, MA 02453, USA.}}
\email{rguo@brandeis.edu}

\thanks{This work is an outgrowth of the High Honors Bachelor's thesis by the second-named author at Colgate University, advised by the first-named author. We acknowledge partial financial support from the Research Council of Colgate University, the Simons Foundation (Collaboration Grant for Mathematicians \#523544), and the National Science Foundation (DMS-1855604).
}

\begin{document}
\renewcommand{\theequation}{\thesection.\arabic{equation}}
\numberwithin{equation}{section}

\begin{abstract}
The magnetic Laplacian (also called the line bundle Laplacian) on a connected weighted graph is a self-adjoint operator wherein the real-valued adjacency weights are replaced by unit complex-valued weights $\{\omega_{xy}\}_{xy\in E}$,  satisfying the condition that $\omega_{xy}=\overline{\omega_{yx}}$ for every directed edge $xy$.
When properly interpreted, these complex weights give rise to magnetic fluxes through cycles in the graph.

In this paper we establish the spectrum of the magnetic Laplacian, as a set of real numbers with multiplicities, on the Sierpinski gasket graph ($SG$) where the magnetic fluxes equal $\alpha$ through the upright triangles, and $\beta$ through the downright triangles. 
This is achieved upon showing the spectral self-similarity of the magnetic Laplacian via a 3-parameter map $\mathcal{U}$ involving non-rational functions, which takes into account $\alpha$, $\beta$, and the spectral parameter $\lambda$.
In doing so we provide a quantitative answer to a question of Bellissard [\emph{Renormalization Group Analysis and Quasicrystals} (1992)] on the relationship between the dynamical spectrum and the actual magnetic spectrum.

Our main theorems lead to two applications. In the case $\alpha=\beta$, we demonstrate the approximation of the magnetic spectrum by the filled Julia set of $\mathcal{U}$, the Sierpinski gasket counterpart to Hofstadter's butterfly. Meanwhile, in the case $\alpha,\beta\in \{0,\frac{1}{2}\}$, we can compute the determinant of the magnetic Laplacian and the corresponding asymptotic complexity.
\end{abstract}

\date{\today}

\keywords{Spectral decimation, Schur complement, magnetic Laplacian, fractals, Hofstadter butterfly, Julia set, Laplacian determinants, matrix-tree theorem, cycle-rooted spanning forests, asymptotic complexity}

\subjclass[2010]{
05C50; 
11C20; 
32M25; 
37F50; 
47A10; 
58J50; 
82D40. 
}

\maketitle
\tableofcontents


\section{Introduction and main results}
\label{sec:intro}

Spectral analysis of the magnetic Laplacian on a planar lattice has both theoretical and practical implications.
The famous ``Hofstadter's butterfly'' \cite{hofstadter} describes the energy spectrum of a noninteracting electron gas moving on the planar integer lattice under a uniform magnetic field, \emph{i.e.,} the magnetic flux through every square cell is constant.
Understanding the fractal nature of this spectrum involves the interplay of analysis, geometry, topology, and number theory.

We can also study the magnetic spectrum on other periodic or quasi-periodic planar graph under a uniform magnetic field.
For instance, we can replace the square lattice by the triangular lattice. From the triangular lattice, we can remove vertices and attached edges in such a way that the remainder is an infinite blow-up of the Sierpinski gasket graph ($SG$); see Figure \ref{fig:SGgraph1} for an informal construction, and see Figure \ref{fig:SGgraph2} for the nesting property of $SG$ graphs. Formal definition of $SG$ is given in \S\ref{sec:DefSG}.

\def\trianglewidth{2cm}
\pgfdeclarelindenmayersystem{Sierpinski triangle}{
    \symbol{X}{\pgflsystemdrawforward}
    \symbol{Y}{\pgflsystemdrawforward}
    \rule{X -> X-Y+X+Y-X}
    \rule{Y -> YY}
}
\begin{figure}
\begin{center}
\foreach \level in {0,...,4}{
\tikzset{
    l-system={step=\trianglewidth/(2^\level), order=\level, angle=-120}
}
\begin{tikzpicture}[scale=1.2]
    \draw ++(0:\trianglewidth) -- ++(120:\trianglewidth) -- (0,0);
    \draw (0,0) l-system
    [l-system={Sierpinski triangle, axiom=X},fill=white];
    \draw (1,-0.5) node {$\mathfrak{G}_{\level}$};
\end{tikzpicture}
}
\end{center}
\caption{The (unscaled) Sierpinski gasket graphs from $\mathfrak{G}_0$ to $\mathfrak{G}_4$.}
\label{fig:SGgraph1}
\end{figure}
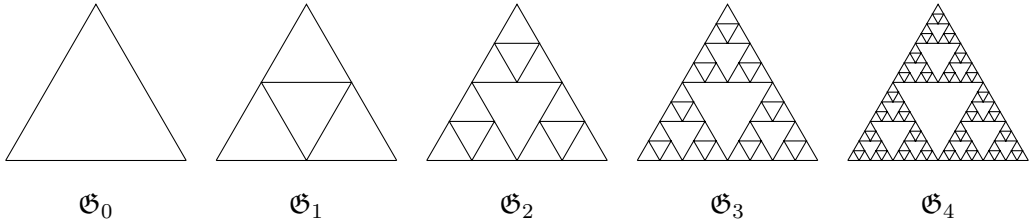

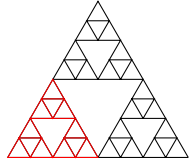
\begin{figure}
\begin{center}
\tikzset{
    l-system={step=\trianglewidth/8, order=3, angle=-120}
}
\begin{tikzpicture}[scale=1.2]
    \draw ++(0:\trianglewidth) -- ++(120:\trianglewidth) -- (0,0);
    \draw (0,0) l-system
    [l-system={Sierpinski triangle, axiom=X},fill=white];
    \draw [red] (1,0) -- (60:1) -- (0,0) -- cycle;
    \draw [red] (0.5,0) -- ++(60:0.5) -- +(-0.5,0) -- cycle;
    \draw [red] (0.25,0) -- ++(60:0.25) -- +(-0.25,0) -- cycle;
    \draw [red] (0.75,0) -- ++(60:0.25) -- +(-0.25,0) -- cycle;
    \draw [red] (60:0.75) -- ++(-60:0.25) -- +(60:0.25) -- cycle;
\end{tikzpicture}
\end{center}
\caption{The graph {\color{red}$G_{N-1}$} is nested in $G_N$}
\label{fig:SGgraph2}
\end{figure}

By trading a periodic crystal (the square lattice) for another graph without translational invariance ($SG$), it is natural to ask the following question: how does the magnetic spectrum change?
In fact, the problem of computing the magnetic spectrum on $SG$ started in the early 1980s \cites{DABK83,Rammal, A84, Ghez}.
Already then the authors have identified the ``nesting mechanism'' for generating the spectrum recursively, and pointed out the existence of localized eigenfunctions associated with certain ``exceptional'' eigenvalues.
Probably the most important claim made was that the magnetic spectrum is given by an analog of Hofstadter's butterfly shown in \cite{Ghez}*{Figure 2},\footnote{See also \cite{quasicrystals}*{Figure 2} for what appears to be a higher-resolution picture of \cite{Ghez}*{Figure 2}, though it is the authors' opinion that the two pictures have major differences.} which already differs qualitatively from the original butterfly on the square lattice.
We refer the reader to Bellissard's survey \cite{quasicrystals} for an overview of spectral problems on quasi-periodic lattices and renormalization group methods, which includes a discussion of the magnetic spectral problem on $SG$.

From the mathematics perspective, the aforementioned nesting mechanism can be formalized into an abstract framework known as \textbf{spectral decimation}.
Details are provided in \S\ref{sec:specdec} below.
Regarding its applicability, we note the well-known results of Fukushima-Shima \cite{FukushimaShima} and Shima \cite{Shima}, which establish the Laplacian spectrum (under zero magnetic field) on the scaling limit of certain self-similar fractals such as $SG$.
As for the infinite $SG$ lattice, the first complete characterization of the Laplacian spectrum was attained by Teplyaev \cite{Teplyaev}, based on an abstract formulation of spectral decimation by him and Malozemov \cite{Malozemov}.
The said techniques have since been applied to obtaining Laplacian spectra on a variety of fractals.
There are too many subsequent works to list here in this introduction, but we single out the pedagogically influential paper \cite{3n-gasket}.

Analysis of magnetic Laplacians on fractals has seen renewed interest in 2010s.
Probably closest to our present work is that of Hyde, Kelleher, Moeller, Rogers, and Seda \cite{Hyde17}, where they obtained the spectrum on $SG$ in which the magnetic $1$-form is locally exact and there is nonzero flux through only a finite number of triangles.
Another fractal graph whose magnetic spectrum can be solved exactly is the ``diamond fractal'' \cite{diamond}.
On the functional analytic side, we would like to mention recent results on the closability and self-adjointness of, and a Feynman-Kac formula corresponding to, magnetic Laplacians on compact fractal spaces \cites{HT13, H15, HR16} (or more generally, resistance spaces in the sense of Kigami \cite{Kigami}).

Despite the aforementioned progress, the original problem of identifying the spectrum of the magnetic Laplacian on $SG$ under a uniform magnetic field, posed more than 30 years ago \cites{Rammal, DABK83, A84}, does not have a complete mathematical solution. 
The outstanding issue reads, according to Bellissard \cite{quasicrystals}*{p.\@128}: \emph{``Is the \textbf{dynamical spectrum} equal to the \textbf{actual spectrum} of the original operator? This is a question with no answer yet.''} 
Here the \textit{dynamical spectrum} refers to the \textit{(filled) Julia set} of a certain dynamical system, while the \textit{actual spectrum} refers to the spectrum of the magnetic Laplacian operator.


\textbf{The main purpose of this work} is to provide a full solution to this long-standing problem.
Via the aforementioned \textbf{spectral decimation} (see \S\ref{sec:specdec} for details), we establish the magnetic spectrum on $SG$ as a set of real numbers with multiplicities, when the flux through each upright triangle (resp.\@ downright triangle) equals $\alpha$ (resp.\@ $\beta$), for any $\alpha, \beta \in [0,1)$.
We not only identify the portion of the spectrum which is generated recursively via a 3-parameter map, but also resolve the other portion of the spectrum which does not arise from the recursive mechanism (\emph{i.e.,} the values which lie in the \textbf{exceptional set} for spectral decimation).

In order to describe our results in more detail, we provide some definitions.

\subsection{The Sierpinski gasket}
\label{sec:DefSG}

Let $x_0=(0,0)$, $x_1=(\frac{1}{2}, \frac{\sqrt{3}}{2})$, and $x_2=(1,0)$ be the vertices of a unit equilateral triangle in $\mathbb{R}^2$, and $\mathfrak{G}_0$ be the complete graph on the vertex set $V_0 = \{x_0,x_1,x_2\}$. 
We introduce three contracting similitudes $\Phi_i : \mathbb{R}^2 \to\mathbb{R}^2$, $\Phi_i(x) = \frac{1}{2}(x-x_i) + x_i $ for each $i\in \{0,1,2\}$. 
The Sierpinski gasket fractal $K$ is the unique nonempty compact set $K$ such that $K= \bigcup_{i=0}^2 \Phi_i(K)$.
To obtain the associated level-$N$ pre-fractal graph $\mathfrak{G}_N$, $N\geq 1$, we define by induction $\mathfrak{G}_N = \bigcup_{i=0}^2 \Phi_i(\mathfrak{G}_{N-1})$.
To make all edges of the graph have unit length, we set $G_N:= 2^N \mathfrak{G}_N$, where for $\alpha >0$ and $\Omega \subset \mathbb{R}^2$ we denote $\alpha \Omega:= \{ \alpha x: x\in \Omega\}$ (see Figure \ref{fig:SGgraph2}).
The (one-sided) Sierpinski gasket graph $SG$ is then defined to be the union of a sequence of monotone increasing graphs $\bigcup_{N=0}^\infty G_N$. 
The number of vertices $|V_N|$ in $G_N=(V_N, E_N)$ is easily shown to be $\frac{3^{N+1}+3}{2}$, which we will denote by ${\rm dim}_N$.

\subsection{Magnetic Laplacian}
\label{sec:MagLap}
Let $G=(V,E)$ be a simple, locally finite, connected graph.
The (combinatorial) graph Laplacian on $G$ is $\Delta_G= D_G-A_G$, where $D_G$ and $A_G$ are the degree operator and the adjacency operator, respectively.
Equivalently,
\[
\left(\Delta_G u\right)(x) = \sum_{y\sim x} (u(x)-u(y)), \quad u\in \ell^2(V),
\]
where the sum is over vertices $y$ connected to $x$ by an edge.
Clearly $\Delta_G$ is self-adjoint on $\ell^2(V)$.
Sometimes it is more convenient to normalize the Laplacian by the degree, \emph{i.e.,} to define $\mathcal{L}_G=D_G^{-1} \Delta_G$, or equivalently,
\[
(\mathcal{L}_G u)(x) = \frac{1}{\deg_G(x)}\sum_{y\sim x} (u(x)-u(y)), \quad u \in \ell^2(V).
\]
This is called the probabilistic graph Laplacian, and it is self-adjoint on $L^2(V, \deg)$.
More generally, we introduce a \textbf{conductance} function ${\bf c}: \{\pm E\} \to \mathbb{R}_+$ on the set of oriented edges of $G$, and define the weighted graph Laplacian as
\[
(\mathcal{L}_{(G,{\bf c})} u)(x) = \sum_{y\sim x} {\bf c}_{xy} (u(x)-u(y)), \quad u\in\ell^2(V).
\]
We allow ${\bf c}_{xy} \neq {\bf c}_{yx}$: a natural example is to let ${\bf c}_{xy}$ be the transition probability $p(x,y)$ of an irreducible Markov chain on $G$.

Whereas the adjacency operator contains entries with values $0$ or $1$, we now replace the $1$'s by unit complex numbers to form the magnetic Laplacian.
The motivation behind the definition is from differential geometry.
Place a copy $\mathcal{W}_v$ of $\mathcal{W}=\mathbb{C}$ at each $v\in V$.
We call $\mathcal{W}= \bigoplus_{v\in V} \mathcal{W}_v$ a \emph{complex line bundle} on $G$.
On $\mathcal{W}$ we endow a \emph{unitary connection} $\Phi$ which satisfies the property that for every oriented edge $e=vv'$, the \emph{parallel transport} from $v$ to $v'$, $\phi_{vv'}: \mathcal{W}_v\to\mathcal{W}_{v'}$, is a unitary complex linear map such that $\phi_{v'v} = \phi_{vv'}^{-1}$.
By our choice that $\mathcal{W}_v=\mathbb{C}$, the action of $\phi_{vv'}$ is multiplication by a unit complex number $\omega_{vv'}$, satisfying $\omega_{v'v}=\overline{\omega_{vv'}}$.
From now on we will use the notation $\omega$ to denote a unitary ($U(1)$) connection on the complex line bundle.
We say that two $U(1)$ connections $\omega$ and $\omega'$ are \emph{gauge equivalent} if there exists a unitary map $\psi_v: \mathcal{W}_v \to\mathcal{W}_v$ such that $\psi_{v'} \omega_{vv'} = \omega'_{vv'} \psi_v$, that is, $\psi_v$ induces a change of angle in the unit complex exponentials $\omega_{vv'}$.

We can also extend the definition of the line bundle to $E$.
Place a copy $\mathcal{W}_e$ of $\mathcal{W}=\mathbb{C}$ at each $e\in E$.
Then define a connection isomorphism $\omega_{ve} = \omega_{ev}^{-1}$ for a vertex $v\in V$ and an edge $e$ containing $v$, satisfying the condition that if $e=vv'$, then $\omega_{vv'}= \omega_{ev'} \omega_{ve}$, where $\omega_{vv'}$ is the parallel transport from $v$ to $v'$.

Given a unitary connection $\omega$, we define the corresponding magnetic Laplacian as
\begin{align}
\label{eq:LBLap}
(\mathcal{L}^\omega_{(G,{\bf c})} u)(x) =  \sum_{y\sim x} {\bf c}_{xy} (u(x) - \omega_{xy} u(y)), \quad u\in \ell^2(V).
\end{align}
Another way to express \eqref{eq:LBLap} is through the identity ``Laplacian $=$ div grad.'' 
Let $\Lambda^0(G,\omega)= \ell^2(V)$ and $\Lambda^1(G,\omega) = \{u\in \ell^2(E): u(-e)=-u(e) \text{ for all oriented edges } e\in E\}$ denote, respectively, the space of square-summable $0$-forms and $1$-forms on $(G,\omega)$.
Then
$\mathcal{L}^\omega_{(G,{\bf c})} = d^*d$, where
\begin{align}
d: \Lambda^0(G,\omega) \to \Lambda^1(G,\omega)&, \quad (df)(e) = \omega_{ye} f(y)- \omega_{xe} f(x), \quad e=xy,\\
d^*: \Lambda^1(G,\omega) \to \Lambda^0(G,\omega)&, \quad (d^* \chi)(v) = \sum_{e=v'v} {\bf c}_{vv'} \omega_{ev} \chi(e),
\end{align}
are, respectively, the gradient and the divergence operators.

A sequence $P$ of vertices $\{x_0, x_1, x_2, \dotsc, x_{m-1}, x_m\}$ is a \emph{path} if $x_i \sim x_{i+1}$ for all $i\in \{0,1,\dotsc, m-1\}$.
The product of the parallel transports along $P$, $\omega(P):=\omega_{x_0 x_1} \omega_{x_1 x_2}\dotsc\omega_{x_{m-1} x_m}$, is called the \emph{holonomy} of $\omega$ along $P$.
We are particularly interested in the holonomy of $\omega$ along $P$ when $P$ is a \emph{simple cycle}: by \emph{simple} we mean that $x_i\neq x_j$ for any pair $i,j \in \{0,1,\cdots, m-1\}$, and by \emph{cycle} we mean that $x_m=x_0$.

\begin{definition}
\label{def:flux}
Given an oriented simple cycle $\gamma$ in $(G,\omega)$, the \textbf{magnetic flux} through $\gamma$ is defined as the number $\theta \in [0,1)$ such that the holonomy is $\omega(\gamma) = e^{2\pi i \theta}$.
\end{definition}

\subsection{Spectrum of the magnetic Laplacian on the Sierpinski gasket}

Denote by $\mathcal{L}^\omega_N$ the magnetic Laplacian on the level-$N$ gasket graph $G_N$ endowed with the unitary connection $\omega$, that is,
\begin{align}
\label{eq:SGML}
(\mathcal{L}^\omega_N u)(x) = \sum_{y\sim x}\frac{1}{\deg_{G_N}(x)} (u(x) -\omega_{xy} u(y)), \quad u\in \ell^2(V).
\end{align}
By embedding $SG$ into the plane, we can unambiguously assign the counterclockwise orientation to each simple cycle, and apply Definition \ref{def:flux}.

\begin{definition}
\label{def:flux2}
The magnetic Laplacian $\mathcal{L}^{(\alpha,\beta)}_N$, $\alpha,\beta \in [0,1)$, is defined by \eqref{eq:SGML} assuming that the magnetic flux through each upright (resp.\@ downright) triangle of side length $1$ in the graph distance on $G_N$ equals $\alpha$ (resp.\@ $\beta$), \emph{cf.\@} Figure \ref{fig:fluxevolve}.
\end{definition}

\begin{figure}
\begin{center}
\begin{tikzpicture}[scale=1.2]
\draw (0,0) -- (3,0) -- (60:3) -- cycle;
\draw (1.5,0) -- ++(60:1.5) -- +(-1.5,0) -- cycle;
\draw (0.75,0) -- ++(60:0.75) -- +(-0.75,0) -- cycle;
\draw (2.25,0) -- ++(60:0.75) -- +(-0.75,0) -- cycle;
\draw (60:2.25) -- ++(0.75,0) -- +(-120:0.75) -- cycle;
\draw [dashed] (60:0.75) ++(0.75,0) -- ++(0.75,0) -- ++(120:0.75) -- cycle;
\draw (60:3) -- (60:4.5);
\draw (60:3) -- +(1.5,0);
\draw [loosely dotted, very thick] (60:3) ++(30:0.4) -- +(30:1);
\draw (3,0) -- +(1.5,0);
\draw (3,0) -- +(60:1.5);
\draw [loosely dotted, very thick] (3,0) ++(30:0.4) -- +(30:1);
\draw (1.5,0.4) node {\small $\beta$};
\draw (1.5,0.85) node {\small $\alpha$};
\draw (1.125,1.05) node {\small $\beta$};
\draw (1.875,1.05) node {\small $\beta$};
\draw (1.125,1.5) node {\small $\alpha$};
\draw (1.875,1.5) node {\small $\alpha$};
\draw (1.5,1.7) node {\small $\beta$};
\draw (1.5,2.15) node {\small $\alpha$};
\draw (5,2) node {$\longrightarrow$};
\begin{scope}[xshift=6cm]
\draw (0,0) -- (3,0) -- (60:3) -- cycle;
\draw (1.5,0) -- ++(60:1.5) -- +(-1.5,0) -- cycle;
\draw (60:3) -- (60:4.5);
\draw (60:3) -- +(1.5,0);
\draw [loosely dotted, very thick] (60:3) ++(30:0.4) -- +(30:1);
\draw (3,0) -- +(1.5,0);
\draw (3,0) -- +(60:1.5);
\draw [loosely dotted, very thick] (3,0) ++(30:0.4) -- +(30:1);
\draw (1.5,1.8) node {$\alpha_\downarrow$};
\draw (1.5,0.8) node {$\beta_\downarrow$};
\end{scope}
\end{tikzpicture}
\end{center}
\caption{Left: The magnetic flux through each upright (resp.\@ downright) triangle of side length $1$ equals $\alpha$ (resp.\@ $\beta$). Right: Upon one step of spectral decimation, the magnetic flux through each upright (resp.\@ downright) triangle of side length $2$ equals $\alpha_\downarrow$ (resp.\@ $\beta_\downarrow$), defined in \eqref{eq:alphadown} and \eqref{eq:betadown}, \emph{cf.\@} Proposition \ref{prop:fluxevolve}.}
\label{fig:fluxevolve}
\end{figure}
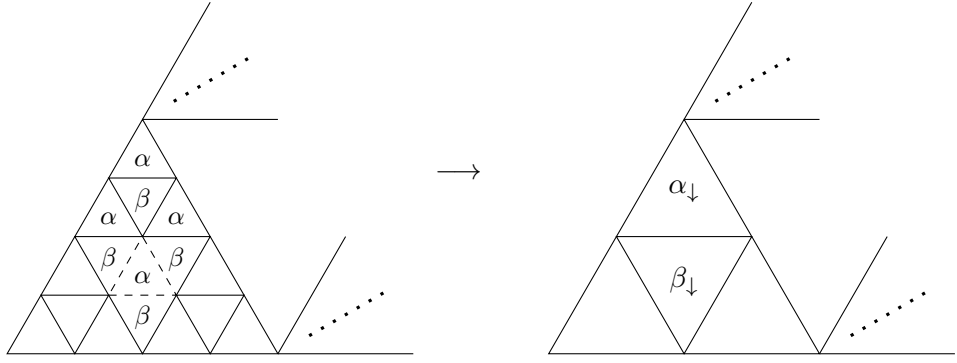

It is easy to check that all the triangles in $SG$ are independent cycles. 
Therefore, for every $N$, there is a well-defined unitary connection $\omega$ on $G_N$ which satisfies Definition \ref{def:flux2}, unique up to gauge equivalence.

\emph{\textbf{Notation.}}
For an operator $\mathcal{L}$, we denote by $\sigma(\mathcal{L}):= \{z\in \mathbb{C}: \mathcal{L}-zI \text{ is not invertible}\}$ the \emph{spectrum} of $\mathcal{L}$. 
The notation $\mult(\mathcal{L}, \lambda)$ (resp.\@ $\mult(P,\lambda)$) represents the \emph{multiplicity} of $\lambda \in \mathbb{C}$ in $\sigma(\mathcal{L})$  (resp.\@ in the zero set of a polynomial function $P$).
In particular, $\mult(\mathcal{L},\lambda)=0$ means that $\lambda \notin \sigma(\mathcal{L})$.
Last but not least, given a function $f: \mathbb{C}\to\mathbb{C}$ and $k\in \mathbb{N}$, we denote by $f^{-k}(a):=\{z\in \mathbb{C}: f^{\circ k}(z)=a\}$ the set of $k$th backward iterates of $a\in \mathbb{C}$ under $f$.
If $f$ is a polynomial of degree $d$, then $f^{-k}(a)$ consists of $d^k$ points, counted with multiplicity.


As mentioned above, we are interested in characterizing $\sigma(\mathcal{L}^{(\alpha, \beta)}_N)$ as a set of real numbers with multiplicities.
To describe our first main result, we introduce the quadratic polynomials
\begin{align*}
&R(0,0,\lambda)=\lambda(5-4\lambda),
\quad R\left(\frac{1}{2},\frac{1}{2},\lambda\right)=-(\lambda-2)(4\lambda-3),\\
&R\left(\frac{1}{2},0,\lambda\right)=-4\lambda^2+9\lambda-3,
 \quad R\left(0,\frac{1}{2},\lambda\right)=-4\lambda^2+7\lambda-1.
\end{align*}
These four polynomials appear as special cases of $R(\alpha,\beta,\lambda)$ in \eqref{eq:R} below.

In the case $\alpha=\beta=0$, \emph{i.e.,} the graph Laplacian, Fukushima and Shima \cite{FukushimaShima} showed that $\sigma(\mathcal{L}_N^{(0,0)})$ consists of the following eigenvalues:
\setlength{\tabcolsep}{10pt}
\renewcommand{\arraystretch}{1.5}
\begin{center}
\begin{tabular}{|c|c|c|}
\hline
Eigenvalue & Condition on $k$ & Multiplicity\\ \hline
$0$ & - & $1$ \\ \hline
${\color{red} \frac{3}{2}}$ & - & $\frac{3^N+3}{2}$ \\  \hline
$(R(0,0,\cdot))^{-k} \left(\frac{3}{4}\right)$ & $k\in \{0,1,\dotsc, N-1\}$ & $\frac{3^{N-k-1}+3}{2}$ \\ \hline
$(R(0,0,\cdot))^{-k} \left(\frac{5}{4}\right)$ & $k\in \{0,1,\dotsc, N-2\}$ & $\frac{3^{N-k-1}-1}{2}$ \\ \hline
\end{tabular}
\end{center}
The eigenvalue colored in {\color{red} red} lies in the exceptional set for spectral decimation. 
In the analysis on fractals literature \cites{3n-gasket, StrichartzBook}, the sets of preimages $\{(R(0,0,\cdot))^{-k}(\frac{3}{4})\}_k$ and $\{(R(0,0,\cdot))^{-k}(\frac{5}{4})\}_k$ are called the \emph{$\frac{3}{4}$-series} and \emph{$\frac{5}{4}$-series}, respectively. 
We will recall this terminology in \S\ref{sec:solution}.
As a sanity check, let us count the eigenvalues listed in the table, noting that $(R(0,0,\cdot))^{-k}(a)$ consists of $2^k$ points counted with multiplicity:
\begin{equation}
1+\frac{3^N+3}{2}+\sum_{k=0}^{N-1}\frac{3^{N-k-1}+3}{2}\cdot 2^k+\sum_{k=0}^{N-2}\frac{3^{N-k-1}-1}{2}\cdot 2^k=\frac{3^{N+1}+3}{2}=\dim_N.
\end{equation}

We claim that such an explicit characterization of $\sigma(\mathcal{L}_N^{(\alpha, \beta)})$ holds for any $\alpha, \beta \in \{0,\frac{1}{2}\}$.

\begin{theorem}
\label{thm:1}
Suppose $\alpha,\beta \in \{0,\frac{1}{2}\}$ but not $\alpha=\beta=0$.
Then $\sigma(\mathcal{L}_N^{(\alpha,\beta)})$ consists of the following eigenvalues counted with multiplicities. (Eigenvalues in {\color{red}red} lie in the exceptional set for spectral decimation.)
\begin{enumerate}[wide]
\item $\sigma(\mathcal{L}_N^{(\frac{1}{2},\frac{1}{2})})$:
\begin{center}
\begin{tabular}{|c|c|c|}
\hline
 Eigenvalue & Condition on $k$ & Multiplicity\\
\hline
{\color{red}$\frac{1}{2}$} & - & $\frac{3^N+3}{2}$ \\
\hline
{\color{red}$\frac{3}{4}$} & - & $\frac{3^{N-1}-1}{2}$ \\
\hline
$\frac{5}{4}$ & - &$\frac{3^{N-1}+3}{2}$ \\
\hline
$2$ & - & $1$ \\
\hline
 $\left(R\left(\frac{1}{2},\frac{1}{2},\cdot\right)\right)^{-1}\circ\left(R(0,0,\cdot)\right)^{-k}\left(\frac{3}{4}\right)$ &$k\in\{0,1,\dotsc,N-2\}$ & $\frac{3^{N-k-2}+3}{2}$  \\
\hline
 $\left(R\left(\frac{1}{2},\frac{1}{2},\cdot\right)\right)^{-1}\circ\left(R(0,0,\cdot)\right)^{-k}\left(\frac{5}{4}\right)$ &$k\in\{0,1,\dotsc,N-3\}$ &$\frac{3^{N-k-2}-1}{2}$ \\
\hline
\end{tabular}
\end{center}

\item $\sigma(\mathcal{L}^{(\frac{1}{2}, 0)}_N)$:
\begin{center}
\begin{tabular}{|c|c|c|}
\hline
Eigenvalue & Condition on $k$  & Multiplicity \\
\hline
{\color{red}$\frac{1}{2}$} &-& $\frac{3^N+3}{2}$ \\
\hline
 $1$ & -& $1$ \\
\hline
{\color{red}$\frac{5}{4}$} &-& $\frac{3^{N-1}-1}{2}$ \\
\hline
$\frac{7}{4}$ &-& $\frac{3^{N-1}+3}{2}$ \\
\hline
$\left(R\left(\frac{1}{2},0,\cdot\right)\right)^{-1}\left(\frac{3}{4}\right)$ &-& $\frac{3^{N-2}-1}{2}$ \\
\hline
$\left(R\left(\frac{1}{2},0,\cdot\right)\right)^{-1}\left(\frac{5}{4}\right)$ &-& $\frac{3^{N-2}+3}{2}$ \\
\hline
 $\left(R\left(\frac{1}{2},0,\cdot\right)\right)^{-1}\circ
\left(R\left(\frac{1}{2},\frac{1}{2},\cdot\right)\right)^{-1}\circ
\left(R(0,0,\cdot)\right)^{-k}\left(\frac{3}{4}\right)$ &$k\in\{0,1,\dotsc,N-3\}$ & $\frac{3^{N-k-3}+3}{2}$ \\
\hline
$\left(R\left(\frac{1}{2},0,\cdot\right)\right)^{-1}\circ
\left(R\left(\frac{1}{2},\frac{1}{2},\cdot\right)\right)^{-1}\circ
\left(R(0,0,\cdot)\right)^{-k}\left(\frac{5}{4}\right)$ &$k\in\{0,1,\dotsc,N-4\}$ &  $\frac{3^{N-k-3}-1}{2}$ \\
\hline
\end{tabular}
\end{center}

\item $\sigma(\mathcal{L}^{(0, \frac{1}{2})}_N)$:
\begin{center}
\begin{tabular}{|c|c|c|}
\hline
Eigenvalue & Condition on $k$ & Multiplicity \\
\hline
$\frac{1}{4}$ & -& $\frac{3^{N-1}+3}{2}$ \\
\hline
{\color{red}$\frac{3}{4}$} & - &$\frac{3^{N-1}-1}{2}$ \\
\hline
$1$ & - & $1$ \\
\hline
{\color{red}$\frac{3}{2}$} & - & $\frac{3^N+3}{2}$ \\
\hline
$\left(R\left(0,\frac{1}{2},\cdot\right)\right)^{-1}\left(\frac{3}{4}\right)$ & - & $\frac{3^{N-2}-1}{2}$ \\
\hline
$\left(R\left(0,\frac{1}{2},\cdot\right)\right)^{-1}\left(\frac{5}{4}\right)$ & - & $\frac{3^{N-2}+3}{2}$ \\
\hline
$\left(R\left(0,\frac{1}{2},\cdot\right)\right)^{-1}\circ
\left(R\left(\frac{1}{2},\frac{1}{2},\cdot\right)\right)^{-1}\circ
(R(0,0,\cdot))^{-k}\left(\frac{3}{4}\right)$ & $k\in\{0,1,\dotsc,N-3\}$ &  $\frac{3^{N-k-3}+3}{2}$ \\
\hline
$\left(R\left(0,\frac{1}{2},\cdot\right)\right)^{-1}\circ
\left(R\left(\frac{1}{2},\frac{1}{2},\cdot\right)\right)^{-1}\circ
(R(0,0,\cdot))^{-k}\left(\frac{5}{4}\right)$ & $k\in\{0,1,\dotsc,N-4\}$ &  $\frac{3^{N-k-3}-1}{2}$ \\
\hline
\end{tabular}
\end{center}
\end{enumerate}
\end{theorem}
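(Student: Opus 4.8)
The plan is to iterate the spectral decimation of \S\ref{sec:specdec} along the orbit of the flux-evolution map of Proposition \ref{prop:fluxevolve}, bootstrapping from the $\alpha=\beta=0$ spectrum of Fukushima--Shima recalled above. For $\alpha,\beta\in\{0,\tfrac12\}$ the maps \eqref{eq:alphadown}--\eqref{eq:betadown} give a finite orbit terminating at the fixed point $(0,0)$:
\begin{align*}
\left(\tfrac12,\tfrac12\right)&\longmapsto(0,0),\\
\left(\tfrac12,0\right)&\longmapsto\left(\tfrac12,\tfrac12\right)\longmapsto(0,0),\\
\left(0,\tfrac12\right)&\longmapsto\left(\tfrac12,\tfrac12\right)\longmapsto(0,0).
\end{align*}
Accordingly I handle $(\tfrac12,\tfrac12)$ with one decimation step and $(\tfrac12,0),(0,\tfrac12)$ with two; the single-, double-, and triple-fold compositions of inverse branches appearing in the three tables are precisely the successive pullbacks dictated by this orbit.

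For $\lambda$ outside the exceptional set $\mathcal{E}(\alpha,\beta)$, the decimation identity $\mult(\mathcal{L}_N^{(\alpha,\beta)},\lambda)=\mult(\mathcal{L}_{N-1}^{(\alpha_\downarrow,\beta_\downarrow)},R(\alpha,\beta,\lambda))$ lets me transport the level-$(N-1)$ spectrum to level $N$ by pulling it back through the two solution branches of the quadratic $R(\alpha,\beta,\cdot)$, carrying multiplicities unchanged; the generic ``series'' values each acquire two preimages and generate the composition rows. A handful of distinguished coarse eigenvalues behave differently: for these, a direct computation of the two branches shows that exactly one branch lands inside $\mathcal{E}(\alpha,\beta)$, so only the other contributes recursively. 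For instance, under the first step $R(\tfrac12,\tfrac12,\cdot)^{-1}(0)=\{2,\tfrac34\}$ and $R(\tfrac12,\tfrac12,\cdot)^{-1}(\tfrac32)=\{\tfrac32,\tfrac54\}$, with $\tfrac34$ and the fixed point $\tfrac32$ both in $\mathcal{E}$; pruning these leaves the isolated row at $2$ and the surviving high-multiplicity row at $\tfrac54$. The mixed-flux cases are treated the same way across two steps, producing the rows at $1,\tfrac14,\tfrac74$ and the single-composition rows $R(\cdot)^{-1}(\tfrac34),\,R(\cdot)^{-1}(\tfrac54)$.

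The {\color{red}red} rows are the crux and must be obtained directly rather than recursively. For each flux pair I identify $\mathcal{E}(\alpha,\beta)$ as the set of $\lambda$ at which the block eliminated by the Schur complement is singular---equivalently, at which $(\mathcal{L}_N^{(\alpha,\beta)}-\lambda)u=0$ admits a nonzero solution supported on $V_N\setminus V_{N-1}$ and vanishing at the retained junction vertices---and I compute the dimension of this localized-eigenfunction space. This enumeration is the step most sensitive to the flux: whether a loop state is admissible on a given cell is governed by its holonomy $e^{2\pi i\alpha}$ or $e^{2\pi i\beta}$, so the very same $\lambda$ may be exceptional with large multiplicity for one configuration yet exceptional with multiplicity $0$ for another. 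The latter occurs exactly at the branches pruned in the previous step (for example $\tfrac32$ under $(\tfrac12,\tfrac12)$), which is what makes the pruning consistent. Organizing the admissible localized states by the scale of their supporting loops yields a linear recursion in $N$ whose solution gives the two closed-form multiplicities $\tfrac{3^N+3}{2}$ and $\tfrac{3^{N-1}-1}{2}$ recorded in each table. I expect this localized-state count---together with the verification that the pruned branches are precisely the multiplicity-zero exceptional values---to be the main obstacle.

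Finally I assemble the recursive and exceptional contributions and establish completeness by a dimension count: summing all listed multiplicities (geometric series pairing $2^{k}$ against $3^{N-k}$, exactly as in the $\alpha=\beta=0$ sanity check preceding the theorem statement) must return $\dim_N=\tfrac{3^{N+1}+3}{2}$. Since $\mathcal{L}_N^{(\alpha,\beta)}$ acts on a space of dimension $\dim_N$, matching this total certifies at once that the list is exhaustive and that no eigenvalue has been counted twice, completing the proof.
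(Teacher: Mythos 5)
Your skeleton---decimating along the flux orbit $(\tfrac12,0),(0,\tfrac12)\mapsto(\tfrac12,\tfrac12)\mapsto(0,0)$, pulling the known $(0,0)$ spectrum back through the two real branches of each quadratic $R$, and certifying exhaustiveness by summing multiplicities to $\dim_N$---is exactly the paper's (\S\ref{sec:halfflux}). The genuine gap is in your treatment of the red rows. You characterize $\mathcal{E}(\alpha,\beta)$ as the singular locus of the eliminated block, ``equivalently'' the set of $\lambda$ for which there is an eigenfunction supported on $V_N\setminus V_{N-1}$, and you propose to obtain the red multiplicities by counting such localized states. But $\mathcal{E}(\alpha,\beta)=\{\lambda:\mathcal{D}(\beta,\lambda)=0 \text{ or } \phi(\alpha,\beta,\lambda)=0\}$, and these are two different mechanisms (nor is your asserted equivalence correct even on $\sigma(D)$: a localized solution also needs $Bu_\perp=0$). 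At the $\phi$-zero $\lambda=\tfrac12$ for $(\tfrac12,\tfrac12)$ one has $\mathcal{D}(\tfrac12,\tfrac12)\neq0$, so any eigenfunction vanishing on $V_{N-1}$ satisfies $(D-\lambda)u_\perp=0$ and hence vanishes identically: your localized space is trivial there, while the true multiplicity is $\dim_{N-1}=\tfrac{3^N+3}{2}$, obtained in the paper from the eigenprojector limit of Lemma \ref{multiplicity}-\ref{eq:SD2}, and those eigenfunctions do \emph{not} vanish on $V_{N-1}$. Conversely, at zeros of $\mathcal{D}$ the multiplicity is never a pure localized count: the paper's formulas are $\mult(D,\lambda)-\dim_{N-1}+\mult(L,R(\lambda))$ (Lemma \ref{multiplicity}-\ref{eq:SD3}: $\lambda=\tfrac34,\tfrac32$ at $(\tfrac12,\tfrac12)$; $\tfrac54$ at $(\tfrac12,0)$; $\tfrac34$ at $(0,\tfrac12)$) and $\mult(D,\lambda)+\mult(L,R(\lambda))$ (Lemma \ref{multiplicity}-\ref{eq:SD4}: $\lambda=\tfrac12$ at $(\tfrac12,0)$, $\lambda=\tfrac32$ at $(0,\tfrac12)$), where $\mult(D,\lambda)$ is $3^{N-1}$ or $2\cdot 3^{N-1}$ according as $\lambda$ is a simple or double zero of $\mathcal{D}(\beta,\cdot)$. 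In the second formula the eigenspace is the orthogonal sum of $3^{N-1}$ localized functions \emph{and} lifted level-$(N-1)$ eigenfunctions nonzero on $V_{N-1}$; since $\tfrac12$ is a fixed point of $R(\tfrac12,0,\cdot)$, that lifted summand is precisely the downstairs exceptional multiplicity $\tfrac{3^{N-1}+3}{2}$ at $(\tfrac12,\tfrac12)$. So the red rows are hybrid, not ``direct rather than recursive,'' and your dichotomy undercounts them (giving, e.g., $3^{N-1}$ instead of $\tfrac{3^N+3}{2}$).

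A second concrete error: your consistency claim that the pruned branches are ``precisely the multiplicity-zero exceptional values'' is contradicted by $\lambda=\tfrac34$ at $(\tfrac12,\tfrac12)$, which is pruned from the recursion as a preimage of $0$ yet carries multiplicity $2\cdot 3^{N-1}-\tfrac{3^N+3}{2}+1=\tfrac{3^{N-1}-1}{2}$; only $\tfrac32$ dies. The repair is the paper's route: at each exceptional $\lambda$, rather than a combinatorial loop-state count (whose spanning property is unproven---indeed completeness of finitely supported magnetic eigenfunctions is flagged as open in the paper's final section---and which is the ad hoc method of the older physics literature the paper explicitly replaces), verify which limit hypotheses of Lemma \ref{multiplicity} hold for $\phi$, $\phi R$, $(\lambda-x)/\phi(x)$, and $\phi(x)\frac{R(\lambda)-R(x)}{\lambda-x}$, and read the exact multiplicity off the resulting eigenprojector identity; your concluding dimension count then does certify completeness exactly as you intend.
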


The spectra $\{\sigma(\mathcal{L}_N^{(\alpha,\beta)}): \alpha,\beta\in \{0,\frac{1}{2}\}\}$ are related via spectral decimation, see Figure \ref{fig:SDdiagram}.


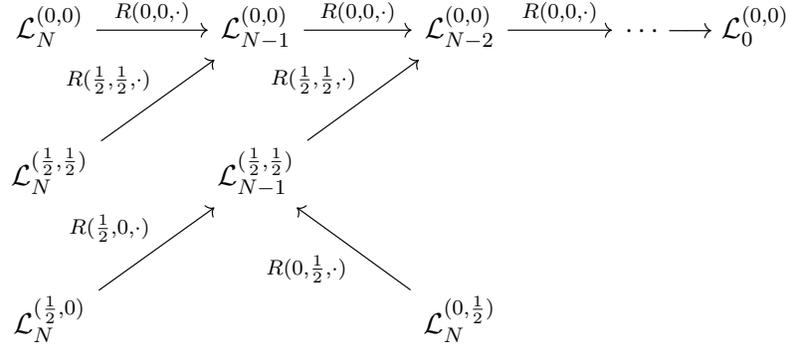
\begin{figure}[h!]
\begin{tikzcd}[column sep=large, row sep=large]
\mathcal{L}^{(0,0)}_N \arrow[r, "{R(0,0,\cdot)}"] & \mathcal{L}_{N-1}^{(0,0)} \arrow[r, "{R(0,0,\cdot)}"] & \mathcal{L}^{(0,0)}_{N-2} \arrow[r, "{R(0,0,\cdot)}"] &
\cdots \longrightarrow \mathcal{L}^{(0,0)}_0 \\
\mathcal{L}^{(\frac{1}{2}, \frac{1}{2})}_N \arrow[ru, "{R(\frac{1}{2},\frac{1}{2},\cdot)}"] &  \mathcal{L}^{(\frac{1}{2}, \frac{1}{2})}_{N-1} \arrow[ru, "{R(\frac{1}{2}, \frac{1}{2}, \cdot)}"] & \\
\mathcal{L}^{(\frac{1}{2},0)}_N \arrow[ru,"{R(\frac{1}{2},0,\cdot)}"] & & \mathcal{L}^{(0,\frac{1}{2})}_N \arrow[lu, "{R(0,\frac{1}{2},\cdot)}"] &
\end{tikzcd}
\caption{A mnemonic for Theorem \ref{thm:1} in the case where the fluxes $\alpha,\beta\in \{0,\frac{1}{2}\}$. Each arrow represents one step of spectral decimation from the magnetic Laplacian on $G_N$ to that on $G_{N-1}$. Details are given in \S\ref{sec:halfflux}.}
\label{fig:SDdiagram}
\end{figure}

It is natural to extend the spectral analysis to the infinite $SG$ lattice $G_\infty$.
Theorem \ref{thm:1} states that each of the 4 spectra contains certain sets of backward iterates under $R(0,0,\cdot)$.
As a consequence, we expect a portion of the (magnetic) spectrum to involve the \textit{Julia set} of $R(0,0,\cdot)$, \emph{viz.\@} the dynamical spectrum referred to by Bellissard.
Let us recall some basic notions from complex dynamics.
The \textbf{Fatou set} $\mathcal{F}(f)$ of a nonconstant holomorphic function $f$ on $\hat{\mathbb{C}}:=\mathbb{C}\cup \{\infty\}$ is the domain in which the sequence of iterates $\{f^{\circ n}\}_n$ converges uniformly on compacts.
The \textbf{Julia set} of $f$ is $\mathcal{J}(f) = \hat{\mathbb{C}} \setminus \mathcal{F}(f)$; by definition it is closed. 
By \cite{Milnor}*{Theorem 14.1}, the Julia set for any rational map of degree $\geq 2$ equals the closure of its set of repelling periodic points.
Also, by \cite{Milnor}*{Corollary 4.13}, if $z_0$ is any point of the Julia set $\mathcal{J}(f)$, then the set of all iterated preimages $\bigcup_{k=0}^\infty f^{-k}(z_0)$ is everywhere dense in $\mathcal{J}(f)$.

The polynomial $R(0,0,\cdot)$ has three fixed points: $\infty$ (attracting), $0$, and $1$ (the latter two are repelling).
Thus $\{0, 1\} \in \mathcal{J}(R(0,0,\cdot))$, and $\bigcup_{k=0}^\infty (R(0,0,\cdot))^{-k}(0)$ is everywhere dense in $\mathcal{J}(R(0,0,\cdot))$.
Since $\frac{5}{4} \in R(0,0,\cdot)^{-1}(0)$, it follows that the closure of $\{0\}\cup \bigcup_{k=0}^\infty R(0,0,\cdot)^{-k}(\frac{5}{4})$ equals $\mathcal{J}(R(0,0,\cdot))$.
Meanwhile, $R(0,0,\frac{3}{4})=\frac{3}{2}$ and $(R(0,0,\cdot))^k (\frac{3}{2}) \to \infty$ as $k\to\infty$. So $\frac{3}{2}$ belongs to the Fatou set $\left(\mathcal{J}(R(0,0,\cdot)\right)^c$, and the same goes for the set of all backward iterates $\bigcup_{k=0}^\infty (R(0,0,\cdot))^{-k} (\frac{3}{4})$ by the invariance of the Fatou set under backward/forward iterates. 
This justifies the decomposition
\[
\sigma(\mathcal{L}^{(0,0)}_\infty) = \mathcal{J}(R(0,0,\cdot)) \cup \left(\bigcup_{k=0}^\infty (R(0,0,\cdot))^{-k} \left(\frac{3}{4}\right)\right)\cup \left\{\frac{3}{2}\right\}
\]
as shown by Teplyaev \cite[Theorem 2]{Teplyaev}.\footnote{
For the graph Laplacian $\mathcal{L}_\infty$ on an infinite, locally finite, connected graph with geometric self-similarity, it is expected that $\sigma(\mathcal{L}_\infty) = \mathcal{J} \cup \mathcal{D}$,
where the set $\mathcal{D}$ depends on the self-similar structure of the graph under study.
See \cite{Malozemov} for illustrating examples.
An example where $\mathcal{D}=\emptyset$ appears in a one-parameter family of self-similar ``$pq$-Laplacians'' on $\mathbb{Z}_+$ \cite{Halfline}.
}
In the same paper Teplyaev proved that the spectral type of $\sigma(\mathcal{L}^{(0,0)}_\infty)$ is pure point, and that each eigenvalue has infinite multiplicity.

Following the same rationale as \cite{Teplyaev}, we arrive at the following corollary.
Here the shorthand $\mathcal{F}_{\frac{3}{4}} := \bigcup_{k=0}^\infty (R(0,0,\cdot))^{-k} \left(\frac{3}{4}\right)$ is used.

\begin{corollary}
\label{cor:magspecinfiniteSG}
We have
\begin{align*}
\sigma(\mathcal{L}^{(\frac{1}{2},\frac{1}{2})}_\infty) &= \left(R\left(\frac{1}{2}, \frac{1}{2},\cdot\right)\right)^{-1} \left[\mathcal{J}(R(0,0,\cdot))\cup \mathcal{F}_{\frac{3}{4}}\right] 
\cup \left\{\frac{1}{2}, \frac{5}{4}\right\};\\
\sigma(\mathcal{L}^{(\frac{1}{2},0)}_\infty) &= \left(R\left(\frac{1}{2}, 0,\cdot\right)\right)^{-1}\circ \left(R\left(\frac{1}{2}, \frac{1}{2},\cdot\right)\right)^{-1} \left[\mathcal{J}(R(0,0,\cdot))\cup \mathcal{F}_{\frac{3}{4}}\right]  \\
&\quad \cup \left\{ \frac{1}{2}, \frac{7}{4}\right\} \cup \left(R\left(\frac{1}{2},0,\cdot\right)\right)^{-1} \left(\left\{\frac{3}{4}, \frac{5}{4}\right\}\right);\\
\sigma(\mathcal{L}^{(0,\frac{1}{2})}_\infty) &=\left(R\left(0,\frac{1}{2},\cdot\right)\right)^{-1}\circ \left(R\left(\frac{1}{2}, \frac{1}{2},\cdot\right)\right)^{-1} \left[\mathcal{J}(R(0,0,\cdot))\cup \mathcal{F}_{\frac{3}{4}}\right]  \\
&\quad \cup \left\{ \frac{1}{4}, \frac{3}{2}\right\}\cup \left(R\left(0,\frac{1}{2},\cdot\right)\right)^{-1} \left(\left\{\frac{3}{4}, \frac{5}{4}\right\}\right).
\end{align*}
In particular, the type of each of the three spectra is pure point, and each eigenvalue has infinite multiplicity.
\end{corollary}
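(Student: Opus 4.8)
The plan is to reproduce Teplyaev's strategy \cite{Teplyaev} for the field-free case, feeding it the finite-volume data of Theorem \ref{thm:1} and propagating along the decimation chain of Figure \ref{fig:SDdiagram}. Under one decimation step the fluxes evolve (Proposition \ref{prop:fluxevolve}) by $(\frac12,\frac12)\mapsto(0,0)$, $(\frac12,0)\mapsto(\frac12,\frac12)$, and $(0,\frac12)\mapsto(\frac12,\frac12)$, so every chain funnels after one or two steps into the stationary $(0,0)$ chain, whose infinite-volume spectrum $\sigma(\mathcal{L}^{(0,0)}_\infty)=\mathcal{J}(R(0,0,\cdot))\cup\mathcal{F}_{\frac{3}{4}}\cup\{\tfrac32\}$ is already known. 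The governing principle, valid off the exceptional set, is that $\lambda\in\sigma(\mathcal{L}^{(\alpha,\beta)}_\infty)$ iff $R(\alpha,\beta,\lambda)$ lies in the dynamical core $\mathcal{J}(R(0,0,\cdot))\cup\mathcal{F}_{\frac{3}{4}}$ of the spectrum of the decimated operator $\mathcal{L}^{(\alpha',\beta')}_\infty$; iterating this turns each claimed identity into a one- or two-fold pullback of that core, which is precisely the bulk term appearing in the statement.

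The set equalities I would then establish in two stages. First, lift the finite-level spectral decimation to the infinite lattice using the self-similar decomposition of $\mathcal{L}^\omega_\infty$, as in Teplyaev--Malozemov \cites{Teplyaev,Malozemov}; this is where the equivalence $\lambda\in\sigma(\mathcal{L}^{(\alpha,\beta)}_\infty)\iff R(\alpha,\beta,\lambda)\in\sigma(\mathcal{L}^{(\alpha',\beta')}_\infty)$ (away from exceptional values) is made rigorous. Second, identify the limit of the backward-iterate families in Theorem \ref{thm:1}: the $\frac54$-series $\bigcup_k(R(0,0,\cdot))^{-k}(\frac54)$ is dense in $\mathcal{J}(R(0,0,\cdot))$ because $\frac54\in(R(0,0,\cdot))^{-1}(0)$ with $0$ a repelling fixed point (\cite{Milnor}*{Cor.~4.13}), while the $\frac34$-series $\mathcal{F}_{\frac{3}{4}}$ stays trapped in the Fatou set by forward invariance. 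Taking closures and applying $\bigl(R(\alpha,\beta,\cdot)\bigr)^{-1}$, or for $(\frac12,0)$ and $(0,\frac12)$ the two-fold composition, then reproduces the bulk term $\bigl(R(\alpha,\beta,\cdot)\bigr)^{-1}[\mathcal{J}(R(0,0,\cdot))\cup\mathcal{F}_{\frac{3}{4}}]$ verbatim.

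The isolated points require a short finite bookkeeping: an exceptional value of a finite-level spectrum survives as an isolated point of $\sigma(\mathcal{L}^{(\alpha,\beta)}_\infty)$ precisely when its image under the decimation map escapes the core $\mathcal{J}(R(0,0,\cdot))\cup\mathcal{F}_{\frac{3}{4}}$, and is otherwise silently absorbed into the pullback. For $(\frac12,\frac12)$ one finds $R(\frac12,\frac12,\frac34)=R(\frac12,\frac12,2)=0\in\mathcal{J}$, absorbing $\frac34$ and $2$, whereas $R(\frac12,\frac12,\frac54)=\frac32$ (the isolated Fatou point, excluded from the core) and $R(\frac12,\frac12,\frac12)=-\frac32$ (off spectrum), forcing $\{\frac12,\frac54\}$ to be appended. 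The analogous computations with $R(\frac12,0,\cdot)$ and $R(0,\frac12,\cdot)$ yield the residual sets $\{\frac12,\frac74\}\cup(R(\frac12,0,\cdot))^{-1}(\{\frac34,\frac54\})$ and $\{\frac14,\frac32\}\cup(R(0,\frac12,\cdot))^{-1}(\{\frac34,\frac54\})$; I note that some appended preimages are in fact already contained in the bulk pullback (e.g.\ the $\frac34$-preimages, since $\frac34$ lies in the core of the $(\frac12,\frac12)$ spectrum), and are retained only to exhibit the contribution of the exceptional series.

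Finally, the pure point character with infinite multiplicity I would argue exactly as Teplyaev does. For each exceptional eigenvalue the finite-level eigenfunctions constructed in the proof of Theorem \ref{thm:1} are supported on single cells; transplanting them to the infinitely many congruent copies of that cell in $G_\infty$ produces infinitely many mutually orthogonal $\ell^2$-eigenfunctions. For the remaining eigenvalues one lifts localized eigenfunctions from coarse levels through the decimation correspondence and invokes the density of backward iterates in $\mathcal{J}(R(0,0,\cdot))$ together with the closedness of the set of eigenvalues admitting such eigenfunctions. The main obstacle is precisely this last point: one must verify that the localized-eigenfunction construction and the multiplicity accounting behind Theorem \ref{thm:1} remain valid when the connection $\omega$ is nontrivial, so that the entire (Cantor-type) Julia set consists of genuine $\ell^2$-eigenvalues and no continuous spectrum is spuriously created by the magnetic field; the absorbed-versus-appended dichotomy of the preceding paragraph is a secondary, purely computational hurdle.
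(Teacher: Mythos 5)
Your proposal is correct and follows essentially the same route as the paper, which proves this corollary simply by applying Teplyaev's rationale to the finite-level data of Theorem \ref{thm:1} along the decimation chain $(\tfrac12,0),(0,\tfrac12)\to(\tfrac12,\tfrac12)\to(0,0)$; your density argument for the $\tfrac54$-series via $\tfrac54\in(R(0,0,\cdot))^{-1}(0)$ with $0$ repelling, the absorbed-versus-appended bookkeeping (e.g.\ $R(\tfrac12,\tfrac12,\tfrac34)=R(\tfrac12,\tfrac12,2)=0\in\mathcal{J}$ while $R(\tfrac12,\tfrac12,\tfrac54)=\tfrac32$ and $R(\tfrac12,\tfrac12,\tfrac12)=-\tfrac32$ escape the core), and the transplantation of localized eigenfunctions for pure point spectrum with infinite multiplicity all match the intended argument. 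Your observation that the appended $\tfrac34$-preimages are already contained in the bulk pullback is also accurate, and your flagged verification that the localized-eigenfunction completeness survives a nontrivial connection is precisely the point the paper leaves implicit in citing \cite{Teplyaev}.
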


To summarize: in the case $\alpha,\beta \in \{0,\frac{1}{2}\}$, the spectrum $\sigma(\mathcal{L}_\infty^{(\alpha,\beta)})$ consists of (a preimage of) the Julia set of $R(0,0,\cdot)$, as well as points which are preimages of isolated points in the Fatou set of $R(0,0,\cdot)$.

The situation where not both of the fluxes $\alpha,\beta$ are in $\left\{0,\frac{1}{2}\right\}$ is more delicate.
To state our result, we introduce the \textbf{exceptional set} for spectral decimation,
\begin{align}
\label{eq:Eab}
\mathcal{E}(\alpha,\beta) &=\{\lambda\in\mathbb{R} : \Psi(\alpha,\beta,\lambda)=0 \text{ or } \mathcal{D}(\beta,\lambda)=0\},
\end{align}
as well as the following functions,
\begin{align}
\label{eq:firstR} R(\alpha,\beta,\lambda)&=1+\frac{A(\alpha,\beta,\lambda)-64\mathcal{D}(\beta,\lambda)(1-\lambda)}{16|\Psi(\alpha,\beta,\lambda)|};\\
A(\alpha,\beta,\lambda)&=16\lambda^2-(32+4\cos(2\pi\alpha))\lambda+15+4\cos(2\pi\alpha)+\cos(2\pi(\alpha+\beta));\\
\mathcal{D}(\beta,\lambda)&=-\lambda^3+3\lambda^2-\frac{45}{16}\lambda+\frac{13}{16}-\frac{1}{32}\cos(2\pi\beta);\\
\Psi(\alpha,\beta,\lambda)&=(1-\lambda)^2-\frac{1}{16}+\frac{1-\lambda}{4}(2e^{-2\pi i\alpha}+e^{-2\pi i(2\alpha+\beta)})+\frac{1}{16}(e^{-4\pi i \alpha}+2e^{-2\pi i (\alpha+\beta)});\\
\theta(\alpha,\beta,\lambda)&=\frac{\textup{arg} \hspace{0.1cm} \Psi(\alpha,\beta,\lambda)}{2\pi} \qquad ({\rm arg}: \mathbb{C} \to [0,2\pi));\\
\label{eq:alphadown} \alpha_\downarrow(\alpha, \beta, \lambda)&=3\alpha+\beta+3\theta(\alpha,\beta,\lambda) \pmod 1;\\
\label{eq:betadown} \beta_\downarrow(\alpha, \beta, \lambda)&=3\beta+\alpha-3\theta(\alpha,\beta,\lambda) \pmod 1.
\end{align}
Note that \eqref{eq:Eab} through \eqref{eq:betadown} are independent of $N$.

\begin{theorem}
\label{thm:2}
Suppose not both of $\alpha$ and $\beta$ are in $\{0,\frac{1}{2}\}$.
Then
\begin{equation}
\label{eq:spec2}
\sigma\left(\mathcal{L}_N^{(\alpha,\beta)}\right)=S_1(\alpha,\beta) \sqcup S_2(\beta) \sqcup S_3(\alpha),
\end{equation}
where
\begin{align}
S_1(\alpha,\beta) & = \left\{\lambda\in\mathbb{R} \setminus \mathcal{E}(\alpha,\beta): R(\alpha,\beta, \lambda)\in \sigma\left(\mathcal{L}^{(\alpha_\downarrow(\alpha,\beta,\lambda), \beta_\downarrow(\alpha,\beta,\lambda))}_{N-1}\right)\right\}, \\
S_2(\beta) & = \left\{\lambda \in \mathbb{R}: \mathcal{D}(\beta, \lambda)=0,~\mult\left(\mathcal{L}^{(\alpha,\beta)}_N, \lambda\right)>0\right\}, \\
\text{and }\quad S_3(\alpha) & = \left\{
\begin{array}{ll}
 \frac{3}{2},&\text{if } \alpha=0\\
\frac{1}{2},&\text{if } \alpha=\frac{1}{2}
\end{array}
\right\}.
\end{align}
Concerning the multiplicity of each eigenvalue: $\lambda \in S_1$ has multiplicity $\mult\left(\mathcal{L}^{(\alpha_\downarrow(\alpha,\beta,\lambda),\beta_\downarrow(\alpha,\beta,\lambda)}_{N-1}, R(\alpha, \beta, \lambda)\right)$; $\lambda \in S_3$ has multiplicity $\frac{3^N+3}{2}$; and $\lambda\in S_2$ has multiplicity given in Proposition \ref{prop:31}-\ref{G2} and Proposition \ref{prop:32}-\ref{II2} below, which is too complicated to be described here.
\end{theorem}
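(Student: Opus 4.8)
The plan is to split the argument according to whether the spectral parameter $\lambda$ lies in the exceptional set $\mathcal{E}(\alpha,\beta)$, handling the generic part by Schur-complement decimation and the exceptional part by direct eigenfunction analysis. First I would fix the partition $V_N = V_{N-1} \sqcup W_N$, where $W_N$ is the set of $3^N$ midpoint vertices created at level $N$; these group into $3^{N-1}$ triples, one per level-$1$ cell, each triple forming a downright triangle carrying flux $\beta$. Writing $\mathcal{L}_N^{(\alpha,\beta)} - \lambda I$ in block form with respect to this partition, the block $D_\lambda$ acting on $W_N$ is block-diagonal with identical $3\times 3$ blocks, and a direct computation identifies its per-cell characteristic polynomial with (a nonzero scalar multiple of) $\mathcal{D}(\beta,\lambda)$. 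Hence $D_\lambda$ is invertible exactly when $\mathcal{D}(\beta,\lambda)\neq 0$.

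For $\lambda \notin \mathcal{E}(\alpha,\beta)$ I would form the Schur complement of $D_\lambda$ onto $V_{N-1}$. Invoking the spectral self-similarity already established in \S\ref{sec:specdec} (the map $\mathcal{U}$ and Proposition \ref{prop:fluxevolve}), this Schur complement equals $c(\alpha,\beta,\lambda)\bigl(\mathcal{L}_{N-1}^{(\alpha_\downarrow,\beta_\downarrow)} - R(\alpha,\beta,\lambda)\,I\bigr)$, where the scalar $c$ is nonzero precisely when $\Psi(\alpha,\beta,\lambda)\neq 0$, and the phase $\theta = \arg\Psi/2\pi$ produces the renormalized fluxes $\alpha_\downarrow,\beta_\downarrow$ of \eqref{eq:alphadown}--\eqref{eq:betadown}. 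Since $D_\lambda$ is invertible, $\dim\ker(\mathcal{L}_N^{(\alpha,\beta)}-\lambda I) = \dim\ker(\text{Schur complement})$, so $\lambda\in\sigma(\mathcal{L}_N^{(\alpha,\beta)})$ if and only if $R(\alpha,\beta,\lambda)\in\sigma(\mathcal{L}_{N-1}^{(\alpha_\downarrow,\beta_\downarrow)})$, with matching multiplicity. This is exactly the set $S_1(\alpha,\beta)$ together with its stated multiplicity.

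It remains to analyze $\lambda\in\mathcal{E}(\alpha,\beta)$, which I would break into the cases $\mathcal{D}(\beta,\lambda)=0$ and $\Psi(\alpha,\beta,\lambda)=0$. When $\mathcal{D}(\beta,\lambda)=0$ the block $D_\lambda$ is singular and decimation breaks down; instead I would solve $\mathcal{L}_N^{(\alpha,\beta)}u=\lambda u$ directly and realize the eigenfunctions as localized modes whose restriction to each cell is an eigenvector of the singular $3\times 3$ midpoint block, subject to the leakage constraint that they be annihilated by the boundary coupling. Counting the maximal number of linearly independent such modes — keeping track of the relations forced at the junction vertices and along the outer boundary — yields the set $S_2(\beta)$ and the multiplicity recorded in Propositions \ref{prop:31} and \ref{prop:32}. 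When $\Psi(\alpha,\beta,\lambda)=0$ but $\mathcal{D}(\beta,\lambda)\neq 0$, I would show that a genuine eigenvalue arises only from the $\beta$-independent real root of $\Psi(\alpha,\beta,\cdot)$, which exists precisely for $\alpha\in\{0,\tfrac12\}$ (namely $\lambda=\tfrac32$ when $\alpha=0$ and $\lambda=\tfrac12$ when $\alpha=\tfrac12$); a computation confirms that at this root both $\Psi$ and the numerator of $R-1$ in \eqref{eq:firstR} vanish, so the Schur complement is the zero operator on $\ell^2(V_{N-1})$, whence the eigenspace has dimension $\tfrac{3^N+3}{2}$. This gives $S_3(\alpha)$ with its multiplicity, and for $\alpha\notin\{0,\tfrac12\}$ the numerator fails to vanish, the Schur complement is a nonzero scalar, and $S_3=\emptyset$.

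Finally I would confirm that the union is disjoint and consistent. By construction $S_1$ lies off $\mathcal{E}$ while $S_2,S_3$ exhaust the eigenvalue-carrying part of $\mathcal{E}$; moreover $S_2\cap S_3=\emptyset$ because, under the standing hypothesis that $\alpha,\beta$ are not both in $\{0,\tfrac12\}$, one checks $\mathcal{D}(\beta,\tfrac32)=-\tfrac{1}{32}(1+\cos2\pi\beta)\neq0$ for $\alpha=0$ and $\mathcal{D}(\beta,\tfrac12)=\tfrac{1}{32}(1-\cos2\pi\beta)\neq0$ for $\alpha=\tfrac12$, so the special $\Psi$-root never coincides with a root of $\mathcal{D}(\beta,\cdot)$. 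A dimension check, verifying by induction on $N$ that the multiplicities over $S_1\sqcup S_2\sqcup S_3$ sum to $\dim_N$ with the base case treated directly, then closes the argument. The main obstacle will be the combinatorial multiplicity count for $S_2$: establishing linear independence of the localized eigenfunctions and correctly quantifying the constraints imposed at the junction vertices across the self-similar cell structure is delicate, which is exactly why those counts are relegated to Propositions \ref{prop:31} and \ref{prop:32}.
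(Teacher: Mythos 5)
Your proposal is correct in substance and arrives at the same decomposition, but by a partly different route for the non-$S_2$ parts. Where the paper runs everything through the resolvent limits $E_\lambda(M)=\lim_{x\to\lambda}(\lambda-x)(M-x)^{-1}$ of Lemmas \ref{specdecthm} and \ref{multiplicity} --- a price it pays because $R$ is non-rational (differentiability of $|\Psi(\cdot)|$ must be checked) and because the twisted connection $\Omega(x)$ flips sign across a real zero of $\Psi$ (identity \eqref{eq:limitagree}) --- you argue at fixed $\lambda$: for finite matrices, invertibility of $D-\lambda$ gives $\ker(\mathcal{L}^{(\alpha,\beta)}_N-\lambda)\cong\ker$ of the Schur complement, which by Proposition \ref{prop:Schur} equals $\phi(\lambda)\bigl(\mathcal{L}^{\Omega(\lambda)}_{N-1}-R(\lambda)\bigr)$ off $\mathcal{E}(\alpha,\beta)$, and at a real zero of $\Psi$ with $\mathcal{D}(\beta,\lambda)\neq 0$ is the scalar matrix $-\frac{A(\lambda)-64\mathcal{D}(\lambda)(1-\lambda)}{64\mathcal{D}(\lambda)}\,I$, since every off-diagonal entry carries a factor $\Psi(\lambda)$ or $\overline{\Psi(\lambda)}$. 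Your computations are right: $A-64\mathcal{D}(1-\lambda)$ vanishes exactly at $(\alpha,\lambda)=(0,\frac{3}{2})$ and $(\frac{1}{2},\frac{1}{2})$ for all $\beta$, giving $S_3$ with multiplicity $\dim_{N-1}=\frac{3^N+3}{2}$; it does not vanish at the Case III root, giving multiplicity $0$ there (matching Proposition \ref{prop:33}); and $\mathcal{D}(\beta,\frac{3}{2})=-\frac{1}{32}(1+\cos 2\pi\beta)$, $\mathcal{D}(\beta,\frac{1}{2})=\frac{1}{32}(1-\cos 2\pi\beta)$ give $S_2\cap S_3=\emptyset$ under the standing hypothesis. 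For the theorem as stated this fixed-$\lambda$ argument suffices (gauge equivalence lets you pass from $\mathcal{L}^{\Omega(\lambda)}_{N-1}$ to $\mathcal{L}^{(\alpha_\downarrow,\beta_\downarrow)}_{N-1}$ since the fluxes determine the connection up to gauge), and it is genuinely more elementary; what the paper's limit machinery buys in exchange is explicit eigenprojector formulas and the eigenfunction correspondence, which Theorem \ref{thm:2} itself does not require but the surrounding results do.

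The one place your sketch undersells the difficulty is $S_2$. There $D-\lambda$ is singular, no fixed-$\lambda$ Schur reduction exists, and your picture of the eigenfunctions as cell-localized midpoint modes annihilated by the boundary coupling captures only part of the eigenspace: in the borderline subcases of Proposition \ref{prop:31}-\ref{G2} and Proposition \ref{prop:32}-\ref{II2} (when $\lim_{\mathbb{R}\ni x\to\lambda}\frac{1}{|\Psi(x)|}\frac{\mathcal{D}(x)(\lambda-x)}{R(\lambda)-R(x)}\neq 0$, e.g.\ $\alpha\in\{\frac{1}{6},\frac{5}{6}\}$, $\beta=0$, $\lambda=\frac{5}{4}$) the eigenspace acquires an additional component pulled back from $E_{R(\lambda)}(\mathcal{L}^\Omega_{N-1})$, and those eigenfunctions do \emph{not} vanish on $V_{N-1}$, so a count of localized modes alone would be an undercount. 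Since the theorem delegates the $S_2$ multiplicities to those propositions, deferring is legitimate; but be aware that a self-contained direct count would have to reproduce the dichotomy the paper detects via the one-sided limits, the sign change of $\Omega$, and the criterion of Lemma \ref{lem:zero}, which is exactly where the real work of the paper's proof lies.
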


The set $S_1$ on the RHS of \eqref{eq:spec2} is driven by a 3-parameter map
\begin{align}
\label{eq:3param}
\mathcal{U}: \mathbb{T}^2 \times \mathbb{R} \to \mathbb{T}^2 \times \mathbb{R}, \quad 
(\alpha, \beta, \lambda) \mapsto \left(\alpha_\downarrow(\alpha, \beta, \lambda), \beta_\downarrow(\alpha, \beta, \lambda), R(\alpha, \beta, \lambda)\right),
\end{align}
where $\mathbb{T}=\mathbb{R}/\mathbb{Z}$ denotes the unit torus.
Observe the full dependence of the image triple on the domain triple.
Unlike Theorem \ref{thm:1}, the spectral decimation function $R(\alpha,\beta,\cdot)$ in Theorem \ref{thm:2} is a non-rational function.
And given how the flux variables evolve under $\mathcal{U}$, it is generally not possible to describe the backward iterates of \eqref{eq:3param}.
The more natural approach is to study forward iterates of \eqref{eq:3param}.
We will discuss the dynamical implications in the next subsection \S\ref{sec:butterfly}.

Note, however, that in $S_1$ we have excluded points in the exceptional set $\mathcal{E}(\alpha,\beta)$.
Determining which of the exceptional values belong to the spectrum is usually the trickiest part of the spectral decimation program.
In Theorem \ref{thm:2} we have identified them in $S_2$ and $S_3$ on the RHS of \eqref{eq:spec2}.

\begin{remark}
Historically, Alexander \cite{A84} had obtained a 3-parameter map for the magnetic adjacency operator.
The authors of \cites{A84, DABK83,Ghez} have shown existence of some of these exceptional values in an \emph{ad hoc} manner, without giving a systematic proof. 
The present work completes the enumeration of the exceptional values in a self-consistent framework.
\end{remark}

The set $S_2$ includes \emph{at most} the three zeros of the cubic polynomial $\mathcal{D}(\beta,\cdot)$, whose graph is shown in Figure \ref{fig:Dplot}.
It is easy to verify that $\mathcal{D}(\beta,\cdot)$ does not have a zero of multiplicity $3$, and has a double zero only when $\beta\in \{0,\frac{1}{2}\}$---namely, $\frac{5}{4}$ when $\beta=0$, and $\frac{3}{4}$ when $\beta=\frac{1}{2}$---see Lemma \ref{lem:multzero} below.
Moreover, since $\mathcal{D}(\beta,\cdot)$ for two different values of $\beta$ differ by an additive constant, we see that the smallest zero of $\mathcal{D}(\beta,\cdot)$ lies in $[\frac{1}{2}, \frac{3}{4}]$; the middle zero, $[\frac{3}{4}, \frac{5}{4}]$; and the largest zero, $[\frac{5}{4}, \frac{3}{2}]$.

\begin{figure}
\centering
\includegraphics[width=0.4\textwidth]{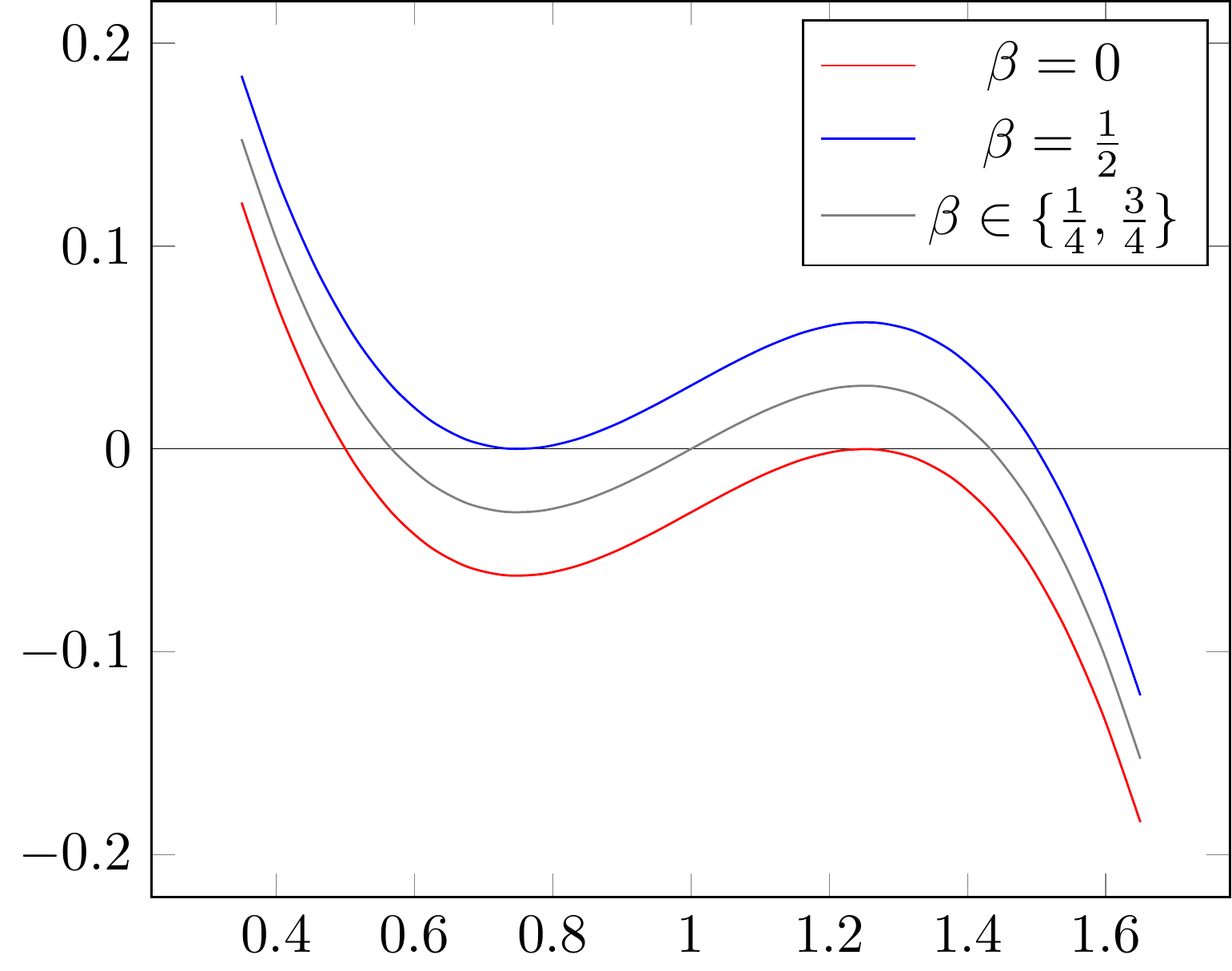}
\caption{The graph of $\mathcal{D}(\beta,\cdot)$.}
\label{fig:Dplot}
\end{figure}

Identifying which zeros of $\mathcal{D}(\beta,\cdot)$ appear in the spectrum is a complicated task, and we defer the case-by-case determination to the latter part of \S\ref{sec:nonhalfintflux}.
That said, we can make the following statements based on the proofs to be presented there.
\begin{proposition}
\label{prop:zerospec}
Suppose not both of $\alpha$ and $\beta$ are in $\{0,\frac{1}{2}\}$.
\begin{enumerate}
\item \label{p1} If $\lambda$ is a simple zero of $\mathcal{D}(\beta,\cdot)$ and $\lambda \in \sigma\left(\mathcal{L}^{(\alpha,\beta)}_N\right)$, then $R(\alpha, \beta, \lambda) \in \sigma\left(\mathcal{L}^{(\alpha_\downarrow(\alpha,\beta,\lambda),\beta_\downarrow(\alpha,\beta,\lambda))}_{N-1}\right)$.
\item \label{p2} If $\lambda$ is a double zero of $\mathcal{D}(\beta,\cdot)$, then for generic values of $\alpha$, we have that $\lambda \in  \sigma\left(\mathcal{L}^{(\alpha,\beta)}_N\right)$ whenever $N\geq 3$, with 
$$\mult\left(\mathcal{L}^{(\alpha, \beta)}_N,\lambda\right)= \frac{3^{N-1}-3}{2}+\mult\left(\mathcal{L}^{(\alpha_\downarrow(\alpha,\beta,\lambda), \beta_\downarrow(\alpha,\beta,\lambda))}_{N-1},R(\alpha,\beta,\lambda)\right).$$
The exceptions are when $\alpha\in \{\frac{1}{6},\frac{5}{6}\}$, $\beta=0$ and $\lambda=\frac{5}{4}$, or when $\alpha\in \{\frac{1}{3}, \frac{2}{3}\}$, $\beta=\frac{1}{2}$ and $\lambda=\frac{3}{4}$, in which case further analysis is required to determine the multiplicity of $\lambda$.
\end{enumerate}
\end{proposition}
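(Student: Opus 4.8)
The plan is to run both parts through the Schur-complement block decomposition that underlies spectral decimation. Partition $V_N = V_{N-1} \sqcup W_N$, where $W_N$, with $|W_N| = \dim_N - \dim_{N-1} = 3^N$, consists of the midpoint vertices eliminated in passing from $G_N$ to $G_{N-1}$; these split into the $3^{N-1}$ inner (downright) triangles of unit side length, each carrying flux $\beta$. With respect to this splitting,
\[
\mathcal{L}^{(\alpha,\beta)}_N - \lambda I = \begin{pmatrix} P - \lambda I & Q \\ Q^* & H - \lambda I \end{pmatrix},
\]
where the interior operator $H$ is block diagonal over the $3^{N-1}$ cells and each $3\times 3$ cell block has characteristic polynomial equal to a nonzero constant multiple of $\mathcal{D}(\beta, \cdot)$. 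Consequently, if $\lambda$ is a zero of $\mathcal{D}(\beta,\cdot)$ of multiplicity $m \in \{1,2\}$, then $\dim \ker(H - \lambda I) = m\cdot 3^{N-1}$, with an $m$-dimensional kernel on each cell. An eigenfunction $u = (u_b, u_i)$ at $\lambda$ obeys $(P-\lambda I)u_b + Q u_i = 0$ and $Q^* u_b + (H-\lambda I)u_i = 0$; projecting the second relation with the orthogonal projection $\Pi$ onto $\ker(H-\lambda I)$ yields the solvability constraint $\Pi Q^* u_b = 0$, determines $(I-\Pi)u_i$ from $u_b$ via the reduced resolvent on $(\ker(H-\lambda I))^\perp$, and leaves $\Pi u_i$ as the interior-supported degrees of freedom.

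For Part \ref{p1} the key point is that at a simple zero the interior kernel is too small to support a localized eigenfunction. Since $\dim\ker(H-\lambda I) = 3^{N-1} < \dim_{N-1}$ and the per-cell kernel direction couples nontrivially to the shared corner vertices, the restriction $Q|_{\ker(H-\lambda I)}$ is injective, so the only solution of $(H-\lambda I)u_i = 0$ with $Q u_i = 0$ is $u_i = 0$; equivalently there is no eigenfunction with $u_b = 0$. Hence every eigenfunction at $\lambda$ has $u_b \neq 0$. Eliminating the interior through the reduced resolvent turns the boundary equation into $(\mathcal{L}^{(\alpha_\downarrow,\beta_\downarrow)}_{N-1} - R(\alpha,\beta,\lambda)I)u_b = 0$, via the decimation identity already established (the scalar relating the regularized Schur complement to the coarse Laplacian is a multiple of $|\Psi(\alpha,\beta,\lambda)|$, which is nonzero since, for $R(\alpha,\beta,\lambda)$ to be defined, $\lambda$ is assumed not to be a zero of $\Psi$). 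Thus $u_b$ is a nonzero eigenfunction of $\mathcal{L}^{(\alpha_\downarrow,\beta_\downarrow)}_{N-1}$ at $R(\alpha,\beta,\lambda)$, proving $R(\alpha,\beta,\lambda) \in \sigma(\mathcal{L}^{(\alpha_\downarrow,\beta_\downarrow)}_{N-1})$; note that this argument uses no genericity in $\alpha$.

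For Part \ref{p2} I would compute the total multiplicity by rank--nullity applied to the coupled system above. When the boundary coupling attains its expected (generic) rank, the count collapses to the index-type identity
\[
\mult\big(\mathcal{L}^{(\alpha,\beta)}_N, \lambda\big) = \big(\dim\ker(H-\lambda I) - \dim_{N-1}\big) + \mult\big(\mathcal{L}^{(\alpha_\downarrow,\beta_\downarrow)}_{N-1}, R(\alpha,\beta,\lambda)\big),
\]
whose first term measures the excess of the interior kernel over the boundary and whose second term is the descending contribution. For a double zero, $\dim\ker(H-\lambda I) = 2\cdot 3^{N-1}$, so the first term equals $2\cdot 3^{N-1} - \frac{3^N+3}{2} = \frac{3^{N-1}-3}{2}$, matching the claimed formula; the hypothesis $N\geq 3$ guarantees both that this excess is nonnegative and that the cell structure is rich enough for the generic rank to be attained. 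The genericity in $\alpha$ is precisely the transversality of the range of $Q|_{\ker(H-\lambda I)}$ to the coarse eigenspace $\ker(\mathcal{L}^{(\alpha_\downarrow,\beta_\downarrow)}_{N-1} - R(\alpha,\beta,\lambda)I)$; at the resonant fluxes $\alpha \in \{\frac16,\frac56\}$ (with $\beta = 0$, $\lambda = \frac54$) and $\alpha \in \{\frac13,\frac23\}$ (with $\beta = \frac12$, $\lambda = \frac34$) this transversality fails, the rank drops, and the count must be redone, as carried out in Propositions \ref{prop:31} and \ref{prop:32}.

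The main obstacle is exactly this rank/transversality computation behind Part \ref{p2}: translating the per-cell two-dimensional kernels into a global statement about the coupling map $Q|_{\ker(H-\lambda I)}$ and its interaction with the coarse eigenspace, across the corner-sharing combinatorics of the gasket. This is where the gauge-dependent phases enter through the cell eigenvectors, so the relevant rank is genuinely a function of $\alpha$, and its drops at special fluxes are what produce the exceptional cases. A secondary technical point, already needed in Part \ref{p1}, is justifying the decimation identity at an exceptional $\lambda$: because $H - \lambda I$ is singular there, one must replace its inverse by the reduced resolvent on $(\ker(H-\lambda I))^\perp$ and verify that the resulting regularized Schur complement still equals the stated multiple of $\mathcal{L}^{(\alpha_\downarrow,\beta_\downarrow)}_{N-1} - R(\alpha,\beta,\lambda)I$; the matrix-tree and cycle-rooted-spanning-forest interpretation of the underlying determinants is the natural tool for controlling this limit.
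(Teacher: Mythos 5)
Your block-decomposition setup is the right starting point (it is the paper's, with $H=D$, $Q=B$), and your multiplicity bookkeeping for Part \ref{p2} reproduces the correct generic count $\frac{3^{N-1}-3}{2}+\mult(\mathcal{L}^{\Omega}_{N-1},R(\lambda))$. But the mechanism you propose for Part \ref{p1} is backwards, and the key claim it rests on is false exactly in the cases that matter. You assert that $Q|_{\ker(H-\lambda I)}$ is injective, hence that every eigenfunction at a simple zero $\lambda$ of $\mathcal{D}(\beta,\cdot)$ has $u_b\neq 0$, and you then restrict $u_b$ to get the coarse eigenvalue equation. The paper proves the opposite: since $\phi=|\Psi|/(4\mathcal{D})$ blows up at a simple zero of $\mathcal{D}$ (one checks $\Psi(\lambda)\neq 0$ there; you only \emph{assume} this), Lemma \ref{multiplicity}-\ref{eq:SD3} applies in the first sub-case of Proposition \ref{prop:31}-\ref{G2}, and every eigenfunction at $\lambda$ \emph{vanishes} on $V_{N-1}$ --- that is, whenever $\mult(\mathcal{L}^{\omega}_N,\lambda)>0$, the space $\ker B\cap\ker(D-\lambda)$ is nontrivial and your injectivity claim fails. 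The actual route to Part \ref{p1} is arithmetic, not a restriction argument: the multiplicity equals $-\frac{3^{N-1}+3}{2}+\mult(\mathcal{L}^{\Omega}_{N-1},R(\lambda))$ (or, in the sub-case governed by Lemma \ref{multiplicity}-\ref{eq:SD6}, $-\frac{3^{N-1}+3}{2}+2\,\mult(\mathcal{L}^{\Omega}_{N-1},R(\lambda))$), and positivity of the left side forces $\mult(\mathcal{L}^{\Omega}_{N-1},R(\lambda))>0$.

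For Part \ref{p2} there is a second, unaddressed obstruction: at a double zero of $\mathcal{D}(\beta,\cdot)$ one necessarily has $\Psi(\alpha,\beta,\lambda)=0$ as well (see Table \ref{tab:freepara}: the double zeros $\frac{5}{4}$, $\frac{3}{4}$ are precisely the $\mathbb{R}$-valued zeros of $\Psi$ when $\beta\in\{0,\frac{1}{2}\}$), so $R(\alpha,\beta,\lambda)$ and the twisted connection $\Omega$ are not directly defined at $\lambda$: ${\rm arg}\,\Psi(x)$ jumps by $\pi$ as $x$ crosses $\lambda$, and the limits only make sense after the sign-compensation identity \eqref{eq:limitagree}. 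Your ``regularized Schur complement equals a nonzero multiple of $\mathcal{L}^{\Omega}_{N-1}-R(\lambda)$'' step therefore cannot be carried over, since the scalar you invoke is $|\Psi(\lambda)|=0$. Moreover, the dichotomy producing the exceptional fluxes is not a transversality/rank condition on $Q|_{\ker(H-\lambda I)}$ against the coarse eigenspace; in the paper it is a purely scalar analytic criterion --- whether $\displaystyle\lim_{\mathbb{R}\ni x\to\lambda}\frac{1}{\phi(x)}\frac{\lambda-x}{R(\lambda)-R(x)}$ vanishes, equivalently whether $\mathfrak{F}(\lambda)-\mathfrak{F}(\cdot)$ has a multiple zero at $\lambda$ --- which is computed explicitly in Proposition \ref{prop:32}-\ref{II2} and yields exactly $\alpha\in\{\frac{1}{6},\frac{5}{6}\}$ (for $\beta=0$, $\lambda=\frac{5}{4}$) and $\alpha\in\{\frac{1}{3},\frac{2}{3}\}$ (for $\beta=\frac{1}{2}$, $\lambda=\frac{3}{4}$). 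You flag this computation as your ``main obstacle'' and leave it undone, so even granting the setup, the proposal does not identify the exceptional set, and the eigenfunction-support picture it would need (generic case: localized eigenfunctions vanishing on $V_{N-1}$, via Lemma \ref{multiplicity}-\ref{eq:SD3}) is again the reverse of your transversality heuristic.
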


Theorem \ref{thm:2} and Proposition \ref{prop:zerospec} imply the following result.
\begin{corollary}
\label{cor:supspec}
Suppose $\beta\notin\{0,\frac{1}{2}\}$.
Then
\begin{equation}
\label{eq:spec2ineq}
\sigma\left(\mathcal{L}_N^{(\alpha, \beta)}\right)
\subseteq
\left\{\lambda\in\mathbb{R} : \Psi(\alpha,\beta,\lambda)\neq 0,~ R(\alpha,\beta, \lambda)\in \sigma\left(\mathcal{L}^{(\alpha_\downarrow(\alpha,\beta,\lambda), \beta_\downarrow(\alpha,\beta,\lambda))}_{N-1}\right)\right\}
\sqcup
\left\{
\begin{array}{ll}
\frac{3}{2},&\text{if } \alpha=0\\
\frac{1}{2},&\text{if } \alpha=\frac{1}{2}
\end{array}
\right\}.
\end{equation}
If in addition $\alpha \notin \{0,\frac{1}{2}\}$, then
\begin{equation}
\label{eq:spec2ineqsimple}
\begin{aligned}
&\left\{\lambda\in\mathbb{R} : \mathcal{D}(\beta,\lambda) \neq0,~ R(\alpha,\beta, \lambda)\in \sigma\left(\mathcal{L}^{(\alpha_\downarrow(\alpha,\beta,\lambda), \beta_\downarrow(\alpha,\beta,\lambda))}_{N-1}\right)\right\}\\
&\subseteq 
\sigma\left(\mathcal{L}_N^{(\alpha, \beta)}\right)
\subseteq
\left\{\lambda\in\mathbb{R} :  R(\alpha,\beta, \lambda)\in \sigma\left(\mathcal{L}^{(\alpha_\downarrow(\alpha,\beta,\lambda), \beta_\downarrow(\alpha,\beta,\lambda))}_{N-1}\right)\right\}.
\end{aligned}
\end{equation}
\end{corollary}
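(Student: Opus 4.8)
The plan is to obtain both displays by bookkeeping from the decomposition $\sigma(\mathcal{L}_N^{(\alpha,\beta)}) = S_1(\alpha,\beta) \sqcup S_2(\beta) \sqcup S_3(\alpha)$ of Theorem~\ref{thm:2}, combined with part~\ref{p1} of Proposition~\ref{prop:zerospec}. The structural fact that drives everything is that, by Lemma~\ref{lem:multzero}, the hypothesis $\beta \notin \{0,\frac{1}{2}\}$ forces every zero of the cubic $\mathcal{D}(\beta,\cdot)$ to be simple. Hence every $\lambda \in S_2(\beta)$ is a \emph{simple} zero of $\mathcal{D}(\beta,\cdot)$ lying in $\sigma(\mathcal{L}_N^{(\alpha,\beta)})$, which is exactly the hypothesis of part~\ref{p1} of Proposition~\ref{prop:zerospec}. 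I would also record at the outset the one-line computation $\Psi(0,\beta,\frac{3}{2}) = \Psi(\frac{1}{2},\beta,\frac{1}{2}) = 0$, so that the single element of $S_3(\alpha)$ always lies in the zero set of $\Psi(\alpha,\beta,\cdot)$; this accounts for the disjoint-union symbol in \eqref{eq:spec2ineq}.

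For \eqref{eq:spec2ineq} I would treat the three pieces of $\sigma(\mathcal{L}_N^{(\alpha,\beta)})$ separately. The piece $S_3(\alpha)$ is literally the second summand of the right-hand side. For $S_1(\alpha,\beta)$, unravelling $\lambda \in \mathbb{R}\setminus\mathcal{E}(\alpha,\beta)$ through \eqref{eq:Eab} gives $\Psi(\alpha,\beta,\lambda) \neq 0$, while membership supplies $R(\alpha,\beta,\lambda) \in \sigma(\mathcal{L}_{N-1}^{(\alpha_\downarrow,\beta_\downarrow)})$, so $S_1(\alpha,\beta)$ sits inside the first summand. For $S_2(\beta)$, each $\lambda$ is a simple zero of $\mathcal{D}(\beta,\cdot)$ in the spectrum, so part~\ref{p1} of Proposition~\ref{prop:zerospec} yields $R(\alpha,\beta,\lambda) \in \sigma(\mathcal{L}_{N-1}^{(\alpha_\downarrow,\beta_\downarrow)})$; since by \eqref{eq:firstR} this $R$ is a finite real number, its denominator $|\Psi(\alpha,\beta,\lambda)|$ cannot vanish, whence $\Psi(\alpha,\beta,\lambda) \neq 0$ and $\lambda$ again lies in the first summand. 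Assembling the three pieces yields the inclusion, and the preliminary computation showing $\Psi = 0$ on $S_3(\alpha)$ confirms the two summands are disjoint.

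For \eqref{eq:spec2ineqsimple} I would impose the extra hypothesis $\alpha \notin \{0,\frac{1}{2}\}$, which makes $S_3(\alpha) = \emptyset$. The upper inclusion is then immediate from \eqref{eq:spec2ineq}, since deleting the constraint $\Psi \neq 0$ only enlarges the set. For the lower inclusion I would take any $\lambda$ with $\mathcal{D}(\beta,\lambda) \neq 0$ and $R(\alpha,\beta,\lambda) \in \sigma(\mathcal{L}_{N-1}^{(\alpha_\downarrow,\beta_\downarrow)})$; the latter forces $\Psi(\alpha,\beta,\lambda) \neq 0$ for $R$ to be defined, so $\lambda \notin \mathcal{E}(\alpha,\beta)$ by \eqref{eq:Eab}, and therefore $\lambda \in S_1(\alpha,\beta) \subseteq \sigma(\mathcal{L}_N^{(\alpha,\beta)})$ by Theorem~\ref{thm:2}.

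I do not anticipate a genuine analytic obstacle, as the corollary is an assembly of two earlier results. The one point requiring care is the coupling of the two defining conditions of the exceptional set in \eqref{eq:Eab}: wherever I assert $R(\alpha,\beta,\cdot)\in\sigma(\mathcal{L}_{N-1}^{(\alpha_\downarrow,\beta_\downarrow)})$ I must simultaneously be entitled to $\Psi \neq 0$, and conversely a zero of $\mathcal{D}$ recorded in $S_2$ must genuinely decimate. Both are settled by the single observation that $R$ is a finite real number exactly when $\Psi \neq 0$, so the entire argument ultimately rests on the simplicity of the zeros of $\mathcal{D}(\beta,\cdot)$ for $\beta \notin \{0,\frac{1}{2}\}$ together with part~\ref{p1} of Proposition~\ref{prop:zerospec}; were a double zero present one would instead be in the regime of part~\ref{p2}, where the clean two-sided inclusion fails.
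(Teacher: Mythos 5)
Your proposal is correct and follows essentially the same route as the paper's proof: both obtain \eqref{eq:spec2ineq} by bookkeeping from the decomposition $S_1\sqcup S_2\sqcup S_3$ of Theorem \ref{thm:2}, using Lemma \ref{lem:multzero} to guarantee that all zeros of $\mathcal{D}(\beta,\cdot)$ are simple when $\beta\notin\{0,\frac{1}{2}\}$ and Proposition \ref{prop:zerospec}\eqref{p1} to push $S_2(\beta)$ into the decimating set, and both deduce \eqref{eq:spec2ineqsimple} by noting $S_3(\alpha)=\emptyset$ once $\alpha\notin\{0,\frac{1}{2}\}$. The only cosmetic difference is that where the paper explicitly invokes Proposition \ref{exceptionalset2} and Proposition \ref{prop:33} to dispose of the Case-III real zero of $\Psi(\alpha,\beta,\cdot)$, you reach the same exclusion by observing that the condition $R(\alpha,\beta,\lambda)\in\sigma(\mathcal{L}^{(\alpha_\downarrow,\beta_\downarrow)}_{N-1})$ presupposes $\Psi(\alpha,\beta,\lambda)\neq 0$ for $R$ to be defined --- a legitimate shortcut, since your already-established \eqref{eq:spec2ineq} (hence Theorem \ref{thm:2}, whose proof absorbs Proposition \ref{prop:33}) guarantees the spectrum avoids real zeros of $\Psi$.
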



\subsection{The Hofstadter-Sierpinski butterfly}
\label{sec:butterfly}

\begin{figure}
    \centering
    \includegraphics[width=0.85\textwidth]{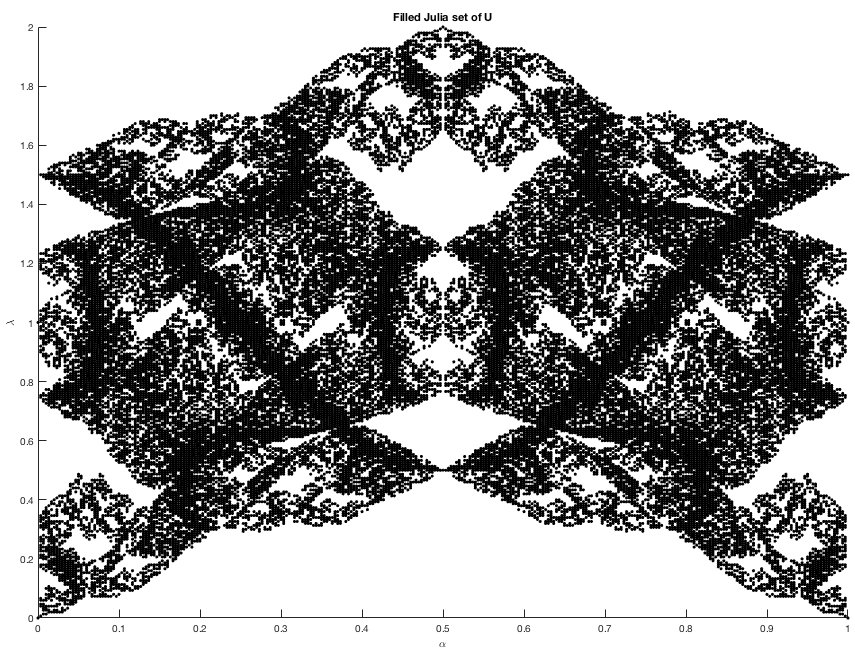}
    \caption{(The Hofstadter-Sierpinski butterfly.) The intersection of the filled Julia set of the 3-parameter map $\mathcal{U}:(\alpha,\beta, \lambda)\mapsto \left(\alpha_\downarrow(\alpha,\beta,\lambda), \beta_\downarrow(\alpha,\beta,\lambda),R(\alpha,\beta,\lambda)\right)$ and $(\alpha,\alpha) \times \mathbb{R}$. We claim that this gives the correct approximation of $\sigma(\mathcal{L}_\infty^{(\alpha,\alpha)})$ (see \S\ref{sec:butterfly} for details). See Appendix \ref{app} for the MATLAB code used to generate this figure.}
    \label{fig:newspectrum}
\end{figure}

We now discuss implications for the magnetic spectrum $\sigma(\mathcal{L}_\infty^{(\alpha,\beta)})$ on the infinite $SG$ lattice $G_\infty$.
Let $\mathcal{K}(\mathcal{U})$ denote the set of all $((\alpha,\beta),\lambda) \in \mathbb{T}^2 \times \mathbb{C}$ for which the forward orbit $\{\mathcal{U}^{\circ k} (\alpha,\beta,\lambda)\}_{k=0}^\infty$ is bounded, also known as the \textbf{filled Julia set} of $\mathcal{U}$.
By definition, $\mathcal{K}(\mathcal{U})$ is the complement of the basin of attraction to infinity, $\mathbb{T}^2 \times \{\infty\}$.
The topological boundary of $\mathcal{K}(\mathcal{U})$ is the Julia set $\mathcal{J}(\mathcal{U})$.
Thus $\mathcal{K}(\mathcal{U})$ is equal to the union of all bounded components of the Fatou set $(\mathbb{T}^2 \times \hat{\mathbb{C}}) \setminus \mathcal{J}(\mathcal{U})$, together with the Julia set $\mathcal{J}(\mathcal{U})$.

\begin{theorem}
\label{thm:specinf}
Suppose not both of $\alpha$ and $\beta$ are in $\{0,\frac{1}{2}\}$. Then
\begin{align}
\label{eq:sigmaLinfty}
\sigma(\mathcal{L}^{(\alpha,\beta)}_\infty) = S^\infty_1(\alpha,\beta) \cup S^\infty_2(\beta) \cup S^\infty_3(\alpha),
\end{align}
where
\begin{align}
S^\infty_1(\alpha,\beta) &\subset \mathcal{K}(\mathcal{U}) \cap \left( (\alpha,\beta) \times \mathbb{R}\right),\\
\emptyset \subset S^\infty_2(\beta) &\subset \{\lambda\in \mathbb{R}: \mathcal{D}(\beta,\lambda)=0\},\\
S^\infty_3(\alpha) &=  \left\{
\begin{array}{ll}
 \frac{3}{2},&\text{if } \alpha=0\\
\frac{1}{2},&\text{if } \alpha=\frac{1}{2}
\end{array}
\right\}.
\end{align}
\end{theorem}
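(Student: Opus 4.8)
The plan is to obtain $\sigma(\mathcal{L}^{(\alpha,\beta)}_\infty)$ as the limit of the finite-volume spectra of Theorem \ref{thm:2}, following the rationale of Teplyaev \cite{Teplyaev} already used to prove Corollary \ref{cor:magspecinfiniteSG}. Because $G_\infty = \bigcup_N G_N$ is an increasing union and $\mathcal{L}^{(\alpha,\beta)}_N$ agrees with $\mathcal{L}^{(\alpha,\beta)}_\infty$ except at the finitely many boundary vertices of $G_N$, I would first establish, via the Malozemov--Teplyaev limiting scheme \cites{Malozemov,Teplyaev} (equivalently, strong resolvent convergence of a suitable sequence of approximants), that a real number lies in $\sigma(\mathcal{L}^{(\alpha,\beta)}_\infty)$ precisely when it is either an accumulation point of $\bigcup_N \sigma(\mathcal{L}^{(\alpha,\beta)}_N)$ or an eigenvalue admitting a finitely supported eigenfunction. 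The mechanism is insensitive to the magnetic weights, since the Schur-complement identity driving spectral decimation is algebraic in the complex conductances. The three-set decomposition \eqref{eq:sigmaLinfty} then arises by sorting these limiting eigenvalues according to the trichotomy $S_1 \sqcup S_2 \sqcup S_3$ of \eqref{eq:spec2}.

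For the generic part $S_1^\infty$, the key observation is that the equivalence defining $S_1$ passes to the limit: for $\lambda \notin \mathcal{E}(\alpha,\beta)$, spectral decimation gives $\lambda \in \sigma(\mathcal{L}^{(\alpha,\beta)}_\infty)$ if and only if $R(\alpha,\beta,\lambda) \in \sigma(\mathcal{L}^{(\alpha_\downarrow,\beta_\downarrow)}_\infty)$, that is, if and only if $\mathcal{U}(\alpha,\beta,\lambda)$ has its $\lambda$-coordinate in the next-level spectrum. Iterating, the entire forward orbit $\{\mathcal{U}^{\circ k}(\alpha,\beta,\lambda)\}_{k \geq 0}$ consists of triples whose $\lambda$-coordinate lies in the spectrum of some normalized magnetic Laplacian, and every such operator is nonnegative and self-adjoint with spectrum contained in $[0,2]$ (since $\|D^{-1}A^\omega\| \leq 1$ for unit-modulus weights). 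Hence the $\lambda$-coordinates of the orbit remain in $[0,2]$, and as the flux coordinates live on the compact torus $\mathbb{T}^2$, the whole orbit is bounded. By definition this means $((\alpha,\beta),\lambda) \in \mathcal{K}(\mathcal{U})$, giving $S_1^\infty \subset \mathcal{K}(\mathcal{U}) \cap ((\alpha,\beta)\times\mathbb{R})$. This is exactly where I would replace Teplyaev's backward-iterate / Julia-set description by a forward-orbit / filled-Julia-set description, since the non-invertibility of the flux evolution under $\mathcal{U}$ obstructs any clean backward-iterate formula.

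For the exceptional parts I would argue via localized eigenfunctions. The value in $S_3^\infty$ ($\tfrac{3}{2}$ when $\alpha=0$, $\tfrac{1}{2}$ when $\alpha=\tfrac12$) already carries finite-volume multiplicity $\frac{3^N+3}{2}$, supported by eigenfunctions vanishing outside a single triangle; placing independent copies in the infinitely many cells of $G_\infty$ produces an eigenvalue of infinite multiplicity, so $S_3^\infty$ persists verbatim. For $S_2^\infty$, a zero $\lambda$ of $\mathcal{D}(\beta,\cdot)$ enters the spectrum precisely when the localized eigenfunctions constructed in the proof of Proposition \ref{prop:zerospec} exist; since $\mathcal{D}(\beta,\cdot)$ is $N$-independent, the same compactly supported eigenfunctions embed into $G_\infty$, yielding $\emptyset \subset S_2^\infty \subset \{\lambda : \mathcal{D}(\beta,\lambda)=0\}$, with the precise identity of the surviving zeros inherited from the case analysis there. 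Exhibiting these eigenfunctions, together with those obtained by pulling the $S_2,S_3$ eigenfunctions back through finitely many decimation steps, simultaneously yields a complete orthogonal system in $\ell^2(V_\infty)$, proving that the spectrum is pure point with every eigenvalue of infinite multiplicity.

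The main obstacle I anticipate is the faithful transfer of the exceptional-value bookkeeping to the infinite lattice: deciding which zeros of $\mathcal{D}(\beta,\cdot)$ actually survive into $S_2^\infty$ is already the delicate, case-dependent heart of Theorem \ref{thm:2} (cf.\@ Proposition \ref{prop:zerospec} and the deferred analysis of \S\ref{sec:nonhalfintflux}), and in the limit one must additionally verify that no exceptional eigenvalue is created or destroyed by the $N\to\infty$ passage and that the localized eigenfunctions are genuinely square-summable and complete. A secondary technical point is justifying the forward-orbit boundedness characterization rigorously for the non-rational, parameter-coupled map $\mathcal{U}$, including confirming that the basin of attraction to $\mathbb{T}^2 \times \{\infty\}$ is exactly the complement of $\mathcal{K}(\mathcal{U})$ in the relevant fiber.
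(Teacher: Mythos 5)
Your proposal follows essentially the same route as the paper's proof: the paper establishes the key containment $S^\infty_1(\alpha,\beta) \subset \mathcal{K}(\mathcal{U}) \cap \left((\alpha,\beta)\times\mathbb{R}\right)$ by exactly your forward-orbit argument---apply Theorem \ref{thm:2} (extended to $G_\infty$ via the strong convergence of $\mathcal{L}^{(\alpha,\beta)}_N$ to $\mathcal{L}^{(\alpha,\beta)}_\infty$) iteratively, observing that every iterate's $\lambda$-coordinate lies in $\bigcup_{(\alpha',\beta')\in\mathbb{T}^2}\sigma(\mathcal{L}^{(\alpha',\beta')}_\infty)$, a bounded set since the operator norm of $\mathcal{L}^\omega_\infty$ is uniformly bounded over all unitary connections on the bounded-degree graph (your $\sigma \subset [0,2]$ bound is the same fact), while the flux coordinates stay on the compact torus---and the $S^\infty_2$, $S^\infty_3$ containments are inherited from the trichotomy of Theorem \ref{thm:2} just as you describe. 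One caution: your closing claim that the localized and pulled-back eigenfunctions form a complete orthogonal system, so that the spectrum is pure point with infinite multiplicities, goes beyond the theorem and is precisely what the paper leaves open for these flux values (completeness of finitely supported eigenfunctions is listed among the open questions), but since Theorem \ref{thm:specinf} does not assert this, it does not affect the validity of your argument for the stated result.
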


\begin{conjecture}
\label{conj:S1}
$S^\infty_1(\alpha,\beta) \supset \mathcal{J}(\mathcal{U}) \cap ((\alpha,\beta)\times \mathbb{R})$.
\end{conjecture}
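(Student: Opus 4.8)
The plan is to sandwich the conjecture between two ingredients: the inclusion $S_1^\infty(\alpha,\beta)\subset\sigma(\mathcal{L}_\infty^{(\alpha,\beta)})$ furnished by Theorem \ref{thm:specinf}, and the classical density of backward iterates in a Julia set. Since $\mathcal{L}_\infty^{(\alpha,\beta)}$ is self-adjoint, $\sigma(\mathcal{L}_\infty^{(\alpha,\beta)})$ is closed, so it is enough to exhibit a countable family of genuine eigenvalues lying in the slice $(\alpha,\beta)\times\mathbb{R}$ whose closure contains $\mathcal{J}(\mathcal{U})\cap((\alpha,\beta)\times\mathbb{R})$. As in Teplyaev's treatment of the $(0,0)$ case \cite{Teplyaev}, the source of such eigenvalues is localized eigenfunctions: a finite-level eigenfunction of $\mathcal{L}_N^{(\alpha,\beta)}$ that vanishes in a neighborhood of the cell boundaries extends by zero to an $\ell^2$ eigenfunction of $\mathcal{L}_\infty^{(\alpha,\beta)}$ with the same eigenvalue. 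First I would verify that the $S_1$-type eigenvalues produced by spectral decimation carry such localized modes, so that they persist to the infinite lattice and it remains only to prove their density in $\mathcal{J}(\mathcal{U})\cap((\alpha,\beta)\times\mathbb{R})$.

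Next I would recast the finite-level $S_1$-eigenvalues as a backward orbit of $\mathcal{U}$ confined to the slice. Iterating the description of $S_1(\alpha,\beta)$ in Theorem \ref{thm:2}, a point $\lambda$ in the slice is a level-$N$ eigenvalue of $S_1$-type exactly when the forward orbit $\{\mathcal{U}^{\circ j}(\alpha,\beta,\lambda)\}_{j=0}^{N}$ avoids the exceptional set at each step and its terminal $\lambda$-coordinate lies in $\sigma(\mathcal{L}_0^{(\cdot,\cdot)})$. Writing $\Sigma_0$ for the level-$0$ spectral data $\{((a,b),\mu): \mu\in\sigma(\mathcal{L}_0^{(a,b)})\}$, the slice-restricted finite $S_1$-spectra are therefore the preimages $\mathcal{U}^{-N}(\Sigma_0)\cap((\alpha,\beta)\times\mathbb{R})$, and their union over $N$ is a backward orbit of $\mathcal{U}$ returning, by construction, to the prescribed fluxes $(\alpha,\beta)$.

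The crux is to show these backward iterates are dense in $\mathcal{J}(\mathcal{U})\cap((\alpha,\beta)\times\mathbb{R})$. For a single rational map of degree $\geq 2$ this density is \cite{Milnor}*{Corollary 4.13}, and it is precisely the tool that settles the half-integer cases through Corollary \ref{cor:magspecinfiniteSG}; the obstruction here is that $\mathcal{U}$ is not a holomorphic self-map of a complex manifold but a skew product over the torus whose fiber update involves $\theta(\alpha,\beta,\cdot)=\arg\Psi/2\pi$ and is only real-analytic in the flux variables. I would attack this in two stages. First, I would develop the analog of the repelling-periodic-point description $\mathcal{J}(\mathcal{U})=\overline{\{\text{repelling periodic points}\}}$ and the equidistribution of preimages for this fibered map, most plausibly via a pluripotential Green-current construction or the machinery for polynomial skew products. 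Second, I would upgrade ambient density to density \emph{within the fixed slice} $(\alpha,\beta)\times\mathbb{R}$: forward iteration drives the fluxes off the slice, so one must instead control how the slice-confined backward orbits of the previous paragraph accumulate transversally onto $\mathcal{J}(\mathcal{U})$.

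I expect the main obstacle to be exactly this lack of an off-the-shelf complex-dynamical theory for the non-holomorphic fibered map $\mathcal{U}$: both the characterization of $\mathcal{J}(\mathcal{U})$ and the slice-wise equidistribution of its preimages must be built from scratch. A secondary difficulty is the bookkeeping of the exceptional set along backward orbits---one must check that the preimages either avoid $\mathcal{E}(\alpha_\downarrow,\beta_\downarrow)$ at every stage, so that Theorem \ref{thm:2} applies and each is a bona fide $S_1$-eigenvalue, or, when they meet a zero of $\mathcal{D}$ or $\Psi$, that these are reabsorbed through $S_2$ and $S_3$ as in Proposition \ref{prop:zerospec}. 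A tempting shortcut, which I would pursue in parallel, is to prove that the flux coordinates of backward orbits accumulating on $\mathcal{J}(\mathcal{U})$ are driven into the autonomous $(0,0)$ corner of the decimation diagram (Figure \ref{fig:SDdiagram}); such a reduction would bring the single quadratic $R(0,0,\cdot)$ back into play and let \cite{Milnor}*{Corollary 4.13} close the argument directly.
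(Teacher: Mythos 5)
The statement you are trying to prove is not a theorem of the paper at all: it is stated as Conjecture \ref{conj:S1}, and the authors say explicitly that they are ``unable to prove Conjecture \ref{conj:S1} at the moment, given the complicated dynamics of the 3-parameter non-rational map $\mathcal{U}$.'' So there is no paper proof to compare against, and your proposal, read as a proof, founders on exactly the gap the authors identify. Your crux step---density of the slice-restricted backward iterates of $\mathcal{U}$ in $\mathcal{J}(\mathcal{U})\cap((\alpha,\beta)\times\mathbb{R})$---is precisely the open question the paper poses (whether $\bigcup_{k=0}^\infty \mathcal{U}^{\circ -k}(\alpha,\beta,z)$ is everywhere dense in $\mathcal{J}(\mathcal{U})$ when $(\alpha,\beta,z)\in\mathcal{J}(\mathcal{U})$), and the paper already observes that an affirmative answer would yield the conjecture by the arguments of \cite{Malozemov} or \cite{Halfline}; to that extent your outline reproduces the authors' own suggested route rather than supplying a proof. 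Your two-stage plan (``develop the repelling-periodic-point description \dots\ via a pluripotential Green-current construction or the machinery for polynomial skew products'') does not engage the real obstruction: $\mathcal{U}$ is not holomorphic in any variable, since $R$ involves $|\Psi(\alpha,\beta,\lambda)|$ and the flux update involves $\arg\Psi$, so the equidistribution theory for holomorphic (polynomial) skew products simply does not apply, and no substitute is constructed. Every load-bearing assertion in your third paragraph is phrased as ``I would'' or ``I expect,'' which is a research program, not an argument.

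Two subsidiary steps would also fail as written. First, your persistence mechanism via localized eigenfunctions is misattached: in the paper's framework the eigenfunctions that vanish on $V_{N-1}$ (the candidates for extension by zero to $G_\infty$) arise at \emph{exceptional} values, via Lemma \ref{multiplicity}-\ref{eq:SD3} and its relatives, i.e.\ the $S_2$ part of the spectrum, whereas the $S_1$ eigenvalues come from Lemma \ref{specdecthm}, whose eigenfunctions are pulled back from level $N-1$ by $f\mapsto(P_\parallel^*-P_\perp^*(D-\lambda)^{-1}C)f$ and have no reason to be compactly supported; indeed the completeness of finitely supported magnetic eigenfunctions is itself listed among the paper's open questions. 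Second, your ``tempting shortcut'' of steering backward orbits into the autonomous $(0,0)$ corner of Figure \ref{fig:SDdiagram} is blocked by Proposition \ref{prop:fluxevolve}: the flux sum evolves by $\alpha_{N-1}+\beta_{N-1}=4(\alpha_N+\beta_N)\pmod 1$, so an orbit can only land in $\{0,\tfrac12\}^2$ if the flux sum is of a special rational type; for generic $(\alpha,\beta)$ the corner is never reached and $R(0,0,\cdot)$, hence \cite{Milnor}*{Corollary 4.13}, never comes into play. In short, your structure is sound and consistent with Theorem \ref{thm:2} and Theorem \ref{thm:specinf}, but the statement remains, as in the paper, an open conjecture, and nothing in the proposal closes it.
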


On the one hand, Theorem \ref{thm:specinf} follows from the strong convergence of $\mathcal{L}^{(\alpha,\beta)}_N$ to $\mathcal{L}^{(\alpha,\beta)}_\infty$, Theorem \ref{thm:2}, and the definition of the filled Julia set of $\mathcal{U}$.
On the other hand, we are unable to prove Conjecture \ref{conj:S1} at the moment, given the complicated dynamics of the 3-parameter non-rational map $\mathcal{U}$.
For instance, we do not know the answer to this basic question: If $(\alpha,\beta,z)\in \mathcal{J}(\mathcal{U})$, is the union of its backward iterates, $\bigcup_{k=0}^\infty \mathcal{U}^{\circ -k}(\alpha,\beta,z)$, everywhere dense in $\mathcal{J}(\mathcal{U})$?\footnote{The answer is yes for 1-parameter rational functions on $\hat{\mathbb{C}}$ of degree $\geq 2$ \cite[Corollary 4.13]{Milnor}. This provides an algorithm for numerically generating pictures of the Julia set of a rational function.}
If the answer is affirmative, then Conjecture \ref{conj:S1} may be proved in the same way as \cite[Theorem 5.8(3)]{Malozemov} or \cite[Theorem 11]{Halfline} using input from Theorem \ref{thm:2}.

In any case, this is our best quantitative answer to Bellissard's question, concerning the relationship between the dynamical spectrum and the actual spectrum.
To summarize: While part of the magnetic spectrum is recursively generated, there are exceptional values which do not arise via this mechanism and carry infinite multiplicity (\emph{cf.\@} Theorem \ref{thm:2}).

Let us specialize Theorem \ref{thm:specinf} to the case $\alpha=\beta$.
Figure \ref{fig:newspectrum} shows $\mathcal{K}(\mathcal{U}) \cap \{(\alpha,\alpha) \times \mathbb{R})$, which resembles a butterfly whose wings have self-similar patterns.
This is the Sierpinski gasket counterpart to the Hofstadter butterfly obtained originally on the square lattice \cite{hofstadter}; for a lack of better name, we shall call it the \textbf{Hofstadter-Sierpinski butterfly}.
To the best of the authors' knowledge, previous attempts at solving the Hofstatder-Sierpinski butterfly were all based on numerical computations on finite-level gasket graphs, \emph{cf.\@} \cite{quasicrystals}*{Figure 2}, \cite{UConnREU}, and \cite{BCN18}*{Figure 2(d)}.

Meanwhile, we would like to correct an inaccurate statement made in the physics literature.
Consider the 2-parameter map $\mathcal{U}_2:(\alpha,\lambda)\mapsto (4\alpha, R(\alpha,\alpha,\lambda))$, whose filled Julia set is shown in Figure \ref{fig:spectrum}, and was presented as \cite{Ghez}*{Figure 2}.
The authors of \cite{Ghez} claimed that this $2$-parameter map produce a good approximation of $\sigma(\mathcal{L}^{(\alpha,\alpha)}_\infty)$.
In the same paper they also provided data from superconductivity measurements showing good agreement between theory and experiment.
By comparing Figures \ref{fig:newspectrum} and \ref{fig:spectrum}, it is safe to conclude that this is not the case.
In fact, Figure \ref{fig:spectrum} produces the correct approximation of $\sigma(\mathcal{L}^{(\alpha,\alpha)}_\infty)$ only when $\alpha \in \{0,\frac{1}{2}\}$, the reason being that $\mathbb{R}\ni \lambda\mapsto \Psi(\alpha, \alpha,\lambda)$ is $\mathbb{R}$-valued at these flux values. Once $\alpha \notin \{0,\frac{1}{2}\}$, $\mathbb{R}\ni \lambda\mapsto \Psi(\alpha,\alpha, \lambda)$ is in general $\mathbb{C}$-valued, and its argument $\theta={\rm arg}\Psi$ must be taken into account when deducing the magnetic fluxes.

\begin{figure}
    \centering
    \includegraphics[width=0.85\textwidth]{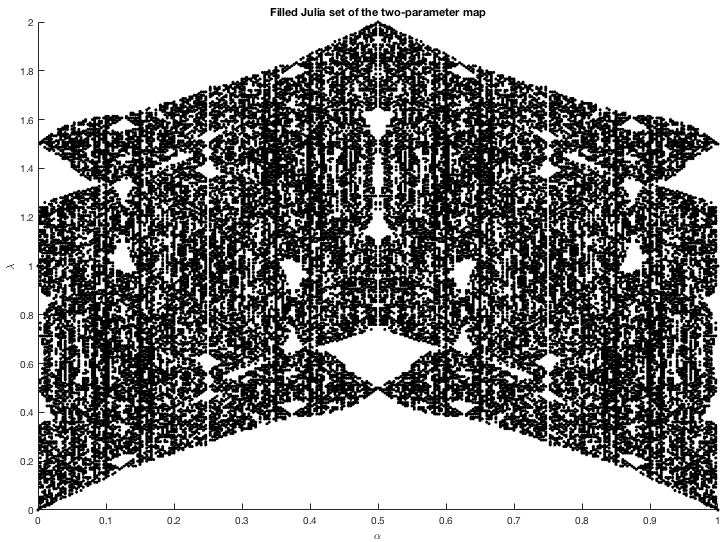}
    \caption{The filled Julia set of the 2-parameter map $\mathcal{U}_2:(\alpha, \lambda)\mapsto \left(4\alpha, R(\alpha, \alpha, \lambda)\right)$, as was used to generate \cite{Ghez}*{Figure 2}.
    This gives the correct approximation of $\sigma(\mathcal{L}^{(\alpha,\alpha)}_\infty)$ only when $\alpha\in \{0,\frac{1}{2}\}$.}
    \label{fig:spectrum}
\end{figure}

The reason for this discrepancy will become clear when we prove Proposition \ref{prop:fluxevolve} below.
But here is the take-away message: If the flux through every upright and downright cell equals $\alpha$, then upon decimation, it is \emph{false} that the flux through a next upright (or downright) cell equal $4\alpha$, despite the fact that the flux through a rhombus (formed by adjoining an upright cell to a downright one) always equals $8\alpha$. 
More importantly, the post-decimation fluxes through each triangle depend on the spectral parameter $\lambda$.

From a technical standpoint, the new aspect of our spectral decimation analysis is that the function $R(\alpha,\beta,\cdot)$ in \eqref{eq:firstR} is not rational, due to the appearance of $|\Psi(\alpha,\beta,\cdot)|$ in the denominator.
All previous mathematical works on spectral decimation \cites{RammalToulouse, FukushimaShima, Shima, Teplyaev, Malozemov, 3n-gasket, Halfline, diamond} involve $R$ rational.
While it may seem an unavoidable nuisance to deal with non-rational functions, we nevertheless can carry out spectral decimation after applying some care.


\subsubsection*{Organization for the rest of the paper}

In \S\ref{sec:specdec} we discuss the general mechanics of spectral decimation. 
In \S\ref{sec:ssSG} we demonstrate the spectral self-similarity of the magnetic Laplacian on $SG$, furnished with all the necessary computations.
These two sections lay the technical groundwork from which we solve the magnetic spectrum on $SG$ in \S\ref{sec:solution}, proving Theorems \ref{thm:1} and \ref{thm:2}, Proposition \ref{prop:zerospec}, Corollary \ref{cor:supspec}, and Theorem \ref{thm:specinf}.
Finally, in \S\ref{sec:det} we provide a combinatorial application of Theorem \ref{thm:1}, establishing formulae for the magnetic Laplacian determinants (Theorem \ref{thm:3}) and the corresponding asymptotic complexities (Corollary \ref{cor:ent}).

\subsubsection*{Acknowledgements}
We thank Alexander Teplyaev and Richard Kenyon for useful conversations during the initial stage of this work;
Quan Vu for his early numerical contributions to cycle-rooted spanning forests on the Sierpinski gasket;
and the anonymous referee for critical comments which helped us improve the paper.


\renewcommand{\arraystretch}{1.25}

\section{Mechanics of spectral decimation} 
\label{sec:specdec}

In this section we give a general account of Schur complementation and the spectral decimation procedure. 
While the essential ideas can be found in \cite{3n-gasket}, the methods therein apply only to spectral decimation functions $R$ which are rational. 
Thus we use this opportunity to explain not only the adaptation to non-rational functions, but also a number of subtleties in the spectral decimation procedure.


\subsection{Schur complement \& functional identities}
\label{sec:Schur}

We start with an elementary matrix identity.
Observe that $A-BD^{-1}C$ is the Schur complement of the block matrix with respect to the $D$ block.

\begin{proposition}
\label{prop:matrixid}
Suppose $\begin{bmatrix} A&B\\C&D\end{bmatrix}$ is a square block matrix with the square block $D$ invertible. Then
\begin{equation}
\label{eq:Schur0}
\begin{bmatrix} A&B\\C&D\end{bmatrix} = \begin{bmatrix}A-BD^{-1}C & BD^{-1} \\ 0 & I\end{bmatrix} \begin{bmatrix} I&0\\C&D\end{bmatrix}.
\end{equation}
Furthermore, the block matrix is invertible with inverse
\begin{equation}
\label{eq:matrixinverse}
\begin{aligned}
\begin{bmatrix}A&B\\C&D\end{bmatrix}^{-1} &
= \begin{bmatrix}I&0\\-D^{-1}C&D^{-1}\end{bmatrix}
\begin{bmatrix}(A-BD^{-1}C)^{-1}& -(A-BD^{-1}C)^{-1}BD^{-1}\\0&I\end{bmatrix} \\
&= \begin{bmatrix}0&0\\0&D^{-1}\end{bmatrix} + \begin{bmatrix}I\\-D^{-1}C\end{bmatrix} (A-BD^{-1}C)^{-1} \begin{bmatrix}I & -BD^{-1}\end{bmatrix}.
\end{aligned}
\end{equation}
\end{proposition}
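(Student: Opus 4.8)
The plan is to prove both assertions by direct block-matrix multiplication, letting the factorization \eqref{eq:Schur0} drive the inversion formula \eqref{eq:matrixinverse}. Denote the block matrix by $M$ and write $S := A - BD^{-1}C$ for the Schur complement. First I would verify \eqref{eq:Schur0} by multiplying out its right-hand side block by block: the $(1,1)$ entry is $S\cdot I + BD^{-1}\cdot C = A$, the $(1,2)$ entry is $S\cdot 0 + BD^{-1}\cdot D = B$, and the bottom row reproduces $C$ and $D$ at once. This step uses only that $D^{-1}$ exists, which is the standing hypothesis.

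For the inverse, I would exploit the fact that \eqref{eq:Schur0} exhibits $M$ as a product $M = UL$, with $U = \begin{bmatrix} S & BD^{-1}\\ 0 & I\end{bmatrix}$ block upper-triangular and $L = \begin{bmatrix} I & 0\\ C & D\end{bmatrix}$ block lower-triangular. Taking determinants in \eqref{eq:Schur0} gives $\det M = \det(S)\det(D)$, so that (with $D$ invertible) $M$ is invertible precisely when $S$ is. I would then invert each triangular factor by inspection, checking via one multiplication each that $L^{-1} = \begin{bmatrix} I & 0\\ -D^{-1}C & D^{-1}\end{bmatrix}$ and $U^{-1} = \begin{bmatrix} S^{-1} & -S^{-1}BD^{-1}\\ 0 & I\end{bmatrix}$, and assemble $M^{-1} = L^{-1}U^{-1}$, which is exactly the first line of \eqref{eq:matrixinverse}.

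To reach the additive form on the second line of \eqref{eq:matrixinverse}, I would carry out the product $L^{-1}U^{-1}$ explicitly, obtaining the four blocks $S^{-1}$, $-S^{-1}BD^{-1}$, $-D^{-1}CS^{-1}$, and $D^{-1}+D^{-1}CS^{-1}BD^{-1}$. Recognizing that the outer product $\begin{bmatrix} I\\ -D^{-1}C\end{bmatrix} S^{-1}\begin{bmatrix} I & -BD^{-1}\end{bmatrix}$ reproduces all four blocks except for the extra $D^{-1}$ summand in the $(2,2)$ slot, I would record that summand as the sole nonzero entry of $\begin{bmatrix} 0&0\\0&D^{-1}\end{bmatrix}$, yielding the claimed decomposition.

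Since the entire argument is verification by multiplication, there is no genuine computational obstacle; the only point demanding care is conceptual. I would emphasize that the hypothesis ``$D$ invertible'' alone does not force $M$ invertible — one also needs $S$ invertible — so the ``Furthermore'' clause is to be read as: whenever $M^{-1}$ exists it is given by \eqref{eq:matrixinverse}, and by $\det M = \det S\,\det D$ this invertibility reduces to that of $S$. This is precisely the feature that makes the identity useful later, where $D$ indexes the interior (to-be-eliminated) vertices and invertibility of the effective operator $S$ on the remaining vertices governs the exceptional set for spectral decimation.
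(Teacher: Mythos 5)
Your proposal is correct and proceeds exactly as the paper (implicitly) does: the paper offers no written proof, treating \eqref{eq:Schur0} and \eqref{eq:matrixinverse} as elementary identities checked by block multiplication, which is precisely your verification of the $UL$ factorization, the triangular inverses, and the additive rearrangement of $L^{-1}U^{-1}$. Your closing caveat is also well taken: as stated, the ``Furthermore'' clause is slightly loose, since invertibility of $D$ alone does not force invertibility of the block matrix --- the formula \eqref{eq:matrixinverse} tacitly assumes $A-BD^{-1}C$ is invertible, and your reduction via $\det M = \det(A-BD^{-1}C)\det D$ makes this hypothesis explicit.
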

Observe that \eqref{eq:Schur0} implies the determinant identity $\det\begin{bmatrix} A&B\\C&D\end{bmatrix} = \det(A-BD^{-1}C) \det D$.

Let $V$ be a countable set, $\mu$ be a nonnegative measure on $V$, and
$L^2(V,\mu)$ be the Hilbert space of $\mathbb{C}$-valued functions on $V$
with inner product $\langle f,g\rangle = \sum_{x\in V} \bar{f}(x)g(x) \mu(x)$.
Let $M: L^2(V,\mu)\to L^2(V,\mu)$; equivalently, we may regard $M$ as a square matrix of size $|V|$ with entries $M_{ij} = \langle \delta_i, M\delta_j\rangle$.

Suppose $V=V_\parallel \sqcup V_\perp$.
Naturally, we can project functions in $L^2(V,\mu)$ to $L^2(V_\parallel,\mu)$ and $L^2(V_\perp,\mu)$, respectively, and denote the corresponding projection operators by $P_\parallel$ and $P_\perp$.
Their conjugates are
\begin{align*}
P_b^*: L^2(V_b,\mu) \to L^2(V,\mu), \qquad(P_b^*f)(x) = f(x){\bf 1}_{\{x\in V_b\}},\qquad b\in \{\parallel, \perp\}
\end{align*}
Note that $P_b P_b^* = I_b$, the identity on $L^2(V_b,\mu)$.

Combining the preceding facts, we see that Proposition \ref{prop:matrixid} implies the following.
Suppose $M$ can be expressed in the ``block form''
\begin{align*}
A=P_\parallel M P_\parallel^*, \quad B=P_\parallel M P_\perp^*, 
\quad
C=P_\perp M P_\parallel^*, \quad D= P_\perp M P_\perp^*,
\end{align*}
with $D$ invertible. Then \eqref{eq:matrixinverse} says that
\begin{align*}
M^{-1} = P_\perp^* D^{-1} P_\perp + 
\left(P_\parallel^* - P_\perp^* D^{-1}C\right)
\left(A-BD^{-1}C\right)^{-1} 
\left(P_\parallel - BD^{-1} P_\perp\right).
\end{align*}
For spectral analysis it is more pertinent to consider the resolvent $(M-xI)^{-1}$, $x\in\mathbb{C}$. 
In this case, assuming that $D-x I_\perp$ is invertible, we have
\begin{equation}
\label{eq:M-x-1}
\begin{aligned}
(&M-x I)^{-1} = P_\perp^* (D-x I_\perp)^{-1} P_\perp \\&+ 
\left(P_\parallel^* - P_\perp^* (D-x I_\perp)^{-1}C\right)
\left((A-x I_\parallel)-B(D-x I_\perp)^{-1}C\right)^{-1} 
\left(P_\parallel - B(D-x I_\perp)^{-1} P_\perp\right).
\end{aligned}
\end{equation}
(From this point on we drop the notation $I_\parallel$ or $I_\perp$, unless the context demands its presence.)

Finally, recall the functional calculus $f(M) = \sum_{\lambda \in \sigma(M)} f(\lambda) E_\lambda(M)$, where $E_\lambda(M): L^2(V,\mu) \to L^2(V,\mu)$ is the eigenprojector of $M$ associated with eigenvalue $\lambda$.
It is then direct to verify that
\begin{align}
\label{eq:eigp}
E_\lambda(M) = \lim_{x\to\lambda} (\lambda-x)(M-x)^{-1}.
\end{align}
We will be especially interested in expressing the eigenprojector in terms of $A$, $B$, $C$, and $D$, using the RHS of \eqref{eq:eigp} in conjunction with the formula \eqref{eq:M-x-1}. 
Of course we will need to justify the limit as $x\to\lambda \in \sigma(M)$, which will be done on a case-by-case basis.

\begin{remark}
In the above discussion, there is no loss of generality replacing $L^2(V,\mu)$ by $\ell^2(V)$.
That said, we will soon assume that $M$ is a self-adjoint operator on a Hilbert space, and this will require invocation of the measure $\mu$.
\end{remark}


\subsection{Spectral decimation for the non-exceptional values}
\label{sec:non-exceptional}

Let us introduce the following condition which will be in force for the rest of the section.

\begin{definition}[Spectral similarity]
\label{def:ss}
Let $V_\parallel \subset V$.
We say that two self-adjoint operators $M: L^2(V,\mu) \to L^2(V,\mu)$ and $L:L^2(V_\parallel,\mu) \to L^2(V_\parallel,\mu)$ are \textbf{spectrally similar} if there exist scalar-valued functions $\phi$ and $R$ which map $\mathbb{R}$ to $\mathbb{R}$ such that
\begin{align}
\label{eq:specs}
(A-x)-B(D-x)^{-1} C = \phi(x)(L-R(x))
\end{align}
for all $x\in \mathbb{C}$.
It follows that
\begin{align}
\label{eq:specsim}
P_\parallel (M-x)^{-1} P_\parallel^*\left(\underset{\eqref{eq:M-x-1}}{=} \left((A-x)-B(D-x)^{-1} C\right)^{-1}\right) = [\phi(x)]^{-1}(L-R(x))^{-1}
\end{align}
for all $x\in \mathbb{C}$ whenever the RHS is defined.
\end{definition}

\begin{remark}
In Definition \ref{def:ss}, no assumption is made on the dependence of $M$ or $L$ on the spectral parameter $x$. 
(In the example of Corollary \ref{cor:specdec}-\ref{item:C}, $L=\mathcal{L}^\Omega_{N-1}$ depends on $x$.)
Also we do not specify extra conditions (such as continuity or differentiability) on $\phi$ and $R$ at the moment.
\end{remark}

In order for \eqref{eq:specs} and \eqref{eq:specsim} to make sense as they are, $D-x$ should be invertible, and $\phi(x)\neq 0$. 
Any $x$ that fails either condition is said to be \textbf{exceptional}, and we refer to the set of all such $x$ as the \textbf{exceptional set} for spectral decimation, denoted
\begin{equation}
\label{eq:eset}
\mathcal{E}=\{x\in \mathbb{C} : \text{$x\in\sigma(D)$ or $\phi(x)=0$}\}.
\end{equation}

Since $M$ is self-adjoint on $L^2(V,\mu)$, our goal is to determine which $\lambda\in \mathbb{R}$ belongs to the spectrum $\sigma(M)$.
The following result is the spectral decimation identity when $\lambda\in \mathbb{R}$ is not exceptional, which mirrors \cite[Proposition 4.1]{3n-gasket}.

\begin{lemma}[Spectral decimation for the non-exceptional values]
\label{specdecthm}
Suppose $\lambda\in \mathbb{R}$ is such that $\lambda \notin \mathcal{E}$, and moreover 
$
\displaystyle
\lim_{\mathbb{R}\ni x\to \lambda} \phi(x)\frac{R(\lambda)-R(x)}{\lambda-x}
$
exists and does not equal $0$.
Then
\begin{align}
\label{eq:eigp1}
E_{\lambda}(M)=\left(\lim_{\mathbb{R}\ni x\to \lambda}\frac{1}{\phi(x)}\frac{\lambda-x}{R(\lambda)-R(x)}\right)\left(P_\parallel^* - P_\perp^*(D-\lambda)^{-1}C\right) E_{R(\lambda)}(L) \left(P_\parallel- B(D-\lambda)^{-1} P_\perp\right).
\end{align}
Consequently, $\lambda \in \sigma(M)$ if and only if $R(\lambda)\in \sigma(L)$, and there is a one-to-one correspondence between eigenfunctions of $L$ with eigenvalue $R(\lambda)$ and eigenfunctions of $M$ with eigenvalue $\lambda$, given by
\[
{\rm Image}\left(E_{R(\lambda)}(L)\right)\ni f\mapsto \left(P^*_\parallel - P^*_\perp (D-\lambda)^{-1} C\right)f \in {\rm Image}\left(E_\lambda(M)\right).
\]
In particular, ${\rm mult}(M,\lambda) = {\rm mult}(L, R(\lambda))$.
\end{lemma}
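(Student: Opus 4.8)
The plan is to compute $E_\lambda(M)$ directly from the eigenprojector formula \eqref{eq:eigp}, namely $E_\lambda(M) = \lim_{\mathbb{R}\ni x\to\lambda}(\lambda-x)(M-x)^{-1}$, by inserting the resolvent expansion \eqref{eq:M-x-1} and then using spectral similarity \eqref{eq:specs} to trade the Schur-complement inverse for a resolvent of $L$. First I would observe that, since $\lambda\notin\mathcal{E}$ forces $\lambda\notin\sigma(D)$, the resolvent $(D-x)^{-1}$ is bounded and continuous for $x$ in a neighborhood of $\lambda$. Consequently the first summand of \eqref{eq:M-x-1} contributes $\lim_{x\to\lambda}(\lambda-x)P_\perp^*(D-x)^{-1}P_\perp = 0$, while the two outer factors of the second summand converge to $P_\parallel^* - P_\perp^*(D-\lambda)^{-1}C$ and $P_\parallel - B(D-\lambda)^{-1}P_\perp$, respectively. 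It then remains to treat the inner factor $(\lambda-x)\big((A-x)-B(D-x)^{-1}C\big)^{-1}$, which by \eqref{eq:specs} equals $\frac{\lambda-x}{\phi(x)}(L-R(x))^{-1}$.

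The heart of the argument is the identity
\[
\lim_{\mathbb{R}\ni x\to\lambda}\frac{\lambda-x}{\phi(x)}(L-R(x))^{-1} = \left(\lim_{\mathbb{R}\ni x\to\lambda}\frac{1}{\phi(x)}\frac{\lambda-x}{R(\lambda)-R(x)}\right)E_{R(\lambda)}(L),
\]
whose right-hand scalar, call it $c$, is finite and nonzero precisely by the running hypothesis on $\lim_{x\to\lambda}\phi(x)\frac{R(\lambda)-R(x)}{\lambda-x}$. To prove this I would decompose $L$ through its finitely many orthogonal eigenprojectors, $(L-R(x))^{-1} = \sum_{\nu\in\sigma(L)}(\nu-R(x))^{-1}E_\nu(L)$, valid once $R(x)\notin\sigma(L)$. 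The nonvanishing-limit hypothesis (together with $\phi(\lambda)\neq 0$, the required continuity being inherited from that of the left side of \eqref{eq:specs} at the non-exceptional $\lambda$) guarantees both $R(x)\to R(\lambda)$ and $R(\lambda)-R(x)\neq 0$ on a punctured neighborhood of $\lambda$; since $\sigma(L)$ is discrete, $R(x)$ avoids $\sigma(L)$ there and the decomposition applies. The $\nu=R(\lambda)$ term equals $\frac{1}{\phi(x)}\frac{\lambda-x}{R(\lambda)-R(x)}E_{R(\lambda)}(L)\to c\,E_{R(\lambda)}(L)$, whereas for $\nu\neq R(\lambda)$ the prefactor $\frac{\lambda-x}{\phi(x)}\to 0$ while $(\nu-R(x))^{-1}\to(\nu-R(\lambda))^{-1}$ stays bounded, so those terms vanish. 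Combining the three limits yields \eqref{eq:eigp1}. I expect this step to be the main obstacle: the resolvent $(L-R(x))^{-1}$ blows up as $R(x)$ approaches the eigenvalue $R(\lambda)$, and one must show the blow-up is cancelled exactly by the scalar prefactor so that a single eigenprojector survives. This is also the place where non-rationality of $R$ is accommodated, since we rely only on the assumed limit rather than on any algebraic form of $R$.

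For the consequences I would set $\Theta:=P_\parallel - B(D-\lambda)^{-1}P_\perp$, so that its adjoint is $\Theta^*=P_\parallel^* - P_\perp^*(D-\lambda)^{-1}C$ (using $B^*=C$ and that $(D-\lambda)^{-1}$ is self-adjoint for real $\lambda$, both consequences of self-adjointness of $M$), whence \eqref{eq:eigp1} reads $E_\lambda(M)=c\,\Theta^* E_{R(\lambda)}(L)\,\Theta$. Because $P_\parallel P_\parallel^*=I_\parallel$ and $P_\parallel P_\perp^*=0$ give $P_\parallel\Theta^*=I_\parallel$ and $\Theta P_\parallel^*=I_\parallel$, conjugating by $P_\parallel$ yields $P_\parallel E_\lambda(M)P_\parallel^*=c\,E_{R(\lambda)}(L)$, so that $E_\lambda(M)=0$ if and only if $E_{R(\lambda)}(L)=0$; since $c\neq 0$ this gives $\lambda\in\sigma(M)\iff R(\lambda)\in\sigma(L)$. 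To identify the eigenfunction correspondence, I would check directly that $f\mapsto\Theta^* f$ carries $R(\lambda)$-eigenfunctions of $L$ to $\lambda$-eigenfunctions of $M$: writing $g=\Theta^* f$, so $P_\parallel g=f$ and $P_\perp g=-(D-\lambda)^{-1}Cf$, the $V_\parallel$-component of $(M-\lambda)g$ is $\big((A-\lambda)-B(D-\lambda)^{-1}C\big)f=\phi(\lambda)(L-R(\lambda))f=0$ by \eqref{eq:specs}, and the $V_\perp$-component is $Cf+(D-\lambda)\big(-(D-\lambda)^{-1}Cf\big)=0$. This map is injective with left inverse $P_\parallel$, and surjective because \eqref{eq:eigp1} expresses every $g\in{\rm Image}(E_\lambda(M))$ as $\Theta^*\big(c\,E_{R(\lambda)}(L)\Theta g\big)$ with $c\,E_{R(\lambda)}(L)\Theta g\in{\rm Image}(E_{R(\lambda)}(L))$. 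Hence it is the claimed bijection and ${\rm mult}(M,\lambda)={\rm mult}(L,R(\lambda))$.
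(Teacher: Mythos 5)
Your proposal is correct and follows essentially the same route as the paper's proof: extract $E_\lambda(M)$ from $\lim_{\mathbb{R}\ni x\to\lambda}(\lambda-x)(M-x)^{-1}$ via \eqref{eq:M-x-1} and \eqref{eq:specsim}, split off the $E_{R(\lambda)}(L)$ part of $(L-R(x))^{-1}$ (your termwise spectral sum is just the finer-grained version of the paper's insertion of $I_\parallel = E_{R(\lambda)}(L)+(I_\parallel-E_{R(\lambda)}(L))$, and is legitimate here since the application is to finite matrices), and cancel the resolvent blow-up against the scalar prefactor by multiplying and dividing by $R(\lambda)-R(x)$, exactly where the assumed nonzero limit is used. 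Your explicit verification of the eigenfunction bijection via $\Theta^* = P_\parallel^* - P_\perp^*(D-\lambda)^{-1}C$, with $P_\parallel$ as left inverse and surjectivity read off from \eqref{eq:eigp1}, supplies detail the paper asserts without proof and is a welcome addition.
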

\begin{proof}
Combining \eqref{eq:M-x-1} and \eqref{eq:specsim} we find
\begin{equation}
\label{eq:P0}
\begin{aligned}
(&\lambda-x)(M-x)^{-1} = {\color{blue}P_\perp^*}{\color{purple}(\lambda-x)}{\color{blue} (D-x)^{-1}P_\perp} \\
&+{\color{blue}\left(P_\parallel^* - P_\perp^* (D-x)^{-1} C\right)} (\lambda-x){\color{blue}[\phi(x)]^{-1}}(L-R(x))^{-1} {\color{blue}\left(P_\parallel - B(D-x)^{-1}P_\perp\right)}.
\end{aligned}
\end{equation}
According to \eqref{eq:eigp} it suffices to take the limit of \eqref{eq:P0} as $\mathbb{R}\ni x\to\lambda$.
Based on the assumptions, the quantities in {\color{blue} blue} (resp.\@ {\color{purple} purple}) remain bounded (resp.\@ vanish) in the limit, and in particular the first term on the RHS tends to $0$.
To unravel the second term on the RHS, we insert the identity $I_\parallel = E_{R(\lambda)}(L) + (I_\parallel - E_{R(\lambda)}(L))$ between $(L-R(x))^{-1}$ and $P_\parallel -B(D-x)^{-1}P_\perp$, resulting in the following expression:
\begin{align}
&\label{eq:P1} {\color{blue}\left(P_\parallel^* - P_\perp^* (D-x)^{-1} C\right)} (\lambda-x){\color{blue}[\phi(x)]^{-1}}(L-R(x))^{-1} E_{R(\lambda)}(L) {\color{blue}\left(P_\parallel - B(D-x)^{-1}P_\perp\right)}\\
&\label{eq:P2} +{\color{blue}\left(P_\parallel^* - P_\perp^* (D-x)^{-1} C\right)}{\color{purple} (\lambda-x)}{\color{blue}[\phi(x)]^{-1}(L-R(x))^{-1} (I_\parallel -E_{R(\lambda)}(L)) \left(P_\parallel - B(D-x)^{-1}P_\perp\right)}.
\end{align}
Observe that in \eqref{eq:P2}, the image of $I_\parallel-E_{R(\lambda)}(L)$ is the orthogonal complement of the eigenspace of $L$ with eigenvalue $R(\lambda)$, and $L-R(\lambda)$ is invertible on this space.
Therefore \eqref{eq:P2} vanishes in the limit $x\to\lambda$.
As for \eqref{eq:P1}, we are in the eigenspace of $L$ with eigenvalues $R(\lambda)$, and $L-R(\lambda)$ is not invertible. 
That said, we can multiply and divide \eqref{eq:P1} by $R(\lambda)-R(x)$,
\begin{align}
{\color{blue}\left(P_\parallel^* - P_\perp^* (D-x)^{-1} C\right)} \frac{\lambda-x}{R(\lambda)-R(x)} [\phi(x)]^{-1} (R(\lambda)-R(x))(L-R(x))^{-1} {\color{blue} E_{R(\lambda)}(L)\left(P_\parallel - B(D-x)^{-1}P_\perp\right)}.
\end{align}
By functional calculus again, $\displaystyle \lim_{x\to \lambda} (R(\lambda)-R(x))(L-R(x))^{-1} = E_{R(\lambda)}(L)$.
So the proof is complete provided that $\displaystyle \frac{1}{\phi(x)}\frac{\lambda-x}{R(\lambda)-R(x)}$ has a nonsingular limit as $x\to\lambda$.
\end{proof}

Actually \eqref{eq:eigp1} says more. 
Since the LHS of \eqref{eq:eigp1} is a bounded operator, if $\displaystyle \lim_{R\ni x\to\lambda} \left|\frac{1}{\phi(x)} \frac{\lambda-x}{R(\lambda)-R(x)}\right|=\infty$, then \eqref{eq:eigp1} holds only if $E_{R(\lambda)}(L)=0$.
In turn $E_\lambda(M)=0$.
In what follows, we will encounter similar situations where the scalar prefactor diverges, and we may argue using this rationale that this divergence should not exist.

\subsection{Spectral decimation for the exceptional values}
\label{sec:exceptional}

If $\lambda$ is exceptional, the spectral decimation argument is suitably modified.
Here are two items of note.

\begin{lemma}
\label{lem:mod}
Under Definition \ref{def:ss}:
\begin{enumerate}
\item \label{phineq0} If $\phi(\lambda)\neq 0$, then $(D-\lambda)^{-1}$ is bounded on the image of $E_{R(\lambda)}(L)$.
\item \label{ED0} If both $\phi$ and $\phi R$ are bounded in a neighborhood of $\lambda$, then $BE_\lambda(D)C=0$.
\end{enumerate}
\end{lemma}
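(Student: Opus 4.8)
The plan is to handle the two parts separately, in each case expanding the resolvent $(D-x)^{-1}$ in its spectral (Laurent) form about $x=\lambda$ and feeding it into the spectral similarity identity \eqref{eq:specs}. Throughout I will use that, since $M$ is self-adjoint, so are its compressions $A$ and $D$ and we have $C=B^*$; hence $E_\lambda(D)$ and $E_{R(\lambda)}(L)$ are orthogonal projections. Write $Q=I-E_\lambda(D)$, so that $\mathrm{Image}(Q)=\ker(D-\lambda)^\perp$ and $D-\lambda$ is boundedly invertible there; thus $(D-x)^{-1}Q\to(D-\lambda)^{-1}Q$ boundedly as $x\to\lambda$, and the only singular contribution to $(D-x)^{-1}$ is $\tfrac{1}{\lambda-x}E_\lambda(D)$.

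For part \ref{ED0} I would substitute $(D-x)^{-1}=\tfrac{1}{\lambda-x}E_\lambda(D)+(D-x)^{-1}Q$ into \eqref{eq:specs}, multiply through by $(\lambda-x)$, and let $x\to\lambda$ along values avoiding $\sigma(D)$ and the zeros of $\phi$. The terms $(\lambda-x)(A-x)$ and $(\lambda-x)B(D-x)^{-1}QC$ each vanish in the limit, being $(\lambda-x)$ times a quantity bounded near $\lambda$, so the left-hand side tends to $-BE_\lambda(D)C$. On the right, $(\lambda-x)\phi(x)(L-R(x))=(\lambda-x)\big(\phi(x)L-\phi(x)R(x)\big)$; since $L$ is bounded (in the application a normalized magnetic Laplacian, of norm at most $2$) and both $\phi$ and $\phi R$ are bounded near $\lambda$ by hypothesis, this is $(\lambda-x)$ times a bounded quantity and therefore tends to $0$. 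Equating the two limits gives $BE_\lambda(D)C=0$. I expect this part to be routine once the singular term is isolated.

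Part \ref{phineq0} is the more delicate one and, I expect, the main obstacle. I would first reinterpret the claim: the natural meaning of ``$(D-\lambda)^{-1}$ is bounded on $\mathrm{Image}(E_{R(\lambda)}(L))$'' is that $(D-\lambda)^{-1}C\,E_{R(\lambda)}(L)$ extends to a bounded operator, which holds exactly when $E_\lambda(D)\,C\,E_{R(\lambda)}(L)=0$, i.e.\ $Cf\perp\ker(D-\lambda)$ for every $f\in\mathrm{Image}(E_{R(\lambda)}(L))$. Taking adjoints and using $C=B^*$, this is equivalent to $E_{R(\lambda)}(L)\,B\,E_\lambda(D)=0$, so it suffices to show $E_{R(\lambda)}(L)Bw=0$ for each $w\in\ker(D-\lambda)$. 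To this end I would lift $w$ to $\hat w=P_\perp^* w\in L^2(V,\mu)$ and read off from the block form of $M$ that $(M-\lambda)\hat w=P_\parallel^* Bw$ (using $Dw=\lambda w$). For $x\notin\sigma(M)$, applying $(M-x)^{-1}$, then $P_\parallel$, and invoking \eqref{eq:specsim} together with $P_\parallel\hat w=0$ yields
\[
[\phi(x)]^{-1}(L-R(x))^{-1}Bw=(x-\lambda)\,P_\parallel(M-x)^{-1}\hat w .
\]

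The conclusion then comes from a boundedness comparison as $x\to\lambda$. The right-hand side stays bounded because $(x-\lambda)(M-x)^{-1}\to-E_\lambda(M)$ by \eqref{eq:eigp}; hence so does the left-hand side. Since $\phi(\lambda)\neq 0$, the factor $[\phi(x)]^{-1}$ stays bounded away from $0$ and $\infty$ near $\lambda$, so $(L-R(x))^{-1}Bw$ remains bounded. As $R(x)\to R(\lambda)$, any nonzero component of $Bw$ in $\ker(L-R(\lambda))$ would force $(L-R(x))^{-1}Bw$ to blow up like $\tfrac{1}{R(\lambda)-R(x)}$; boundedness therefore forces $E_{R(\lambda)}(L)Bw=0$. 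Since $w\in\ker(D-\lambda)$ was arbitrary this gives $E_{R(\lambda)}(L)BE_\lambda(D)=0$, and taking adjoints yields $E_\lambda(D)CE_{R(\lambda)}(L)=0$, the desired boundedness. The one point needing care is precisely the passage from boundedness of $(L-R(x))^{-1}Bw$ to the vanishing of $E_{R(\lambda)}(L)Bw$: this uses the continuity of $\phi$ and $R$ at $\lambda$ (so that $\phi(x)\not\to 0$ and $R(x)\to R(\lambda)$), which is harmless here since the functions $\phi$ and $R(\alpha,\beta,\cdot)$ arising in our application are continuous.
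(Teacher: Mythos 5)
Your proposal is correct, and it splits naturally into a ``same'' half and a ``different'' half. Part \ref{ED0} is essentially the paper's own proof: the paper multiplies \eqref{eq:specs} by $(\lambda-x)$ and notes that $(\lambda-x)(A-x)$ and $(\lambda-x)\phi(x)(L-R(x))$ stay bounded (indeed vanish/stay bounded) as $x\to\lambda$, so that only $-BE_\lambda(D)C$ survives; your explicit Laurent decomposition $(D-x)^{-1}=\tfrac{1}{\lambda-x}E_\lambda(D)+(D-x)^{-1}Q$ just spells out the mechanism behind that one-line limit. Part \ref{phineq0} is where you genuinely diverge. The paper works on the $V_\parallel$ side: it applies \eqref{eq:specs} to an eigenfunction $f$ of $L$ with eigenvalue $R(\lambda)$, gets $\left((A-\lambda)-B(D-\lambda)^{-1}C\right)f=\phi(\lambda)(L-R(\lambda))f=0$, and infers boundedness of $(D-\lambda)^{-1}$ on $Cf$ from boundedness of $A-\lambda$, $B$, $C$ --- a two-line argument which, as written, evaluates \eqref{eq:specs} at $x=\lambda$ and so presupposes a limiting interpretation of $(D-\lambda)^{-1}$ when $\lambda\in\sigma(D)$. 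You instead work on the $V_\perp$ side: for $w\in\ker(D-\lambda)$ you derive the resolvent identity $[\phi(x)]^{-1}(L-R(x))^{-1}Bw=(x-\lambda)P_\parallel(M-x)^{-1}\hat w$, bound the right side via \eqref{eq:eigp}, and use the orthogonal blow-up dichotomy for $(L-R(x))^{-1}$ to conclude $E_{R(\lambda)}(L)BE_\lambda(D)=0$, equivalently $E_\lambda(D)CE_{R(\lambda)}(L)=0$. This buys two things: it makes the loosely phrased conclusion (``bounded on the image of $E_{R(\lambda)}(L)$'', which does not literally typecheck since $(D-\lambda)^{-1}$ acts on $\ell^2(V_\perp)$) precise in exactly the form used downstream, e.g.\ in Lemma \ref{multiplicity}-\ref{eq:SD4} where $(D-\lambda)^{-1}C$ is composed with $E_{R(\lambda)}(L)$; and it avoids the mild circularity of plugging $x=\lambda$ into \eqref{eq:specs}. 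The cost is two hypotheses beyond Definition \ref{def:ss} --- existence of the limit \eqref{eq:eigp} and continuity of $\phi$ and $R$ at $\lambda$ --- which you correctly flag, and which hold in all the paper's applications (where the paper itself needs comparable regularity anyway); note also that your limit $x\to\lambda$ should run through real $x\notin\sigma(M)\cup\sigma(D)$ with $\phi(x)\neq 0$ so that \eqref{eq:specsim} is available, which is unproblematic here since these exclusions form a discrete set.
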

\begin{proof}
\eqref{phineq0}: 
If $\phi(\lambda) \neq 0$, we use \eqref{eq:specs} to find that whenever $f$ is an eigenfunction of $L$ with eigenvalue $R(\lambda)$, then
\begin{align}
\label{eq:ERbounded}
\left((A-\lambda)-B(D-\lambda)^{-1} C\right) f = \phi(\lambda) (L-R(\lambda))f=0.
\end{align}
Given that $A-\lambda$, $B$, and $C$ are all bounded, it follows that $(D-\lambda)^{-1}$ must be bounded on the image of $E_{R(\lambda)}(L)$.

\eqref{ED0}:
Multiply \eqref{eq:specs} on both sides by $(\lambda-x)$ to get
\[
(\lambda-x)(A-x) - B(\lambda-x)(D-x)^{-1} C= (\lambda-x) \phi(x) (L-R(x)).
\]
Noting that both $A-x$ and $\phi(x)(L-R(x))$ remain bounded as $x\to\lambda$, we take the limit on the above equation to find $-BE_\lambda(D) C=0$.
\end{proof}

With the above in mind, we continue to use \eqref{eq:M-x-1}, \eqref{eq:eigp}, and \eqref{eq:specsim} altogether to derive an expression for the eigenprojector $E_\lambda(M)$.
The general strategy proceeds as follows: first decide whether $\lambda\in \sigma(D)$ (which determines the invertibility of $D-\lambda$), then insert the identity $I_\parallel =E_{R(\lambda)}(L) + (I_\parallel-E_{R(\lambda)}(L))$ in the expression for $(\lambda-x)(M-x)^{-1}$ \`a la \eqref{eq:P1} and \eqref{eq:P2}, and finally identify conditions which ensure the existence of the limits as $\mathbb{R}\ni x\to \lambda$.

The next result generalizes \cite[Proposition 4.1]{3n-gasket}, in the sense that we only require the existence of $\mathbb{R}$-limits (as opposed to $\mathbb{C}$-limits) of the various functions that arise naturally in the eigenprojector expression.
For the sake of easy reference, we keep the same numbering of the cases as in \cite[Proposition 4.1]{3n-gasket}.

\begin{lemma}
\label{multiplicity}
Suppose $\lambda\in\mathbb{R}$.
\begin{enumerate}[label=\textbf{(\roman*)}, wide]
\setcounter{enumi}{1}
\item \label{eq:SD2} If $\lambda\notin\sigma (D)$, $\phi(\lambda)=0$, and moreover
$\displaystyle
\lim_{\mathbb{R}\ni x\to\lambda} \frac{\phi(x)}{\lambda-x} \neq 0
$
and
$
\displaystyle
\lim_{\mathbb{R}\ni x\to\lambda}\phi(x)\frac{R(\lambda)-R(x)}{\lambda-x} \neq 0
$,
then
\begin{equation}
\begin{aligned}
&E_\lambda(M) = 
\left(P_\parallel^*-P_\perp^*(D-\lambda)^{-1}C\right)
\left(\lim_{\mathbb{R}\ni x\to\lambda}\frac{(\lambda-x)[\phi(x)]^{-1}}{R(\lambda)-R(x)}\right)
E_{R(\lambda)}(L)
\left(P_\parallel-B(D-\lambda)^{-1}P_\perp\right)\\
&+\left(P_\parallel^*-P_\perp^*(D-\lambda)^{-1}C\right)\left(\lim_{\mathbb{R}\ni x\to\lambda}\frac{\lambda-x}{\phi(x)}\right)
(L-R(\lambda))^{-1}
\left(I_\parallel-E_{R(\lambda)}(L)\right)
\left(P_\parallel-B(D-\lambda)^{-1}P_\perp\right).
\end{aligned}
\end{equation}
In particular, ${\rm mult}(M,\lambda)=|V_\parallel|$.

\item \label{eq:SD3} 
If $\lambda\in\sigma (D)$, $\displaystyle \lim_{\mathbb{R}\ni x\to\lambda}[\phi(x)]^{-1}=0$, and moreover
$\displaystyle
\lim_{\mathbb{R}\ni x\to\lambda} \phi(x)(\lambda-x) \neq 0
$
and
$\displaystyle
\left|\frac{\lambda-x}{R(\lambda)-R(x)}\right|
$
is bounded in a neighborhood of $\lambda$,
then
\begin{align}
\label{eq:eigp3}
E_\lambda(M) = P_\perp^* E_\lambda(D) P_\perp + P_\perp^* E_\lambda(D) C\left(\lim_{\mathbb{R}\ni x\to\lambda} \frac{1}{\phi(x)(\lambda-x)}\right)
(L-R(\lambda))^{-1} (I_\parallel-E_{R(\lambda)}(L)) BE_\lambda(D) P_\perp.
\end{align}
In particular, $E_\lambda(M) (P_\perp^* E_\lambda(D) P_\perp)=E_\lambda(M)$, so any eigenfunction of $M$ with eigenvalue $\lambda$ vanishes on $V_\parallel$, and
$
\mult(M,\lambda)=\mult(D,\lambda)-\left(|V_\parallel|-\mult(L,R(\lambda))\right)
$.

\item \label{eq:SD4} 
If $\lambda\in\sigma (D)$, both $\phi$ and $\phi R$ are bounded in a neighborhood of $\lambda$, $\phi(\lambda)\neq0$, and moreover 
$\displaystyle
\lim_{\mathbb{R}\ni x\to\lambda} \phi(x) \frac{R(\lambda)-R(x)}{\lambda-x} \neq 0
$,
then
\begin{equation}
\label{eq:eigp4}
\begin{aligned}
E&_\lambda(M) = P_\perp^* E_\lambda(D) P_\perp \\
&+ \left(P_{\parallel}^*-P_{\perp}^*(D-\lambda)^{-1}C\right) \left(\lim_{\mathbb{R} \ni x\to\lambda}\frac{1}{\phi(x)}\frac{\lambda-x}{R(\lambda)-R(x)}\right)
 E_{R(\lambda)}(L)
\left(P_{\parallel}-B(D-\lambda)^{-1}P_{\perp}\right).
\end{aligned}
\end{equation}
In particular, $E_\lambda(M) (P_\perp^* E_\lambda(D) P_\perp) = P_\perp^* E_\lambda(D) P_\perp$, the two components on the RHS of \eqref{eq:eigp4} are mutually orthogonal in $L^2(V,\mu)$, and
$
\mult(M,\lambda)=\mult(D,\lambda)+\mult(L, R(\lambda))
$.

\setcounter{enumi}{5}
\item \label{eq:SD6}
If $\lambda\in \sigma(D)$, $\displaystyle\lim_{\mathbb{R}\ni x\to \lambda}[\phi(x)]^{-1}=0$, and moreover $\displaystyle \lim_{\mathbb{R}\ni x\to \lambda}\phi(x)(\lambda-x)\neq 0$ and $\displaystyle\lim_{\mathbb{R}\ni x\to\lambda} \frac{1}{\phi(x)}\frac{\lambda-x}{R(\lambda)-R(x)} \neq 0$, then
\begin{equation}
\label{eq:eigp6}
\begin{aligned}
E&_\lambda(M)= P_\perp^* E_\lambda(D) P_\perp + P_\perp^* E_\lambda(D) C\left(\lim_{\mathbb{R}\ni x\to\lambda} \frac{1}{\phi(x)(\lambda-x)}\right)
(L-R(\lambda))^{-1} (I_\parallel-E_{R(\lambda)}(L)) BE_\lambda(D) P_\perp
\\
&+ \left(P_{\parallel}^*-P_{\perp}^*(D-\lambda)^{-1}C\right) \left(\lim_{\mathbb{R} \ni x\to\lambda}\frac{1}{\phi(x)}\frac{\lambda-x}{R(\lambda)-R(x)}\right)
 E_{R(\lambda)}(L)
\left(P_{\parallel}-B(D-\lambda)^{-1}P_{\perp}\right).
\end{aligned}
\end{equation}
which implies generally that $\mult(M,\lambda) = \mult(D,\lambda) - |V_\parallel| + 2\mult(L,R(\lambda))$.
If none of the corresponding eigenfunctions vanishes on $V_\parallel$, then the first two terms on the RHS of \eqref{eq:eigp6} vanish, and $\mult(M,\lambda)=\mult(L,R(\lambda))$.

\item \label{eq:SD7} If $\lambda\notin\sigma(D)$, $\phi(\lambda)=0$, $\displaystyle \lim_{\mathbb{R}\ni x\to\lambda} [R(x)]^{-1}=0$, and moreover 
$x\mapsto (\lambda-x)[\phi(x)]^{-1}$ is bounded in a neighborhood of $\lambda$,
 then $E_\lambda(M)=0$, \emph{i.e.,} $\mult(M,\lambda)=0$.
\end{enumerate}
\end{lemma}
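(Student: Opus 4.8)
The plan is to compute $E_\lambda(M)$ directly from the eigenprojector formula \eqref{eq:eigp}, namely $E_\lambda(M) = \lim_{\mathbb{R}\ni x\to\lambda}(\lambda-x)(M-x)^{-1}$, feeding in the resolvent expansion \eqref{eq:M-x-1} together with the spectral similarity identity \eqref{eq:specsim}. Multiplying \eqref{eq:M-x-1} by $(\lambda-x)$ and substituting \eqref{eq:specsim} yields, exactly as in \eqref{eq:P0},
\begin{align*}
(\lambda-x)(M-x)^{-1} &= P_\perp^*(\lambda-x)(D-x)^{-1}P_\perp \\
&\quad + \left(P_\parallel^* - P_\perp^*(D-x)^{-1}C\right)\frac{\lambda-x}{\phi(x)}(L-R(x))^{-1}\left(P_\parallel - B(D-x)^{-1}P_\perp\right).
\end{align*}
I would then show that each of the two terms tends to $0$ in operator norm as $\mathbb{R}\ni x\to\lambda$, whence $E_\lambda(M)=0$.

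For the first term, the hypothesis $\lambda\notin\sigma(D)$ guarantees that $(D-x)^{-1}\to(D-\lambda)^{-1}$ is bounded, so the scalar factor $(\lambda-x)\to 0$ forces the whole term to vanish. For the same reason the two boundary operators $P_\parallel^* - P_\perp^*(D-x)^{-1}C$ and $P_\parallel - B(D-x)^{-1}P_\perp$ converge in norm to bounded operators, as $B$ and $C$ are bounded. Everything therefore reduces to controlling the middle factor $\frac{\lambda-x}{\phi(x)}(L-R(x))^{-1}$.

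The key step, and the one I expect to carry the weight of the argument, is to show that this middle factor tends to $0$ in operator norm. The hypothesis that $x\mapsto(\lambda-x)[\phi(x)]^{-1}$ is bounded near $\lambda$ disposes of the scalar prefactor, so it remains to prove $\|(L-R(x))^{-1}\|\to 0$. Here I would use that $L$ is bounded and self-adjoint, so $\sigma(L)\subset\mathbb{R}$ is compact, and that taking the $\mathbb{R}$-limit keeps $R(x)$ real (recall $R$ maps $\mathbb{R}$ to $\mathbb{R}$ by Definition \ref{def:ss}). For real $z\notin\sigma(L)$ one has $\|(L-z)^{-1}\| = \operatorname{dist}(z,\sigma(L))^{-1}$; since $\lim_{x\to\lambda}[R(x)]^{-1}=0$ means $|R(x)|\to\infty$ while $\sigma(L)$ stays bounded, $\operatorname{dist}(R(x),\sigma(L))\to\infty$, and hence $\|(L-R(x))^{-1}\|\to 0$.

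Combining these observations, the second term is a product of norm-bounded boundary operators, a bounded scalar, and a factor tending to $0$, so it too vanishes in the limit. Thus $E_\lambda(M)=\lim_{\mathbb{R}\ni x\to\lambda}(\lambda-x)(M-x)^{-1}=0$, which is precisely $\mult(M,\lambda)=0$. The only genuine subtlety is the norm estimate on $(L-R(x))^{-1}$; the rest is the bookkeeping already set up in the proof of Lemma \ref{specdecthm}. In our application $L$ is a finite-dimensional magnetic Laplacian (or, for $N=\infty$, a bounded operator of norm at most $2$), so the required boundedness of $L$ holds automatically and the estimate applies verbatim.
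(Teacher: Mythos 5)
Your proposal, as written, proves only part \textbf{(vii)} of the lemma---the case $\lambda\notin\sigma(D)$, $\phi(\lambda)=0$, $\lim_{\mathbb{R}\ni x\to\lambda}[R(x)]^{-1}=0$, where the conclusion is $E_\lambda(M)=0$. For that one case your reasoning is sound and essentially identical to the paper's: the paper likewise observes that since $\sigma(L)$ is compact, $(L-R(x))^{-1}$ in fact tends to $0$ as $R(x)\to\infty$, and your sharpening via the self-adjoint resolvent identity $\|(L-z)^{-1}\|=\operatorname{dist}(z,\sigma(L))^{-1}$ is correct (indeed slightly more than needed, since for any bounded $L$ one already has $\|(L-z)^{-1}\|\leq (|z|-\|L\|)^{-1}$ for $|z|>\|L\|$).

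The genuine gap is that the statement comprises five cases, \textbf{(ii)}, \textbf{(iii)}, \textbf{(iv)}, \textbf{(vi)}, and \textbf{(vii)}, and your proposal does not address the first four---which are the substantive ones, each asserting a specific eigenprojector formula together with a multiplicity count rather than mere vanishing. Proving them requires the device from the proof of Lemma \ref{specdecthm} of inserting $I_\parallel = E_{R(\lambda)}(L) + \left(I_\parallel - E_{R(\lambda)}(L)\right)$ into the middle factor and treating the two resulting pieces with different scalings of $(\lambda-x)[\phi(x)]^{-1}$; moreover, when $\lambda\in\sigma(D)$ (cases \textbf{(iii)}, \textbf{(iv)}, \textbf{(vi)}) your step ``$(D-x)^{-1}\to(D-\lambda)^{-1}$ is bounded'' fails outright, and one must instead expand the product $\left(P_\parallel^* - P_\perp^*(D-x)^{-1}C\right)\cdots\left(P_\parallel - B(D-x)^{-1}P_\perp\right)$ into four terms, extract $P_\perp^* E_\lambda(D) P_\perp$ from the first summand of \eqref{eq:P0}, and invoke Lemma \ref{lem:mod} (boundedness of $(D-\lambda)^{-1}$ on the image of $E_{R(\lambda)}(L)$ when $\phi(\lambda)\neq 0$; the identity $BE_\lambda(D)C=0$ when $\phi$ and $\phi R$ stay bounded) to dispose of the cross terms in case \textbf{(iv)}. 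None of this bookkeeping appears in your proposal, nor do the rank arguments yielding $\mult(M,\lambda)=|V_\parallel|$ in \textbf{(ii)}, $\mult(M,\lambda)=\mult(D,\lambda)-\left(|V_\parallel|-\mult(L,R(\lambda))\right)$ in \textbf{(iii)}, the orthogonality of the two components and $\mult(M,\lambda)=\mult(D,\lambda)+\mult(L,R(\lambda))$ in \textbf{(iv)}, or the dichotomy in \textbf{(vi)}. So four of the five assertions of the lemma remain unproved.
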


\begin{proof}
Our starting point is the combination of \eqref{eq:M-x-1} and \eqref{eq:specsim}.
Let us note right away that
\begin{align*}
\lambda \notin \sigma(D) &\quad\text{implies} \quad \lim_{x\to \lambda} P_\perp^*(\lambda-x)(D-x)^{-1} P_\perp =0,\\
\lambda \in \sigma(D) &\quad\text{implies} \quad \lim_{x\to\lambda} P_\perp^* (\lambda-x)(D-x)^{-1} P_\perp = P_\perp^* E_\lambda(D) P_\perp.
\end{align*}
So this reduces our analysis to the second term on the RHS of \eqref{eq:P0}, namely:
\begin{align}
\label{eq:secondterm}
\left(P_\parallel^* - P_\perp^* (D-x)^{-1} C\right) (\lambda-x)[\phi(x)]^{-1}(L-R(x))^{-1} (I_\parallel -E_{R(\lambda)}(L)) \left(P_\parallel - B(D-x)^{-1}P_\perp\right).
\end{align}
As in the proof of Lemma \ref{specdecthm}, terms which stay bounded (resp.\@ vanish) as $\mathbb{R}\ni x\to\lambda$ are highlighted in {\color{blue} blue} (resp.\@ {\color{purple} purple}).  

\ref{eq:SD2}:
By the assumptions, \eqref{eq:secondterm} reads
\begin{align*}
&{\color{blue} \left(P_{\parallel}^*-P_{\perp}^*(D-x)^{-1}C\right)}(\lambda-x)[\phi(x)]^{-1}(L-R(x))^{-1}{\color{blue}\left(P_{\parallel}-B(D-x)^{-1}P_{\perp}\right)}\\
&={\color{blue} \left(P_{\parallel}^*-P_{\perp}^*(D-x)^{-1}C\right)}\frac{(\lambda-x)[\phi(x)]^{-1}}{R(\lambda)-R(x)}(R(\lambda)-R(x))(L-R(x))^{-1}{\color{blue} E_{R(\lambda)}(L)}{\color{blue}\left(P_{\parallel}-B(D-x)^{-1}P_{\perp}\right)}\\
&+{\color{blue} \left(P_{\parallel}^*-P_{\perp}^*(D-x)^{-1}C\right)}(\lambda-x)[\phi(x)]^{-1}{\color{blue}(L-R(x))^{-1}\left(I_\parallel-E_{R(\lambda)}(L)\right)\left(P_{\parallel}-B(D-x)^{-1}P_{\perp}\right)}.
\end{align*}

\ref{eq:SD3}:
By the assumptions, \eqref{eq:secondterm} reads
\begin{align*}
&{\color{blue}\left(P_{\parallel}^*-P_{\perp}^*(D-x)^{-1}C\right)}\frac{(\lambda-x){\color{purple}[\phi(x)]^{-1}}}{R(\lambda)-R(x)}(R(\lambda)-R(x))(L-R(x))^{-1}{\color{blue} E_{R(\lambda)}(L)}
{\color{blue}\left(P_{\parallel}-B(D-x)^{-1}P_{\perp}\right)}\\
&+\left({\color{blue} P_{\parallel}^*}-P_{\perp}^*(D-x)^{-1}C\right)(\lambda-x){\color{purple}[\phi(x)]^{-1}}{\color{blue}(L-R(x))^{-1}\left(I_\parallel-E_{R(\lambda)}(L)\right)}\left({\color{blue}P_{\parallel}}-B(D-x)^{-1}P_{\perp}\right).
\end{align*}

For the first term, the boundedness of $(D-\lambda)^{-1}$ follows from Lemma \ref{lem:mod}-\eqref{phineq0}. We further note that $\lim_{x\to\lambda} (R(\lambda)-R(x))(L-R(x))^{-1} = E_{R(\lambda)}(L)$ by the functional calculus, and $[\phi(\lambda)]^{-1}=0$ by assumption.
In fact, we would like to show that $\lim_{x\to\lambda} \frac{\lambda-x}{R(\lambda)-R(x)} [\phi(x)]^{-1}=0$, and it suffices to have $\left|\frac{\lambda-x}{R(\lambda)-R(x)}\right|$ to be bounded in a neighborhood of $\lambda$. Consequently the first term vanishes in the limit.

The second term requires more care, as we do not know \emph{a priori} that $D-\lambda$ is invertible.
So we expand it as the sum of four terms
\begin{align*}
&{\color{blue}P_{\parallel}^*}{\color{purple}(\lambda-x)[\phi(x)]^{-1}}{\color{blue}(L-R(x))^{-1}
(I_{\parallel}-E_{R(\lambda)}(L))P_{\parallel}}\\
&-{\color{blue}P_{\perp}^*(\lambda-x)(D-x)^{-1}C}{\color{purple}[\phi(x)]^{-1}}{\color{blue}(L-R(x))^{-1}(I_{\parallel}-E_{R(\lambda)}(L))P_{\parallel}}\\
&-{\color{blue}P_{\parallel}^*}{\color{purple}[\phi(x)]^{-1}}{\color{blue}(L-R(x))^{-1}(I_{\parallel}-E_{R(\lambda)}(L))B(\lambda-x)(D-x)^{-1} P_{\perp}}\\
&+{\color{blue}P_{\perp}^*(\lambda-x)(D-x)^{-1} C}[\phi(x)(\lambda-x)]^{-1}{\color{blue}(L-R(x))^{-1}(I_{\parallel}-E_{R(\lambda)}(L))B(\lambda-x)(D-x)^{-1}P_{\perp}}.
\end{align*}
It can be seen readily that the first three lines vanish in the limit, whereas the fourth line converges to
\[
P_\perp^* E_\lambda(D) C \left(\lim_{\mathbb{R}\ni x\to\lambda} \frac{1}{\phi(x)(\lambda-x)}\right) (L-R(\lambda))^{-1} (I_\parallel-E_{R(\lambda)}(L)) B E_\lambda(D) P_\perp
\]
given the assumptions. The eigenprojector formula \eqref{eq:eigp3} follows.

Observe that the image of $E_\lambda(M)$ is contained in the image of $P_\perp^* E_\lambda(D) P_\perp$.
More specifically,
\[
{\rm rank}(P_{\perp}^*E_{\lambda}(D)P_{\perp})-{\rm rank}(E_\lambda(M))={\rm rank}(I_{\parallel}-E_{R(\lambda)}(L)),
\]
from which the multiplicity formula follows.

\ref{eq:SD4}:
By the assumptions and Lemma \ref{lem:mod}-\eqref{phineq0}, \eqref{eq:secondterm} reads
\begin{align*}
&{\color{blue}\left(P_{\parallel}^*-P_{\perp}^*(D-x)^{-1}C\right)}\frac{(\lambda-x){\color{blue}[\phi(x)]^{-1}}}{R(\lambda)-R(x)}(R(\lambda)-R(x))(L-R(x))^{-1}{\color{blue} E_{R(\lambda)}(L)}
{\color{blue}\left(P_{\parallel}-B(D-x)^{-1}P_{\perp}\right)}\\
&+\left({\color{blue} P_{\parallel}^*}-P_{\perp}^*(D-x)^{-1}C\right)(\lambda-x){\color{blue}[\phi(x)]^{-1}}{\color{blue}(L-R(x))^{-1}\left(I_\parallel-E_{R(\lambda)}(L)\right)}\left({\color{blue}P_{\parallel}}-B(D-x)^{-1}P_{\perp}\right).
\end{align*}

The first term tends to
\begin{align*}
\left(P_{\parallel}^*-P_{\perp}^*(D-\lambda)^{-1}C\right) \left(\lim_{\mathbb{R} \ni x\to\lambda}\frac{1}{\phi(x)}\frac{\lambda-x}{R(\lambda)-R(x)}\right)
 E_{R(\lambda)}(L)
\left(P_{\parallel}-B(D-\lambda)^{-1}P_{\perp}\right).
\end{align*}
The second term is again trickier, being the sum of
\begin{align*}
&{\color{blue}P_{\parallel}^*}{\color{purple}(\lambda-x)}{\color{blue}[\phi(x)]^{-1}}{\color{blue}(L-R(x))^{-1}
(I_{\parallel}-E_{R(\lambda)}(L))P_{\parallel}}\\
&-{\color{blue}P_{\perp}^*}{\color{purple}(\lambda-x)(D-x)^{-1}C}{\color{blue}[\phi(x)]^{-1}}{\color{blue}(L-R(x))^{-1}(I_{\parallel}-E_{R(\lambda)}(L))P_{\parallel}}\\
&-{\color{blue}P_{\parallel}^*}{\color{blue}[\phi(x)]^{-1}}{\color{blue}(L-R(x))^{-1}(I_{\parallel}-E_{R(\lambda)}(L))}{\color{purple}B(\lambda-x)(D-x)^{-1}}{\color{blue} P_{\perp}}\\
&+{\color{blue}P_{\perp}^*}(D-x)^{-1} C {\color{blue}[\phi(x)]^{-1}(L-R(x))^{-1}(I_{\parallel}-E_{R(\lambda)}(L))}{\color{purple}B(\lambda-x)(D-x)^{-1}}{\color{blue}P_{\perp}},
\end{align*}
where the vanishing {\color{purple} purple} terms in the last 3 lines are due to Lemma \ref{lem:mod}-\eqref{ED0}.
Altogether the entire sum vanishes in the limit.
This proves \eqref{eq:eigp4}.
Observe that the two terms on the RHS of \eqref{eq:eigp4} are mutually orthogonal, from which the remaining claims follow.

\ref{eq:SD6}:
This is a straightforward extension of \ref{eq:SD3}.
In particular, if none of the corresponding eigenfunctions vanishes on $V_\parallel$, then by \ref{eq:SD3}, the first two terms on the RHS of \eqref{eq:eigp6} vanishes.

\ref{eq:SD7}:
Since the spectrum of an operator is compact, $(L-R(x))^{-1}$ remains bounded---in fact tends to $0$---as $R(x)\to R(\lambda)=\infty$.
Thus \eqref{eq:secondterm} reads
\begin{align*}
{\color{blue}(P_{\parallel}^*-P_{\perp}^*(D-x)^{-1}C)}
{\color{blue}(\lambda-x)[\phi(x)]^{-1}}{\color{purple}  (L-R(x))^{-1}}
{\color{blue}(P_{\parallel}-B(D-x)^{-1}P_{\perp})}
\end{align*}
which vanishes in the limit.
\end{proof}


\section{Spectral self-similarity of the magnetic Laplacian} \label{sec:ssSG}

\subsection{Schur complement computation}

Let $G_N= (V_N, E_N)$ be the level-$N$ Sierpinski gasket graph.
Following \eqref{eq:SGML}, the magnetic Laplacian $\mathcal{L}^\omega_N$ on $G_N$ endowed with $U(1)$ connection $\omega$ is an operator on $\ell^2(V_N)$, and can be represented in the standard basis by the $|V_N|$-by-$|V_N|$ matrix
\begin{align}
\label{eq:LN}
\mathcal{L}^{\omega}_N(x,y) =
\left\{
\begin{array}{ll}
1, & \text{if } x=y,\\
-\frac{1}{2} \omega_{xy}, & \text{if } x\in V_0,~y\sim x,\\
-\frac{1}{4} \omega_{xy}, & \text{if } x\in V_N\setminus V_0, ~y\sim x,\\
0, & \text{else}.
\end{array}
\right.
\end{align}
Recall that $\mathcal{L}^\omega_N$ is self-adjoint on $L^2(V_N, \deg_{G_N})$.

We express the resolvent in block matrix form
\begin{align}
\label{eq:magLap}
\mathcal{L}^{\omega}_N - \lambda I =
\begin{bmatrix}
A-\lambda I & B \\
C & D-\lambda I
\end{bmatrix},
\quad
\lambda\in \mathbb{C},
\end{align}
where the rows and columns are arranged such that
\[
\begin{aligned}
A: \ell^2(V_{N-1}) \to \ell^2(V_{N-1}),& \quad
B: \ell^2(V_N\setminus V_{N-1}) \to \ell^2(V_{N-1}),\\
C: \ell^2(V_{N-1}) \to \ell^2(V_N\setminus V_{N-1}),& \quad
D: \ell^2(V_N\setminus V_{N-1}) \to \ell^2(V_N\setminus V_{N-1}),
\end{aligned}
\]
where $I$ is the identity matrix of an appropriate size, and $\ell^2(S) =\mathbb{C}^S$.

Assuming that $D-\lambda I$ is invertible for the moment, we define the \textbf{Schur complement} of $\mathcal{L}_N^{\omega} -\lambda I$ with respect to the minor $D-\lambda I$ as
\begin{align}
\label{eq:Schur}
S_N^{\omega}(\lambda):= (A-\lambda I) - B(D-\lambda I)^{-1} C,
\end{align}
which acts on $\ell^2(V_{N-1})$.
To find the entries of $S_N^{\omega}(\lambda)$, we label the vertices in $V_{N-1}$ by $a_i$, and vertices in $V_N\setminus V_{N-1}$ by $b_i$.
Then for $a_i, a_j\in V_{N-1}$ we have
\begin{align}
\label{eq:Schur1}
S_N^{\omega}(\lambda)(a_i, a_j) = (A-\lambda I)(a_i, a_j) - \sum_{b_k, b_l \in V_N\setminus V_{N-1}} B(a_i, b_k) (D-\lambda I)^{-1} (b_k, b_l) C(b_l, a_j).
\end{align}

Recall \eqref{eq:LN}. Observe that $(A-\lambda I)(a_i, a_j)= (1-\lambda) \delta_{a_i a_j}$; $B(a_i, b_k) = -\frac{1}{2}\omega_{a_i b_k}$ if  $a_i\in V_0$ and $a_i \sim b_k$, $-\frac{1}{4}\omega_{a_i b_k}$ if $a_i \in V_N\setminus V_0$ and $a_i\sim b_k$, and $0$ otherwise; $C(b_l, a_j)= -\frac{1}{4}\omega_{b_l a_j}$ if $b_l\sim a_j$, and $0$ otherwise; and $(D-\lambda I)^{-1}$ is zero whenever $b_k \not\sim b_l$.
By the nested structure of $SG$, $(D-\lambda I)^{-1}$ is a block diagonal matrix consisting of 3-by-3 Hermitian matrices,
each of which is supported on the inner vertices of a level-$(N-1)$ cell, and has the same structure.
To be concrete, we denote the cell by $\Lambda$, and its three inner vertices by $b_0, b_1, b_2$. Then
\begin{align}
\label{eq:defofD}
\left.(D-\lambda I)\right|_{\Lambda}(b_i, b_j) = 
\left\{
\begin{array}{ll}
1-\lambda, &\text{if } b_i=b_j,\\
-\frac{1}{4} \omega_{b_i b_j}, &\text{if } b_i\neq b_j.
\end{array}
\right.
\end{align}
Using Cramer's formula for the matrix inverse, we get
\begin{align}
\left.(D-\lambda I)\right|_{\Lambda}^{-1}(b_i, b_j)= \frac{1}{\det(\left.( D-\lambda I)\right|_{\Lambda})}{\rm adj}(\left.(D-\lambda I)\right|_{\Lambda})
\end{align}
where
\begin{align}
\det(\left.(D-\lambda I) \right|_\Lambda) &= (1-\lambda)^3 - \frac{3}{16}(1-\lambda) - \frac{1}{32} {\rm Re}(\omega_{b_0 b_1} \omega_{b_1 b_2} \omega_{b_2 b_0}),\\
\label{eq:adjd}{\rm adj}(\left. (D-\lambda I)\right|_\Lambda)(b_i, b_i) &= (1-\lambda)^2-\frac{1}{16},\quad i\in \{0,1,2\},\\ 
\label{eq:adjod}{\rm adj}(\left. (D-\lambda I)\right|_\Lambda)(b_i, b_j) &= \frac{1}{4}(1-\lambda)\omega_{b_i b_j} + \frac{1}{16} \omega_{b_i b_k} \omega_{b_k b_j}, \text{ ~if } i\neq j,
\end{align}
and $k=k(i,j)$ is the third index in $\{0,1,2\} \setminus \{i,j\}$.

In light of the difference between the diagonal and off-diagonal entries of the adjugate matrix, \eqref{eq:adjd} and \eqref{eq:adjod}, we shall rewrite the second term on the RHS of \eqref{eq:Schur1} by splitting the case $b_l=b_k$ and the case $b_l\neq b_k$; namely, if $a_i \in V_N\setminus V_0$, we have
\begin{equation}
\label{eq:Schur2}
\begin{aligned}
\sum_{b_k, b_l}& B(a_i, b_k) (D-\lambda I)^{-1} (b_k, b_l) C(b_l, a_j)
\\
 &=\frac{1}{16}\sum_{b_k \sim \{a_i,a_j\}} \frac{1}{\det(\left.(D-\lambda I)\right|_{\Lambda(b_k)})} \left((1-\lambda)^2-\frac{1}{16}\right)\omega_{a_i b_k}  \omega_{b_k  a_j}\\
&\quad+\frac{1}{16}\sum_{b_k \sim a_i} \sum_{\substack{b_l \sim a_j\\ b_l\neq b_k}}  \frac{1}{\det(\left.(D-\lambda I)\right|_{\Lambda(b_k,b_l)})}\omega_{a_i b_k} \left(\frac{1}{4}(1-\lambda) \omega_{b_k b_l} + \frac{1}{16} \omega_{b_k b_m} \omega_{b_m b_l} \right)\omega_{b_l  a_j},
\end{aligned}
\end{equation}
where $\Lambda(b_1, b_2,\cdots)$ denotes the level-$(N-1)$ cell which contains the vertices $b_1, b_2, \cdots$, and $\{b_k, b_l, b_m\} \subset V_N\setminus V_{N-1}$ form the 3 inner vertices of $\Lambda(b_k, b_l)$.
If $a_i \in V_0$, replace the prefactor $\frac{1}{16}$ in the formula \eqref{eq:Schur2} by $\frac{1}{8}$.

If \underline{$a_i=a_j \in V_N\setminus V_0$:} 
We have
\begin{equation}
\label{eq:Schur3}
\begin{aligned}
\sum_{b_k, b_l}& B(a_i, b_k) (D-\lambda I)^{-1} (b_k, b_l) C(b_l, a_i)
\\ 
&=\frac{1}{16}\sum_{b_k \sim a_i} \frac{1}{\det(\left.(D-\lambda I)\right|_{\Lambda(b_k)})} \left((1-\lambda)^2-\frac{1}{16}\right)\\
&\quad+\frac{1}{16}\sum_{b_k \sim a_i} \sum_{\substack{b_l \sim a_i\\ b_l\neq b_k}}  \frac{1}{\det(\left.(D-\lambda I)\right|_{\Lambda(b_k,b_l)})}\omega_{a_i b_k} \left(\frac{1}{4}(1-\lambda) \omega_{b_k b_l} + \frac{1}{16} \omega_{b_k b_m} \omega_{b_m b_l} \right)\omega_{b_l  a_i}.
\end{aligned}
\end{equation}
Observe that if $a_i$ is contained in two level-$(N-1)$ cells.
We need to pick $\{b_k, b_l\} \sim a_i$ from the same cell to produce a nonzero summand in the second sum.

If \underline{$a_i =a_j \in V_0$:} 
The formula \eqref{eq:Schur3} holds with the prefactor $\frac{1}{16}$ replaced by $\frac{1}{8}$.
Also, $a_i$ is contained in a unique level-$(N-1)$ cell.

If \underline{$a_i\neq a_j$:}
In \eqref{eq:Schur2} note that $a_i, a_j, b_k, b_l$ must belong to the same level-$(N-1)$ cell to produce a nonzero summand. Therefore once we fix $a_i$ and $a_j$, both sums are localized to the cell $\Lambda(a_i, a_j)$.

\subsection{Diagrammatic analysis} \label{sec:diagramanalysis}

To make the results \eqref{eq:Schur2} and \eqref{eq:Schur3} more transparent, we introduce a diagrammatic bookkeeping device.
Given a path $P=\{x_0, x_1,\cdots, x_m\}$, we represent the product of the parallel transports along $P$, $\omega_{x_0 x_1} \omega_{x_1 x_2}\cdots \omega_{x_{m-1} x_m}=:\omega(P)$, by the diagram
\begin{center}
\begin{tikzpicture}[scale=0.3]
\begin{scope}[thick, every node/.style={sloped,allow upside down}]
\draw [thick] (0,0) -- node {\midarrow} (3,0) -- node {\midarrow} (4.3,2.6)-- node {\midarrow} (3,5.2);
\draw [thick, dotted] (3,5.2) -- (1.5,5.2);
\draw [thick] (1.5,5.2) -- node {\midarrow} (-0.2,5.2);
\end{scope}
\tkzDefPoint(0,0){x_0}
\tkzDefPoint(3,0){x_1}
\tkzDefPoint(4.3,2.6){x_2}
\tkzDefPoint(3,5.2){x_3}
\tkzDefPoint(-0.2,5.2){x_m}
\tkzLabelPoint[below](x_0){$x_0$}
\tkzLabelPoint[below](x_1){$x_1$}
\tkzLabelPoint[right](x_2){$x_2$}
\tkzLabelPoint[above](x_3){$x_3$}
\tkzLabelPoint[left](x_m){$x_m$}
\foreach \n in {x_0,x_1,x_2,x_3,x_m}
  \node at (\n)[circle,fill,inner sep=1.5pt]{};
\end{tikzpicture}
\end{center}

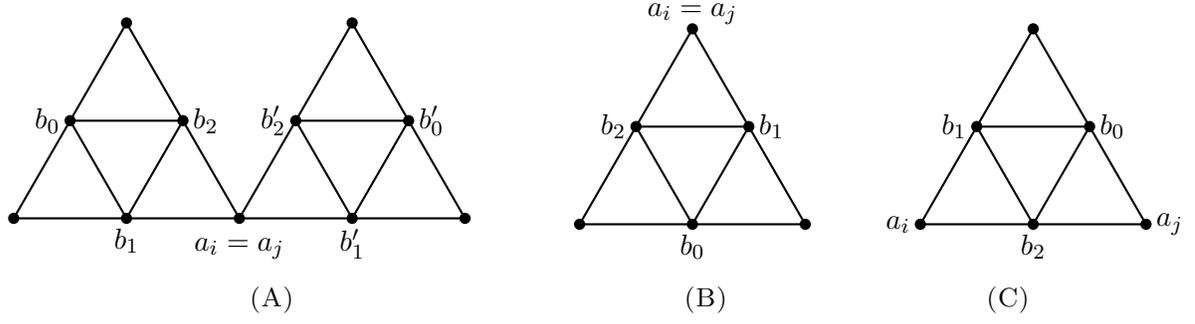
\begin{figure}
\centering
\begin{subfigure}[b]{0.4\textwidth}
\begin{tikzpicture}
\draw [thick] (0,0) -- (60:3) -- (3,0) -- cycle;
\draw [thick, rotate around={-60:(1.5,0)}] (1.5,0) -- (60:1.5) -- (0,0) -- cycle;
\tkzDefPoint(1.5,0){b_1}
\tkzDefPoint(0.75,1.3){b_0}
\tkzDefPoint(2.25,1.3){b_2}
\tkzDefPoint(3,0){a}
\tkzDefPoint(0,0){p1}
\tkzDefPoint(1.5,2.6){p2}
\tkzLabelPoint[below](b_1){$b_1$}
\tkzLabelPoint[left](b_0){$b_0$}
\tkzLabelPoint[right](b_2){$b_2$}
\foreach \n in {b_1,b_2,b_0,a,p1,p2}
  \node at (\n)[circle,fill,inner sep=1.5pt]{};
\draw (3,-0.4) node {$a_i=a_j$};

\begin{scope}[shift={(3,0)}]
\draw [thick] (0,0) -- (60:3) -- (3,0) -- cycle;
\draw [thick, rotate around={-60:(1.5,0)}] (1.5,0) -- (60:1.5) -- (0,0) -- cycle;
\tkzDefPoint(1.5,0){b_1'}
\tkzDefPoint(0.75,1.3){b_2'}
\tkzDefPoint(2.25,1.3){b_0'}
\tkzDefPoint(3,0){p3}
\tkzDefPoint(1.5,2.6){p4}
\tkzLabelPoint[below](b_1'){$b_1'$}
\tkzLabelPoint[right](b_0'){$b_0'$}
\tkzLabelPoint[left](b_2'){$b_2'$}
\foreach \n in {b_1',b_2',b_0',p4,p3}
  \node at (\n)[circle,fill,inner sep=1.5pt]{};
\end{scope}
\end{tikzpicture}
\caption{}
\label{subfig1}
\end{subfigure}
\quad
\begin{subfigure}[b]{0.2\textwidth}
\begin{tikzpicture}
\draw [thick] (0,0) -- (60:3) -- (3,0) -- cycle;
\draw [thick, rotate around={-60:(1.5,0)}] (1.5,0) -- (60:1.5) -- (0,0) -- cycle;
\tkzDefPoint(0,0){p1}
\tkzDefPoint(1.5,0){b_0}
\tkzDefPoint(3,0){p2}
\tkzDefPoint(0.75,1.3){b_2}
\tkzDefPoint(2.25,1.3){b_1}
\tkzDefPoint(1.5,2.6){p3}
\tkzLabelPoint[below](b_0){$b_0$}
\tkzLabelPoint[left](b_2){$b_2$}
\tkzLabelPoint[right](b_1){$b_1$}
\foreach \n in {b_0,b_1,b_2,p1,p2,p3} \node at (\n)[circle,fill,inner sep=1.5pt]{};
\draw (1.5,2.8) node {$a_i=a_j$};
\end{tikzpicture}
\caption{}
\label{subfig2}
\end{subfigure}
\quad
\begin{subfigure}[b]{0.2\textwidth}
\begin{tikzpicture}
\label{fig:sd}
\draw [thick] (0,0) -- (60:3) -- (3,0) -- cycle;
\draw [thick, rotate around={-60:(1.5,0)}] (1.5,0) -- (60:1.5) -- (0,0) -- cycle;
\tkzDefPoint(0,0){a_i}
\tkzDefPoint(1.5,0){b_2}
\tkzDefPoint(3,0){a_j}
\tkzDefPoint(0.75,1.3){b_1}
\tkzDefPoint(2.25,1.3){b_0}
\tkzDefPoint(1.5,2.6){p}
\tkzLabelPoint[right](b_0){$b_0$}
\tkzLabelPoint[below](b_2){$b_2$}
\tkzLabelPoint[left](b_1){$b_1$}
\tkzLabelPoint[left](a_i){$a_i$}
\tkzLabelPoint[right](a_j){$a_j$}
\foreach \n in {b_0,b_1,b_2,a_i,a_j,p} \node at (\n)[circle,fill,inner sep=1.5pt]{};
\end{tikzpicture}
\caption{}
\label{subfig3}
\end{subfigure}
\caption{The unit cells for diagrammatic analysis used in \S\ref{sec:diagramanalysis}.}
\label{fig:startingposition}
\end{figure}

If \underline{$a_i=a_j\in V_N \setminus V_0$}: Consider \eqref{eq:Schur3} and the diagram in Figure \ref{fig:startingposition}(\subref{subfig1}).
We find that there are 4 identical terms in the first summand because $\deg(a_i)=4$, and there are 8 terms in the second summand. A diagrammatic representation of \eqref{eq:Schur3} becomes 
\begin{equation}
\label{eq:Schurdiag1}
\begin{aligned}
\sum_{b_k, b_l}& B(a_i, b_k) (D-\lambda I)^{-1} (b_k, b_l) C(b_l, a_i)
\\
& =\frac{1}{\mathcal{D}(\beta,\lambda)}\cdot\frac{4}{16}\left((1-\lambda)^2-\frac{1}{16}\right)
\\
& \quad +\frac{1}{16}\cdot\frac{1}{\mathcal{D}(\beta,\lambda)}\left(\frac{1}{4}(1-\lambda)\cdot
\begin{tikzpicture}[thick, every node/.style={sloped, align=center, allow upside down}, baseline=($(b_1.base)!.5!(b_2.base)$)]
\draw (0,0) -- pic[pos=0.6]{arrow=latex} (1.5,0) -- pic[pos=0.6]{arrow=latex} (0.75,1.3) -- pic[pos=0.6]{arrow=latex} (0,0);
\tkzDefPoint(0,0){b_1}
\tkzDefPoint(1.5,0){a_i}
\tkzDefPoint(0.75,1.3){b_2}
\tkzLabelPoint[left](b_1){$b_1$}
\tkzLabelPoint[right](a_i){$a_i$}
\tkzLabelPoint[above](b_2){$b_2$}
\foreach \n in {b_1,b_2,a_i}
  \node at (\n)[circle,fill,inner sep=1.5pt]{};
\end{tikzpicture}
+\frac{1}{16}\cdot
\begin{tikzpicture}[thick, every node/.style={sloped, align=center, allow upside down}, baseline=($(b_1.base)!.5!(b_2.base)$)]
\draw (1.5,0) -- pic[pos=0.6]{arrow=latex} (0.75,1.3) -- pic[pos=0.6]{arrow=latex} (-0.75,1.3) -- pic[pos=0.6]{arrow=latex} (0,0) -- pic[pos=0.6]{arrow=latex} (1.5,0);
\tkzDefPoint(0,0){b_1}
\tkzDefPoint(1.5,0){a_i}
\tkzDefPoint(0.75,1.3){b_2}
\tkzDefPoint(-0.75,1.3){b_0}
\tkzLabelPoint[left](b_1){$b_1$}
\tkzLabelPoint[right](a_i){$a_i$}
\tkzLabelPoint[right](b_2){$b_2$}
\tkzLabelPoint[left](b_0){$b_0$}
\foreach \n in {b_1,b_2,a_i,b_0}
  \node at (\n)[circle,fill,inner sep=1.5pt]{};
\end{tikzpicture}
\right)
\\
& \quad +\frac{1}{16}\cdot\frac{1}{\mathcal{D}(\beta,\lambda)}\left(\frac{1}{4}(1-\lambda)\cdot
\begin{tikzpicture}[thick, every node/.style={sloped, align=center, allow upside down}, baseline=($(b_1.base)!.5!(b_2.base)$)]
\draw (0,0) -- pic[pos=0.6]{arrow=latex} (-1.5,0) -- pic[pos=0.6]{arrow=latex} (-0.75,1.3) -- pic[pos=0.6]{arrow=latex} (0,0);
\tkzDefPoint(-1.5,0){b_1}
\tkzDefPoint(0,0){a_i}
\tkzDefPoint(-0.75,1.3){b_2}
\tkzLabelPoint[left](b_1){$b_1$}
\tkzLabelPoint[right](a_i){$a_i$}
\tkzLabelPoint[above](b_2){$b_2$}
\foreach \n in {b_1,b_2,a_i}
  \node at (\n)[circle,fill,inner sep=1.5pt]{};
\end{tikzpicture}
+\frac{1}{16}\cdot
\begin{tikzpicture}[thick, every node/.style={sloped, align=center, allow upside down}, baseline=($(b_1.base)!.5!(b_2.base)$)]
\draw (1.5,0) -- pic[pos=0.6]{arrow=latex} (0,0) -- pic[pos=0.6]{arrow=latex} (-0.75,1.3) -- pic[pos=0.6]{arrow=latex} (0.75,1.3) -- pic[pos=0.6]{arrow=latex} (1.5,0);
\tkzDefPoint(0,0){b_1}
\tkzDefPoint(1.5,0){a_i}
\tkzDefPoint(0.75,1.3){b_2}
\tkzDefPoint(-0.75,1.3){b_0}
\tkzLabelPoint[left](b_1){$b_1$}
\tkzLabelPoint[right](a_i){$a_i$}
\tkzLabelPoint[right](b_2){$b_2$}
\tkzLabelPoint[left](b_0){$b_0$}
\foreach \n in {b_1,b_2,a_i,b_0}
  \node at (\n)[circle,fill,inner sep=1.5pt]{};
\end{tikzpicture}
\right)
\\
& \quad + \frac{1}{16}\cdot\frac{1}{\mathcal{D}(\beta,\lambda)}\left(\frac{1}{4}(1-\lambda)\cdot
\begin{tikzpicture}
[thick, every node/.style={sloped, align=center, allow upside down}, baseline=($(b_1'.base)!.5!(b_2'.base)$)]
\draw (0,0) -- pic[pos=0.6]{arrow=latex} (1.5,0) -- pic[pos=0.6]{arrow=latex} (0.75,1.3) -- pic[pos=0.6]{arrow=latex} (0,0);
\tkzDefPoint(1.5,0){b_1'}
\tkzDefPoint(0,0){a_i}
\tkzDefPoint(0.75,1.3){b_2'}
\tkzLabelPoint[right](b_1'){$b_1'$}
\tkzLabelPoint[left](a_i){$a_i$}
\tkzLabelPoint[above](b_2'){$b_2'$}
\foreach \n in {b_1',b_2',a_i}
  \node at (\n)[circle,fill,inner sep=1.5pt]{};
\end{tikzpicture}
+\frac{1}{16}\cdot
\begin{tikzpicture}[thick, every node/.style={sloped, align=center, allow upside down}, baseline=($(b_1'.base)!.5!(b_2'.base)$)]
\draw (0,0) -- pic[pos=0.6]{arrow=latex} (1.5,0) -- pic[pos=0.6]{arrow=latex} (2.25,1.3) -- pic[pos=0.6]{arrow=latex} (0.75,1.3) -- pic[pos=0.6]{arrow=latex} (0,0);
\tkzDefPoint(0,0){a_i}
\tkzDefPoint(1.5,0){b_1'}
\tkzDefPoint(0.75,1.3){b_2'}
\tkzDefPoint(2.25,1.3){b_0'}
\tkzLabelPoint[right](b_1'){$b_1'$}
\tkzLabelPoint[left](a_i){$a_i$}
\tkzLabelPoint[left](b_2'){$b_2'$}
\tkzLabelPoint[right](b_0'){$b_0'$}
\foreach \n in {b_1',b_2',a_i,b_0'}
  \node at (\n)[circle,fill,inner sep=1.5pt]{};
\end{tikzpicture}
\right)
\\
& \quad + \frac{1}{16}\cdot\frac{1}{\mathcal{D}(\beta,\lambda)}\left(\frac{1}{4}(1-\lambda)\cdot
\begin{tikzpicture}
[thick, every node/.style={sloped, align=center, allow upside down}, baseline=($(b_1'.base)!.5!(b_2'.base)$)]
\draw (0,0) -- pic[pos=0.6]{arrow=latex} (0.75,1.3) -- pic[pos=0.6]{arrow=latex} (1.5,0) -- pic[pos=0.6]{arrow=latex} (0,0);
\tkzDefPoint(1.5,0){b_1'}
\tkzDefPoint(0,0){a_i}
\tkzDefPoint(0.75,1.3){b_2'}
\tkzLabelPoint[right](b_1'){$b_1'$}
\tkzLabelPoint[left](a_i){$a_i$}
\tkzLabelPoint[above](b_2'){$b_2'$}
\foreach \n in {b_1',b_2',a_i}
  \node at (\n)[circle,fill,inner sep=1.5pt]{};
\end{tikzpicture}
+\frac{1}{16}\cdot
\begin{tikzpicture}[thick, every node/.style={sloped, align=center, allow upside down}, baseline=($(b_1'.base)!.5!(b_2'.base)$)]
\draw (0,0) -- pic[pos=0.6]{arrow=latex} (0.75,1.3) -- pic[pos=0.6]{arrow=latex} (2.25,1.3) -- pic[pos=0.6]{arrow=latex} (1.5,0) -- pic[pos=0.6]{arrow=latex} (0,0);
\tkzDefPoint(0,0){a_i}
\tkzDefPoint(1.5,0){b_1'}
\tkzDefPoint(0.75,1.3){b_2'}
\tkzDefPoint(2.25,1.3){b_0'}
\tkzLabelPoint[right](b_1'){$b_1'$}
\tkzLabelPoint[left](a_i){$a_i$}
\tkzLabelPoint[left](b_2'){$b_2'$}
\tkzLabelPoint[right](b_0'){$b_0'$}
\foreach \n in {b_1',b_2',a_i,b_0'}
  \node at (\n)[circle,fill,inner sep=1.5pt]{};
\end{tikzpicture}
\right)
\end{aligned}
\end{equation}
where
\begin{equation}
\label{eq:detofD}
\begin{aligned}
\mathcal{D}(\beta,\lambda)&=\det\left((D-\lambda I)|_{\Lambda(b_0^{(')},b_1^{(')},b_2^{(')})}\right)
\\
& =(1-\lambda)^3-\frac{3}{16}(1-\lambda)-\frac{1}{64}\left(
\begin{tikzpicture}[thick, every node/.style={sloped, align=center, allow upside down}, baseline=($(b_1^{(')}.base)!.5!(b_2^{(')}.base)$)]
\draw (-0.75,0.65) -- pic[pos=0.6]{arrow=latex} (0.75,0.65) -- pic[pos=0.6]{arrow=latex} (0,-0.65) -- pic[pos=0.6]{arrow=latex} (-0.75,0.65);
\tkzDefPoint(0,-0.65){b_1^{(')}}
\tkzDefPoint(-0.75,0.65){b_2^{(')}}
\tkzDefPoint(0.75,0.65){b_0^{(')}}
\tkzLabelPoint[below]({b_1^{(')}}){${b_1^{(')}}$}
\tkzLabelPoint[left]({b_2^{(')}}){${b_2^{(')}}$}
\tkzLabelPoint[right]({b_0^{(')}}){${b_0^{(')}}$}
\foreach \n in {b_0^{(')},b_1^{(')},b_2^{(')}}
  \node at (\n)[circle,fill,inner sep=1.5pt]{};
\end{tikzpicture}
+\begin{tikzpicture}[thick, every node/.style={sloped, align=center, allow upside down}, baseline=($(b_1^{(')}.base)!.5!(b_2^{(')}.base)$)]
\draw (-0.75,0.65) -- pic[pos=0.6]{arrow=latex} (0,-0.65) -- pic[pos=0.6]{arrow=latex} (0.75,0.65) -- pic[pos=0.6]{arrow=latex} cycle;
\tkzDefPoint(0,-0.65){b_1^{(')}}
\tkzDefPoint(-0.75,0.65){b_2^{(')}}
\tkzDefPoint(0.75,0.65){b_0^{(')}}
\tkzLabelPoint[below]({b_1^{(')}}){${b_1^{(')}}$}
\tkzLabelPoint[left]({b_2^{(')}}){${b_2^{(')}}$}
\tkzLabelPoint[right]({b_0^{(')}}){${b_0^{(')}}$}
\foreach \n in {b_0^{(')},b_1^{(')},b_2^{(')}}
  \node at (\n)[circle,fill,inner sep=1.5pt]{};
\end{tikzpicture}
\right).
\end{aligned}
\end{equation}

If \underline{$a_i=a_j\in V_0$}: See Figure \ref{fig:startingposition}(\subref{subfig2}).
Formula \eqref{eq:Schur3} becomes
\begin{equation}
\label{eq:Schurdiag2}
\begin{aligned}
\sum_{b_k, b_l}& B(a_i, b_k) (D-\lambda I)^{-1} (b_k, b_l) C(b_l, a_i)
\\
& =\frac{2}{8}\cdot\frac{1}{\mathcal{D}(\beta,\lambda)} \left((1-\lambda)^2-\frac{1}{16}\right)
\\
& +\frac{1}{8}\cdot\frac{1}{\mathcal{D}(\beta,\lambda)}\cdot \left(\frac{1}{4}\cdot(1-\lambda)\cdot
\begin{tikzpicture}[scale=0.7,thick, every node/.style={sloped, allow upside down}, baseline=($(b_1.base)!.5!(a_i.base)$)]
\draw (0.75,1.3) -- pic[pos=0.6]{arrow=latex} (0,0) -- pic[pos=0.6]{arrow=latex} (1.5,0) -- pic[pos=0.6]{arrow=latex} (0.75,1.3);
\tkzDefPoint(1.5,0){b_1}
\tkzDefPoint(0,0){b_2}
\tkzDefPoint(0.75,1.3){a_i}
\tkzLabelPoint[right](b_1){$b_1$}
\tkzLabelPoint[above](a_i){$a_i$}
\tkzLabelPoint[left](b_2){$b_2$}
\foreach \n in {b_1,b_2,a_i}
  \node at (\n)[circle,fill,inner sep=1.5pt]{};
\end{tikzpicture}
+\frac{1}{16}\cdot
\begin{tikzpicture}[scale=0.6, thick, every node/.style={sloped, align=center, allow upside down}, baseline=($(b_0.base)!.5!(a_i.base)$)]
\draw (0.75,1.3) -- pic[pos=0.6]{arrow=latex} (0,0) -- pic[pos=0.6]{arrow=latex} (0.75,-1.3) -- pic[pos=0.6]{arrow=latex} (1.5,0) -- pic[pos=0.6]{arrow=latex} (0.75,1.3);
\tkzDefPoint(0,0){b_2}
\tkzDefPoint(0.75,-1.3){b_0}
\tkzDefPoint(1.5,0){b_1}
\tkzDefPoint(0.75,1.3){a_i}
\tkzLabelPoint[left](b_2){$b_2$}
\tkzLabelPoint[below](b_0){$b_0$}
\tkzLabelPoint[right](b_1){$b_1$}
\tkzLabelPoint[above](a_i){$a_i$}
\foreach \n in {b_1,b_2,a_i,b_0}
  \node at (\n)[circle,fill,inner sep=1.5pt]{};
\end{tikzpicture}
\right)
\\
& +\frac{1}{8}\cdot\frac{1}{\mathcal{D}(\beta,\lambda)}\cdot \left(\frac{1}{4}\cdot(1-\lambda)\cdot
\begin{tikzpicture}[scale=0.7, thick, every node/.style={sloped, align=center, allow upside down}, baseline=($(b_1.base)!.5!(a_i.base)$)]
\draw (0.75,1.3) -- pic[pos=0.6]{arrow=latex} (1.5,0) -- pic[pos=0.6]{arrow=latex} (0,0) -- pic[pos=0.6]{arrow=latex} (0.75,1.3);
\tkzDefPoint(1.5,0){b_1}
\tkzDefPoint(0,0){b_2}
\tkzDefPoint(0.75,1.3){a_i}
\tkzLabelPoint[right](b_1){$b_1$}
\tkzLabelPoint[above](a_i){$a_i$}
\tkzLabelPoint[left](b_2){$b_2$}
\foreach \n in {b_1,b_2,a_i}
  \node at (\n)[circle,fill,inner sep=1.5pt]{};
\end{tikzpicture}
+\frac{1}{16}\cdot
\begin{tikzpicture}[scale=0.6, thick, every node/.style={sloped, align=center, allow upside down}, baseline=($(b_0.base)!.5!(a_i.base)$)]
\draw (0.75,1.3) -- pic[pos=0.7]{arrow=latex} (1.5,0) -- pic[pos=0.6]{arrow=latex} (0.75,-1.3) -- pic[pos=0.7]{arrow=latex} (0,0) -- pic[pos=0.8]{arrow=latex} cycle;
\tkzDefPoint(0,0){b_2}
\tkzDefPoint(0.75,-1.3){b_0}
\tkzDefPoint(1.5,0){b_1}
\tkzDefPoint(0.75,1.3){a_i}
\tkzLabelPoint[left](b_2){$b_2$}
\tkzLabelPoint[below](b_0){$b_0$}
\tkzLabelPoint[right](b_1){$b_1$}
\tkzLabelPoint[above](a_i){$a_i$}
\foreach \n in {b_1,b_2,a_i,b_0}
  \node at (\n)[circle,fill,inner sep=1.5pt]{};
\end{tikzpicture}
\right).
\end{aligned}
\end{equation}

If \underline{$a_i\neq a_j$ and $a_i\in V_N\setminus V_0$}: See Figure \ref{fig:startingposition}(\subref{subfig3}). Formula \eqref{eq:Schur2} becomes
\begin{equation}
\label{eq:Schuroffdiag}
\begin{aligned}
\sum_{b_k, b_l}& B(a_i, b_k) (D-\lambda I)^{-1} (b_k, b_l) C(b_l, a_j)
\\
& =\frac{1}{16}\cdot\frac{1}{\mathcal{D}(\beta,\lambda)}\left((1-\lambda)^2-\frac{1}{16}\right)\cdot\left(
\begin{tikzpicture}[thick, every node/.style={sloped, align=center, allow upside down}, baseline=($(a_i.base)!.5!(a_j.base)$)]
\draw (0,0) -- pic[pos=0.6]{arrow=latex} (0.75,0) -- pic[pos=0.6]{arrow=latex} (1.5,0);
\tkzDefPoint(0,0){a_i}
\tkzDefPoint(0.75,0){b_2}
\tkzDefPoint(1.5,0){a_j}
\tkzLabelPoint[below](a_i){$a_i$}
\tkzLabelPoint[below](b_2){$b_2$}
\tkzLabelPoint[below](a_j){$a_j$}
\foreach \n in {a_i,b_2,a_j}
  \node at (\n)[circle,fill,inner sep=1.5pt]{};
\end{tikzpicture}
\right)
\\
& +\frac{1}{16}\cdot\frac{1}{\mathcal{D}(\beta,\lambda)}\cdot\frac{1}{4}(1-\lambda)\cdot\left(
\begin{tikzpicture}[scale=0.5, thick, every node/.style={sloped, align=center, allow upside down}, baseline=($(b_1.base)!.5!(a_i.base)$)]
\draw (0,0) -- pic[pos=0.6]{arrow=latex} (0.75,1.3) -- pic[pos=0.6]{arrow=latex} (1.5,0) -- pic[pos=0.6]{arrow=latex} (3,0);
\tkzDefPoint(0,0){a_i}
\tkzDefPoint(1.5,0){b_2}
\tkzDefPoint(3,0){a_j}
\tkzDefPoint(0.75,1.3){b_1}
\tkzLabelPoint[left](a_i){$a_i$}
\tkzLabelPoint[below](b_2){$b_2$}
\tkzLabelPoint[right](a_j){$a_j$}
\tkzLabelPoint[above](b_1){$b_1$}
\foreach \n in {a_i,b_1,b_2,a_j}
  \node at (\n)[circle,fill,inner sep=1.5pt]{};
\end{tikzpicture}
+
\begin{tikzpicture}[scale=0.5, thick, every node/.style={sloped, align=center, allow upside down}, baseline=($(b_1.base)!.5!(a_i.base)$)]
\draw (0,0) -- pic[pos=0.6]{arrow=latex} (0.75,1.3) -- pic[pos=0.6]{arrow=latex} (2.25,1.3) -- pic[pos=0.6]{arrow=latex} (3,0);
\tkzDefPoint(0,0){a_i}
\tkzDefPoint(2.25,1.3){b_0}
\tkzDefPoint(3,0){a_j}
\tkzDefPoint(0.75,1.3){b_1}
\tkzLabelPoint[left](a_i){$a_i$}
\tkzLabelPoint[above](b_0){$b_0$}
\tkzLabelPoint[right](a_j){$a_j$}
\tkzLabelPoint[above](b_1){$b_1$}
\foreach \n in {a_i,b_1,b_0,a_j}
  \node at (\n)[circle,fill,inner sep=1.5pt]{};
\end{tikzpicture}
+
\begin{tikzpicture}[scale=0.5, thick, every node/.style={sloped, align=center, allow upside down}, baseline=($(b_0.base)!.5!(a_i.base)$)]
\draw (0,0) -- pic[pos=0.6]{arrow=latex} (1.5,0) -- pic[pos=0.6]{arrow=latex} (2.25,1.3) -- pic[pos=0.6]{arrow=latex} (3,0);
\tkzDefPoint(0,0){a_i}
\tkzDefPoint(1.5,0){b_2}
\tkzDefPoint(3,0){a_j}
\tkzDefPoint(2.25,1.3){b_0}
\tkzLabelPoint[left](a_i){$a_i$}
\tkzLabelPoint[above](b_0){$b_0$}
\tkzLabelPoint[right](a_j){$a_j$}
\tkzLabelPoint[below](b_2){$b_2$}
\foreach \n in {a_i,b_2,b_0,a_j}
  \node at (\n)[circle,fill,inner sep=1.5pt]{};
\end{tikzpicture}
\right)
\\
& +\frac{1}{16}\cdot\frac{1}{\mathcal{D}(\beta,\lambda)}\cdot\frac{1}{16}\cdot\left(
\begin{tikzpicture}[scale=0.5, thick, every node/.style={sloped, align=center, allow upside down}, baseline=($(b_0.base)!.5!(a_i.base)$)]
\draw (0,0) -- pic[pos=0.6]{arrow=latex} (0.75,1.3) -- pic[pos=0.6]{arrow=latex} (1.5,0) -- pic[pos=0.6]{arrow=latex} (2.25,1.3) -- pic[pos=0.6]{arrow=latex} (3,0);
\tkzDefPoint(0,0){a_i}
\tkzDefPoint(0.75,1.3){b_1}
\tkzDefPoint(1.5,0){b_2}
\tkzDefPoint(3,0){a_j}
\tkzDefPoint(2.25,1.3){b_0}
\tkzLabelPoint[left](a_i){$a_i$}
\tkzLabelPoint[above](b_0){$b_0$}
\tkzLabelPoint[right](a_j){$a_j$}
\tkzLabelPoint[below](b_2){$b_2$}
\tkzLabelPoint[above](b_1){$b_1$}
\foreach \n in {a_i,b_2,b_0,a_j,b_1}
  \node at (\n)[circle,fill,inner sep=1.5pt]{};
\end{tikzpicture}
+
\begin{tikzpicture}[scale=0.5, thick, every node/.style={sloped, align=center, allow upside down}, baseline=($(b_0.base)!.5!(a_i.base)$)]
\draw (0,0) -- pic[pos=0.6]{arrow=latex} (0.75,1.3) -- pic[pos=0.6]{arrow=latex} (2.25,1.3) -- pic[pos=0.6]{arrow=latex} (1.5,0) -- pic[pos=0.6]{arrow=latex} (3,0);
\tkzDefPoint(0,0){a_i}
\tkzDefPoint(0.75,1.3){b_1}
\tkzDefPoint(1.5,0){b_2}
\tkzDefPoint(3,0){a_j}
\tkzDefPoint(2.25,1.3){b_0}
\tkzLabelPoint[left](a_i){$a_i$}
\tkzLabelPoint[above](b_0){$b_0$}
\tkzLabelPoint[right](a_j){$a_j$}
\tkzLabelPoint[below](b_2){$b_2$}
\tkzLabelPoint[above](b_1){$b_1$}
\foreach \n in {a_i,b_2,b_0,a_j,b_1}
  \node at (\n)[circle,fill,inner sep=1.5pt]{};
\end{tikzpicture}
+
\begin{tikzpicture}[scale=0.5, thick, every node/.style={sloped, align=center, allow upside down}, baseline=($(b_0.base)!.5!(a_i.base)$)]
\draw (0,0) -- pic[pos=0.6]{arrow=latex} (1.5,0) -- pic[pos=0.6]{arrow=latex} (0.75,1.3) -- pic[pos=0.6]{arrow=latex} (2.25,1.3) -- pic[pos=0.6]{arrow=latex} (3,0);
\tkzDefPoint(0,0){a_i}
\tkzDefPoint(0.75,1.3){b_1}
\tkzDefPoint(1.5,0){b_2}
\tkzDefPoint(3,0){a_j}
\tkzDefPoint(2.25,1.3){b_0}
\tkzLabelPoint[left](a_i){$a_i$}
\tkzLabelPoint[above](b_0){$b_0$}
\tkzLabelPoint[right](a_j){$a_j$}
\tkzLabelPoint[below](b_2){$b_2$}
\tkzLabelPoint[above](b_1){$b_1$}
\foreach \n in {a_i,b_2,b_0,a_j,b_1}
  \node at (\n)[circle,fill,inner sep=1.5pt]{};
\end{tikzpicture}
\right)
\\
&=\frac{1}{16}\cdot\frac{1}{\mathcal{D}(\beta,\lambda)}\cdot\left(
\begin{tikzpicture}
[thick, every node/.style={sloped, align=center, allow upside down}, baseline=($(a_i.base)!.5!(a_j.base)$)]
\draw (0,0) -- pic[pos=0.6]{arrow=latex} (0.75,0) -- pic[pos=0.6]{arrow=latex} (1.5,0);
\tkzDefPoint(0,0){a_i}
\tkzDefPoint(0.75,0){b_2}
\tkzDefPoint(1.5,0){a_j}
\tkzLabelPoint[below](a_i){$a_i$}
\tkzLabelPoint[below](b_2){$b_2$}
\tkzLabelPoint[below](a_j){$a_j$}
\foreach \n in {a_i,b_2,a_j}
  \node at (\n)[circle,fill,inner sep=1.5pt]{};
\end{tikzpicture}
\right)\cdot\left((1-\lambda)^2-\frac{1}{16}\right)
\\
& +\frac{1-\lambda}{64\mathcal{D}(\beta,\lambda)}\cdot\left(
\begin{tikzpicture}
[thick, every node/.style={sloped, align=center, allow upside down}, baseline=($(a_i.base)!.5!(a_j.base)$)]
\draw (0,0) -- pic[pos=0.6]{arrow=latex} (0.75,0) -- pic[pos=0.6]{arrow=latex} (1.5,0);
\tkzDefPoint(0,0){a_i}
\tkzDefPoint(0.75,0){b_2}
\tkzDefPoint(1.5,0){a_j}
\tkzLabelPoint[below](a_i){$a_i$}
\tkzLabelPoint[below](b_2){$b_2$}
\tkzLabelPoint[below](a_j){$a_j$}
\foreach \n in {a_i,b_2,a_j}
  \node at (\n)[circle,fill,inner sep=1.5pt]{};
\end{tikzpicture}
\right)\cdot\left(
\begin{tikzpicture}[scale=0.5, thick, every node/.style={sloped, align=center, allow upside down}, baseline=($(a_i.base)!.5!(b_1.base)$)]
\draw (0,0) -- pic[pos=0.6]{arrow=latex} (0.75,1.3) -- pic[pos=0.6]{arrow=latex} (1.5,0) -- pic[pos=0.6]{arrow=latex} (0,0);
\tkzDefPoint(0,0){a_i}
\tkzDefPoint(1.5,0){b_2}
\tkzDefPoint(0.75,1.3){b_1}
\tkzLabelPoint[left](a_i){$a_i$}
\tkzLabelPoint[right](b_2){$b_2$}
\tkzLabelPoint[above](b_1){$b_1$}
\foreach \n in {a_i,b_2,b_1}
  \node at (\n)[circle,fill,inner sep=1.5pt]{};
\end{tikzpicture}
+
\begin{tikzpicture}[scale=0.5, thick, every node/.style={sloped, align=center, allow upside down}, baseline=($(a_i.base)!.5!(b_1.base)$)]
\draw (0,0) -- pic[pos=0.6]{arrow=latex} (0.75,1.3) -- pic[pos=0.6]{arrow=latex} (2.25,1.3) -- pic[pos=0.6]{arrow=latex} (3,0) -- pic[pos=0.6]{arrow=latex} (1.5,0) -- pic[pos=0.6]{arrow=latex} (0,0);
\tkzDefPoint(0,0){a_i}
\tkzDefPoint(1.5,0){b_2}
\tkzDefPoint(0.75,1.3){b_1}
\tkzDefPoint(2.25,1.3){b_0}
\tkzDefPoint(3,0){a_j}
\tkzLabelPoint[left](a_i){$a_i$}
\tkzLabelPoint[below](b_2){$b_2$}
\tkzLabelPoint[above](b_1){$b_1$}
\tkzLabelPoint[above](b_0){$b_0$}
\tkzLabelPoint[right](a_j){$a_j$}
\foreach \n in {a_i,b_2,b_1,b_0,a_j}
  \node at (\n)[circle,fill,inner sep=1.5pt]{};
\end{tikzpicture}
+
\begin{tikzpicture}[scale=0.5, thick, every node/.style={sloped, align=center, allow upside down}, baseline=($(a_j.base)!.5!(b_0.base)$)]
\draw (0,0) -- pic[pos=0.6]{arrow=latex} (0.75,1.3) -- pic[pos=0.6]{arrow=latex} (1.5,0) -- pic[pos=0.6]{arrow=latex} (0,0);
\tkzDefPoint(0,0){b_2}
\tkzDefPoint(1.5,0){a_j}
\tkzDefPoint(0.75,1.3){b_0}
\tkzLabelPoint[left](b_2){$b_2$}
\tkzLabelPoint[right](a_j){$a_j$}
\tkzLabelPoint[above](b_0){$b_0$}
\foreach \n in {a_j,b_2,b_0}
  \node at (\n)[circle,fill,inner sep=1.5pt]{};
\end{tikzpicture}
\right)
\\
& +\frac{1}{256\mathcal{D}(\beta,\lambda)}\cdot\left(
\begin{tikzpicture}
[thick, every node/.style={sloped, align=center, allow upside down}, baseline=($(a_i.base)!.5!(a_j.base)$)]
\draw (0,0) -- pic[pos=0.6]{arrow=latex} (0.75,0) -- pic[pos=0.6]{arrow=latex} (1.5,0);
\tkzDefPoint(0,0){a_i}
\tkzDefPoint(0.75,0){b_2}
\tkzDefPoint(1.5,0){a_j}
\tkzLabelPoint[below](a_i){$a_i$}
\tkzLabelPoint[below](b_2){$b_2$}
\tkzLabelPoint[below](a_j){$a_j$}
\foreach \n in {a_i,b_2,a_j}
  \node at (\n)[circle,fill,inner sep=1.5pt]{};
\end{tikzpicture}
\right)\cdot\left(
\begin{tikzpicture}[scale=0.5, thick, every node/.style={sloped, align=center, allow upside down}, baseline=($(a_i.base)!.5!(b_1.base)$)]
\draw (0,0) -- pic[pos=0.6]{arrow=latex} (0.75,1.3) -- pic[pos=0.6]{arrow=latex} (1.5,0) -- pic[pos=0.6]{arrow=latex} (0,0);
\draw (1.5,0) -- pic[pos=0.6]{arrow=latex} (2.25,1.3) -- pic[pos=0.6]{arrow=latex} (3,0) -- pic[pos=0.6]{arrow=latex} (1.5,0);
\tkzDefPoint(0,0){a_i}
\tkzDefPoint(1.5,0){b_2}
\tkzDefPoint(0.75,1.3){b_1}
\tkzDefPoint(2.25,1.3){b_0}
\tkzDefPoint(3,0){a_j}
\tkzLabelPoint[left](a_i){$a_i$}
\tkzLabelPoint[below](b_2){$b_2$}
\tkzLabelPoint[above](b_1){$b_1$}
\tkzLabelPoint[above](b_0){$b_0$}
\tkzLabelPoint[right](a_j){$a_j$}
\foreach \n in {a_i,b_2,b_1,b_0,a_j}
  \node at (\n)[circle,fill,inner sep=1.5pt]{};
\end{tikzpicture}
+
\begin{tikzpicture}[scale=0.5, thick, every node/.style={sloped, align=center, allow upside down}, baseline=($(a_i.base)!.5!(b_1.base)$)]
\draw (0,0) -- pic[pos=0.6]{arrow=latex} (0.75,1.3) -- pic[pos=0.6]{arrow=latex} (2.25,1.3) -- pic[pos=0.6]{arrow=latex} (1.5,0) -- pic[pos=0.6]{arrow=latex} (0,0);
\tkzDefPoint(0,0){a_i}
\tkzDefPoint(1.5,0){b_2}
\tkzDefPoint(0.75,1.3){b_1}
\tkzDefPoint(2.25,1.3){b_0}
\tkzLabelPoint[left](a_i){$a_i$}
\tkzLabelPoint[right](b_2){$b_2$}
\tkzLabelPoint[above](b_1){$b_1$}
\tkzLabelPoint[above](b_0){$b_0$}
\foreach \n in {a_i,b_2,b_1,b_0}
  \node at (\n)[circle,fill,inner sep=1.5pt]{};
\end{tikzpicture}
+
\begin{tikzpicture}[scale=0.5, thick, every node/.style={sloped, align=center, allow upside down}, baseline=($(a_j.base)!.5!(b_1.base)$)]
\draw (1.5,0) -- pic[pos=0.6]{arrow=latex} (0.75,1.3) -- pic[pos=0.6]{arrow=latex} (2.25,1.3) -- pic[pos=0.6]{arrow=latex} (3,0) -- pic[pos=0.6]{arrow=latex} (1.5,0);
\tkzDefPoint(3,0){a_j}
\tkzDefPoint(1.5,0){b_2}
\tkzDefPoint(0.75,1.3){b_1}
\tkzDefPoint(2.25,1.3){b_0}
\tkzLabelPoint[right](a_j){$a_j$}
\tkzLabelPoint[below](b_2){$b_2$}
\tkzLabelPoint[above](b_1){$b_1$}
\tkzLabelPoint[above](b_0){$b_0$}
\foreach \n in {a_j,b_2,b_1,b_0}
  \node at (\n)[circle,fill,inner sep=1.5pt]{};
\end{tikzpicture}
\right).
\end{aligned}
\end{equation}

If \underline{$a_i\neq a_j$ and $a_i\in V_0$:} $\sum_{b_k, b_l}B(a_i, b_k) (D-\lambda I)^{-1} (b_k, b_l) C(b_l, a_i)$ is half of equation \eqref{eq:Schuroffdiag}.

\subsection{Establishing spectral self-similarity}

Using Definition \ref{def:flux2} we simplify the expressions for the Schur complements. 
Note the following equivalent holonomy diagrams.
\begin{center}
\begin{tikzpicture}[thick, every node/.style={sloped, allow upside down}]
\begin{scope}[yshift=-0.4cm]
\draw (1.5,0) -- pic[pos=0.6]{arrow=latex} (0.75,1.3) -- pic[pos=0.6]{arrow=latex} (-0.75,1.3) -- pic[pos=0.6]{arrow=latex} (0,0) -- pic[pos=0.6]{arrow=latex} (1.5,0);
\tkzDefPoint(0,0){a_j}
\tkzDefPoint(1.5,0){b_2}
\tkzDefPoint(0.75,1.3){b_1}
\tkzDefPoint(-0.75,1.3){b_0}
\foreach \n in {a_j,b_2,b_1,b_0}
  \node at (\n)[circle,fill,inner sep=1.5pt]{};
\draw (1.9,0.6) node {${\Large\equiv}$};
\begin{scope}[xshift=3cm, thick, every node/.style={sloped, allow upside down}]
\draw (1.5,0) -- pic[pos=0.6]{arrow=latex} (0.75,1.3) -- pic[pos=0.65]{arrow=latex} (0,0) -- pic[pos=0.6]{arrow=latex} (1.5,0);
\draw (0,0) -- pic[pos=0.65]{arrow=latex} (0.75,1.3) -- pic[pos=0.6]{arrow=latex} (-0.75,1.3) -- pic[pos=0.6]{arrow=latex} (0,0);
\tkzDefPoint(0,0){a_j}
\tkzDefPoint(1.5,0){b_2}
\tkzDefPoint(0.75,1.3){b_1}
\tkzDefPoint(-0.75,1.3){b_0}
\foreach \n in {a_j,b_2,b_1,b_0}
  \node at (\n)[circle,fill,inner sep=1.5pt]{};
\end{scope}
\end{scope}
\begin{scope}[xshift=7cm, thick, every node/.style={sloped, allow upside down}]
\draw (0.75,1.3) -- pic[pos=0.6]{arrow=latex} (1.5,0) -- pic[pos=0.6]{arrow=latex} (0.75,-1.3) -- pic[pos=0.6]{arrow=latex} (0,0) -- pic[pos=0.6]{arrow=latex} (0.75,1.3);
\tkzDefPoint(0,0){b_2}
\tkzDefPoint(0.75,-1.3){b_0}
\tkzDefPoint(1.5,0){b_1}
\tkzDefPoint(0.75,1.3){a_i}
\foreach \n in {b_1,b_2,a_i,b_0}
  \node at (\n)[circle,fill,inner sep=1.5pt]{};
\draw (2.2,0) node {${\Large\equiv}$};
\end{scope}
\begin{scope}[xshift=10cm, thick, every node/.style={sloped, allow upside down}]
\draw (0.75,1.3) -- pic[pos=0.6]{arrow=latex} (1.5,0) -- pic[pos=0.65]{arrow=latex} (0,0) -- pic[pos=0.6]{arrow=latex} (0.75,1.3);
\draw (0,0) -- pic[pos=0.65]{arrow=latex} (1.5,0) -- pic[pos=0.6]{arrow=latex} (0.75,-1.3) -- pic[pos=0.6]{arrow=latex} (0,0);
\tkzDefPoint(0,0){b_2}
\tkzDefPoint(0.75,-1.3){b_0}
\tkzDefPoint(1.5,0){b_1}
\tkzDefPoint(0.75,1.3){a_i}
\foreach \n in {b_1,b_2,a_i,b_0}
  \node at (\n)[circle,fill,inner sep=1.5pt]{};
\end{scope}
\end{tikzpicture}
\end{center}
Therefore we can reexpress \eqref{eq:Schurdiag1}, \eqref{eq:detofD}, and \eqref{eq:Schurdiag2} to get
\begin{equation}
\label{eq:symSchurdiag1}
S_N^{\omega}(\alpha,\beta,\lambda)(a_i, a_i)=1-\lambda-\frac{A(\alpha,\beta,\lambda)}{64\mathcal{D}(\beta,\lambda)}
\end{equation}
where
\begin{align}
\label{eq:A}
A(\alpha,\beta,\lambda)&=16\lambda^2-(32+4\cos(2\pi\alpha))\lambda+15+4\cos(2\pi\alpha)+\cos(2\pi(\alpha+\beta)),\\
\label{eq:D}
\mathcal{D}(\beta,\lambda)&=-\lambda^3+3\lambda^2-\frac{45}{16}\lambda+\frac{13}{16}-\frac{1}{32}\cos(2\pi\beta).
\end{align}

Similarly, due to the equivalent holonomy diagrams
\begin{center}
\begin{tikzpicture}[thick, every node/.style={sloped, align=center, allow upside down}]
\draw (0,0) -- pic[pos=0.6]{arrow=latex} (0.75,1.3) -- pic[pos=0.6]{arrow=latex} (2.25,1.3) -- pic[pos=0.6]{arrow=latex} (3,0) -- pic[pos=0.6]{arrow=latex} (1.5,0) -- pic[pos=0.6]{arrow=latex} (0,0);
\tkzDefPoint(0,0){a_i}
\tkzDefPoint(1.5,0){b_2}
\tkzDefPoint(0.75,1.3){b_1}
\tkzDefPoint(2.25,1.3){b_0}
\tkzDefPoint(3,0){a_j}
\foreach \n in {a_i,b_2,b_1,b_0,a_j}
  \node at (\n)[circle,fill,inner sep=1.5pt]{};
\draw (3.8,0.7) node {${\Large\equiv}$};
\begin{scope}[xshift=4.5cm]
\draw (0,0) -- pic[pos=0.6]{arrow=latex} (0.75,1.3) -- pic[pos=0.65]{arrow=latex} (1.5,0) -- pic[pos=0.6]{arrow=latex} (0,0);
\draw (1.5,0) -- pic[pos=0.65]{arrow=latex} (0.75,1.3) -- pic[pos=0.6]{arrow=latex} (2.25,1.3) -- pic[pos=0.65]{arrow=latex} (1.5,0);
\draw (1.5,0) -- pic[pos=0.65]{arrow=latex} (2.25,1.3) -- pic[pos=0.6]{arrow=latex} (3,0) -- pic[pos=0.6]{arrow=latex} (1.5,0);
\tkzDefPoint(0,0){a_i}
\tkzDefPoint(1.5,0){b_2}
\tkzDefPoint(0.75,1.3){b_1}
\tkzDefPoint(2.25,1.3){b_0}
\tkzDefPoint(3,0){a_j}
\foreach \n in {a_i,b_2,b_1,b_0,a_j}
  \node at (\n)[circle,fill,inner sep=1.5pt]{};
\end{scope}
\begin{scope}[xshift=9.5cm]
\draw (0,0) -- pic[pos=0.6]{arrow=latex} (0.75,1.3) -- pic[pos=0.6]{arrow=latex} (2.25,1.3) -- pic[pos=0.6]{arrow=latex} (1.5,0) -- pic[pos=0.6]{arrow=latex} (0,0);
\tkzDefPoint(0,0){a_i}
\tkzDefPoint(1.5,0){b_2}
\tkzDefPoint(0.75,1.3){b_1}
\tkzDefPoint(2.25,1.3){b_0}
\foreach \n in {a_i,b_2,b_1,b_0}
  \node at (\n)[circle,fill,inner sep=1.5pt]{};
\end{scope}
\draw (12.2,0.7) node {${\Large\equiv}$};
\begin{scope}[xshift=12.5cm]
\draw (0,0) -- pic[pos=0.6]{arrow=latex} (0.75,1.3) -- pic[pos=0.65]{arrow=latex} (1.5,0) -- pic[pos=0.6]{arrow=latex} (0,0);
\draw (1.5,0) -- pic[pos=0.65]{arrow=latex} (0.75,1.3) -- pic[pos=0.6]{arrow=latex} (2.25,1.3) -- pic[pos=0.6]{arrow=latex} (1.5,0);
\tkzDefPoint(0,0){a_i}
\tkzDefPoint(1.5,0){b_2}
\tkzDefPoint(0.75,1.3){b_1}
\tkzDefPoint(2.25,1.3){b_0}
\foreach \n in {a_i,b_2,b_1,b_0}
  \node at (\n)[circle,fill,inner sep=1.5pt]{};
\end{scope}
\end{tikzpicture}
\end{center}
we can rewrite \eqref{eq:Schuroffdiag} to get
\begin{equation}
\label{eq:symSchuroffdiag}
S_N^{\omega}(\alpha,\beta,\lambda)(a_i, a_j)=-\frac{\Psi(\alpha,\beta,\lambda)\omega_{a_i b_2}\omega_{b_2 a_j}}{16\mathcal{D}(\beta,\lambda)},
\end{equation}
where
\begin{equation}
\label{eq:psi}
\Psi(\alpha,\beta,\lambda)=(1-\lambda)^2-\frac{1}{16}+\frac{1-\lambda}{4}(2e^{-2\pi i\alpha}+e^{-2\pi i(2\alpha+\beta)})+\frac{1}{16}(e^{-4\pi i \alpha}+2e^{-2\pi i (\alpha+\beta)}).
\end{equation}
Note that the exponents of \eqref{eq:psi} all carry a negative sign since the orientation of the edge $a_ia_j$ is counterclockwise, while the diagrams in  \eqref{eq:Schuroffdiag} have clockwise orientation. If the orientation of $a_ia_j$ is clockwise, replace all the exponents in $S_N^{\omega}(a_i,a_j)$ with a positive sign.

We summarize the preceding arguments as follows.

\begin{proposition}
\label{prop:Schur}
Let $\Delta_{a_ia_j}$ be the upright triangle that the edge $a_ia_j$ belongs to, and $b_2$ be the midpoint of $a_i$ and $a_j$. 
We have
\begin{equation}
S_N^{\omega}(\alpha,\beta,\lambda)(a_i,a_j)=
\begin{cases}
\vspace{4pt}
\displaystyle 1-\lambda-\frac{A(\alpha,\beta,\lambda)}{64\mathcal{D}(\beta,\lambda)} & \text{if $a_i=a_j$}, \\
\vspace{4pt}
\displaystyle -\frac{\Psi(\alpha,\beta,\lambda)\omega_{a_ib_2}\omega_{b_2a_j}}{16\mathcal{D}(\beta,\lambda)} & \text{if $a_i\neq a_j$, $a_i\in V_N\setminus V_0$, and $\Delta_{a_ia_j}$ is traversed CCW}, \\
\vspace{4pt}
\displaystyle -\frac{\Psi(\alpha,\beta,\lambda)\omega_{a_ib_2}\omega_{b_2a_j}}{8\mathcal{D}(\beta,\lambda)} & \text{if $a_i\neq a_j$, $a_i\in V_0$, and $\Delta_{a_ia_j}$ is traversed CCW}, \\
\vspace{4pt}
\displaystyle -\frac{\overline{\Psi(\alpha,\beta,\lambda)}\omega_{a_ib_2}\omega_{b_2a_j}}{16\mathcal{D}(\beta,\lambda)} & \text{if $a_i\neq a_j$, $a_i\in V_N\setminus V_0$, and $\Delta_{a_ia_j}$ is traversed CW}, \\
\vspace{4pt}
\displaystyle -\frac{\overline{\Psi(\alpha,\beta,\lambda)}\omega_{a_ib_2}\omega_{b_2a_j}}{8\mathcal{D}(\beta,\lambda)} & \text{if $a_i\neq a_j$, $a_i\in V_0$, and $\Delta_{a_ia_j}$ is traversed CW}, \\
\end{cases}
\end{equation}
where $A(\alpha,\beta,\lambda)$, $\mathcal{D}(\beta,\lambda)$, and $\Psi(\alpha,\beta,\lambda)$ were defined respectively in  \eqref{eq:A}, \eqref{eq:D}, and \eqref{eq:psi}.
\end{proposition}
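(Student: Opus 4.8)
The plan is to obtain each entry of $S_N^{\omega}(\alpha,\beta,\lambda)$ by evaluating, under the flux condition of Definition \ref{def:flux2}, every holonomy diagram appearing in the diagrammatic formulas \eqref{eq:Schurdiag1}, \eqref{eq:Schurdiag2}, and \eqref{eq:Schuroffdiag}. The single governing principle is that each such diagram is a product of parallel transports along a path confined to one level-$(N-1)$ cell, so by Definition \ref{def:flux} the holonomy around an elementary upright (resp.\@ downright) triangle equals $e^{2\pi i\alpha}$ (resp.\@ $e^{2\pi i\beta}$) when traversed counterclockwise, and its complex conjugate when traversed clockwise. I would first record the equivalent-holonomy identities displayed just before the statement, which let me rewrite any path through a cell as a product of elementary triangle loops times an open segment; this dictionary is what converts the diagrams into explicit phases in $\alpha,\beta$.

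For the diagonal entries I would handle $a_i=a_j\in V_N\setminus V_0$ and $a_i=a_j\in V_0$ together. Since $A-\lambda I$ is diagonal, $S_N^{\omega}(a_i,a_i)=(1-\lambda)-\sum_{b_k,b_l}B(a_i,b_k)(D-\lambda I)^{-1}(b_k,b_l)C(b_l,a_i)$, and in \eqref{eq:Schurdiag1} every diagram is a \emph{closed} loop based at $a_i$, hence a pure phase. Decomposing each loop into elementary triangles via the displayed equivalences and pairing each loop with its orientation-reverse, the sum of a phase and its conjugate produces exactly the real combinations $4\cos(2\pi\alpha)$ and $\cos(2\pi(\alpha+\beta))$ that occur in $A(\alpha,\beta,\lambda)$ in \eqref{eq:A}. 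Collecting the constant contribution $\tfrac14\bigl((1-\lambda)^2-\tfrac1{16}\bigr)$ with these loop terms and dividing by $\mathcal{D}(\beta,\lambda)$ reproduces $\tfrac{A(\alpha,\beta,\lambda)}{64\mathcal{D}(\beta,\lambda)}$, giving the claimed value $(1-\lambda)-\tfrac{A}{64\mathcal{D}}$ as in \eqref{eq:symSchurdiag1}. For $a_i\in V_0$ the vertex lies in a single cell and has degree two, so the prefactor $\tfrac18$ in \eqref{eq:Schurdiag2} exactly compensates the halved neighbor count, and I would check that the identical formula emerges---a useful consistency test rather than a separate computation.

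For the off-diagonal entries ($a_i\neq a_j$) I would start from the fully expanded expression at the bottom of \eqref{eq:Schuroffdiag} and factor out the common open segment $a_i\to b_2\to a_j$, whose transport is $\omega_{a_ib_2}\omega_{b_2a_j}$; what remains in each diagram is a closed loop contributing a flux phase. The three families---the bare segment weighted by $(1-\lambda)^2-\tfrac1{16}$, the three single-triangle loops weighted by $\tfrac{1-\lambda}{4}$, and the three rhombus loops weighted by $\tfrac1{16}$---then read off as the three groups of $\Psi(\alpha,\beta,\lambda)$ in \eqref{eq:psi}, namely $2e^{-2\pi i\alpha}+e^{-2\pi i(2\alpha+\beta)}$ and $e^{-4\pi i\alpha}+2e^{-2\pi i(\alpha+\beta)}$. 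Since $A-\lambda I$ is diagonal, the overall minus sign in \eqref{eq:symSchuroffdiag} is just the Schur-complement subtraction of $B(D-\lambda I)^{-1}C$, yielding $-\tfrac{\Psi\,\omega_{a_ib_2}\omega_{b_2a_j}}{16\mathcal{D}}$. The exponents are negative because the diagrams in \eqref{eq:Schuroffdiag} are traversed clockwise whereas the fluxes are defined counterclockwise; reversing to a clockwise-oriented $\Delta_{a_ia_j}$ flips every phase and replaces $\Psi$ by $\overline{\Psi}$, while $a_i\in V_0$ again swaps the prefactor $\tfrac1{16}$ for $\tfrac18$.

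The main obstacle I anticipate is precisely the orientation and cell-type bookkeeping inside the holonomy dictionary: one must confirm that each diagram decomposes, via the stated equivalences, into the correct elementary triangles with the correct counterclockwise/clockwise sense and the correct upright/downright labels, so that the accumulated phases assemble into the specific combinations appearing in $A$ and $\Psi$ rather than some gauge-dependent relatives. Because $\Psi$ retains the open factor $\omega_{a_ib_2}\omega_{b_2a_j}$, I would also verify gauge-independence of the final flux-only coefficients, i.e.\@ that changing the representative connection $\omega$ satisfying Definition \ref{def:flux2} leaves $A$, $\mathcal{D}$, and $\Psi$ unchanged and only rescales the surviving segment phase. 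Once this diagram-to-flux correspondence is pinned down, the remaining work is a direct term-by-term collection with no further conceptual input.
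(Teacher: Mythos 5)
Your proposal is correct and follows essentially the same route as the paper: it evaluates the Schur-complement entries via the diagrammatic expansions \eqref{eq:Schurdiag1}, \eqref{eq:Schurdiag2}, \eqref{eq:Schuroffdiag}, uses the equivalent-holonomy identities to decompose each closed loop into elementary upright/downright triangles (conjugate pairs yielding the cosines in $A(\alpha,\beta,\lambda)$, the factored open segment $\omega_{a_ib_2}\omega_{b_2a_j}$ leaving the phases of $\Psi(\alpha,\beta,\lambda)$), and accounts for the negative exponents and the $\Psi\mapsto\overline{\Psi}$, $\tfrac1{16}\mapsto\tfrac18$ modifications exactly as in \S\ref{sec:ssSG}. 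The added gauge-independence check is not in the paper but is a harmless extra verification.
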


\begin{corollary}[Spectral decimation identity]
\label{cor:specdec}
The Schur complement in Proposition \ref{prop:Schur} can be reexpressed as
\begin{equation}
\label{eq:specdec}
S_N^{\omega}(\alpha,\beta,\lambda)=\phi(\alpha,\beta,\lambda)(\mathcal{L}_{N-1}^{\Omega}-R(\alpha,\beta,\lambda))\qquad (\lambda\in \mathbb{R})
\end{equation}
where $\mathcal{L}^\Omega_{N-1}$ is the magnetic Laplacian on $V_{N-1}$ with $U(1)$ connection $\Omega$, a self-adjoint operator on $L^2(V_{N-1}, \deg_{G_{N-1}})$,
and $R(\alpha,\beta,\lambda)$ is the spectral decimation function.
Specifically:
\begin{enumerate}[start=1, label={(Case \arabic*)}, wide]
\item \label{item:R}: If $\mathbb{R} \ni \lambda\mapsto \Psi(\alpha,\beta,\lambda)$ is $\mathbb{R}$-valued, then
\begin{align}
\label{eq:phi}
\phi(\alpha,\beta,\lambda)&=\frac{\Psi(\alpha,\beta,\lambda)}{4\mathcal{D}(\beta,\lambda)},\\
\label{eq:R}
R(\alpha,\beta,\lambda)&=1+\frac{A(\alpha,\beta,\lambda)-64\mathcal{D}(\beta,\lambda)(1-\lambda)}{16\Psi(\alpha,\beta,\lambda)},\\
\label{eq:Omega}
\Omega_{ab}(\alpha,\beta)&=\omega_{ac}\omega_{cb}.
\end{align}
\item \label{item:C}: If $\mathbb{R}\ni \lambda\mapsto \Psi(\alpha,\beta,\lambda)$ is $\mathbb{C}$-valued, then
\begin{align}
\label{eq:abs_phi}
\phi(\alpha,\beta,\lambda)&=\frac{|\Psi(\alpha,\beta,\lambda)|}{4\mathcal{D}(\beta,\lambda)},\\
\label{eq:abs_R}
R(\alpha,\beta,\lambda)&=1+\frac{A(\alpha,\beta,\lambda)-64\mathcal{D}(\beta,\lambda)(1-\lambda)}{16|\Psi(\alpha,\beta,\lambda)|},\\
\label{eq:abs_theta}
\theta(\alpha,\beta,\lambda)&=\frac{\textup{arg} \hspace{0.1cm} \Psi(\alpha,\beta,\lambda)}{2\pi} \qquad ({\rm arg}: \mathbb{C} \to [0,2\pi)),\\
\label{eq:abs_Omega}
\Omega_{ab}(\alpha,\beta,\lambda)&=\omega_{ac}\omega_{cb}e^{2\pi i\theta(\alpha,\beta,\lambda)}.
\end{align}
\end{enumerate}
In both cases, $a\sim c\sim b$, and the upright cell to which the edge $ab$ belongs is traversed counterclockwise.
\end{corollary}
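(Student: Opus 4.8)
The plan is to prove \eqref{eq:specdec} by directly matching the entries of $S_N^{\omega}(\alpha,\beta,\lambda)$ recorded in Proposition \ref{prop:Schur} against the entries of $\phi(\alpha,\beta,\lambda)\bigl(\mathcal{L}_{N-1}^{\Omega}-R(\alpha,\beta,\lambda)\bigr)$, where $\mathcal{L}_{N-1}^{\Omega}$ has the matrix form \eqref{eq:LN} read at level $N-1$: diagonal entries $1$, and off-diagonal entries $-\tfrac12\Omega_{xy}$ for $x\in V_0$, $-\tfrac14\Omega_{xy}$ for $x\in V_{N-1}\setminus V_0$. First I would equate the diagonal entries,
\[
1-\lambda-\frac{A}{64\mathcal{D}} = \phi\,(1-R),
\]
and the off-diagonal (CCW) entries at an interior vertex,
\[
-\frac{\Psi\,\omega_{a_ib_2}\omega_{b_2a_j}}{16\mathcal{D}} = -\frac{\phi}{4}\,\Omega_{a_ia_j},
\]
treating $\phi$ as a free scalar to be pinned down by the requirement (Definition \ref{def:ss}) that both $\phi$ and $R$ be $\mathbb{R}$-valued. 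Solving these gives $\Omega_{a_ia_j}=\tfrac{\Psi}{4\mathcal{D}\phi}\,\omega_{a_ib_2}\omega_{b_2a_j}$ and $R=1+\tfrac{A-64\mathcal{D}(1-\lambda)}{64\mathcal{D}\phi}$; one then checks that the boundary off-diagonal entries (with $\tfrac18$ in place of $\tfrac1{16}$, matched against $-\tfrac{\phi}{2}\Omega$) reproduce the \emph{same} $\Omega$, so a single scalar $\phi$ is simultaneously consistent with every entry.

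The dichotomy between \ref{item:R} and \ref{item:C} then emerges purely from the reality constraint on $\phi$. When $\mathbb{R}\ni\lambda\mapsto\Psi(\alpha,\beta,\lambda)$ is $\mathbb{R}$-valued, the natural choice $\phi=\Psi/(4\mathcal{D})$ is itself real; substituting into the formulas above yields exactly $\Omega_{ab}=\omega_{ac}\omega_{cb}$ of \eqref{eq:Omega} (since $4\mathcal{D}\phi=\Psi$) and $R=1+\tfrac{A-64\mathcal{D}(1-\lambda)}{16\Psi}$ of \eqref{eq:R} (since $64\mathcal{D}\phi=16\Psi$). When $\Psi$ is $\mathbb{C}$-valued, this choice would make $\phi$ complex and break Definition \ref{def:ss}; the remedy is to take the real magnitude $\phi=|\Psi|/(4\mathcal{D})$ and absorb the residual unimodular phase $\Psi/|\Psi|=e^{2\pi i\theta}$ into the connection, producing $\Omega_{ab}=\omega_{ac}\omega_{cb}\,e^{2\pi i\theta}$ as in \eqref{eq:abs_Omega} and $R=1+\tfrac{A-64\mathcal{D}(1-\lambda)}{16|\Psi|}$ as in \eqref{eq:abs_R}. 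Because $A$ and $\mathcal{D}$ are real (built from cosines), $R$ is manifestly $\mathbb{R}$-valued in both cases.

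The step I expect to be the genuine content, rather than bookkeeping, is verifying that $\Omega$ is a legitimate $U(1)$ connection, i.e. $\Omega_{ba}=\overline{\Omega_{ab}}$, so that $\mathcal{L}_{N-1}^{\Omega}$ is self-adjoint and the right-hand side of \eqref{eq:specdec} is an honest magnetic Laplacian. Here the CCW/CW distinction built into Proposition \ref{prop:Schur} is exactly what is needed: reversing the edge $a_ia_j\mapsto a_ja_i$ traverses the enclosing upright triangle clockwise, which replaces $\Psi$ by $\overline{\Psi}$ and hence $e^{2\pi i\theta}$ by $e^{-2\pi i\theta}$. I would then compute directly, using $\omega_{xy}=\overline{\omega_{yx}}$,
\[
\overline{\Omega_{a_ia_j}} = e^{-2\pi i\theta}\,\overline{\omega_{a_ib_2}}\;\overline{\omega_{b_2a_j}} = e^{-2\pi i\theta}\,\omega_{a_jb_2}\,\omega_{b_2a_i} = \Omega_{a_ja_i},
\]
confirming the Hermitian condition in Case \ref{item:C}; Case \ref{item:R} is its $\theta=0$ specialization.

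Finally I would observe that, thanks to the flux normalization of Definition \ref{def:flux2}, the scalars $A$, $\mathcal{D}$, $\Psi$, and therefore $\phi$, $R$, and $\theta$, depend only on $(\alpha,\beta,\lambda)$ and not on the location of the edge. Hence $\phi$ and $R$ are well-defined global scalars, and $\Omega$ is a genuine level-$(N-1)$ connection, which is precisely what makes \eqref{eq:specdec} a single operator identity. The further computation that the fluxes of $\Omega$ equal $(\alpha_\downarrow,\beta_\downarrow)$ is the separate content of Proposition \ref{prop:fluxevolve} and is not required here.
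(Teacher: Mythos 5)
Your proposal is correct and follows essentially the same route as the paper: the corollary is obtained there as a direct reexpression of the entries in Proposition \ref{prop:Schur}, factoring out the scalar $\phi=\Psi/(4\mathcal{D})$ (resp.\@ $|\Psi|/(4\mathcal{D})$) and absorbing the residual unimodular phase $e^{2\pi i\theta}$ into the connection $\Omega$, exactly as you do. Your explicit checks---that the $V_0$ entries (prefactor $\tfrac18$ against $-\tfrac{\phi}{2}\Omega$) yield the same $\Omega$, that the CCW/CW conjugation in Proposition \ref{prop:Schur} gives $\Omega_{ba}=\overline{\Omega_{ab}}$, and that $\phi$ and $R$ are $\mathbb{R}$-valued global scalars---are the routine verifications the paper leaves implicit, and they are all accurate.
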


Two important remarks are in order.
First, $A(\alpha,\beta,\lambda)$, $\mathcal{D}(\beta,\lambda)$, and $\Psi(\alpha,\beta,\lambda)$ are all independent of the level $N$, and therefore so is $R(\alpha,\beta,\lambda)$. This is the essence of spectral self-similarity and what allows us to characterize the spectrum recursively.
Second, in Corollary \ref{cor:specdec}-\ref{item:R}, the connection $\Omega$ is manifestly independent of $\lambda$, whereas in Corollary \ref{cor:specdec}-\ref{item:C}, $\Omega$ receives an extra ``twist'' by a unit complex number $e^{2\pi i\theta}$, which depends on $\lambda$ in general.
\emph{There does not seem to be an easy way to eliminate this twist via gauge transformations.}

The following was first noted by \cite{A84} and invoked later in \cites{Ghez, quasicrystals}.

\begin{proposition}[Evolution of the magnetic flux under spectral decimation]
\label{prop:fluxevolve}
Let the magnetic flux going through every upright triangle on level $N$ be $\alpha_N$, and downright triangle, $\beta_N$. Then
\begin{align}
\label{eq:alchange}
\alpha_{N-1}= \alpha_\downarrow(\alpha_N, \beta_N, \lambda)&=3\alpha_N+\beta_N+3\theta(\alpha_N,\beta_N,\lambda),\\
\label{eq:bechange}
\beta_{N-1}=\beta_\downarrow(\alpha_N, \beta_N, \lambda)&=3\beta_N+\alpha_N-3\theta(\alpha_N,\beta_N,\lambda),
\end{align}
so $\alpha_{N-1}+\beta_{N-1} = 4(\alpha_N+\beta_N)$.
Specifically, in the setting of Corollary \ref{cor:specdec}-\ref{item:R}, $\theta \equiv 0$.
\end{proposition}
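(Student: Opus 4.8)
The plan is to read off the decimated connection $\Omega$ from Corollary~\ref{cor:specdec} and then compute its holonomy around a single upright and a single downright triangle of $G_{N-1}$, reducing each to a holonomy of the original connection $\omega$ on $G_N$. Recall from \eqref{eq:abs_Omega} that, in the setting of Corollary~\ref{cor:specdec}-\ref{item:C}, the effective connection on an edge $a_ia_j$ of $G_{N-1}$ is $\Omega_{a_ia_j}=\omega_{a_ic}\,\omega_{ca_j}\,e^{2\pi i\theta(\alpha_N,\beta_N,\lambda)}$, where $c$ is the unique vertex of $V_N\setminus V_{N-1}$ lying between $a_i$ and $a_j$, and the phase $e^{2\pi i\theta}$ is attached to the orientation for which the upright cell containing $a_ia_j$ is traversed counterclockwise (in Corollary~\ref{cor:specdec}-\ref{item:R} the same formula holds with the phase omitted, cf.\ \eqref{eq:Omega}).

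First I would treat an upright $(N-1)$-cell with corners $a_0,a_1,a_2\in V_{N-1}$ and edge-midpoints $b_0,b_1,b_2\in V_N\setminus V_{N-1}$. Traversing its boundary counterclockwise and substituting the expression for $\Omega$, the holonomy factors into three phases $e^{2\pi i\theta}$ times the $\omega$-holonomy along the hexagonal boundary of the refined cell. Since all triangles of $SG$ are independent cycles, holonomy is multiplicative and the interior edges cancel, so this boundary $\omega$-holonomy equals the product of the holonomies through the three upright level-$N$ sub-triangles (each of flux $\alpha_N$) and the central downright sub-triangle (flux $\beta_N$), namely $e^{2\pi i(3\alpha_N+\beta_N)}$. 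Hence the $\Omega$-holonomy is $e^{2\pi i(3\alpha_N+\beta_N+3\theta)}$, which is exactly \eqref{eq:alchange}.

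Next I would run the same computation for a downright $(N-1)$-triangle, whose refinement consists of three downright level-$N$ sub-triangles (flux $\beta_N$) and one central upright sub-triangle (flux $\alpha_N$), so that the boundary $\omega$-holonomy equals $e^{2\pi i(3\beta_N+\alpha_N)}$. The step I expect to require the most care is the orientation bookkeeping for the phase: each edge of a downright $(N-1)$-triangle is shared with a neighboring upright $(N-1)$-triangle, and by the standard planar fact that two adjacent faces induce opposite orientations on their shared edge, traversing the downright triangle counterclockwise traverses each edge clockwise relative to its upright cell. Each edge therefore contributes $e^{-2\pi i\theta}$ rather than $e^{+2\pi i\theta}$, giving an $\Omega$-holonomy of $e^{2\pi i(3\beta_N+\alpha_N-3\theta)}$, i.e.\ \eqref{eq:bechange}.

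Adding the two identities makes the $\pm3\theta$ terms cancel and yields $\alpha_{N-1}+\beta_{N-1}=4(\alpha_N+\beta_N)$. Finally, in the setting of Corollary~\ref{cor:specdec}-\ref{item:R} the quantity $\Psi$ is real, its sign is absorbed into the scalar $\phi$, and no phase is inserted into $\Omega$; thus $\theta\equiv0$ there, completing the proof.
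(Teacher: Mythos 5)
Your proposal is correct and follows essentially the same route as the paper: read off $\Omega$ from \eqref{eq:Omega}/\eqref{eq:abs_Omega}, multiply the connection around an $(N-1)$-level triangle so that the $\omega$-factors reproduce the hexagonal boundary holonomy $e^{2\pi i(3\alpha_N+\beta_N)}$ (resp.\@ $e^{2\pi i(3\beta_N+\alpha_N)}$) while the three edge phases contribute $e^{\pm 2\pi i\cdot 3\theta}$. In fact you are slightly more explicit than the paper, which treats only the upright cell and dismisses the downright cell with ``similarly'': your orientation bookkeeping---each edge of a downright $(N-1)$-triangle is traversed clockwise relative to the upright cell fixing the sign convention in \eqref{eq:abs_Omega}, hence contributes $e^{-2\pi i\theta}$---is exactly the point that justifies the sign flip in \eqref{eq:bechange}.
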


\begin{proof}
By \eqref{eq:Omega} or \eqref{eq:abs_Omega},
$
\Omega_{a_1a_2}(\alpha,\beta,\lambda)=\omega_{a_1b_0}\omega_{b_0a_2}e^{2\pi i\theta(\alpha,\beta,\lambda)}
$; see the diagram below.
\begin{center}
\begin{tikzpicture}[scale=0.8]
\draw [thick] (0,0) -- (1.5,0) -- (3,0);
\draw (3,0) -- (2.25,1.3) -- (1.5,2.6) -- (0.75,1.3) -- (0,0);
\draw (1.5,0) -- (2.25,1.3) -- (0.75,1.3) -- (1.5,0);
\draw (3,0) -- (3.75,1.3) -- (4.5,2.6) -- (3,2.6) -- (1.5,2.6);
\draw (2.25,1.3) -- (3.75,1.3) -- (3,2.6) -- (2.25,1.3);
\draw (0.75,0.5) node {$\alpha_N$};
\draw (2.25,0.5) node {$\alpha_N$};
\draw (1.5,1.8) node {$\alpha_N$};
\draw (1.5,0.8) node {$\beta_N$};
\draw (3,1.8) node {$\alpha_N$};
\draw (3,0.8) node {$\beta_N$};
\draw (2.25,2.1) node {$\beta_N$};
\draw (3.75,2.1) node {$\beta_N$};
\tkzDefPoint(0,0){a_1}
\tkzDefPoint(1.5,0){b_0}
\tkzDefPoint(3,0){a_2}
\tkzDefPoint(1.5,2.6){a_0}
\tkzDefPoint(2.25,1.3){b_1}
\tkzDefPoint(0.75,1.3){b_2}
\tkzDefPoint(3.75,1.3){b_3}
\tkzDefPoint(3,2.6){b_4}
\tkzDefPoint(4.5,2.6){a_3}
\tkzLabelPoint[below](a_1){$a_1$}
\tkzLabelPoint[below](a_2){$a_2$}
\tkzLabelPoint[below](b_0){$b_0$}
\tkzLabelPoint[left](b_2){$b_2$}
\tkzLabelPoint[below, scale=0.8](b_1){$b_1$}
\tkzLabelPoint[above](a_0){$a_0$}
\tkzLabelPoint[above](a_3){$a_3$}
\tkzLabelPoint[right](b_3){$b_3$}
\tkzLabelPoint[above](b_4){$b_4$}
\foreach \n in {a_0,a_1,a_2,a_3,b_0,b_1,b_2,b_3,b_4}
  \node at (\n)[circle,fill,inner sep=1pt]{};
\draw [thick,->] (5,1.3) -- (6,1.3);
\begin{scope}[xshift=6.2cm]
\draw (3,0) -- (3.75,1.3) -- (4.5,2.6) -- (3,2.6) -- (1.5,2.6);
\draw [thick] (0,0) -- (3,0);
\draw (3,0) -- (1.5,2.6) -- (0,0);
\draw (1.5,1.1) node {$\alpha_{N-1}$};
\draw (3,1.8) node {$\beta_{N-1}$};
\tkzDefPoint(3,0){a_2}
\tkzDefPoint(1.5,2.6){a_0}
\tkzDefPoint(0,0){a_1}
\tkzDefPoint(4.5,2.6){a_3}
\tkzLabelPoint[above](a_0){$a_0$}
\tkzLabelPoint[below](a_1){$a_1$}
\tkzLabelPoint[below](a_2){$a_2$}
\tkzLabelPoint[above](a_3){$a_3$}
\foreach \n in {a_0,a_1,a_2,a_3}
  \node at (\n)[circle,fill,inner sep=1pt]{};
\end{scope}
\end{tikzpicture}
\end{center}
By Definition \ref{def:flux2},
\[
\begin{aligned}
e^{2\pi i\alpha_{N-1}} & = \Omega_{a_1a_2}\Omega_{a_2a_0}\Omega_{a_0a_1}  =\omega_{a_1b_0}\omega_{b_0a_2}\omega_{a_2b_1}\omega_{b_1a_0}\omega_{a_0b_2}\omega_{b_2a_1}e^{2\pi i (3\theta(\alpha,\beta,\lambda))} \\
& = e^{2\pi i (3\alpha_N+\beta_N)}e^{2\pi i (3\theta(\alpha_N,\beta_N,\lambda))} 
 = e^{2\pi i (3\alpha_N+\beta_N+3\theta(\alpha_N,\beta_N,\lambda))},
\end{aligned}
\]
and similarly for $e^{2\pi i \beta_{N-1}}$. This implies \eqref{eq:alchange} and \eqref{eq:bechange}.
\end{proof}


\section{Recursive characterization of the magnetic spectrum} \label{sec:solution}

In this section we explicitly characterize the spectrum $\sigma(\mathcal{L}^\omega_N)$ under Definition \ref{def:flux2}, thereby proving Theorems \ref{thm:1} and \ref{thm:2}.
Our approach is to specialize the results from \S\ref{sec:specdec} to
\begin{align*}
V= V_N, \quad V_\parallel= V_{N-1}, \quad V_\perp = V\setminus V_\parallel, \quad
M=\mathcal{L}^\omega_N, \quad L=\mathcal{L}^\Omega_{N-1},
\end{align*}
and involve all the functions referenced in Corollary \ref{cor:specdec}.

As a first step, we distinguish the case where the fluxes $\alpha, \beta \in \{0,\frac{1}{2}\}$ from the other cases.
This is made not just for convenience, but actually reflects the dichotomy between \ref{item:R} and \ref{item:C} in Corollary \ref{cor:specdec}.


\begin{proposition}
\label{prop:Rvalued}
The function $\mathbb{R}\ni \lambda\mapsto \Psi(\alpha,\beta,\lambda)$ is $\mathbb{R}$-valued if and only if $\alpha, \beta\in\{0,\frac{1}{2}\}$.
\end{proposition}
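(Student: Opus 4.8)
The plan is to reduce the claim to a pair of trigonometric identities in $\alpha$ and $\beta$ by isolating the imaginary part of $\Psi$. Writing $u=1-\lambda$ (which ranges over all of $\mathbb{R}$ as $\lambda$ does), the definition \eqref{eq:psi} exhibits $\Psi$ as an affine polynomial in $u$ whose leading two terms $u^2-\frac{1}{16}$ are real. Computing $\operatorname{Im}\Psi$ (equivalently $\frac{1}{2i}(\Psi-\overline{\Psi})$) and using $\operatorname{Im} e^{-2\pi i t}=-\sin(2\pi t)$ yields an expression that is affine in $u$:
\[
\operatorname{Im}\Psi=-\tfrac{u}{4}\bigl(2\sin(2\pi\alpha)+\sin(2\pi(2\alpha+\beta))\bigr)-\tfrac{1}{16}\bigl(\sin(4\pi\alpha)+2\sin(2\pi(\alpha+\beta))\bigr).
\]
Since a degree-$1$ polynomial in $u$ vanishes identically if and only if both its coefficients vanish, the map $\lambda\mapsto\Psi$ is $\mathbb{R}$-valued exactly when
\[
2\sin(2\pi\alpha)+\sin(2\pi(2\alpha+\beta))=0\quad\text{and}\quad\sin(4\pi\alpha)+2\sin(2\pi(\alpha+\beta))=0.
\]
The ``if'' direction is then immediate: when $\alpha,\beta\in\{0,\tfrac12\}$, every angle appearing above is an integer multiple of $\pi$, so all four sines vanish.

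For the ``only if'' direction I would set $x=2\pi\alpha$, $y=2\pi\beta$ (so $x,y\in[0,2\pi)$) and solve the system. From the second equation, $\sin 2x=-2\sin(x+y)$, i.e.\ $\sin x\cos x=-\sin(x+y)$; substituting this into the expansion $\sin(2x+y)=\sin x\cos(x+y)+\cos x\sin(x+y)$ of the first equation and collecting terms gives the factored identity
\[
\sin x\,\bigl(\cos(x+y)-\cos^2 x+2\bigr)=0.
\]
The key observation is that the second factor cannot vanish: since $\cos^2 x\le 1$ we have $\cos^2 x-2\le -1$, whereas $\cos(x+y)\ge -1$, so $\cos(x+y)=\cos^2 x-2$ would force $\cos(x+y)=-1$ together with $\cos^2 x=1$, hence $\sin x=0$ in any case. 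Therefore $\sin x=0$ always, giving $\alpha\in\{0,\tfrac12\}$; feeding $x\in\{0,\pi\}$ back into the first equation reduces it to $\sin(2\pi\beta)=0$, whence $\beta\in\{0,\tfrac12\}$ as well.

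The main obstacle is precisely this ``only if'' direction: rather than attempting to solve the transcendental system directly, one must spot the single factorization above, whose success hinges on the clean range estimate that eliminates the bracketed factor. Everything else—the separation of real and imaginary parts, the affine-in-$u$ argument, and the final back-substitution—is routine.
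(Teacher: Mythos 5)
Your proof is correct, and while your reduction to the two-equation sine system (both coefficients of the affine-in-$u$ imaginary part must vanish) coincides exactly with the paper's first step, your resolution of that system is genuinely different and cleaner. The paper substitutes $X=\cos(2\pi\alpha)$, $Y=\cos(2\pi\beta)$, rewrites the sines as $\sqrt{1-X^2}$ and $\sqrt{1-Y^2}$, squares both equations, and carries out a polynomial elimination ending in $(4X^2-4X+3)(Y^2-1)=0$; squaring creates the spurious root $X=-\tfrac{1}{2}$, which the paper must then rule out by substituting back into the unsquared equation, and it concludes $\beta\in\{0,\tfrac{1}{2}\}$ first and only then $\alpha$. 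You instead feed $\sin(x+y)=-\sin x\cos x$ (from the second equation) into the addition-formula expansion $\sin(2x+y)=\sin x\cos(x+y)+\cos x\sin(x+y)$ of the first, obtaining the single factorization $\sin x\,\bigl(2+\cos(x+y)-\cos^2 x\bigr)=0$, and dispose of the bracketed factor by the range bound $2+\cos(x+y)-\cos^2 x\ge 0$, correctly noting that equality forces $\cos(x+y)=-1$ and $\cos^2 x=1$, hence $\sin x=0$ in any case; back-substitution then gives $\sin(2\pi\beta)=0$. Your route avoids square roots, squaring, and spurious solutions altogether, at the cost only of the boundary-case care you already supplied. One incidental point in your favor: your formula for $\operatorname{Im}\Psi$ carries the correct overall sign relative to the definition of $\Psi$ (all exponentials appear as $e^{-2\pi i(\cdot)}$), whereas the paper's displayed imaginary part has the opposite sign --- immaterial for the zero set, but worth noting.
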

\begin{proof}
From \eqref{eq:psi} we have
\[
{\rm Im}(\Psi(\alpha,\beta,\lambda))=(2\sin(2\pi\alpha)+\sin(2\pi(2\alpha+\beta)))\frac{1-\lambda}{4}+\frac{1}{16}(\sin(4\pi\alpha)+2\sin(2\pi(\alpha+\beta))),
\]
which is identically zero for all $\lambda\in \mathbb{R}$ if and only if
\begin{align*}
    2\sin(2\pi\alpha)+\sin(2\pi(2\alpha+\beta))=0
    \quad
    \text{and}
    \quad
     \sin(4\pi\alpha)+2\sin(2\pi(\alpha+\beta))=0.
\end{align*}
Using the shorthands $X=\cos(2\pi\alpha)$ and $Y=\cos(2\pi\beta)$, and applying several trig identities (double-angle formula, sum-to-product formula), we rewrite the last condition as
\begin{align}
\label{eq:im1sim}
(2+2XY)\sqrt{1-X^2}&=(1-2X^2)\sqrt{1-Y^2},\\
\label{eq:im2sim}
\text{and} \qquad (X+Y)\sqrt{1-X^2}&=-X\sqrt{1-Y^2}.
\end{align}
Now square both sides of \eqref{eq:im1sim} and \eqref{eq:im2sim} and simplify to get
\begin{align}
\label{eq:im1sim2}
(4X^4+8X^3Y-8XY)-Y^2-4X^2(Y^2-1)+4X(Y^2-1)-3&=0,\\
\label{eq:im2sim2}
X^4+2X^3Y-2XY&=Y^2.
\end{align}
Using \eqref{eq:im2sim2} we replace $4X^4+8X^3Y-8XY$ by $4Y^2$ in \eqref{eq:im1sim2}, which can then be simplified to yield
$
(4X^2-4X+3)(Y^2-1)=0
$.
So $X=\cos(2\pi\alpha)=-\frac{1}{2}$ or $\frac{3}{2}$ (the latter is impossible), or $Y=\cos(2\pi\beta)=\pm 1$. 
In addition, by substituting $-\frac{1}{2}$ for $X$ in \eqref{eq:im1sim}, we see that $Y=\frac{3\pm i\sqrt{2}}{2}$ is $\mathbb{C}$-valued, so $X$ cannot be $-\frac{1}{2}$. 
Thus it must be that $\cos(2\pi\beta)=\pm 1$, so $\beta=0$ or $\frac{1}{2}$. In turn $\cos(2\pi\alpha)$ has to be $\pm 1$, \emph{i.e.,} $\alpha=0$ or $\frac{1}{2}$.
\end{proof}

We shall refer to the case $\alpha,\beta\in \{0,\frac{1}{2}\}$ as \underline{Case I}.


\subsection{Case-by-case analysis of the exceptional set}

In this subsection we systematically identify the exceptional set for spectral decimation of $\mathcal{L}^\omega_N$, \emph{cf.\@} \eqref{eq:eset}.
In fact, since $\sigma(\mathcal{L}^\omega_N) \subset \mathbb{R}$, it suffices to only consider real numbers in this set, namely: 
\begin{align}
\mathcal{E}(\alpha,\beta)=\{x\in \mathbb{R} : \text{$\mathcal{D}(\beta,x)=0$ or $\phi(\alpha,\beta,x)=0$}\}.
\end{align}
Recalling the cubic polynomial \eqref{eq:D}, which is the characteristic polynomial of a Hermitian $3\times 3$ matrix, we see that the three zeros of $\mathcal{D}(\beta,\cdot)$ (which does not depend on $\alpha$) belong to $\mathcal{E}(\alpha,\beta)$.
For reasons to be made clear later, we shall determine if any of the zeros appears multiple times.

\begin{lemma}
\label{lem:multzero}
The cubic polynomial $\mathcal{D}(\beta,\cdot)$, \eqref{eq:D}, has a multiple zero only if:
\begin{itemize}
\item $\beta=0$, in which case the zeros are $\frac{5}{4}$ (double) and $\frac{1}{2}$;
\item $\beta=\frac{1}{2}$, in which case the zeros are $\frac{3}{4}$ (double) and $\frac{3}{2}$.
\end{itemize}
\end{lemma}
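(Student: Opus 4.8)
The plan is to exploit the observation that, in \eqref{eq:D}, the parameter $\beta$ enters $\mathcal{D}(\beta,\lambda)$ only through the term $-\frac{1}{32}\cos(2\pi\beta)$, i.e. only in the constant coefficient of the cubic. Consequently the derivative
\[
\partial_\lambda \mathcal{D}(\beta,\lambda) = -3\lambda^2 + 6\lambda - \tfrac{45}{16}
\]
is independent of $\beta$. Since a polynomial has a multiple zero at $\lambda_0$ precisely when $\lambda_0$ is a common zero of the polynomial and its derivative, I would first locate the (at most two) candidate multiple zeros by solving $\partial_\lambda \mathcal{D} = 0$, and then determine for which $\beta$ each candidate is actually a zero of $\mathcal{D}(\beta,\cdot)$ itself.

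Carrying this out, factoring the derivative gives $\partial_\lambda\mathcal{D}(\beta,\lambda) = -3(\lambda - \tfrac34)(\lambda - \tfrac54)$, so the only possible locations of a multiple zero are $\lambda_0 = \frac34$ and $\lambda_0 = \frac54$, and these are the same for every $\beta$. Substituting back into \eqref{eq:D}, a direct computation yields
\[
\mathcal{D}\!\left(\beta,\tfrac34\right) = -\frac{1+\cos(2\pi\beta)}{32}, \qquad \mathcal{D}\!\left(\beta,\tfrac54\right) = \frac{1-\cos(2\pi\beta)}{32}.
\]
The first vanishes iff $\cos(2\pi\beta) = -1$, i.e. $\beta = \frac12$, and the second iff $\cos(2\pi\beta) = 1$, i.e. $\beta = 0$; no other $\beta\in[0,1)$ makes either critical value vanish. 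To name the remaining root in each of the two cases, I would invoke Vieta's formulas: writing $\mathcal{D}(\beta,\cdot) = -(\lambda^3 - 3\lambda^2 + \cdots)$, the three zeros sum to $3$, so when $\beta = \frac12$ the double zero $\frac34$ forces the third zero to be $3 - 2\cdot\frac34 = \frac32$, and when $\beta = 0$ the double zero $\frac54$ forces the third zero to be $3 - 2\cdot\frac54 = \frac12$.

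There is no serious obstacle here beyond careful arithmetic; the conceptual content is entirely in the first paragraph's observation that the $\beta$-dependence is confined to the constant term, which freezes the critical points independently of $\beta$. The only point requiring a word of care is the completeness of the ``only if'': since $\partial_\lambda\mathcal{D}$ has two \emph{distinct} simple zeros, $\mathcal{D}(\beta,\cdot)$ can never acquire a zero of multiplicity $3$ (a triple zero would have to be a double zero of the derivative), and the two double-zero cases occur at disjoint values of $\beta$. Hence $\beta \in \{0, \frac12\}$ exhausts all possibilities, as claimed.
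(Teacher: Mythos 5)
Your proof is correct, and it takes a different (though equally elementary) route from the paper's. The paper proceeds by direct coefficient matching: it posits the factorization $\mathcal{D}(\beta,x) = -(x-c)^2(x-c')$, expands, and equates coefficients to obtain the system $c'+2c=3$, $c(2c'+c)=\frac{45}{16}$, $c'c^2=\frac{13}{16}-\frac{1}{32}\cos(2\pi\beta)$; the first two equations (which are $\beta$-free) force $c\in\{\frac34,\frac54\}$ with $c'=3-2c$, and the third then pins down $\beta$. Your derivative criterion --- locating candidate multiple zeros as the roots of $\partial_\lambda\mathcal{D}(\beta,\lambda)=-3(\lambda-\frac34)(\lambda-\frac54)$ and then solving for the $\beta$ making $\mathcal{D}$ vanish there --- exploits the same structural fact, namely that $\beta$ enters only through the constant term, but packages it more transparently: in the paper this fact appears implicitly as the $\beta$-independence of the first two Vieta equations, whereas you make it the organizing observation. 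Your route also buys a cleaner disposal of the triple-zero case (a triple zero of $\mathcal{D}$ would be a double zero of $\partial_\lambda\mathcal{D}$, which has two distinct simple roots), while the paper handles that separately by noting no $c\in\mathbb{R}$ satisfies $\mathcal{D}(\beta,x)=-(x-c)^3$; your use of Vieta to name the remaining simple root matches the paper's arithmetic exactly, and all your evaluated critical values $\mathcal{D}(\beta,\frac34)=-\frac{1+\cos(2\pi\beta)}{32}$ and $\mathcal{D}(\beta,\frac54)=\frac{1-\cos(2\pi\beta)}{32}$ check out.
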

\begin{proof}
It is easy to see that for any $\beta$, $\mathcal{D}(\beta,\cdot)$ cannot have a triple zero, since there is no $c\in \mathbb{R}$ such that $\mathcal{D}(\beta,x) = -(x-c)^3$.
To exhibit the double zeros, we find $c,c'\in \mathbb{R}$ such that $\mathcal{D}(\beta,x)= -(x-c)^2(x-c')$.
The RHS can be expanded to give $-x^3 +(c'+2c)x^2-c(2c'+c)x + c'c^2$.
Equating the coefficients on both sides leads to the system of equations $c'+2c=3$, $c(2c'+c)=\frac{45}{16}$, and $c'c^2=\frac{13}{16}-\frac{1}{32}\cos(2\pi\beta)$.
The claim follows.
\end{proof}

So it remains to identify the $\mathbb{R}$-valued zeros of $\phi(\alpha,\beta,\cdot)$, \emph{cf.\@} \eqref{eq:phi} or \eqref{eq:abs_phi}.
Actually we shall identify the $\mathbb{R}$-valued zeros of $\Psi(\alpha,\beta,\cdot)$, and check if any of them happens also to be a zero of $\mathcal{D}(\beta,\cdot)$.

In \underline{Case I}, we indicate in Table \ref{table:CaseA} the quadratic polynomial $\Psi$ and its $\mathbb{R}$-valued zeros.

\renewcommand*{\arraystretch}{1.3}

\begin{table}
\begin{center}
\caption{\underline{Case I}. The {\color{red}red} number is a double zero of $\mathcal{D}(\beta,\cdot)$.}
\label{table:CaseA}
\begin{tabular}{|c|c|c|c|}
\hline
$\alpha$ & $\beta$ & $\Psi(\alpha,\beta,x)$ & \begin{tabular}{@{}c@{}}$\mathbb{R}$-valued zeros\\ of $\Psi(\alpha,\beta,\cdot)$\end{tabular} \\
\hline
0 & 0 & $(1-x)^2+\frac{3}{4}(1-x)+\frac{1}{8}$ & {\color{red} $\frac{5}{4}$},$\frac{3}{2}$ \\
\hline
0 & $\frac{1}{2}$ & $(1-x)^2+\frac{1}{4}(1-x)-\frac{1}{8}$ & {\color{red}$\frac{3}{4}$},$\frac{3}{2}$ \\
\hline
$\frac{1}{2}$ & 0 & $(1-x)^2-\frac{1}{4}(1-x)-\frac{1}{8}$ & $\frac{1}{2}$,{\color{red} $\frac{5}{4}$} \\
\hline
$\frac{1}{2}$ & $\frac{1}{2}$ & $(1-x)^2-\frac{3}{4}(1-x)+\frac{1}{8}$ & $\frac{1}{2}$,{\color{red}$\frac{3}{4}$} \\
\hline
\end{tabular}
\end{center}
\end{table}

Beyond \underline{Case I} we must apply Corollary \ref{cor:specdec}-\ref{item:C}.
The next natural scenario is when exactly one of $\alpha$ and $\beta$ belongs to $\{0,\frac{1}{2}\}$.
We call this \underline{Case II}.
In this case there is only one $\mathbb{R}$-valued zero of $\Psi(\alpha,\beta, \cdot)$, see Table \ref{tab:freepara}.

\begin{table}
    \begin{center}
     \caption{\underline{Case II}. The {\color{red}red} number is a double zero of $\mathcal{D}(\beta,\cdot)$.}
    \label{tab:freepara}
    \begin{tabular}{|c|c|c|c|}
    \hline
    $\alpha$ & $\beta$ & $\Psi(\alpha,\beta,x)$ & \begin{tabular}{@{}c@{}}$\mathbb{R}$-valued zero \\ of $\Psi(\alpha,\beta,\cdot)$\end{tabular} \\
    \hline
    $0$ & $\notin \{0,\frac{1}{2}\}$ & $\left((1-x)+\frac{1}{2}\right)\left((1-x)+\frac{1}{4}e^{-2\pi i\beta}\right)$ & $\frac{3}{2}$ \\
    \hline
     $\notin \{0,\frac{1}{2}\}$ & $\frac{1}{2}$ & $\left((1-x)-\frac{1}{4}\right)\left((1-x)-\frac{1}{4}e^{-4\pi i\alpha}+\frac{1}{2}e^{-2\pi i\alpha}+\frac{1}{4}\right)$ & ${\color{red} \frac{3}{4}}$ \\
    \hline
    $\frac{1}{2}$ &  $\notin \{0,\frac{1}{2}\}$ & $\left((1-x)-\frac{1}{2}\right)\left((1-x)+\frac{1}{4}e^{-2\pi i\beta}\right)$ & $\frac{1}{2}$\\
    \hline
     $\notin \{0,\frac{1}{2}\}$ & $0$ & $\left((1-x)+\frac{1}{4}\right)\left((1-x)+\frac{1}{4}e^{-4\pi i\alpha}+\frac{1}{2}e^{-2\pi i\alpha}-\frac{1}{4}\right)$ & ${\color{red} \frac{5}{4}}$\\
    \hline
    \end{tabular}
    \end{center}
\end{table}

Now we consider $\alpha,\beta \notin \{0,\frac{1}{2}\}$.
It turns out that there is a line in the $(\alpha,\beta)$-parameter space on which $\Psi(\alpha,\beta,\cdot)$ has an $\mathbb{R}$-valued zero.
This line corresponds to having half-integer fluxes through all the upright triangles of side length $2$ in the graph distance.

\begin{proposition}
\label{exceptionalset2}
Suppose $\alpha,\beta\notin\left\{0,\frac{1}{2}\right\}$. 
Then $\Psi(\alpha,\beta,\cdot)$ has an $\mathbb{R}$-valued zero if and only if $3\alpha+\beta=\frac{1}{2} \pmod 1$.
If so, this zero is unique, equals $1+\frac{1}{2}\cos(2\pi\alpha)$, and is not a zero of $\mathcal{D}(\beta,\cdot)$.
\end{proposition}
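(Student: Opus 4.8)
The plan is to set $u = 1-\lambda$ and regard $\Psi(\alpha,\beta,\cdot)$ from \eqref{eq:psi} as a monic quadratic $u^2 + c_1 u + c_0$ in the \emph{real} variable $u$, whose coefficients $c_1,c_0$ are complex functions of $(\alpha,\beta)$. Splitting into parts, ${\rm Im}\,\Psi = ({\rm Im}\,c_1)\,u + {\rm Im}\,c_0$ is affine in $u$ while ${\rm Re}\,\Psi$ is quadratic, and a real $\lambda$ with $\Psi=0$ is precisely a real common root of the two. First I would dispose of the degenerate subcase ${\rm Im}\,c_1 = 0$: if also ${\rm Im}\,c_0 = 0$ then $\Psi$ is $\mathbb{R}$-valued, forcing $\alpha,\beta\in\{0,\tfrac12\}$ by Proposition \ref{prop:Rvalued}, which is excluded; hence ${\rm Im}\,c_0\neq 0$, so ${\rm Im}\,\Psi$ is a nonzero constant and no real zero exists.

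In the main case ${\rm Im}\,c_1\neq 0$, the unique zero of ${\rm Im}\,\Psi$ is $u^\ast = -{\rm Im}(c_0)/{\rm Im}(c_1)$, and $\Psi$ has a real zero iff ${\rm Re}\,\Psi(u^\ast)=0$. Clearing denominators, this is the condition
\[
\mathcal{R}(\alpha,\beta) := {\rm Re}(c_0)\,({\rm Im}\,c_1)^2 - {\rm Re}(c_1)\,{\rm Im}(c_0)\,{\rm Im}(c_1) + ({\rm Im}\,c_0)^2 = 0,
\]
which, up to the factor $-4$, is the resultant of $\Psi$ and the conjugate polynomial $\overline{\Psi}$; this formulation also absorbs the degenerate subcase, where $\mathcal{R} = ({\rm Im}\,c_0)^2\neq 0$. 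The crux of the proof — and the step I expect to be the main obstacle — is to show that, for $\alpha,\beta\notin\{0,\tfrac12\}$, one has $\mathcal{R}(\alpha,\beta)=0$ if and only if $3\alpha+\beta=\tfrac12\pmod 1$. Reading off ${\rm Re}\,c_1,{\rm Im}\,c_1,{\rm Re}\,c_0,{\rm Im}\,c_0$ from \eqref{eq:psi} and expanding with product-to-sum identities (as in the proof of Proposition \ref{prop:Rvalued}), I expect $\mathcal{R}$ to factor as a manifestly nonnegative trigonometric factor, nonvanishing for $\alpha\notin\{0,\tfrac12\}$, times $\bigl(1+\cos 2\pi(3\alpha+\beta)\bigr) = 2\cos^2\pi(3\alpha+\beta)$, the latter vanishing exactly when $3\alpha+\beta=\tfrac12\pmod 1$. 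Controlling this factorization cleanly, rather than merely expanding, is where the real work lies.

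Once the line $3\alpha+\beta=\tfrac12$ is isolated, the remaining assertions follow by short substitutions. On the line one has $e^{-2\pi i(2\alpha+\beta)} = -e^{2\pi i\alpha}$ and $e^{-2\pi i(\alpha+\beta)} = -e^{4\pi i\alpha}$, whence ${\rm Im}\,c_1 = -\tfrac34\sin 2\pi\alpha$ and ${\rm Im}\,c_0 = -\tfrac38\sin 2\pi\alpha\cos 2\pi\alpha$; since $\sin 2\pi\alpha\neq 0$ for $\alpha\notin\{0,\tfrac12\}$, this gives $u^\ast = -\tfrac12\cos 2\pi\alpha$, i.e.\ $\lambda^\ast = 1+\tfrac12\cos 2\pi\alpha$, and a direct evaluation confirms ${\rm Re}\,\Psi(u^\ast)=0$. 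Uniqueness is then immediate: a monic quadratic with two real roots has real coefficients, so $\Psi$ would be $\mathbb{R}$-valued and Proposition \ref{prop:Rvalued} would again force $\alpha,\beta\in\{0,\tfrac12\}$; hence at most one real zero.

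Finally, for the relationship with $\mathcal{D}$, I would substitute $\lambda^\ast = 1+\tfrac12\cos 2\pi\alpha$ into \eqref{eq:D}, using $\beta = \tfrac12-3\alpha$ so that $\cos 2\pi\beta = -\cos 6\pi\alpha$, and reduce via the triple-angle identity $\cos 6\pi\alpha = 4\cos^3 2\pi\alpha - 3\cos 2\pi\alpha$. Writing $\mathcal{D}(\beta,\lambda) = (1-\lambda)^3 - \tfrac{3}{16}(1-\lambda) - \tfrac{1}{32}\cos 2\pi\beta$ and setting $w = \lambda^\ast-1 = -\tfrac12\cos 2\pi\alpha$, the expression collapses to a polynomial identity in $X=\cos 2\pi\alpha$; evaluating it locates $\lambda^\ast$ relative to the zero set of $\mathcal{D}(\beta,\cdot)$ and completes the proof.
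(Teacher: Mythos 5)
Your treatment of the main equivalence follows essentially the same route as the paper: the paper also observes that ${\rm Im}\,\Psi$ is affine in $\eta=1-\lambda$, solves it for the unique candidate $\eta^\ast$, substitutes into ${\rm Re}\,\Psi=0$, and reduces by sum-to-product manipulations to $\sin(2\pi\alpha)\bigl(\cos(2\pi(3\alpha+\beta))+1\bigr)=0$. Two remarks on your version. First, the factorization you ``expect'' is exactly the computational crux, and you have not carried it out; moreover the anticipated shape is slightly off --- the cofactor multiplying $\bigl(1+\cos 2\pi(3\alpha+\beta)\bigr)$ works out to be proportional to $\sin(2\pi\alpha)\sin(2\pi\beta)$, which is nonvanishing under the hypotheses but not nonnegative. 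This does not affect the zero locus, but as written your proposal defers the entire proof of the ``only if'' direction. Second, to your credit, your explicit handling of the degenerate subcase ${\rm Im}\,c_1=0$ is \emph{more} careful than the paper: that subcase genuinely occurs in the admissible region (e.g.\ $(\alpha,\beta)=(\tfrac{1}{12},\tfrac{7}{12})$ gives $2\sin(2\pi\alpha)+\sin(2\pi(2\alpha+\beta))=0$), whereas the paper multiplies by $[2\sin(2\pi\alpha)+\sin(2\pi(2\alpha+\beta))]^2$ calling it ``a nonzero quantity'' without justification. Your Vieta-based uniqueness argument is likewise clean and matches the paper's conclusion.

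The genuine failure is your final step. Carrying out the substitution you describe: with $X=\cos(2\pi\alpha)$ one has $1-\lambda^\ast=-X/2$ and, on the line $3\alpha+\beta=\tfrac12\pmod 1$, $\cos(2\pi\beta)=-\cos(6\pi\alpha)=-(4X^3-3X)$; hence $\mathcal{D}(\beta,\lambda^\ast)=(-X/2)^3-\tfrac{3}{16}(-X/2)-\tfrac{1}{32}\bigl(-(4X^3-3X)\bigr)=-\tfrac{X^3}{8}+\tfrac{3X}{32}+\tfrac{X^3}{8}-\tfrac{3X}{32}=0$ \emph{identically}. So the evaluation you defer does not place $\lambda^\ast$ off the zero set of $\mathcal{D}(\beta,\cdot)$ --- it shows $\lambda^\ast$ is always a zero of it. A spot check confirms this: at $(\alpha,\beta)=(\tfrac14,\tfrac34)$ one gets $\lambda^\ast=1$, and both $\Psi(\tfrac14,\tfrac34,1)=0$ and $\mathcal{D}(\tfrac34,1)=0$. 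Your plan therefore cannot establish the clause ``is not a zero of $\mathcal{D}(\beta,\cdot)$''; be aware that the paper's own proof is silent on this clause (it stops after identifying $\lambda^\ast=1+\tfrac12\cos(2\pi\alpha)$ from \eqref{eq:lsubs}), and the computation above indicates the clause fails as stated with the definition \eqref{eq:D} --- with downstream consequences, e.g.\ the assumption $\mathcal{D}(\lambda)\neq 0$ in the proof of Proposition \ref{prop:33}. If you pursue this, the honest conclusion of your step 4 is a counterexample to the final clause, not a completion of the proof.
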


\begin{proof}
We solve ${\rm {Re}}(\Psi(\alpha,\beta,x)) =0 $ and ${\rm {Im}}(\Psi(\alpha,\beta,x))=0$ simultaneously.
Let us mention that the assumption $\alpha,\beta\notin \{0,\frac{1}{2}\}$ implies that $\sin(2\pi\alpha)\neq 0$ and $\sin(2\pi\beta)\neq 0$.
Denoting $\eta=1-x$, we have
\begin{align*}
{\rm Re}(\Psi(\alpha,\beta,\lambda))&=\eta^2+\frac{\eta}{4}\left(2\cos(2\pi\alpha)+\cos(2\pi(2\alpha+\beta))\right) + \frac{1}{16}\left(\cos(4\pi\alpha)+2\cos(2\pi(\alpha+\beta))-1\right),\\
{\rm Im}(\Psi(\alpha,\beta,\lambda))&=\frac{\eta}{4}(2\sin(2\pi\alpha)+\sin(2\pi(2\alpha+\beta)))+\frac{1}{16}(\sin(4\pi\alpha)+2\sin(2\pi(\alpha+\beta))).
\end{align*}
Then ${\rm Im}(\Psi)=0$ is equivalent to
\begin{equation}
\label{eq:lsubs}
\eta=-\frac{1}{4}\cdot\frac{\sin(4\pi\alpha)+2\sin(2\pi(\alpha+\beta))}{2\sin(2\pi\alpha)+\sin(2\pi(2\alpha+\beta))}.
\end{equation}
Substitute this into ${\rm Re}(\Psi)=0$ to get
\[
\begin{aligned}
& \frac{(\sin(4\pi\alpha)+2\sin(2\pi(\alpha+\beta)))^2}{(2\sin(2\pi\alpha)+\sin(2\pi(2\alpha+\beta)))^2}-\frac{(2\cos(2\pi\alpha)+\cos(2\pi(2\alpha+\beta)))(\sin(4\pi\alpha)+2\sin(2\pi(\alpha+\beta)))}{2\sin(2\pi\alpha)+\sin(2\pi(2\alpha+\beta))} \\
& +(\cos(4\pi\alpha)+2\cos(2\pi(\alpha+\beta))-1)=0.
\end{aligned}
\]
Now multiply both sides by $[2\sin(2\pi\alpha)+\sin(2\pi(2\alpha+\beta))]^2$, a nonzero quantity, to get
\[
\begin{aligned}
& [\sin(4\pi\alpha)+2\sin(2\pi(\alpha+\beta))]^2-[2\sin(2\pi\alpha)+\sin(2\pi(2\alpha+\beta))]^2\\
&-(2\sin(2\pi\alpha)+\sin(2\pi(2\alpha+\beta))) \times [(2\cos(2\pi\alpha)+\cos(2\pi(2\alpha+\beta)))((\sin(4\pi\alpha)+2\sin(2\pi(\alpha+\beta)))\\
& \qquad-(\cos(4\pi\alpha)+2\cos(2\pi(\alpha+\beta)))(2\sin(2\pi\alpha)+\sin(2\pi(2\alpha+\beta)))]=0.
\end{aligned}
\]
Combining the appropriate terms in the last square bracket and using the sum-to-difference formulas for sine, we can simplify the last equation to
\begin{equation}
\label{eq:lsubs4}
\begin{aligned}
& [\sin(4\pi\alpha)+\sin(2\pi(2\alpha+\beta))+2\sin(2\pi(\alpha+\beta))+2\sin(2\pi\alpha)]\\
& \quad\times[\sin(4\pi\alpha)-\sin(2\pi(2\alpha+\beta))+2\sin(2\pi(\alpha+\beta))-2\sin(2\pi\alpha)]\\
& -3\sin(2\pi\beta)(2\sin(2\pi\alpha)+\sin(2\pi(2\alpha+\beta)))=0.
\end{aligned}
\end{equation}
Another application of the sum-to-product formulas on the sine functions reduces the expressions in the square brackets of \eqref{eq:lsubs4}, giving rise to
\[
\begin{aligned}
& 2\cos(\pi\beta)\cdot[2\sin(\pi(2\alpha+\beta))+\sin(\pi(4\alpha+\beta))]\times 2\sin(\pi\beta)\cdot[2\cos(\pi(2\alpha+\beta))-\cos(\pi(4\alpha+\beta))]\\
& \hspace{3cm} -3\sin(2\pi\beta)(2\sin(2\pi\alpha)+\sin(2\pi(2\alpha+\beta)))=0.
\end{aligned}
\]
We then divide both sides by $2\sin(2\pi \beta) = 2\cos(\pi \beta)\sin(\pi\beta)$  to get
\begin{equation}
\label{eq:lsubs6}
\begin{aligned}
& [2\sin(\pi(2\alpha+\beta))+\sin(\pi(4\alpha+\beta))][2\cos(\pi(2\alpha+\beta))-\cos(\pi(4\alpha+\beta))] \\
& \hspace{6cm} -\frac{3}{2}(2\sin(2\pi\alpha)+\sin(2\pi(2\alpha+\beta)))=0.
\end{aligned}
\end{equation}
Now we expand the first term on the LHS of \eqref{eq:lsubs6}, and use the double-angle formula and the sum-to-difference formulas for sine to simplify \eqref{eq:lsubs6} to
\begin{equation}
\label{eq:lsubs7}
\frac{1}{2}(\sin(2\pi(2\alpha+\beta))-\sin(2\pi(4\alpha+\beta)))-\sin(2\pi\alpha)=0.
\end{equation}
Applying again the sum-to-product formula to the first term on the LHS of \eqref{eq:lsubs7}, we get
\[
\sin(2\pi\alpha)(\cos(2\pi(3\alpha+\beta))+1)=0.
\]
Since $\sin(2\pi\alpha)\neq 0$, it must be that
$
\cos(2\pi(3\alpha+\beta))+1=0
$,
\emph{i.e.,\@} $3\alpha+\beta=\frac{1}{2} \pmod 1$. 
Finally, substitute this back into \eqref{eq:lsubs} leads to the conclusion that $1+\frac{1}{2}\cos(2\pi\alpha)$ is the only zero of $\Psi(\alpha,\beta,\cdot)$.
\end{proof}

In a nutshell, we have established four cases from which the exceptional set for spectral decimation is analyzed. They are:
\begin{enumerate}[label=\underline{Case \Roman*:}, wide]
\item $\alpha,\beta\in\{0,\frac{1}{2}\}$.
\item Only one of $\alpha$ and $\beta$ is in $\{0,\frac{1}{2}\}$. There is only one $\mathbb{R}$-valued zero of $\Psi(\alpha, \beta,\cdot)$, which may or may not be a (double) zero of $\mathcal{D}(\beta,\cdot)$.
\item $3\alpha+\beta =\frac{1}{2} \pmod 1$, excluding flux values already discussed in Cases I \& II. There is only one $\mathbb{R}$-valued zero of $\Psi(\alpha,\beta,\cdot)$, and it is not a zero of $\mathcal{D}(\beta,\cdot)$.
\item The remaining case. There are no $\mathbb{R}$-valued zeros of $\Psi(\alpha,\beta,\cdot)$.
\end{enumerate}
These are indicated in the flux parameter space in Figure \ref{fig:param}, and we summarize our main findings as follows.

\begin{proposition}[Exceptional set for spectral decimation of $\mathcal{L}^\omega_N$]
\label{prop:exceptionalsetlist}
The exceptional set $\mathcal{E}(\alpha,\beta)$ consists of:
\begin{itemize}
    \item The three zeros of $\mathcal{D}(\beta,\cdot)$; and
    \item The corresponding values $x$ in Table \ref{tab:exceptionalset} if any of the conditions in the first column is met.
\end{itemize}

\begin{table}[h!]
    \centering
        \caption{}
    \label{tab:exceptionalset}
    \begin{tabular}{|c|c|}
    \hline
    Condition & Value to be added to $\mathcal{E}(\alpha,\beta)$\\
    \hline
    $\alpha=0$ & $\frac{3}{2}$ \\
    \hline
    $\alpha=\frac{1}{2}$ & $\frac{1}{2}$ \\
    \hline
    $3\alpha+\beta=\frac{1}{2} \pmod 1$ & $1+\frac{1}{2}\cos(2\pi\alpha)$ \\
    \hline
    \end{tabular}
\end{table}
\end{proposition}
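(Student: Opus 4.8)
The plan is to read $\mathcal{E}(\alpha,\beta)$ off its own definition $\mathcal{E}(\alpha,\beta)=\{x\in\mathbb{R}:\mathcal{D}(\beta,x)=0\text{ or }\phi(\alpha,\beta,x)=0\}$ by handling the two defining conditions separately. Since $\phi=\Psi/(4\mathcal{D})$ in the $\mathbb{R}$-valued case and $\phi=|\Psi|/(4\mathcal{D})$ in the $\mathbb{C}$-valued case (Corollary \ref{cor:specdec}), at every real $x$ with $\mathcal{D}(\beta,x)\neq 0$ one has $\phi(\alpha,\beta,x)=0$ if and only if $\Psi(\alpha,\beta,x)=0$. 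Hence
\[
\mathcal{E}(\alpha,\beta)=\{x\in\mathbb{R}:\mathcal{D}(\beta,x)=0\}\cup\{x\in\mathbb{R}:\Psi(\alpha,\beta,x)=0\}.
\]
The first set always contributes the three zeros of $\mathcal{D}(\beta,\cdot)$, which are real because $\mathcal{D}(\beta,\cdot)$ is the characteristic polynomial of a Hermitian $3\times 3$ matrix. It therefore remains to enumerate the real zeros of $\Psi(\alpha,\beta,\cdot)$ and, for each such zero, to decide whether it already lies among the zeros of $\mathcal{D}(\beta,\cdot)$; the values that survive this filtering are exactly what Table \ref{tab:exceptionalset} should record.

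The enumeration of the real zeros of $\Psi$ itself requires no new computation, only assembly. I would invoke the partition of the flux space into Cases I--IV, which is exhaustive by Propositions \ref{prop:Rvalued} and \ref{exceptionalset2}. Table \ref{table:CaseA} supplies the real zeros of $\Psi$ in Case I, Table \ref{tab:freepara} in Case II, Proposition \ref{exceptionalset2} in Case III (the single real zero $1+\tfrac12\cos(2\pi\alpha)$, present precisely when $3\alpha+\beta=\tfrac12\pmod 1$), while Case IV contributes none.

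The substance of the proof is then the overlap check, for which Lemma \ref{lem:multzero} pinpoints the only places where a zero of $\Psi$ can collide with a zero of $\mathcal{D}(\beta,\cdot)$, namely the double zeros $\tfrac54$ (at $\beta=0$) and $\tfrac34$ (at $\beta=\tfrac12$). In Case I the red entries of Table \ref{table:CaseA} are exactly these double zeros and so add nothing new, leaving $\tfrac32$ (forced by $\alpha=0$) and $\tfrac12$ (forced by $\alpha=\tfrac12$). The same mechanism operates in Case II: when $\beta\in\{0,\tfrac12\}$ the unique real zero of $\Psi$ is the matching double zero of $\mathcal{D}$ and contributes nothing, whereas when $\alpha\in\{0,\tfrac12\}$ it is $\tfrac32$ or $\tfrac12$ and is new. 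In Case III, Proposition \ref{exceptionalset2} already certifies that $1+\tfrac12\cos(2\pi\alpha)$ is not a zero of $\mathcal{D}(\beta,\cdot)$, so it survives. Collecting the surviving values yields the three rows of Table \ref{tab:exceptionalset}.

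The main obstacle is precisely this coincidence bookkeeping, together with checking that the conditions in Table \ref{tab:exceptionalset} remain self-consistent where the cases meet. For example, with $\beta=\tfrac12$ and $\alpha\in\{\tfrac13,\tfrac23\}$ the condition $3\alpha+\beta=\tfrac12\pmod 1$ fires and prescribes $1+\tfrac12\cos(2\pi\alpha)=\tfrac34$, which is simply the double zero of $\mathcal{D}(\tfrac12,\cdot)$ already identified in the Case II analysis; likewise $\beta=0$, $\alpha\in\{\tfrac16,\tfrac56\}$ reproduces $\tfrac54$. Thus the table prescribes no spurious new point, and once all such overlaps are resolved against Lemma \ref{lem:multzero} the stated description of $\mathcal{E}(\alpha,\beta)$ follows by inspection.
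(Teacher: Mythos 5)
Your proposal is correct and follows essentially the same route as the paper: the paper likewise treats the proposition as an assembly of the Case I--IV partition (Proposition \ref{prop:Rvalued}, Tables \ref{table:CaseA} and \ref{tab:freepara}, and Proposition \ref{exceptionalset2}), using Lemma \ref{lem:multzero} to detect the only collisions between real zeros of $\Psi(\alpha,\beta,\cdot)$ and zeros of $\mathcal{D}(\beta,\cdot)$. Your additional consistency check at the case boundaries (e.g.\ that $3\alpha+\beta=\frac{1}{2}\pmod 1$ firing inside Cases I and II reproduces values already in $\mathcal{E}(\alpha,\beta)$) is a correct and slightly more explicit rendering of what the paper leaves implicit.
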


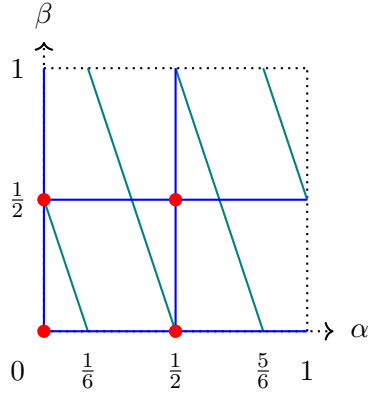
\begin{figure}
\centering
\begin{tikzpicture}[thick, scale=3.5]
\draw [dotted,->] (0,0) -- (1.1,0);
\draw [dotted,->] (0,0) -- (0,1.1);
\draw [dotted] (0,1) -- (1,1);
\draw [dotted] (1,0) -- (1,1);
\draw [teal] (0,0.5) -- (0.167,0);
\draw [teal] (0.167,1) -- (0.5,0);
\draw [teal] (0.5,1) -- (0.833,0);
\draw [teal] (0.833,1) -- (1,0.5);
\draw [thick,blue] (0,0) -- (1,0);
\draw [thick,blue] (0,0.5) -- (1,0.5);
\draw [thick,blue] (0,0) -- (0,1);
\draw [thick,blue] (0.5,0) -- (0.5,1);
\fill (0,0) node[circle, fill=red, draw=red, inner sep=1.5pt] {};
\fill (0.5,0) node[circle, fill=red, draw=red, inner sep=1.5pt] {};
\fill (0,0.5) node[circle, fill=red, draw=red, inner sep=1.5pt] {};
\fill (0.5,0.5) node[circle, fill=red, draw=red, inner sep=1.5pt] {};
\draw (0.5,-0.15) node {$\frac{1}{2}$};
\draw (1.2,0) node {$\alpha$};
\draw (0,1.2) node {$\beta$};
\draw (-0.1,0.5) node {$\frac{1}{2}$};
\draw (-0.1, 1) node {$1$};
\draw (-0.1,-0.15) node {$0$};
\draw (0.167,-0.15) node {$\frac{1}{6}$};
\draw (0.833,-0.15) node {$\frac{5}{6}$};
\draw (1,-0.15) node {$1$};
\end{tikzpicture}
\caption{The four cases in the analysis of the exceptional set for spectral decimation of $\mathcal{L}_N^\omega$, indicated in the flux parameter space $(\alpha,\beta) \in [0,1)^2$: {\color{red} Case I}, {\color{blue} Case II}, {\color{teal} Case III}, and Case IV (the white region in $[0,1)^2$ that is not colored).}
\label{fig:param}
\end{figure}


\subsection{Spectrum under fluxes $\alpha,\beta\in\{0,\frac{1}{2}\}$} 
\label{sec:halfflux}

In this subsection we solve $\sigma(\mathcal{L}^\omega_N)$ in \underline{Case I}, thereby proving Theorem \ref{thm:1}.
Here we use Corollary \ref{cor:specdec}-\ref{item:R}, where all the functions are explicit polynomial or rational functions.
This allows us to carry out spectral decimation of $\mathcal{L}^\omega_N$ all the way to $\mathcal{L}^{(0,0)}_0$, as Figure \ref{fig:SDdiagram} indicates.

\subsubsection{$\alpha=\beta=0$}
\label{sec:zeroflux}

This case corresponds to spectral decimation of the usual graph Laplacian, which has been solved in \cite[Section 3]{Teplyaev} (see also \cite[Section 5]{3n-gasket}). We include the full analysis here since it will be referred to in the other cases below.
The relevant functions are
\begin{align*}
&\Psi(0,0,\lambda)=\left(\lambda-\frac{3}{2}\right)\left(\lambda-\frac{5}{4}\right), \quad
\mathcal{D}(0,\lambda)=-\frac{1}{32}(2\lambda-1)(4\lambda-5)^2,\\
&\phi(0,0,\lambda)=-\frac{\lambda-\frac{3}{2}}{(\lambda-\frac{1}{2})(4\lambda-5)},\quad
R(0,0,\lambda)=\lambda(5-4\lambda),\quad
\mathcal{E}(0,0)=\left\{\frac{1}{2},\frac{5}{4},\frac{3}{2}\right\}.
\end{align*}

The enumeration of $\sigma(\mathcal{L}_N^{(0,0)})$ proceeds in 5 steps.
The first three steps are obtained by first observing that 1 is an eigenvalue on level 0, and then using Lemma \ref{specdecthm} and the appropriate cases in Lemma \ref{multiplicity}. For the last two steps, we observe that the eigenvalues $\frac{3}{4}$ and $\frac{5}{4}$ first appear on level 1 and level 2, respectively, which then lead to their corresponding $\frac{3}{4}$-series and $\frac{5}{4}$-series. 
To wit:

\begin{enumerate}[wide]
    \item $\textup{mult}\left(\mathcal{L}_N^{(0,0)},0\right)=1$ by induction on $N$ and Lemma \ref{specdecthm}. This result is consistent with the Perron-Frobenius theorem.
    \item $\frac{3}{2} \in \mathcal{E}(0,0)$, and $\displaystyle \textup{mult}\left(\mathcal{L}_N^{(0,0)},\frac{3}{2}\right)=\frac{3^N+3}{2}$ by Lemma \ref{multiplicity}-\ref{eq:SD2}.
    \item $\frac{1}{2}\in \mathcal{E}(0,0)$, and $\displaystyle \textup{mult}\left(\mathcal{L}_N^{(0,0)},\frac{1}{2}\right)=0$ by Lemma \ref{multiplicity}-\ref{eq:SD3}.
    \item $\frac{3}{4}$-series, which contains $\frac{3}{4}$ and any number of preimages of $\frac{3}{4}$ under $R(0,0,\cdot)$:
    
    For $k\in \{0,1,\cdots, N-1\}$, $\displaystyle \textup{mult}\left(\mathcal{L}_N^{(0,0)},\left(R(0,0,\cdot)\right)^{-k}\left(\frac{3}{4}\right)\right)=\frac{3^{N-k-1}+3}{2}$ by Lemma \ref{specdecthm} and Lemma \ref{multiplicity}-\ref{eq:SD3}.
    \item $\frac{5}{4}$-series, which contains $\frac{5}{4}$ and any number of preimages of $\frac{5}{4}$ under $R(0,0,\cdot)$:
    
    For $k\in \{0,1,\cdots, N-2\}$, $\displaystyle \textup{mult}\left(\mathcal{L}_N^{(0,0)},\left(R(0,0,\cdot)\right)^{-k}\left(\frac{5}{4}\right)\right)=\frac{3^{N-k-1}-1}{2}$ by Lemma \ref{specdecthm} and Lemma \ref{multiplicity}-\ref{eq:SD3}.
    Note that $\displaystyle \mult\left(\mathcal{L}_N^{(0,0)}, \left(R(0,0,\cdot)\right)^{-(N-1)}\left(\frac{5}{4}\right)\right)=0$.
\end{enumerate}

As an aside, note that the preimage of $\mathbb{R}_+$ under $R(0,0,\cdot)$ is $[0,\frac{5}{4}]$. 
It is then direct to see that $\sigma(\mathcal{L}_N^{(0,0)})$ is contained in $[0,\frac{5}{4}]\cup\{\frac{3}{2}\}$. 

\subsubsection{$\alpha=\beta=\frac{1}{2}$}
\label{sec:halfhalf}

The relevant functions are
\begin{align*}
&\Psi\left(\frac{1}{2},\frac{1}{2},\lambda\right)=\left(\lambda-\frac{1}{2}\right)\left(\lambda-\frac{3}{4}\right), \quad
\mathcal{D}\left(\frac{1}{2},\lambda\right)=-\frac{1}{32}(2\lambda-3)(4\lambda-3)^2,\\
&\phi\left(\frac{1}{2},\frac{1}{2},\lambda\right)=-\frac{\lambda-\frac{1}{2}}{(\lambda-\frac{3}{2})(4\lambda-3)}, \quad
R\left(\frac{1}{2},\frac{1}{2},\lambda\right)=-(\lambda-2)(4\lambda-3),\quad
\mathcal{E}\left(\frac{1}{2},\frac{1}{2}\right)=\left\{\frac{1}{2},\frac{3}{4},\frac{3}{2}\right\}.
\end{align*}
Since $\alpha_\downarrow(\frac{1}{2},\frac{1}{2},\lambda)=0$ and $\beta_\downarrow(\frac{1}{2},\frac{1}{2},\lambda)=0$, many of the eigenvalues in $\sigma(\mathcal{L}_N^{(\frac{1}{2}, \frac{1}{2})})$ are preimages of $\sigma(\mathcal{L}_{N_1}^{(0,0)})$ under $R(\frac{1}{2},\frac{1}{2},\cdot)$.
The enumeration of $\sigma(\mathcal{L}^{(\frac{1}{2},\frac{1}{2})}_N)$ proceeds in 6 steps.
\begin{enumerate}[wide]
    \item Since $0\in \sigma(\mathcal{L}_{N-1}^{(0,0)})$, we consider its preimages under $R(\frac{1}{2},\frac{1}{2},\cdot)$, which are $2$ and $\frac{3}{4}$ (exceptional). Since $\phi(\frac{1}{2},\frac{1}{2},2)\neq 0$ and $\mathcal{D}(\frac{1}{2}, 2)\neq 0$, we can apply Lemma \ref{specdecthm} to get
    \[
    \textup{mult}\left(\mathcal{L}_N^{\left(\frac{1}{2},\frac{1}{2}\right)},2\right)=\textup{mult}\left(\mathcal{L}_{N-1}^{(0,0)},0\right)=1.
    \]
\end{enumerate}
Next we treat each of the three exceptional values.
\begin{enumerate}[resume,wide]
    \item We see that $\mathcal{D}(\frac{1}{2}, \frac{1}{2})\neq 0$ and $\phi\left(\frac{1}{2},\frac{1}{2},\frac{1}{2}\right)=0$. So Lemma \ref{multiplicity}-\ref{eq:SD2} applies and
    \[
     \textup{mult}\left(\mathcal{L}_N^{\left(\frac{1}{2},\frac{1}{2}\right)},\frac{1}{2}\right)=\frac{3^{N}+3}{2}.
    \]
    \item Since $\mathcal{D}(\frac{1}{2}, \frac{3}{4})=0$, $\phi(\frac{1}{2}, \frac{1}{2},\cdot)$ has a pole at $\frac{3}{4}$, and $R\left(\frac{1}{2},\frac{1}{2},\frac{3}{4}\right)=0$, Lemma \ref{multiplicity}-\ref{eq:SD3} applies and
    \[
    \textup{mult}\left(\mathcal{L}_N^{\left(\frac{1}{2},\frac{1}{2}\right)},\frac{3}{4}\right)=3^{N-1}\cdot 2-\frac{3^N+3}{2}+\textup{mult}\left(\mathcal{L}_{N-1}^{(0,0)},0\right)=\frac{3^{N-1}-1}{2}.
    \]
    \item The last exceptional value $\frac{3}{2}$ satisfies the same conditions as $\frac{3}{4}$, and $R\left(\frac{1}{2},\frac{1}{2},\frac{3}{2}\right)=\frac{3}{2}$, so Lemma \ref{multiplicity}-\ref{eq:SD3} applies and
    \[
    \textup{mult}\left(\mathcal{L}_N^{\left(\frac{1}{2},\frac{1}{2}\right)},\frac{3}{2}\right)=3^{N-1}-\frac{3^N+3}{2}+\textup{mult}\left(\mathcal{L}_{N-1}^{(0,0)},\frac{3}{2}\right)=0.
    \]
    \item We need to consider the other preimage of $\frac{3}{2}$ under $R\left(\frac{1}{2},\frac{1}{2},\cdot\right)$, which is $\frac{5}{4}$. 
    Given that $\mathcal{D}(\frac{1}{2}, \frac{5}{4})\neq 0$ and $\phi(\frac{1}{2}, \frac{1}{2}, \frac{5}{4})\neq 0$, Lemma \ref{specdecthm} applies and
    \[
    \textup{mult}\left(\mathcal{L}_N^{\left(\frac{1}{2},\frac{1}{2}\right)},\frac{5}{4}\right)=\textup{mult}\left(\mathcal{L}_{N-1}^{(0,0)},\frac{3}{2}\right)=\frac{3^{N-1}+3}{2}.
    \]
    \item From the previous case $\alpha=\beta=0$, we saw that all the eigenvalues in $\sigma\left(\mathcal{L}_{N-1}^{(0,0)}\right) \setminus \{0,\frac{3}{2}\}$ belong to the $\frac{3}{4}$-series or the $\frac{5}{4}$-series, and lie in $\left[0,\frac{3}{2}\right]$. 
Since $R(\frac{1}{2},\frac{1}{2},\lambda)=-4(\lambda-\frac{11}{8})^2+\frac{25}{16}$, each eigenvalue of the $\frac{3}{4}$-series and the $\frac{5}{4}$-series has two positive real preimages under $R(\frac{1}{2},\frac{1}{2},\cdot)$. So by Lemma \ref{specdecthm} we have
\begin{align*}
\textup{mult}\left(\mathcal{L}_N^{(\frac{1}{2},\frac{1}{2})},\left(R\left(\frac{1}{2},\frac{1}{2},\cdot\right)\right)^{-1} \circ \left(R(0,0,\cdot)\right)^{-k}\left(\frac{3}{4}\right)\right)=\frac{3^{N-k-2}+3}{2},&\quad  k\in \{0,1,\cdots, N-2\},\\
\textup{mult}\left(\mathcal{L}_N^{(\frac{1}{2},\frac{1}{2})},\left(R\left(\frac{1}{2},\frac{1}{2},\cdot\right)\right)^{-1} \circ \left(R(0,0,\cdot)\right)^{-k}\left(\frac{5}{4}\right)\right)=\frac{3^{N-k-2}-1}{2},&\quad k\in \{0,1,\cdots, N-3\}.
\end{align*}
These give rise to
    \[
    2\left(\dim_{N-1}-\textup{mult}\left(\mathcal{L}_{N-1}^{(0,0)},\frac{3}{2}\right)-\textup{mult}\left(\mathcal{L}_{N-1}^{(0,0)},0\right)\right)=2\left(\dim_{N-1}-\frac{3^{N-1}+3}{2}-1\right)=2(3^{N-1}-1)
    \]
  many eigenvalues.
\end{enumerate}

The total count of eigenvalues from the 6 steps above is
\[
1+\frac{3^N+3}{2}+\frac{3^{N-1}-1}{2}+0+\frac{3^{N-1}+3}{2}+2 (3^{N-1}-1)=\frac{3^{N+1}+3}{2}=\dim_N,
\]
as desired. 

As an aside, note that the eigenvalues in the first 5 items do not fall into the interval $(\frac{1}{2},\frac{3}{4})$. Also, the eigenvalues in Step (6), which are preimages of $R(\frac{1}{2},\frac{1}{2},\cdot)$, are in $[\frac{3}{4},2]$, also outside of $(\frac{1}{2},\frac{3}{4})$. This is consistent with the gap $(\frac{1}{2},\frac{3}{4})$ in the butterfly spectrum (Figure \ref{fig:newspectrum}).

\subsubsection{$\alpha=\frac{1}{2}, \beta=0$}
\label{sec:halfzero}

The relevant functions are
\begin{align*}
&\Psi\left(\frac{1}{2},0,\lambda\right)=\left(\lambda-\frac{1}{2}\right)\left(\lambda-\frac{5}{4}\right),\quad
\mathcal{D}(0,\lambda)=-\frac{1}{32}(2\lambda-1)(4\lambda-5)^2,
\\
&
\phi\left(\frac{1}{2},0,\lambda\right)=-\frac{1}{4\lambda-5},\quad
R\left(\frac{1}{2},0,\lambda\right)=-4\lambda^2+9\lambda-3, \quad
\mathcal{E}\left(\frac{1}{2},0\right)=\left\{\frac{1}{2},\frac{5}{4}\right\}.
\end{align*}
Since $\alpha_\downarrow(\frac{1}{2},0,\lambda)=\frac{1}{2}$ and $\beta_\downarrow(\frac{1}{2},0,\lambda)=\frac{1}{2}$, many of the eigenvalues in $\sigma(\mathcal{L}^{(\frac{1}{2},0)}_N)$ are preimages of $\sigma(\mathcal{L}^{(\frac{1}{2}, \frac{1}{2})}_{N-1})$ under $R(\frac{1}{2},0,\cdot)$.
We enumerate $\sigma(\mathcal{L}_N^{(\frac{1}{2},0)})$ in five steps.

First consider the two values in the exceptional set.
\begin{enumerate}[wide]
    \item The first exceptional value $\frac{1}{2}$ satisfies $\mathcal{D}(0,\frac{1}{2})=0$ and $\phi(\frac{1}{2},0,\frac{1}{2})\neq 0$. Moreover, both $\phi$ and $\phi R$ are bounded in a neighborhood of $\frac{1}{2}$, and $\partial_\lambda R(\frac{1}{2},0,\frac{1}{2}) \neq 0$. 
    Also $R(\frac{1}{2},0,\frac{1}{2})=\frac{1}{2}$.
    Therefore Lemma \ref{multiplicity}-\ref{eq:SD4} applies and
    \[
    \textup{mult}\left(\mathcal{L}_N^{\left(\frac{1}{2},0\right)},\frac{1}{2}\right)=3^{N-1}+ \textup{mult}\left(\mathcal{L}_{N-1}^{\left(\frac{1}{2},\frac{1}{2}\right)},\frac{1}{2}\right)=3^{N-1}+\frac{3^{N-1}+3}{2}=\frac{3^N+3}{2}.
    \]
    \item The second exceptional value $\frac{5}{4}$ satisfies all the conditions in Lemma \ref{multiplicity}-\ref{eq:SD3}. We also know that $R\left(\frac{1}{2},0,\frac{5}{4}\right)=2$. So we get
    \[
    \textup{mult}\left(\mathcal{L}_N^{\left(\frac{1}{2},0\right)},\frac{5}{4}\right)=3^{N-1}\cdot 2-\frac{3^N+3}{2}+\textup{mult}\left(\mathcal{L}_{N-1}^{\left(\frac{1}{2},\frac{1}{2}\right)},2\right)=\frac{3^{N-1}-1}{2}.
    \]
\end{enumerate}

The next two steps deal with the other preimages of $\frac{1}{2}$ and $2$, which appeared in the first two steps and belong to $\sigma(\mathcal{L}_{N-1}^{\left(\frac{1}{2},\frac{1}{2}\right)})$. 
\begin{enumerate}[resume, wide]
    \item $\frac{7}{4}$ is the other preimage of $\frac{1}{2}$ under $R\left(\frac{1}{2},0,\cdot\right)$. Lemma \ref{specdecthm} applies and
    \[
    \textup{mult}\left(\mathcal{L}_N^{\left(\frac{1}{2},0\right)},\frac{7}{4}\right)=\textup{mult}\left(\mathcal{L}_{N-1}^{\left(\frac{1}{2},\frac{1}{2}\right)},\frac{1}{2}\right)=\frac{3^{N-1}+3}{2}.
    \]
    \item $1$ is the other preimage of $2$ under $R\left(\frac{1}{2},0,\cdot\right)$. Lemma \ref{specdecthm} applies and
    \[
    \textup{mult}\left(\mathcal{L}_N^{\left(\frac{1}{2},0\right)},1\right)=\textup{mult}\left(\mathcal{L}_{N-1}^{\left(\frac{1}{2},\frac{1}{2}\right)},2\right)=1.
    \]
    \item Finally, we apply Lemma \ref{specdecthm} to all eigenvalues of $\sigma(\mathcal{L}^{(\frac{1}{2}, \frac{1}{2})}_{N-1})\setminus\{\frac{1}{2},2\}$. First, we would like to investigate $\frac{3}{4}$ and $\frac{3}{2}$, which are the remaining two values in $\mathcal{E}(\frac{1}{2}, \frac{1}{2})$. Since $\frac{3}{2} \notin \sigma(\mathcal{L}^{(\frac{1}{2}, \frac{1}{2})}_{N-1})$, we can ignore it. As for $\frac{3}{4}$, its preimages do not lie in $\mathcal{E}\left(\frac{1}{2},0\right)$, so Lemma \ref{specdecthm} applies. Moreover, all eigenvalues in $\sigma(\mathcal{L}^{(\frac{1}{2}, \frac{1}{2})}_{N-1})\setminus\{\frac{1}{2},2\}$ are in $[\frac{3}{4},2]$ and $R(\frac{1}{2},0,\lambda)=-4(\lambda-\frac{9}{8})^2+\frac{33}{16}$, so Lemma \ref{specdecthm} applies and they all have two positive real preimages under $R(\frac{1}{2},0,\cdot)$. So in total they contribute to $\sigma(\mathcal{L}_N^{(\frac{1}{2},0)})$
\[2\left(\dim_{N-1}-\textup{mult}\left(\mathcal{L}_{N-1}^{\left(\frac{1}{2},\frac{1}{2}\right)},2\right)-\textup{mult}\left(\mathcal{L}_{N-1}^{\left(\frac{1}{2},\frac{1}{2}\right)},\frac{1}{2}\right)\right)=\frac{2\cdot 3^N-2\cdot 3^{N-1}-4}{2}
\]
eigenvalues.
\end{enumerate}

The total count of eigenvalues is indeed
\begin{equation*}
\frac{3^N+3}{2}+\frac{3^{N-1}-1}{2}+\frac{3^{N-1}+3}{2}+1+\frac{2\cdot 3^N-2\cdot 3^{N-1}-4}{2}=\frac{3^{N+1}+3}{2}=\dim_N.
\end{equation*}

\subsubsection{$\alpha=0, \beta=\frac{1}{2}$}
\label{sec:zerohalf}

The relevant functions are
\begin{align*}
&\Psi\left(0,\frac{1}{2},\lambda\right)=\left(\lambda-\frac{3}{4}\right)\left(\lambda-\frac{3}{2}\right),\quad
\mathcal{D}\left(\frac{1}{2},\lambda\right)=-\frac{1}{32}(2\lambda-3)(4\lambda-3)^2,\\
&\phi\left(0,\frac{1}{2},\lambda\right)=-\frac{1}{4\lambda-3}, \quad
R\left(0,\frac{1}{2},\lambda\right)=-4\lambda^2+7\lambda-1,\quad
\mathcal{E}\left(0,\frac{1}{2}\right)=\left\{\frac{3}{4},\frac{3}{2}\right\}.
\end{align*}
Since $\alpha_\downarrow(0,\frac{1}{2},\lambda)=\frac{1}{2}$ and $\beta_\downarrow(0,\frac{1}{2},\lambda)=\frac{1}{2}$, most of the eigenvalues in $\sigma(\mathcal{L}^{(0,\frac{1}{2})}_N)$ are preimages of $\sigma(\mathcal{L}^{(\frac{1}{2}, \frac{1}{2})}_{N-1})$ under $R(0,\frac{1}{2},\cdot)$.
The following five-step analysis is similar to that of the last case $\alpha=\frac{1}{2}, \beta=0$.

We consider the two values in the exceptional set first.
\begin{enumerate}[wide]
    \item The first value $\frac{3}{2}$ is in $\sigma(D)$, and $\phi\left(0,\frac{1}{2},\lambda\right)$ is neither 0 at $\frac{3}{2}$ nor has a pole at $\frac{3}{2}$. We also know that $R\left(0,\frac{1}{2},\frac{3}{2}\right)=\frac{1}{2}$. Therefore, we shall apply Lemma \ref{multiplicity}-\ref{eq:SD4} to get
    \[
    \textup{mult}\left(\mathcal{L}_N^{\left(0,\frac{1}{2}\right)},\frac{3}{2}\right)=3^{N-1}+ \textup{mult}\left(\mathcal{L}_{N-1}^{\left(\frac{1}{2},\frac{1}{2}\right)},\frac{1}{2}\right)=3^{N-1}+\frac{3^{N-1}+3}{2}=\frac{3^N+3}{2}.
    \]
    \item The second value in the exceptional set $\frac{3}{4}$ is also in $\sigma(D)$, and satisfies the rest of the conditions in Lemma \ref{multiplicity}-\ref{eq:SD3}. We also know that $R\left(0,\frac{1}{2},\frac{3}{4}\right)=2$. So we get
    \[
    \textup{mult}\left(\mathcal{L}_N^{\left(0,\frac{1}{2}\right)},\frac{3}{4}\right)=3^{N-1}\cdot 2-\frac{3^N+3}{2}+\textup{mult}\left(\mathcal{L}_{N-1}^{\left(\frac{1}{2},\frac{1}{2}\right)},2\right)=\frac{3^{N-1}-1}{2}.
    \]
\end{enumerate}

The next two items deal with the other preimages of $\frac{1}{2}$ and $2$ under $R\left(0,\frac{1}{2},\cdot\right)$.

\begin{enumerate}[resume,wide]
    \item $\frac{1}{4}$ is the other preimage of $\frac{1}{2}$ under $R\left(\frac{1}{2},0,\cdot\right)$, so we use Lemma \ref{specdecthm} to get
    \[
    \textup{mult}\left(\mathcal{L}_N^{\left(0,\frac{1}{2}\right)},\frac{1}{4}\right)=\textup{mult}\left(\mathcal{L}_{N-1}^{\left(\frac{1}{2},\frac{1}{2}\right)},\frac{1}{2}\right)=\frac{3^{N-1}+3}{2}.
    \]
    \item $1$ is the other preimage of $2$ under $R\left(\frac{1}{2},0,\cdot\right)$, so we use Lemma \ref{specdecthm} to get
    \[
    \textup{mult}\left(\mathcal{L}_N^{\left(\frac{1}{2},0\right)},1\right)=\textup{mult}\left(\mathcal{L}_{N-1}^{\left(\frac{1}{2},\frac{1}{2}\right)},2\right)=1.
    \]
    \item Finally, by the same argument in step (5) of the previous case, all eigenvalues in $\sigma(\mathcal{L}^{(\frac{1}{2}, \frac{1}{2})}_{N-1})\setminus\{\frac{1}{2},2\}$ are in $[\frac{3}{4},2]$ and $R(0,\frac{1}{2},\lambda)=-4(\lambda-\frac{7}{8})^2+\frac{33}{16}$, so Lemma \ref{specdecthm} applies and they all have two positive real preimages under $R(0,\frac{1}{2},\cdot)$. So in total they contribute to $\sigma(\mathcal{L}_N^{(0,\frac{1}{2})})$
\[
\frac{2\cdot 3^N-2\cdot 3^{N-1}-4}{2}
\]
eigenvalues.
\end{enumerate}

The total count of eigenvalues is indeed
\begin{equation*}
\frac{3^N+3}{2}+\frac{3^{N-1}-1}{2}+\frac{3^{N-1}+3}{2}+1+\frac{2\cdot 3^N-2\cdot 3^{N-1}-4}{2}=\frac{3^{N+1}+3}{2}=\dim_N.
\end{equation*}


\subsection{Spectrum when not both of the fluxes $\alpha,\beta$ are in $\{0,\frac{1}{2}\}$} \label{sec:nonhalfintflux}

In this subsection we characterize $\sigma(\mathcal{L}^\omega_N)$ in \underline{Cases II, III, and IV}, thereby proving Theorem \ref{thm:2}.
Recall from Proposition \ref{prop:Rvalued} that $\mathbb{R}\ni\lambda\mapsto \Psi(\alpha,\beta,\lambda)$ is $\mathbb{C}$-valued, so we use Corollary \ref{cor:specdec}-\ref{item:C}.
In particular the reduced magnetic Laplacian $\mathcal{L}^\Omega_{N-1}$ receives a ``twist'' in the form of a multiplier $e^{2\pi i \theta(\alpha,\beta,\lambda)}$, $\theta(\alpha,\beta,\lambda)=(2\pi)^{-1} {\rm arg}\Psi(\alpha,\beta,\lambda)$.
The decimation diagram takes the form
\[
\begin{tikzcd}[column sep=huge]
\mathcal{L}_N^{(\alpha_N, \beta_N)} \arrow[r, "{R(\alpha_N, \beta_N, \cdot)}"] & 
\mathcal{L}_{N-1}^{(\alpha_{N-1}, \beta_{N-1})} 
\longrightarrow \cdots \longrightarrow
\mathcal{L}_0^{(\alpha_0, \beta_0)},
\end{tikzcd}
\]
where for each $n\in \{1,2,\cdots, N\}$, $\alpha_{n-1}(\lambda) = \alpha_\downarrow(\alpha_n, \beta_n,\lambda)$ and $\beta_{n-1}(\lambda)=\beta_\downarrow(\alpha_n,\beta_n,\lambda)$, \emph{cf.\@} Proposition \ref{prop:fluxevolve}.
We emphasize again the dependence of the magnetic Laplacians and fluxes on the spectral parameter $\lambda$ under decimation.
That said, to avoid an overcharged notation, we will suppress the flux symbols $\alpha_N$ and $\beta_N$ in this subsection unless the context requires their presence.

The order of our analysis starts with the case $\Psi(\lambda)\neq 0$, followed by the case $\Psi(\lambda)=0$.

\subsubsection{$\Psi(\lambda)\neq 0$}


\begin{proposition}
\label{prop:31}
In any of \underline{Cases II, III, and IV}, suppose $\Psi(\lambda)\neq 0$.
\begin{enumerate}[wide, label=(G\arabic*)]
\item \label{G1} If $\mathcal{D}(\lambda)\neq 0$, then
\begin{align}
\label{eq:G1}
E_\lambda(\mathcal{L}_N^\omega) = \left(\lim_{\mathbb{R} \ni x\to\lambda} \phi(x) \frac{R(\lambda)-R(x)}{\lambda-x}\right)^{-1} \left(P^*_\parallel - P^*_\perp(D-\lambda)^{-1}C\right) E_{R(\lambda)}(\mathcal{L}^\Omega_{N-1}) \left(P_\parallel - B(D-\lambda)^{-1} P_\perp\right).
\end{align}
In particular,
$
{\rm mult}(\mathcal{L}^\omega_N, \lambda) = {\rm mult}\left(\mathcal{L}^\Omega_{N-1}, R(\lambda)\right)
$
.
\item \label{G2} If $\mathcal{D}(\lambda)=0$, then $\lambda$ is a simple zero of $\mathcal{D}$.

On the one hand, suppose $\displaystyle \lim_{\mathbb{R}\ni x\to\lambda}\frac{1}{|\Psi(x)|} \frac{\mathcal{D}(x)(\lambda-x)}{R(\lambda)-R(x)} =0$.
Then
\begin{equation*}
\begin{aligned}
E&_\lambda(\mathcal{L}_N^\omega) = 
P^*_\perp E_\lambda(D) P_\perp \\
&+ 
\left(\lim_{\mathbb{R}\ni x\to \lambda}\frac{4\mathcal{D}(x)}{(\lambda-x)|\Psi(x)|}\right)P_\perp^* E_\lambda(D) C 
\left(\mathcal{L}^\Omega_{N-1}- R(\lambda)\right)^{-1} 
\left(I_{\parallel} - E_{R(\lambda)}(\mathcal{L}^\Omega_{N-1})\right) 
BE_\lambda(D) P_\perp.
\end{aligned}
\end{equation*}
In particular, $E_\lambda(\mathcal{L}_N^\omega) (P^*_\perp E_\lambda(D) P_\perp) = E_\lambda(\mathcal{L}_N^\omega)$,
$$
{\rm mult}(\mathcal{L}^\omega_N, \lambda) =
3^{N-1} - {\rm dim}_{N-1} + {\rm mult}\left(\mathcal{L}^\Omega_{N-1}, R(\lambda)\right) = -\frac{3^{N-1}+3}{2}+ {\rm mult}\left(\mathcal{L}^\Omega_{N-1}, R(\lambda)\right),
$$
and the corresponding eigenfunction vanishes on $V_{N-1}$.

On the other hand, suppose $\displaystyle \lim_{\mathbb{R}\ni x\to\lambda}\frac{1}{|\Psi(x)|} \frac{\mathcal{D}(x)(\lambda-x)}{R(\lambda)-R(x)} \neq 0$.
Then
\begin{equation}
\label{eq:G2eigp2}
\begin{aligned}
E&_\lambda(\mathcal{L}_N^\omega) = 
P^*_\perp E_\lambda(D) P_\perp \\
&+\left(\lim_{\mathbb{R}\ni x\to \lambda}\frac{4\mathcal{D}(x)}{(\lambda-x)|\Psi(x)|}\right)P_\perp^* E_\lambda(D) C 
\left(\mathcal{L}^\Omega_{N-1}- R(\lambda)\right)^{-1} 
\left(I_{\parallel} - E_{R(\lambda)}(\mathcal{L}^\Omega_{N-1})\right) 
BE_\lambda(D) P_\perp\\
&+ \left(P_{\parallel}^*-P_{\perp}^*(D-\lambda)^{-1}C\right) \left(\lim_{\mathbb{R} \ni x\to\lambda}\frac{1}{\phi(x)}\frac{\lambda-x}{R(\lambda)-R(x)}\right)
 E_{R(\lambda)}(\mathcal{L}^\Omega_{N-1})
\left(P_{\parallel}-B(D-\lambda)^{-1}P_{\perp}\right),
\end{aligned}
\end{equation}
and in general 
$$\mult(\mathcal{L}^\omega_N,\lambda) = 3^{N-1} - {\rm dim}_{N-1} + 2\mult(\mathcal{L}^\Omega_{N-1}, R(\lambda)) = -\frac{3^{N-1}+3}{2} + 2\mult(\mathcal{L}^\Omega_{N-1}, R(\lambda)).
$$
That said, if none of the corresponding eigenfunctions vanishes on $V_{N-1}$, then the first two terms on the RHS of \eqref{eq:G2eigp2} vanish, and $\mult(\mathcal{L}^\omega_N,\lambda) = \mult(\mathcal{L}^\Omega_{N-1}, R(\lambda))$.
\end{enumerate}
\end{proposition}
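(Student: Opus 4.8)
The plan is to derive both \ref{G1} and \ref{G2} as direct applications of the spectral decimation machinery of \S\ref{sec:non-exceptional}--\S\ref{sec:exceptional}, specialized to $M=\mathcal{L}_N^\omega$, $L=\mathcal{L}_{N-1}^\Omega$, $V_\parallel = V_{N-1}$, with $\phi$ and $R$ as in Corollary \ref{cor:specdec}-\ref{item:C}. The single analytic fact that makes everything go through despite the non-rationality of $R$ is this: since we assume $\Psi(\lambda)\neq 0$, the function $x\mapsto |\Psi(x)|=\sqrt{\Psi(x)\overline{\Psi(x)}}$ is real-analytic in a neighborhood of $\lambda$, hence $\phi$ and $R$ are differentiable there. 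Consequently every limit appearing in Lemmas \ref{specdecthm} and \ref{multiplicity} reduces to an ordinary derivative computation, and the only genuinely non-rational feature to track is the value of $R'(\lambda)$.

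For \ref{G1}: because $\mathcal{D}(\lambda)\neq 0$ we have $\lambda\notin\sigma(D)$ (recall $D-\lambda I$ is block diagonal with $3\times 3$ blocks of determinant $\mathcal{D}(\beta,\lambda)$), and because $\Psi(\lambda)\neq 0$ we have $\phi(\lambda)=|\Psi(\lambda)|/(4\mathcal{D}(\lambda))\neq 0$; hence $\lambda\notin\mathcal{E}(\alpha,\beta)$, cf.\@ \eqref{eq:eset}. It then suffices to verify the hypothesis of Lemma \ref{specdecthm}, namely that $\lim_{\mathbb{R}\ni x\to\lambda}\phi(x)\frac{R(\lambda)-R(x)}{\lambda-x}$ exists and is nonzero; by the smoothness just noted this limit equals $\phi(\lambda)R'(\lambda)$. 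When $R'(\lambda)\neq 0$ the eigenprojector formula \eqref{eq:G1} and the identity $\mult(\mathcal{L}_N^\omega,\lambda)=\mult(\mathcal{L}_{N-1}^\Omega,R(\lambda))$ are exactly the conclusion of Lemma \ref{specdecthm}; in the degenerate case $R'(\lambda)=0$ the divergence remark following Lemma \ref{specdecthm} forces $E_{R(\lambda)}(\mathcal{L}_{N-1}^\Omega)=0$, so both sides of the multiplicity identity vanish.

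For \ref{G2}: I first dispose of the simplicity claim. By Lemma \ref{lem:multzero}, $\mathcal{D}(\beta,\cdot)$ can have a multiple zero only when $\beta\in\{0,\frac{1}{2}\}$, i.e.\@ in \underline{Case II}; and Tables \ref{table:CaseA}--\ref{tab:freepara} show that in precisely those situations the double zero ($\frac{5}{4}$ when $\beta=0$, $\frac{3}{4}$ when $\beta=\frac{1}{2}$) is also a zero of $\Psi(\alpha,\beta,\cdot)$. Since we assume $\Psi(\lambda)\neq 0$, this forces $\lambda$ to be a simple zero of $\mathcal{D}$, whence each $3\times 3$ block contributes $\lambda$ with multiplicity one and $\mult(D,\lambda)=3^{N-1}$. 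Now $\lambda\in\sigma(D)$ and $\lim_{x\to\lambda}[\phi(x)]^{-1}=\lim 4\mathcal{D}(x)/|\Psi(x)|=0$, while simplicity gives $\lim_{x\to\lambda}\phi(x)(\lambda-x)=-|\Psi(\lambda)|/(4\mathcal{D}'(\lambda))\neq 0$. The dichotomy in the statement is exactly that between Lemma \ref{multiplicity}-\ref{eq:SD3} and -\ref{eq:SD6}: writing $\frac{1}{\phi(x)}\frac{\lambda-x}{R(\lambda)-R(x)}=\frac{4\mathcal{D}(x)}{|\Psi(x)|}\frac{\lambda-x}{R(\lambda)-R(x)}$, the hypothesized limit $\lim\frac{1}{|\Psi(x)|}\frac{\mathcal{D}(x)(\lambda-x)}{R(\lambda)-R(x)}$ vanishes iff $R'(\lambda)\neq 0$ (case \ref{eq:SD3}) and is nonzero iff $R'(\lambda)=0$, i.e.\@ $\lambda$ is a critical point of $R$ (case \ref{eq:SD6}). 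Applying the corresponding case of Lemma \ref{multiplicity} yields the two eigenprojector formulas, and the multiplicities follow from $\mult(D,\lambda)-\dim_{N-1}=3^{N-1}-\frac{3^N+3}{2}=-\frac{3^{N-1}+3}{2}$ together with the multiplicity count of the respective case.

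The main obstacle is entirely analytic and concentrated in the passage from the abstract limit conditions of Lemma \ref{multiplicity} to concrete statements about $R$: because $R$ is built from $|\Psi|$ rather than a polynomial, one must confirm that the relevant one-sided real limits exist and take the claimed finite nonzero values, and---most delicately---correctly identify the critical-point behavior $R'(\lambda)=0$ versus $R'(\lambda)\neq 0$, since this is precisely what selects between the two branches of \ref{G2}. Once $|\Psi|$ is known to be smooth near $\lambda$ (guaranteed by $\Psi(\lambda)\neq 0$), the remaining computations are routine derivative and multiplicity arithmetic.
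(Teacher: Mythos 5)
Your proposal is correct and follows essentially the same route as the paper: reduce \ref{G1} to Lemma \ref{specdecthm} and \ref{G2} to Lemma \ref{multiplicity}-\ref{eq:SD3}/\ref{eq:SD6}, with simplicity of the zero of $\mathcal{D}$ obtained from Lemma \ref{lem:multzero} and Tables \ref{table:CaseA}--\ref{tab:freepara}, and the requisite limits justified by differentiability of $|\Psi|$ where $\Psi\neq 0$ (the paper verifies these limits via explicit decompositions, e.g.\@ the factorization $1/(\phi(x)(\lambda-x))=-\tfrac{4}{|\Psi(x)|}(x-r_1)(x-r_2)$, which is equivalent to your derivative computations). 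Your explicit treatment of the degenerate case $\phi(\lambda)R'(\lambda)=0$ in \ref{G1} via the divergence remark after Lemma \ref{specdecthm} is a small additional care the paper leaves implicit.
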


\begin{proof}
\ref{G1}: We are in the setting of Lemma \ref{specdecthm}.
The only thing to justify is the existence of the limit on the RHS of \eqref{eq:G1}.
Indeed, from \eqref{eq:abs_phi} and \eqref{eq:abs_R} we get
\begin{align*}
\phi(x)& \frac{R(\lambda)-R(x)}{\lambda-x}= \frac{|\Psi(x)|}{4\mathcal{D}(x)}\frac{1}{\lambda-x} \left(\frac{A(\lambda)-64\mathcal{D}(\lambda)(1-\lambda)}{16|\Psi(\lambda)|}-\frac{A(x)-64\mathcal{D}(x)(1-x)}{16|\Psi(x)|}\right)\\
&= \frac{1}{64\mathcal{D}(x)}\frac{1}{\lambda-x}\left(\frac{|\Psi(x)|}{|\Psi(\lambda)|}\left(A(\lambda)-64\mathcal{D}(\lambda)(1-\lambda)\right) -  \left(A(x)-64\mathcal{D}(x)(1-x)\right)\right)\\
&= \frac{1}{64\mathcal{D}(x)}\left( \frac{F(\lambda)-F(x)}{\lambda-x} + \frac{|\Psi(x)|-|\Psi(\lambda)|}{\lambda-x}\cdot \frac{F(\lambda)}{|\Psi(\lambda)|}\right) 
\end{align*}
where $F(x) = A(x)-64\mathcal{D}(x)(1-x)$, which is analytic.
On the other hand, $|\Psi(x)|^2$ is a quadratic polynomial in $x\in \mathbb{R}$. Therefore $|\Psi(x)| = \sqrt{|\Psi(x)|^2}$ is differentiable on $\mathbb{R}$ so long as $\Psi(x) \neq 0$.
The claim now follows from the given assumptions.

\ref{G2}: By Lemma \ref{lem:multzero}, a multiple zero $\lambda$ of $\mathcal{D}$ occurs only if $\beta=0$ or $\beta=\frac{1}{2}$, \emph{i.e.,} under \underline{Case II}. 
Moreover, by Table \ref{tab:freepara}, $\lambda$ is also a zero of $\Psi$.
Therefore under the stated conditions $\lambda$ can only be a simple zero of $\mathcal{D}$.

Since $\Psi$ is continuous, $\Psi(\lambda)\neq 0$, and $\mathcal{D}(\lambda)= 0$, it follows that $\lim_{\mathbb{R}\ni x\to\lambda}[\phi(x)]^{-1}=0$.
Thus we are in the setting of either Lemma \ref{multiplicity}-\ref{eq:SD3} or Lemma \ref{multiplicity}-\ref{eq:SD6}, provided that the following two limits exist:
\begin{align}
\label{eq:twolimits}
\lim_{\mathbb{R}\ni x\to\lambda} \frac{1}{\phi(x)(\lambda-x)}
 \quad \text{ and } \quad
\lim_{\mathbb{R}\ni x\to\lambda} \frac{1}{\phi(x)}\frac{\lambda-x}{R(\lambda)-R(x)}.
\end{align}
For the first ratio in \eqref{eq:twolimits}, since $\lambda$ is a simple zero of $\mathcal{D}$, and $\Psi(\lambda)\neq 0$,
\[
\frac{1}{\phi(x)(\lambda-x)} = \frac{4\mathcal{D}(x)}{|\Psi(x)|(\lambda-x)} = -\frac{4}{|\Psi(x)|} (x-r_1)(x-r_2) \quad \text{for some } r_1, r_2 \neq \lambda.
\]
This has a well-defined nonzero limit as $x\to\lambda$.
For the second ratio in \eqref{eq:twolimits},
\begin{align}
\label{eq:secondratio}
\frac{1}{\phi(x)}\frac{\lambda-x}{R(\lambda)-R(x)}
= \frac{4}{|\Psi(x)|} \frac{\mathcal{D}(x)(\lambda-x)}{R(\lambda)-R(x)},
\end{align}
the existence of the limit as $x\to\lambda$ is clear. 
If this limit is zero (resp.\@ nonzero), Lemma \ref{multiplicity}-\ref{eq:SD3} (resp.\@ Lemma \ref{multiplicity}-\ref{eq:SD6}) applies.
\end{proof}

Given our knowledge of the functions $\mathcal{D}$, $\Psi$, and $A$, it would be more satisfying to give concrete criteria for whether the limit of \eqref{eq:secondratio} is zero.
Below is our best attempt using elementary analysis.

By assumption we may write $\mathcal{D}(x) = -(x-\lambda)(x-a)(x-b)$, $a,b \neq \lambda$ being the two other zeros of $\mathcal{D}$.
Also, since $\mathcal{D}(\lambda)=0$ and $\Psi(\lambda)\neq 0$, 
\begin{align*}
R&(\lambda)-R(x) = \frac{A(\lambda)}{16|\Psi(\lambda)|} - \frac{A(x)-64\mathcal{D}(x)(1-x)}{16|\Psi(x)|}\\
&=\frac{1}{16|\Psi(x)|}\left(A(\lambda)-A(x) +64\mathcal{D}(x)(1-x)\right) + \frac{A(\lambda)}{16}\left(\frac{1}{|\Psi(\lambda)|}-\frac{1}{|\Psi(x)|}\right).
\end{align*}
Therefore \eqref{eq:secondratio} rewrites as
\begin{align}
\label{eq:secondratio2}
 64(x-a)(x-b) \left[ \frac{(\lambda-x)^2}{A(\lambda)-A(x)+64\mathcal{D}(x)(1-x)} +
  \frac{|\Psi(\lambda)|}{A(\lambda)} \frac{(\lambda-x)^2}{|\Psi(x)|- |\Psi(\lambda)|}  \right].
\end{align}
assuming $A(\lambda)\neq 0$. 
(If $A(\lambda)=0$, then only the first term inside the square bracket in \eqref{eq:secondratio2} survives.)

Let us note the elementary identity
\[
\frac{1}{|\Psi(x)|-|\Psi(\lambda)|} = \frac{|\Psi(x)|+|\Psi(\lambda)|}{|\Psi(x)|^2-|\Psi(\lambda)|^2}.
\]
By Taylor approximation, $|\Psi(\cdot)|^2- |\Psi(\lambda)|^2$ has a multiple zero at $\lambda$ if and only if $\frac{d}{dx}|\Psi(x)|^2|_{x=\lambda}=0$.
Therefore the second term in the square bracket in \eqref{eq:secondratio2} converges to $0$ as $\mathbb{R}\ni x\to\lambda$ if and only if $\frac{d}{dx}|\Psi(x)|^2|_{x=\lambda}\neq 0$.

The same reasoning applies to the first term in the square bracket in \eqref{eq:secondratio2}.
By construction, the polynomial in the denominator must contain at least one factor of $(x-\lambda)$. 
If it contains multiple factors of $(x-\lambda)$, then the first term converges to a nonzero limit. 
Luckily we can derive an explicit criterion.

\begin{lemma}
\label{lem:zero}
Assume $\lambda$ is a zero of $\mathcal{D}(\beta,\cdot)$.
Set 
\[
\mathcal{H}(\alpha,\beta,x) := A(\alpha,\beta,\lambda) - A(\alpha,\beta,x) + 64\mathcal{D}(\beta,x)(1-x).
\]
Then $\lambda$ is a multiple zero of $\mathcal{H}(\alpha,\beta,\cdot)$ if and only if
\begin{align}
\label{eq:zero}
8(\lambda-1)\left(1-2\left(\lambda^2 - 2\lambda (3-\lambda) +\frac{45}{16}\right)\right) = \cos(2\pi\alpha).
\end{align}
\end{lemma}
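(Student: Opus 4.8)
The plan is to exploit the hypothesis that $\lambda$ is a zero of $\mathcal{D}(\beta,\cdot)$, which forces $\lambda$ to be a zero of $\mathcal{H}(\alpha,\beta,\cdot)$ automatically and thereby reduces the multiple-zero question to a single first-derivative test. First I would note that $\mathcal{H}(\alpha,\beta,\cdot)$ is a genuine polynomial in $x$ (of degree four, arising from the term $64\mathcal{D}(\beta,x)(1-x)$), and that
\[
\mathcal{H}(\alpha,\beta,\lambda) = A(\alpha,\beta,\lambda) - A(\alpha,\beta,\lambda) + 64\,\mathcal{D}(\beta,\lambda)(1-\lambda) = 0
\]
because $\mathcal{D}(\beta,\lambda)=0$. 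Hence $\lambda$ is always at least a simple zero, and since $\mathcal{H}$ is a polynomial, it is a multiple zero if and only if $\partial_x\mathcal{H}(\alpha,\beta,x)\big|_{x=\lambda}=0$.

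The next step is to compute this derivative. Differentiating term by term and using the product rule gives
\[
\partial_x\mathcal{H}(\alpha,\beta,x) = -\partial_x A(\alpha,\beta,x) + 64\bigl[(1-x)\,\partial_x\mathcal{D}(\beta,x) - \mathcal{D}(\beta,x)\bigr].
\]
Evaluating at $x=\lambda$ and invoking $\mathcal{D}(\beta,\lambda)=0$ once more annihilates the final term, leaving
\[
\partial_x\mathcal{H}(\alpha,\beta,x)\big|_{x=\lambda} = -\partial_x A(\alpha,\beta,\lambda) + 64(1-\lambda)\,\partial_x\mathcal{D}(\beta,\lambda).
\]
From the explicit forms \eqref{eq:A} and \eqref{eq:D} one reads off $\partial_x A(\alpha,\beta,\lambda) = 32\lambda - 32 - 4\cos(2\pi\alpha)$ and $\partial_x\mathcal{D}(\beta,\lambda) = -3\lambda^2 + 6\lambda - \tfrac{45}{16}$; note that the $\beta$-dependence (which lives only in the constant terms) drops out upon differentiation, consistent with the criterion being $\beta$-free.

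Finally I would set the derivative to zero and solve for $\cos(2\pi\alpha)$, obtaining $4\cos(2\pi\alpha) = 32(\lambda-1) + 64(\lambda-1)\bigl(-3\lambda^2 + 6\lambda - \tfrac{45}{16}\bigr)$, and then pull out the common factor to write
\[
\cos(2\pi\alpha) = 8(\lambda-1)\Bigl[\,1 + 2\bigl(-3\lambda^2 + 6\lambda - \tfrac{45}{16}\bigr)\Bigr].
\]
The remaining task is a short algebraic check that the bracket agrees with the one in \eqref{eq:zero}: expanding $\lambda^2 - 2\lambda(3-\lambda) = 3\lambda^2 - 6\lambda$ shows that $1 - 2\bigl(\lambda^2 - 2\lambda(3-\lambda) + \tfrac{45}{16}\bigr)$ equals $1 + 2\bigl(-3\lambda^2 + 6\lambda - \tfrac{45}{16}\bigr)$, so the two conditions coincide. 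I expect no conceptual obstacle here; the entire lemma is a one-derivative computation, and the only points demanding care are the sign bookkeeping in $-64(1-\lambda) = 64(\lambda-1)$ during the factorization and the observation that, because the criterion is phrased simply as ``$\lambda$ is a multiple zero,'' the vanishing of the first derivative is exactly the necessary and sufficient condition, with no need to separate the double-zero and higher-order cases.
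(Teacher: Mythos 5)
Your proof is correct, and it reaches \eqref{eq:zero} by a mildly different route than the paper. The paper's proof uses the root factorization $\mathcal{D}(\beta,x) = -(x-\lambda)(x-a)(x-b)$ to pull out the factor $(\lambda-x)$ explicitly, writing $\mathcal{H}(\alpha,\beta,x) = 4(\lambda-x)\left[4(\lambda+x)-8-\cos(2\pi\alpha)+16(x-a)(x-b)(1-x)\right]$, so that the multiple-zero condition becomes the vanishing of the cofactor at $x=\lambda$, namely $8(\lambda-1)\left(1-2(\lambda-a)(\lambda-b)\right)=\cos(2\pi\alpha)$; it then eliminates $a,b$ via Vieta's formulas ($a+b=3-\lambda$, $ab=\tfrac{45}{16}-\lambda(3-\lambda)$) to get \eqref{eq:zero}. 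You instead invoke the standard polynomial criterion that multiplicity $\geq 2$ is equivalent to $\mathcal{H}(\lambda)=\partial_x\mathcal{H}(\lambda)=0$, observe that $\mathcal{H}(\lambda)=0$ is automatic from $\mathcal{D}(\beta,\lambda)=0$, and read the derivatives $\partial_x A$ and $\partial_x\mathcal{D}$ directly off the coefficient forms \eqref{eq:A} and \eqref{eq:D}, never introducing the other roots at all. The two computations are linked by the identity $\partial_x\mathcal{D}(\beta,\lambda)=-(\lambda-a)(\lambda-b)$, and your final bracket $1+2\bigl(-3\lambda^2+6\lambda-\tfrac{45}{16}\bigr)$ indeed matches the paper's $1-2\bigl(\lambda^2-2\lambda(3-\lambda)+\tfrac{45}{16}\bigr)$. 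What your version buys is the avoidance of Vieta bookkeeping and of any case distinction about whether $a$ or $b$ coincides with $\lambda$ (your argument is insensitive to this, just as the paper's is); what the paper's factorization buys is an explicit display of the cofactor polynomial, which makes visible that $\lambda$ is generically a simple zero of $\mathcal{H}$. Your closing remarks are also on point: the first-derivative test is exactly equivalent to ``multiple zero'' with no need to separate multiplicities, and your sign handling in $-64(1-\lambda)\partial_x\mathcal{D}(\lambda)$ versus $64(\lambda-1)(\cdots)$ is consistent throughout.
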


\begin{remark}
Lemma \ref{lem:zero} is not vacuous. It is easy to see that \eqref{eq:zero} holds with $\alpha \in \{\frac{1}{4}, \frac{3}{4}\}$, $\beta\in \{\frac{1}{4},\frac{3}{4}\}$, and $\lambda=1$.
More generally, observe that the LHS of \eqref{eq:zero} depends only on $\beta$, whereas the RHS depends only on $\alpha$.
So long as the LHS has modulus $\leq 1$, there exists $\alpha$ for which \eqref{eq:zero} holds.
\end{remark}

\begin{proof}[Proof of Lemma \ref{lem:zero}]
Using \eqref{eq:A} for $A(\alpha,\beta,\cdot)$, as well as the factorization $\mathcal{D}(\beta,x) = -(x-\lambda)(x-a)(x-b)$, we get
\begin{align*}
\mathcal{H}(\alpha,\beta,x) &= 16(\lambda^2-x^2) - (32+4\cos(2\pi\alpha))(\lambda-x) +64(\lambda-x)(x-a)(x-b)(1-x)\\
&= 4(\lambda-x) \left[4(\lambda+x) - 8-\cos(2\pi\alpha) +16(x-a)(x-b)(1-x)\right].
\end{align*}
Thus $\lambda$ is a multiple zero of $\mathcal{H}(\alpha,\beta,\cdot)$ if and only if the expression in the square bracket vanishes when $x=\lambda$, \emph{i.e.,}
\[
8(\lambda-1)\left(1-2(\lambda-a)(\lambda-b)\right) = \cos(2\pi\alpha).
\]
We can then replace $a+b$ and $ab$ in terms of $\lambda$ and coefficients of the cubic polynomial $\mathcal{D}(\beta,\cdot)$ to obtain \eqref{eq:zero}.
\end{proof}

We summarize the above discussions in the following Table \ref{tab:TF}.

\begin{table}[h!]
\centering
\caption{Criterion table for Proposition \ref{prop:31}-\ref{G2}}
\label{tab:TF}
\begin{tabular}{ccc}
$\frac{d}{dx}|\Psi(x)|^2|_{x=\lambda}  =0$
&
\eqref{eq:zero} holds
&
$\displaystyle \lim_{\mathbb{R}\ni x\to\lambda}\frac{1}{|\Psi(x)|}\frac{\mathcal{D}(x)(\lambda-x)}{R(\lambda)-R(x)}$
\\ \hline
F & F & $0$\\
T & F & nonzero\\
F & T & nonzero\\
T & T & $0$ (if cancellation occurs) or nonzero
\end{tabular}
\end{table}

\subsubsection{$\Psi(\lambda)= 0$}

Recall that in \underline{Case IV}, $\Psi(\lambda)\neq 0$ for any $\lambda\in\mathbb{R}$, so Proposition \ref{prop:31} settles the spectral decimation problem in this case.
It remains to treat the exceptional values $\lambda$ in \underline{Cases II and III} where $\Psi(\lambda)=0$.
These are established in the next two propositions.
We begin with the (much) more straightfoward case.


\begin{proposition}
\label{prop:33}
In \underline{Case III}, if $\lambda=1+\frac{1}{2}\cos(2\pi \alpha_N)$, then ${\rm mult}(\mathcal{L}^\omega_N, \lambda)=0$.
\end{proposition}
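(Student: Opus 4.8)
The plan is to feed $\lambda := 1 + \frac{1}{2}\cos(2\pi\alpha_N)$, the unique real zero of $\Psi(\alpha_N,\beta_N,\cdot)$ furnished by Proposition \ref{exceptionalset2}, into the framework of Lemma \ref{multiplicity}-\ref{eq:SD7}, which outputs $E_\lambda(\mathcal{L}^\omega_N)=0$ (hence $\mathrm{mult}(\mathcal{L}^\omega_N,\lambda)=0$) the instant its four hypotheses are verified. Two of them come for free: Proposition \ref{exceptionalset2} asserts that $\lambda$ is not a zero of $\mathcal{D}(\beta_N,\cdot)$, so $D-\lambda$ is invertible and $\lambda\notin\sigma(D)$; and since $\Psi(\alpha_N,\beta_N,\lambda)=0$, formula \eqref{eq:abs_phi} gives $\phi(\alpha_N,\beta_N,\lambda)=|\Psi(\alpha_N,\beta_N,\lambda)|/(4\mathcal{D}(\beta_N,\lambda))=0$.

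The crucial hypothesis is $R(\lambda)=\infty$ (equivalently $\lim_{x\to\lambda}[R(x)]^{-1}=0$). Here I would substitute $1-\lambda=-\tfrac12\cos(2\pi\alpha_N)$ into the numerator $A(\alpha_N,\beta_N,\cdot)-64\mathcal{D}(\beta_N,\cdot)(1-\cdot)$ of \eqref{eq:abs_R}, exploiting the Case III relation $3\alpha_N+\beta_N\equiv\tfrac12\pmod 1$ to rewrite $\cos(2\pi(\alpha_N+\beta_N))=-\cos(4\pi\alpha_N)=1-2\cos^2(2\pi\alpha_N)$. A short computation should then collapse $A(\alpha_N,\beta_N,\lambda)$ to $0$, so that the numerator of $R$ reduces to $-64\mathcal{D}(\beta_N,\lambda)(1-\lambda)=32\cos(2\pi\alpha_N)\,\mathcal{D}(\beta_N,\lambda)$, which is nonzero since $\mathcal{D}(\beta_N,\lambda)\neq 0$ and $\cos(2\pi\alpha_N)\neq 0$. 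As the denominator $16|\Psi(\alpha_N,\beta_N,x)|\to 0$ while the numerator stays bounded away from $0$, we conclude $|R(x)|\to\infty$.

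For the fourth hypothesis I would show $x\mapsto(\lambda-x)[\phi(x)]^{-1}=4(\lambda-x)\,\mathcal{D}(\beta_N,x)/|\Psi(\alpha_N,\beta_N,x)|$ is bounded near $\lambda$. Because $\Psi(\alpha_N,\beta_N,\cdot)$ is a quadratic in $\lambda$ with real leading coefficient that is genuinely $\mathbb{C}$-valued (Proposition \ref{prop:Rvalued}), its real zero $\lambda$ cannot be a double zero (else $\Psi$ would equal $(\cdot-\lambda)^2$ and be $\mathbb{R}$-valued), so it is simple and $|\Psi(\alpha_N,\beta_N,x)|\sim|\partial_\lambda\Psi(\alpha_N,\beta_N,\lambda)|\,|x-\lambda|$; dividing $(\lambda-x)$ by this yields a convergent quantity, while $\mathcal{D}(\beta_N,x)\to\mathcal{D}(\beta_N,\lambda)$ is finite. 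With all four hypotheses in place, Lemma \ref{multiplicity}-\ref{eq:SD7} finishes the argument.

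The main obstacle is the identity $A(\alpha_N,\beta_N,\lambda)=0$, on which everything hinges: it is precisely what drives $R(\lambda)=\infty$ rather than a finite limit, the latter instead triggering Lemma \ref{multiplicity}-\ref{eq:SD2} and the large multiplicity $|V_{N-1}|$. I would also watch the degenerate sub-locus $\cos(2\pi\alpha_N)=0$ (i.e.\@ $\lambda=1$, $\alpha_N\in\{\tfrac14,\tfrac34\}$), where the factor $1-\lambda$ vanishes so the numerator of $R$ no longer obviously survives; this boundary case should be checked separately or excluded.
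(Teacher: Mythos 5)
Your proposal takes exactly the route of the paper's own (two-line) proof: verify the four hypotheses of Lemma \ref{multiplicity}-\ref{eq:SD7}, namely $\lambda\notin\sigma(D)$ (from $\mathcal{D}(\beta_N,\lambda)\neq 0$), $\phi(\lambda)=0$ (from $\Psi(\lambda)=0$), $\lim_{x\to\lambda}[R(x)]^{-1}=0$, and boundedness of $(\lambda-x)[\phi(x)]^{-1}$ near $\lambda$. The paper simply asserts the last two; your additional details are correct and fill in what is left implicit. In particular, the computation that the Case III relation $3\alpha_N+\beta_N\equiv\tfrac12\pmod 1$ forces $A(\alpha_N,\beta_N,\lambda)=0$ (indeed, at $\lambda=1+\tfrac12\cos(2\pi\alpha_N)$ one finds $A(\lambda)=2\cos^2(2\pi\alpha_N)-1+\cos(2\pi(\alpha_N+\beta_N))$, and the flux relation gives $\cos(2\pi(\alpha_N+\beta_N))=1-2\cos^2(2\pi\alpha_N)$) checks out, as does the simple-zero argument for $\Psi$: a double real zero would make $\Psi(x)=(x-\lambda)^2$ and hence $\mathbb{R}$-valued, contradicting Proposition \ref{prop:Rvalued} since Case III has $\alpha_N,\beta_N\notin\{0,\tfrac12\}$.

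Your closing caution about the sub-locus $\cos(2\pi\alpha_N)=0$ is not a technicality to wave away: it is a genuine degenerate case that the paper's proof silently misses. There $\alpha_N\in\{\tfrac14,\tfrac34\}$ forces $\beta_N\in\{\tfrac34,\tfrac14\}$ and $\lambda=1$, and one computes $\mathcal{D}(\beta_N,1)=-\tfrac{1}{32}\cos(2\pi\beta_N)=0$, so the final clause of Proposition \ref{exceptionalset2} (that the real zero of $\Psi$ is not a zero of $\mathcal{D}$) in fact fails at this corner and $\lambda\in\sigma(D)$ after all. Worse for the $R$-hypothesis: at $(\alpha_N,\beta_N)=(\tfrac14,\tfrac34)$ one has $A(x)=16(x-1)^2$ and $\Psi(x)=(1-x)(1-x-\tfrac{3i}{4})$, so the numerator of $R(x)-1$ is $O((x-1)^2)$ while $16|\Psi(x)|\asymp 12|x-1|$, whence $R(x)\to 1$, a finite limit; Lemma \ref{multiplicity}-\ref{eq:SD7} is therefore inapplicable for two independent reasons, and one would instead need a $\sigma(D)$-case of Lemma \ref{multiplicity} together with the sign-flip mechanism of \eqref{eq:limitagree}. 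Since the published proof has the identical hole, your argument matches the paper on all of Case III except the two flux pairs $(\tfrac14,\tfrac34)$ and $(\tfrac34,\tfrac14)$, which both treatments would need to handle separately.
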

\begin{proof}
In this case $\Psi(\lambda)=0$ and $\mathcal{D}(\lambda)\neq 0$, so $\phi(\lambda)=0$, $\lim_{\mathbb{R}\ni x\to\lambda}[R(x)]^{-1}=0$, and $(\lambda-x)[\phi(x)]^{-1}$ stays bounded as $x\to\lambda$.
Thus we are in the setting of Lemma \ref{multiplicity}-\ref{eq:SD7}. 
\end{proof}

Now we come to the subtler case.
Since $\Psi(x) = (x-\lambda)(x-a)$ for some $a\in \mathbb{C}$, $a\neq \lambda$, it follows that
\[
\lim_{x\uparrow \lambda} e^{i({\rm arg} \Psi(x))} = 
-\lim_{x\downarrow \lambda} e^{i({\rm arg} \Psi(x))},
\]
and this implies that the connection $\Omega(x)$ in the reduced Laplacian $\mathcal{L}^\Omega_{N-1}$ differs by an overall sign when $x\uparrow \lambda$ compared to when $x\downarrow \lambda$.
Precisely we have the identity
\begin{align}
\label{eq:limitagree}
\lim_{x\uparrow \lambda} {\rm sgn}(\lambda-x)(\mathcal{L}^\Omega_{N-1}-1) = \lim_{x\downarrow \lambda} {\rm sgn}(\lambda-x)(\mathcal{L}^\Omega_{N-1}-1).
\end{align}
With this in mind we can now state and prove the last remaining case.


\begin{proposition}
\label{prop:32}
In \underline{Case II}:
\begin{enumerate}[wide, label=(II.\arabic*)]
\item \label{II1} If \underline{either} $\alpha_N=0$, $\beta_N \notin \{0,\frac{1}{2}\}$, and $\lambda=\frac{3}{2}$, 
\underline{or} $\alpha_N=\frac{1}{2}$, $\beta_N\notin \{0,\frac{1}{2}\}$, and $\lambda=\frac{1}{2}$,
then
\begin{align*}
\label{eq:II1}
E_\lambda(\mathcal{L}^\omega_N) =  \left(P^*_\parallel - P^*_\perp (D-\lambda)^{-1} C\right)
\left(\lim_{\mathbb{R}\ni x\to \lambda}\frac{1}{\chi_0(x)}(\mathcal{L}^\Omega_{N-1}-R(x))^{-1} \right)
\left(P_\parallel - B(D-\lambda)^{-1} P_\perp\right),
\end{align*}
where $\displaystyle \chi_0(x) = \frac{4(\lambda-x) \mathcal{D}(x)}{|\Psi(x)|}$.
In particular,
$
{\rm mult}(\mathcal{L}^\omega_N, \lambda) = {\rm dim}_{N-1} = \frac{3^N+3}{2}
$.
\item \label{II2} If $\lambda$ is a double zero of $\mathcal{D}$---that is, \underline{either} $\beta_N=0$, $\alpha_N\notin \{0,\frac{1}{2}\}$, and $\lambda =\frac{5}{4}$, 
\underline{or} $\beta_N=\frac{1}{2}$, $\alpha_N\notin \{0,\frac{1}{2}\}$, and $\lambda=\frac{3}{4}$
---
then the following dichotomy holds.

On the one hand, if $\alpha_N \in \{\frac{1}{6},\frac{5}{6}\}$ in the case $\beta_N=0$ and $\lambda=\frac{5}{4}$, or $\alpha_N\in \{\frac{1}{3}, \frac{2}{3}\}$ in the case $\beta_N=\frac{1}{2}$ and $\lambda=\frac{3}{4}$, then
\begin{equation}
\label{eq:II2A}
\begin{aligned}
E&_\lambda(\mathcal{L}_N^\omega) = 
P^*_\perp E_\lambda(D) P_\perp \\
&+P_\perp^* E_\lambda(D) C 
\left(\lim_{\mathbb{R}\ni x\to \lambda} \frac{1}{\psi_0(x)}\left(\mathcal{L}^\Omega_{N-1}- R(x)\right)^{-1} \right)
\left(I_{\parallel} - E_{R(\lambda)}(\mathcal{L}^\Omega_{N-1})\right) 
BE_\lambda(D) P_\perp\\
&+ \left(P_{\parallel}^*-P_{\perp}^*(D-\lambda)^{-1}C\right) \left(\lim_{\mathbb{R} \ni x\to\lambda}\frac{1}{\phi(x)}\frac{\lambda-x}{R(\lambda)-R(x)}
 E_{R(x)}(\mathcal{L}^\Omega_{N-1})\right)
\left(P_{\parallel}-B(D-\lambda)^{-1}P_{\perp}\right),
\end{aligned}
\end{equation}
where $\displaystyle \psi_0(x)= \frac{(\lambda-x)|\Psi(x)|}{4\mathcal{D}(x)}$.
In general,
$$\mult(\mathcal{L}^\omega_N,\lambda) = 2\cdot 3^{N-1} - {\rm dim}_{N-1} + 2\mult(\mathcal{L}^\Omega_{N-1}, R(\lambda)) = \frac{3^{N-1}-3}{2} + 2\mult(\mathcal{L}^\Omega_{N-1}, R(\lambda)).
$$
That said, if none of the corresponding eigenfunctions vanishes on $V_{N-1}$, then the first two terms on the RHS of \eqref{eq:II2A} vanish, and $\mult(\mathcal{L}^\omega_N,\lambda) = \mult(\mathcal{L}^\Omega_{N-1}, R(\lambda))$.

On the other hand, for all other scenarios
\begin{equation*}
\label{eq:II2}
\begin{aligned}
E_\lambda&(\mathcal{L}_N^\omega) = 
P^*_\perp E_\lambda(D) P_\perp \\
&+ 
P_\perp^* E_\lambda(D) C 
\left(\lim_{\mathbb{R}\ni x\to \lambda} \frac{1}{\psi_0(x)}\left(\mathcal{L}^\Omega_{N-1}- R(x)\right)^{-1} \right)
\left(I_{\parallel} - E_{R(\lambda)}(\mathcal{L}^\Omega_{N-1})\right) 
BE_\lambda(D) P_\perp,
\end{aligned}
\end{equation*}
In particular, $E_\lambda(\mathcal{L}_N^\omega) (P^*_\perp E_\lambda(D) P_\perp) = E_\lambda(\mathcal{L}_N^\omega)$,
$$
{\rm mult}(\mathcal{L}^\omega_N, \lambda) = 
2 \cdot 3^{N-1} - {\rm dim}_{N-1} + {\rm mult}\left(\mathcal{L}^\Omega_{N-1}, R(\lambda)\right) = \frac{3^{N-1}-3}{2} +{\rm mult}\left(\mathcal{L}^\Omega_{N-1}, R(\lambda)\right),
$$
and the corresponding eigenfunction vanishes on $V_{N-1}$.
\end{enumerate}
\end{proposition}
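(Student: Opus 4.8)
The plan is to feed each listed scenario into the appropriate case of Lemma \ref{multiplicity}, the only genuinely new feature being that, because $\lambda$ is a zero of $\Psi$, the prefactors $\phi$, $R$ and the twisted connection $\Omega$ all acquire a sign discontinuity at $\lambda$: as $x$ crosses $\lambda$, ${\rm arg}\,\Psi$ jumps by $\pi$, so $\theta$ jumps by $\tfrac12$ and $\Omega$ flips sign. The device that tames this is the identity \eqref{eq:limitagree}, namely that ${\rm sgn}(\lambda-x)(\mathcal{L}^\Omega_{N-1}-1)$ extends continuously through $\lambda$; consequently any combined limit in which an odd power of ${\rm sgn}(\lambda-x)$ multiplies the reduced resolvent or eigenprojector is well defined even though neither factor is. For this reason I would keep $(\mathcal{L}^\Omega_{N-1}-R(x))^{-1}$ and $E_{R(x)}(\mathcal{L}^\Omega_{N-1})$ inside the limit, rather than replacing them by their (ambiguous) values at $\lambda$.

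First I would record the vanishing orders that drive the case split. By Lemma \ref{lem:multzero} and Table \ref{tab:freepara}, in \ref{II1} one has a simple zero $\Psi(\lambda)=0$ with $\mathcal{D}(\lambda)\neq 0$, while in \ref{II2} one has a simple zero $\Psi(\lambda)=0$ together with a double zero $\mathcal{D}(\lambda)=0$, so $\mult(D,\lambda)=2\cdot 3^{N-1}$. A short computation with \eqref{eq:A}, \eqref{eq:D} shows that in every listed scenario $F(\lambda):=A(\lambda)-64\mathcal{D}(\lambda)(1-\lambda)=0$; hence $R(x)=1+F(x)/(16|\Psi(x)|)$ stays bounded as $x\to\lambda$, admitting one-sided limits $R(\lambda^\pm)=1\pm\delta$ interchanged by the reflection $z\mapsto 2-z$ about $1$, which is precisely the reflection encoded by \eqref{eq:limitagree} (it forces $\sigma(\mathcal{L}^\Omega_{N-1})$ on the two sides of $\lambda$ to be reflections of one another about $1$).

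For \ref{II1} I would run the flat-band argument of Lemma \ref{multiplicity}-\ref{eq:SD2}: since $\lambda\notin\sigma(D)$, the perpendicular term $P_\perp^*(\lambda-x)(D-x)^{-1}P_\perp$ in \eqref{eq:P0} vanishes, and the surviving term converges, via the ${\rm sgn}$-compensation above, to the stated eigenprojector, its scalar prefactor governed by the finite, nonzero-magnitude limit built from $\chi_0(x)$. The rank of this projector is the full parallel dimension, so $\mult(\mathcal{L}^\omega_N,\lambda)=|V_\parallel|=\dim_{N-1}=\tfrac{3^N+3}{2}$; this is the magnetic analogue of the $\lambda=\tfrac32$ flat band of the $(0,0)$ Laplacian. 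For \ref{II2}, since $\lambda\in\sigma(D)$ and $\lim_{x\to\lambda}\phi(x)^{-1}=0$, the perpendicular term contributes $P_\perp^* E_\lambda(D)P_\perp$ and we sit between Lemma \ref{multiplicity}-\ref{eq:SD3} and Lemma \ref{multiplicity}-\ref{eq:SD6}. The dichotomy is decided by whether the cross term carrying $E_{R(x)}(\mathcal{L}^\Omega_{N-1})$ survives, i.e. whether $\lim_{x\to\lambda}\phi(x)^{-1}(\lambda-x)/(R(\lambda)-R(x))\neq 0$. Feeding in $\mathcal{D}(x)=-(x-\lambda)^2(x-c)$ and the simple zero of $\Psi$, this limit vanishes for generic $\alpha$ (placing us in \ref{eq:SD3}: two terms, $\mult(\mathcal{L}^\omega_N,\lambda)=\tfrac{3^{N-1}-3}{2}+\mult(\mathcal{L}^\Omega_{N-1},R(\lambda))$), whereas I expect it to be nonzero exactly at $\alpha\in\{\tfrac16,\tfrac56\}$ (for $\beta=0$) and $\alpha\in\{\tfrac13,\tfrac23\}$ (for $\beta=\tfrac12$), which are precisely the points where the Case II lines meet the Case III locus $3\alpha+\beta\equiv\tfrac12$ of Proposition \ref{exceptionalset2}; there we land in \ref{eq:SD6}, yielding the three-term projector \eqref{eq:II2A} and $\mult(\mathcal{L}^\omega_N,\lambda)=\tfrac{3^{N-1}-3}{2}+2\,\mult(\mathcal{L}^\Omega_{N-1},R(\lambda))$, which collapses to $\mult(\mathcal{L}^\Omega_{N-1},R(\lambda))$ when no eigenfunction vanishes on $V_{N-1}$. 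In each sub-case the multiplicity bookkeeping is the rank identity from the relevant part of Lemma \ref{multiplicity} combined with $\mult(D,\lambda)=2\cdot 3^{N-1}$ and $|V_\parallel|=\dim_{N-1}$.

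The hard part will be \ref{II2}: establishing rigorously that the distinguishing limit is nonzero exactly at the special flux values. Because $\Psi(\lambda)=0$ forces $R$ to jump, one cannot simply differentiate as in Proposition \ref{prop:31}-\ref{G2}; instead one must track the one-sided limits $R(\lambda^\pm)$ against $\sigma(\mathcal{L}^\Omega_{N-1})$ through the reflection symmetry of \eqref{eq:limitagree}, and verify that the ${\rm sgn}$-compensation leaves a surviving residue only on the Case II $\cap$ Case III locus. A secondary technical point is confirming $\mathcal{D}(\lambda)\neq 0$ in \ref{II1} (for instance $\mathcal{D}(\beta,\tfrac32)=-\tfrac{1}{32}(1+\cos 2\pi\beta)\neq 0$ when $\beta\neq\tfrac12$), so that $\lambda\notin\sigma(D)$ genuinely holds and the flat-band case applies.
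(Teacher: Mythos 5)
Your proposal follows essentially the same route as the paper's proof: the same reduction to Lemma \ref{multiplicity}-\ref{eq:SD2} for \ref{II1} and to the \ref{eq:SD3}/\ref{eq:SD6} dichotomy for \ref{II2}, the same use of the sign-compensation identity \eqref{eq:limitagree} (keeping $(\mathcal{L}^\Omega_{N-1}-R(x))^{-1}$ and $E_{R(x)}(\mathcal{L}^\Omega_{N-1})$ inside the limit rather than evaluating the discontinuous $R$, $\Omega$ at $\lambda$), and the same observation that $R(x)-1={\rm sgn}(\lambda-x)\cdot(\text{bounded})$, which the paper realizes through explicit formulas for $R(0,\beta,x)$, $R(\frac12,\beta,x)$, $R(\alpha,0,x)$, $R(\alpha,\frac12,x)$ and the function $\mathfrak{F}$. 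Your supporting checks are correct where you carry them out: $\mathcal{D}(\beta,\frac32)=-\frac{1}{32}(1+\cos 2\pi\beta)\neq 0$ for $\beta\neq\frac12$, $\mult(D,\lambda)=2\cdot 3^{N-1}$ for the double zero, and the rank bookkeeping matching \ref{eq:SD2}, \ref{eq:SD3}, \ref{eq:SD6}.

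The one place you stop short of a proof is the crux of \ref{II2}: you only \emph{conjecture} that the distinguishing limit is nonzero exactly at $\alpha\in\{\frac16,\frac56\}$ (for $\beta=0$) and $\alpha\in\{\frac13,\frac23\}$ (for $\beta=\frac12$), justified by the observation that these are the intersections of the Case II lines with the Case III locus $3\alpha+\beta\equiv\frac12$. That heuristic is suggestive and in fact consistent---at precisely those flux values the Case III zero $1+\frac12\cos(2\pi\alpha)$ of Proposition \ref{exceptionalset2} coincides with $\lambda$ ($\frac54$ resp.\@ $\frac34$)---but it is not an argument: the dichotomy is a statement about the vanishing order of a specific scalar function at $\lambda$, not about where exceptional loci meet in parameter space. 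The paper closes this by factoring the relevant quantity as $\frac{4(x-r_3)}{|x-a|}\cdot\frac{(\lambda-x)^2}{\mathfrak{F}(\lambda)-\mathfrak{F}(x)}\cdot(\mathfrak{F}(\lambda)-\mathfrak{F}(x))\left({\rm sgn}(\lambda-x)(\mathcal{L}^\Omega_{N-1}-1)-\mathfrak{F}(x)\right)^{-1}$, reducing everything to whether $\mathfrak{F}(\lambda)-\mathfrak{F}(\cdot)$ has a multiple zero at $\lambda$ (the limit \eqref{eq:ratio2}), and then verifying by direct computation from the explicit $\mathfrak{F}$ that this happens exactly at the listed $\alpha$. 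To complete your proposal you would need to perform that computation (your criterion as written, involving ``$R(\lambda)$,'' is not yet well posed since $R$ jumps at $\lambda$; the replacement of $R(\lambda)-R(x)$ by $\mathfrak{F}(\lambda)-\mathfrak{F}(x)$ after extracting the sign is exactly what makes it meaningful). Everything else in your outline is the paper's argument.
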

\begin{proof}
\ref{II1}: We have $\Psi(\lambda)=0$, $\mathcal{D}(\lambda)\neq 0$, and thus $\phi(\lambda)=0$.
Nominally this would fall under the scenario of Lemma \ref{multiplicity}-\ref{eq:SD2}, but we need to address the connection sign change at $\lambda$.
First we carry out a tedious but elementary computation to get
\[
\begin{aligned}
R(0,\beta,x) = 1 + \frac{x-\frac{3}{2}}{|x-\frac{3}{2}|}\frac{-32 x^3 + 80x^2 - 58x + 11 - \cos(2\pi\beta)}{2 |4x-4-e^{-2\pi i\beta}|}&, \text{  if } \alpha=0,~ \beta\notin \left\{0,\frac{1}{2}\right\},~ \lambda=\frac{3}{2};\\
R\left(\frac{1}{2},\beta, x\right) = 1 + \frac{x-\frac{1}{2}}{|x-\frac{1}{2}|} \frac{-32x^3+112 x^2-122x + 41 - \cos(2\pi\beta)}{2|4x-4-e^{-2\pi i \beta}|}&, \text{  if } \alpha=\frac{1}{2},~\beta\notin\left\{0,\frac{1}{2}\right\},~\lambda=\frac{1}{2}.
\end{aligned}
\]
In either case we find
\begin{align}
\label{eq:R-1}
R(x)-1 = \frac{\lambda-x}{|\lambda-x|} \mathfrak{F}(x)
\end{align}
where $\mathfrak{F}$ is bounded in an $\mathbb{R}$-neighborhood of $\lambda$.
Consequently,
\begin{align*}
&\frac{\lambda-x}{\phi(x)}(\mathcal{L}^\Omega_{N-1} - R(x))^{-1} = 
4\mathcal{D}(x)\frac{\lambda-x}{|\Psi(x)|}\left((\mathcal{L}^\Omega_{N-1}-1) - (R(x)-1)\right)^{-1}\\
&= \frac{4\mathcal{D}(x)}{|x-a|}\frac{\lambda-x}{|\lambda-x|}\left((\mathcal{L}^\Omega_{N-1}-1) - \frac{\lambda-x}{|\lambda-x|} \mathfrak{F}(x)\right)^{-1}
= \frac{4\mathcal{D}(x)}{|x-a|}\left({\rm sgn}(\lambda-x)(\mathcal{L}^\Omega_{N-1}-1) - \mathfrak{F}(x)\right)^{-1},
\end{align*} 
which has a well-defined nonzero limit as $\mathbb{R}\ni x\to\lambda$ by \eqref{eq:limitagree}.

\ref{II2}: Since $\Psi(\lambda)=0$, and $\lambda$ is a double zero of $\mathcal{D}$, we have $\lim_{\mathbb{R}\ni x\to\lambda}[\phi(x)]^{-1} =0$.
This suggests that we are in the setting of either Lemma \ref{multiplicity}-\ref{eq:SD3} or Lemma \ref{multiplicity}-\ref{eq:SD6}, though again we need to account for the connection sign change at $\lambda$.
A tedious but elementary computation shows that
\[
\begin{aligned}
R(\alpha,0,x) = 1+ \frac{x-\frac{5}{4}}{|x-\frac{5}{4}|} \frac{(4x-3-\cos(2\pi\alpha)) + 2(2x-1)(4x-5)(1-x)}{|4x-3-e^{-4\pi i \alpha} - 2e^{-2\pi i \alpha}|},& \text{  if } \beta=0,~\alpha\notin\left\{0,\frac{1}{2}\right\},~\lambda=\frac{5}{4};\\
R\left(\alpha, \frac{1}{2},x\right) = 1 + \frac{x-\frac{3}{4}}{|x-\frac{3}{4}|}\frac{(4x-5-\cos(2\pi \alpha)) + 2(2x-3)(4x-3)(1-x)}{|4x-5+e^{-4\pi i \alpha} - 2 e^{-2\pi i \alpha}|}, &\text{  if } \beta=\frac{1}{2},~\alpha\notin \left\{0,\frac{1}{2}\right\},~\lambda=\frac{3}{4}.
\end{aligned}
\]
So once again $R(x)-1$ has the form \eqref{eq:R-1}.
Consequently, the limit  as $\mathbb{R}\ni x\to\lambda$ of
\[
\frac{1}{\phi(x)(\lambda-x)} (\mathcal{L}^\Omega_{N-1}-R(x))^{-1} = 
\frac{4}{|x-a|}\frac{\mathcal{D}(x)}{(\lambda-x)^2} \left({\rm sgn}(\lambda-x) (\mathcal{L}^\Omega_{N-1}-1) -\mathfrak{F}(x)\right)^{-1}
\]
exists and is nonzero on the image of $I_\parallel - E_{R(\lambda)}(\mathcal{L}^\Omega_{N-1})$.
The other quantity to analyze is
\begin{align*}
\frac{\lambda-x}{\phi(x)}&(\mathcal{L}^\Omega_{N-1}-R(x))^{-1}
= \frac{4\mathcal{D}(x)}{|x-a|}
({\rm sgn}(\lambda-x)(\mathcal{L}^\Omega_{N-1}-1) -\mathfrak{F}(x))^{-1}
\\
&=\frac{4(x-r_3)}{|x-a|} \frac{(\lambda-x)^2}{\mathfrak{F}(\lambda)-\mathfrak{F}(x)} (\mathfrak{F}(\lambda)-\mathfrak{F}(x))({\rm sgn}(\lambda-x)(\mathcal{L}^\Omega_{N-1}-1) -\mathfrak{F}(x))^{-1},
\end{align*}
where $r_3$ is the third zero of $\mathcal{D}$.
Since this term acts on the image of $E_{R(\lambda)}(\mathcal{L}^\Omega_{N-1})$, it remains to determine whether
\begin{align}
\label{eq:ratio2}
\lim_{\mathbb{R}\ni x\to\lambda}\frac{(\lambda-x)^2}{\mathfrak{F}(\lambda)-\mathfrak{F}(x)}
\end{align}
is nonzero, \emph{i.e.,} whether $\mathfrak{F}(\lambda)-\mathfrak{F}(\cdot)$ has a multiple zero at $\lambda$.
A direct computation shows that the limit \eqref{eq:ratio2} is nonzero iff: $\alpha \in \left\{\frac{1}{6}, \frac{5}{6}\right\}$ in the case $\beta=0$ and $\lambda=\frac{5}{4}$; or $\alpha \in \left\{\frac{1}{3}, \frac{2}{3}\right\}$ in the case $\beta=\frac{1}{2}$ and $\lambda=\frac{3}{4}$.
In these scenarios we are in the setting of Lemma \ref{multiplicity}-\ref{eq:SD6}; otherwise, Lemma \ref{multiplicity}-\ref{eq:SD3}.
\end{proof}

We now complete the proofs of the main results stated in \S\ref{sec:intro}.

\begin{proof}[Proof of Theorem \ref{thm:2}]
Combine Propositions \ref{prop:31}, \ref{prop:33}, and \ref{prop:32} to obtain \eqref{eq:spec2}.
\end{proof}

\begin{proof}[Proof of Proposition \ref{prop:zerospec}]
Proposition \ref{prop:31}-\ref{G2} and Proposition \ref{prop:32}-\ref{II2} implies Item \eqref{p1} and Item \eqref{p2}, respectively.
\end{proof}

\begin{proof}[Proof of Corollary \ref{cor:supspec}]
By assumption there is no $\lambda$ which is a double zero of $\mathcal{D}(\beta,\cdot)$.
Incorporating Proposition \ref{prop:zerospec} into Theorem \ref{thm:2} yields \eqref{eq:spec2ineq}. 
Now recall from Proposition \ref{exceptionalset2} that if $\alpha, \beta\notin \{0,\frac{1}{2}\}$, then $\Psi(\alpha,\beta,\cdot)$ has a $\mathbb{R}$-valued zero iff $3\alpha+\beta = \frac{1}{2} \pmod 1$. By Proposition \ref{prop:33} this zero is not in the spectrum. This allows us to deduce \eqref{eq:spec2ineqsimple} from \eqref{eq:spec2} and \eqref{eq:spec2ineq}.
\end{proof}

\begin{proof}[Proof of Theorem \ref{thm:specinf}]
Recall that for a normal operator $T$ on a Hilbert space $H$, the spectral radius of $T$ is equal to the operator norm of $T$.
This applies to the self-adjoint operator $\mathcal{L}^\omega_\infty$ on $L^2(V(G_\infty),\deg_{G_\infty})$.
Since $G_\infty$ is a bounded degree graph, and the $\omega_{xy}$ are unit complex numbers, it is direct to verify that the operator norm of $\mathcal{L}^\omega_\infty$ is bounded uniformly for all choices of $\omega$. 

We now prove that $S^\infty_1(\alpha,\beta) \subset \mathcal{K}(\mathcal{U}) \times ((\alpha,\beta)\times \mathbb{R})$.
Suppose $z\in \sigma(\mathcal{L}^{(\alpha,\beta)}_\infty) \setminus \mathcal{E}(\alpha,\beta)$.
By Theorem \ref{thm:2}, we have that $R(\alpha,\beta,z) \in \sigma(\mathcal{L}_\infty^{(\alpha_\downarrow(\alpha,\beta,z), \beta_\downarrow(\alpha,\beta,z))})$, \emph{i.e.,} $\mathcal{U}(\alpha,\beta,z) \in \mathbb{T}^2 \times \bigcup_{(\alpha',\beta')\in \mathbb{T}^2} \sigma(\mathcal{L}_\infty^{(\alpha',\beta')})$.
Iterating this forward, we deduce that $\bigcup_{k=0}^\infty \mathcal{U}^{\circ k}(\alpha,\beta,z) \subset \mathbb{T}^2 \times \bigcup_{(\alpha',\beta')\in \mathbb{T}^2} \sigma(\mathcal{L}_\infty^{(\alpha', \beta')})$.
Since the latter set is a bounded subset of $\mathbb{T}^2 \times \mathbb{C}$, we conclude that $\bigcup_{k=0}^\infty \mathcal{U}^{\circ k}(\alpha,\beta,z)$ is bounded, whence $(\alpha,\beta,z) \in \mathcal{K}(\mathcal{U})$.
\end{proof}


\section{Magnetic Laplacian determinants and asymptotic complexities}
\label{sec:det}

\subsection{Magnetic Laplacian determinants and cycle-rooted spanning forests} 
\label{sec:LapDet}

Using Theorem \ref{thm:1} we can compute the magnetic Laplacian determinant in the case $\alpha,\beta\in \{0,\frac{1}{2}\}$.
Recall that the determinant $\det$ is the product of all eigenvalues.
If $0$ is an eigenvalue, then we define ${\rm det}'$ to be the product of all nonzero eigenvalues.

The classic Kirchhoff's \textbf{matrix-tree theorem} states that on a finite graph $G$, the number of spanning trees $\tau(G)$ on $G$ equals $\frac{1}{|V(G)|}{\rm det}'(\Delta_G)$, or equivalently, the cofactor of $\Delta_G$ obtained by removing any one row and any one column (up to an overall sign).
Since we use the probabilistic Laplacian $\mathcal{L}_G = D_G^{-1}\Delta_G$, it is useful to know that
\begin{align}
\label{eq:matrixtree}
\tau(G) = \psi(G) {\rm det}'(\mathcal{L}_G), \quad\text{ where  } \psi(G) = \frac{\left(\prod_{v\in V(G)} \deg(v)\right)}{\left(\sum_{v\in V(G)} \deg(v)\right)}.
\end{align}
This follows from matching the coefficient of $t$ in the identity $\det(\mathcal{L}_G + tI) = (\det D_G)^{-1} \det(\Delta_G +t D_G)$ using the cofactor expansion, and the aforementioned matrix-tree theorem.

Enumeration of spanning trees on $SG$ has already been studied.
Set
\[
\psi(G_N):=\frac{\left(\prod_{v\in V_N} \deg(v)\right)}{\left(\sum_{v_\in V_N} \deg(v)\right)} =\frac{1}{2}\frac{2^{3^{N+1}}}{3^{N+1}}.
\]
It was shown in \cites{CCY, TeuflWagner} via a combinatorial approach, and in \cite{AnemaTsougkas} via spectral decimation and \eqref{eq:matrixtree}, that
\begin{align}
\label{det1} \tau(G_N) = \psi(G_N) {\rm det}'(\mathcal{L}^{(0,0)}_N)= 2^{\frac{3^N}{2}-\frac{1}{2}} \cdot
3^{\frac{3^{N+1}}{4}+\frac{N}{2}+\frac{1}{4}} \cdot
5^{\frac{3^N}{4}-\frac{N}{2}-\frac{1}{4}}.
\end{align}
By placing a uniform probability measure on the set of all spanning trees, \emph{a.k.a.\@} considering \textbf{uniform spanning trees (USTs)}, we obtain a determinantal point process on the edge set with kernel $K=d{\sf G}d^*$, where ${\sf G}$ is the Green's function for random walks.
The matrix $K$ is known as the \emph{transfer impedance matrix} \cite{BurtonPemantle}.
For more properties of USTs on $SG$ and scaling limit questions, see \cite{STW14}.

Our next theorem gives the determinant formulae for the three magnetic Laplacians.
The normalization prefactor $\psi(G_N)$ is used for the same reason as described in \eqref{eq:matrixtree} above.

\begin{theorem}
\label{thm:3}
We have
\begin{equation}
\begin{aligned}
\det&(\mathcal{L}^{(\frac{1}{2},\frac{1}{2})}_N)= \frac{1}{\psi(G_N)}
\cdot 2^{\frac{3^N}{2}+\frac{3}{2}}
\cdot 3^{\frac{3^{N-1}}{2}-N-\frac{3}{2}}
\cdot 5^{\frac{3^{N-1}}{2}+\frac{3}{2}}\\
&\times
\left[
\prod_{k=0}^{N-2} \left(H(k)+\frac{1}{2}\right)^{\frac{3^{N-k-2}+3}{2}}
\right]
\left[
\prod_{k=0}^{N-3} \left(H(k)+\frac{5}{2}\right)^{\frac{3^{N-k-2}-1}{2}}
\right],\\
\end{aligned}
\end{equation}
where $H(0)=26.5$, and for $k\geq 1$, $H(k)=[H(k-1)]^2 - \frac{15}{4}$;
\begin{equation}
\begin{aligned}
\det&(\mathcal{L}^{(\frac{1}{2},0)}_N)= \frac{1}{\psi(G_N)}\cdot 2^{\frac{13}{6} 3^{N-1}-\frac{5}{2}}\cdot 3^{\frac{3^{N-2}}{2}-N-\frac{3}{2}} \cdot 5^{\frac{5}{2} 3^{N-2}-1}\cdot 7^{\frac{3^{N-1}}{2}+\frac{3}{2}}\cdot 17^{\frac{3^{N-2}}{2} + \frac{3}{2}} \\
&\times
\left[
\prod_{k=0}^{N-3} \left(\tilde{H}(k) +\frac{1}{2}\right)^{\frac{3^{N-k-3}+3}{2}}
\right]
\left[
\prod_{k=0}^{N-4} \left(\tilde{H}(k) +\frac{5}{2}\right)^{\frac{3^{N-k-3}-1}{2}}
\right],
\end{aligned}
\end{equation}
where $\tilde{H}(0)=302.5$, and for $k\geq 1$, $\tilde{H}(k)=[\tilde{H}(k-1)]^2 - \frac{15}{4}$;
and
\begin{equation}
\begin{aligned}
\det&(\mathcal{L}^{(0,\frac{1}{2})}_N)= \frac{1}{\psi(G_N)}\cdot
2^{\frac{13}{6}3^{N-1}-\frac{5}{2}} \cdot
3^{\frac{7}{3}3^{N-1} - N + 3}\cdot
7^{\frac{3^{N-2}}{2} - \frac{1}{2}}
\\
&\times
\left[
\prod_{k=0}^{N-3} \left(\hat{H}(k) +\frac{1}{2}\right)^{\frac{3^{N-k-3}+3}{2}}
\right]
\left[
\prod_{k=0}^{N-4} \left(\hat{H}(k) +\frac{5}{2}\right)^{\frac{3^{N-k-3}-1}{2}}
\right],
\end{aligned}
\end{equation}
where $\hat{H}(0)=86.5$, and for $k\geq 1$, $\hat{H}(k)=[\hat{H}(k-1)]^2 - \frac{15}{4}$.
\end{theorem}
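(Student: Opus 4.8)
The plan is to read off the full spectrum with multiplicities from Theorem \ref{thm:1} and evaluate $\det(\mathcal{L}^{(\alpha,\beta)}_N)=\prod_{\lambda\in\sigma}\lambda^{\mult(\mathcal{L}^{(\alpha,\beta)}_N,\lambda)}$ directly, which is legitimate because none of the three spectra contains $0$. I would split each product into two parts: the finitely many ``low-level'' eigenvalues that Theorem \ref{thm:1} lists explicitly as rational numbers (e.g.\ $\tfrac12,\tfrac34,\tfrac54,2$ for $(\tfrac12,\tfrac12)$; and additionally $1,\tfrac74$ together with the fibers $(R(\tfrac12,0,\cdot))^{-1}(\{\tfrac34,\tfrac54\})$ for $(\tfrac12,0)$), which contribute explicit powers of the small primes $2,3,5,7,17$; and the two infinite ``series'' families, whose product is the real work.

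The engine for the series is a single Vieta identity. Each decimation polynomial $R(\alpha,\beta,\cdot)$ in Case I is a quadratic in $\lambda$ with leading coefficient $-4$, so $R(\alpha,\beta,x)-y=-4\prod_i(x-x_i)$ over the two preimages $x_i$ of $y$, whence for any base point $p$,
\begin{equation*}
\prod_{x:\,R(\alpha,\beta,x)=y}(x-p)=\frac{R(\alpha,\beta,p)-y}{-4}.
\end{equation*}
Taking $p=0$ recovers the product of the preimages themselves. Peeling off the outer decimation layers one at a time (one such layer for $(\tfrac12,\tfrac12)$, namely $R(\tfrac12,\tfrac12,\cdot)$, and two for $(\tfrac12,0)$ and $(0,\tfrac12)$), this identity shows that the base point flows forward under the maps: after removing the layers it becomes $p=R(\tfrac12,\tfrac12,0)=-6$ for $(\tfrac12,\tfrac12)$, $p=R(\tfrac12,\tfrac12,R(\tfrac12,0,0))=R(\tfrac12,\tfrac12,-3)=-75$ for $(\tfrac12,0)$, and $p=R(\tfrac12,\tfrac12,R(0,\tfrac12,0))=R(\tfrac12,\tfrac12,-1)=-21$ for $(0,\tfrac12)$, leaving a product over the $R(0,0,\cdot)$-fibers $\prod_{y\in(R(0,0,\cdot))^{-k}(a)}(y-p)$ with $a\in\{\tfrac34,\tfrac54\}$.

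For this last product I would use the iterated Vieta identity $(R(0,0,\cdot))^{\circ k}(y)-a=L_k\prod_i(y-y_i)$, whose leading coefficient satisfies $L_k=-4L_{k-1}^2$, hence $L_k=-4^{\,2^k-1}$ for $k\geq1$ (and $L_0=1$), and then evaluate $(R(0,0,\cdot))^{\circ k}(p)$. The key simplification is that the affine conjugacy $u=\tfrac52-4x$ turns $R(0,0,\cdot)$ into $u\mapsto u^2-\tfrac{15}{4}$; consequently $H_p(k):=\tfrac52-4(R(0,0,\cdot))^{\circ k}(p)$ obeys exactly $H_p(k)=H_p(k-1)^2-\tfrac{15}{4}$ with $H_p(0)=\tfrac52-4p$ equal to $26.5,\,302.5,\,86.5$ at $p=-6,-75,-21$. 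These are precisely the seeds $H(0),\tilde H(0),\hat H(0)$ in the statement. Substituting $(R(0,0,\cdot))^{\circ k}(p)=\tfrac{5-2H_p(k)}{8}$ yields $(R(0,0,\cdot))^{\circ k}(p)-\tfrac34=-\tfrac14(H_p(k)+\tfrac12)$ and $(R(0,0,\cdot))^{\circ k}(p)-\tfrac54=-\tfrac14(H_p(k)+\tfrac52)$, producing the factors $(H(k)+\tfrac12)$ and $(H(k)+\tfrac52)$ (and their tilde/hat analogues) raised to the multiplicities $\tfrac{3^{N-k-2}+3}{2}$ and $\tfrac{3^{N-k-2}-1}{2}$ supplied by Theorem \ref{thm:1} (shifted by one further level of $3$-scaling in the $(\tfrac12,0)$ and $(0,\tfrac12)$ cases).

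The remaining, and main, obstacle is purely arithmetic: assembling the closed-form prefactor. All of the factors $\tfrac1{-4}$, the leading coefficients $L_k$, and the rational low-level eigenvalues contribute powers of $2,3,5,7,17$ that must be summed against the multiplicities. Concretely one must evaluate geometric-type sums such as $\sum_k 2^{k+1}\cdot\tfrac{3^{N-k-2}\pm 3}{2}$ to collect the exponent of $2$ arising from the $4^{-2^{k+1}}$ normalizations, and likewise track the exponents of $3$ (including those hidden in $\tfrac{9}{16}$-type fiber products for $(0,\tfrac12)$) and of $5,7,17$. I would organize this by writing each series level's contribution as $(H(k)+c)^{m}\cdot 2^{-e(k)}$, summing $e(k)\,m$ in closed form, and then folding the result together with the explicit prime powers and the normalizing constant $\psi(G_N)^{-1}=2^{1-3^{N+1}}3^{N+1}$. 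Careful handling of signs (the alternating sign of $L_k$ and the parity of $2^k$) and of the off-by-one product ranges ($k\in\{0,\dots,N-2\}$ versus $\{0,\dots,N-3\}$, and the further shifts in the latter two cases) is where the computation is most delicate; this is the step I expect to absorb essentially all of the effort, the dynamical input having already reduced the problem to bookkeeping.
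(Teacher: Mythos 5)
Your proposal is correct, and it reaches the paper's formulas along a route that is structurally parallel but mechanically different at the central step. The paper likewise starts from Theorem \ref{thm:1}, observes $0\notin\sigma$, and multiplies eigenvalues via Vieta; but it packages the series products into two induction lemmas (Lemmas \ref{lem:recur} and \ref{lem:recur2}), writing $F(n,\alpha)=\prod_{z\in P^{-1}(R^{-n}(\alpha))}z=c_{n,1}\alpha+c_{n,0}$, deriving quadratic recurrences for the coefficients $c_{n,0},c_{n,1}$, and only then changing variables to obtain $H(n)=[H(n-1)]^2-\tfrac{15}{4}$ with seeds $a_0b_2+\tfrac{b_1}{2}$ (resp.\ the $\tilde H(0)$ expression). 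You instead push the base point forward through the composed quadratics via $\prod_{x\in R^{-1}(y)}(x-p)=\frac{R(p)-y}{-4}$, so $p=0\mapsto Q(0)\mapsto P(Q(0))$ gives $-6,-75,-21$ in the three cases, and then identify $H(k)=\tfrac52-4\,(R(0,0,\cdot))^{\circ k}(p)$ through the affine conjugacy of $R(0,0,\cdot)$ to $u\mapsto u^2-\tfrac{15}{4}$. This is equivalent — your seeds $\tfrac52-4p=26.5,\,302.5,\,86.5$ coincide with the paper's $H(0),\tilde H(0),\hat H(0)$, since $a_0b_2+\tfrac{b_1}{2}=\tfrac52-4P(0)$ and $\tilde H(0)=b_2P(Q(0))+\tfrac{b_1}{2}$ — but conceptually tidier: it explains at once why the identical recurrence governs all three cases with only the seed changing, whereas the paper re-derives it separately for each composition depth. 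Both routes then face the same prime-power bookkeeping against the multiplicities and $\psi(G_N)^{-1}=2^{1-3^{N+1}}3^{N+1}$, which the paper also leaves largely implicit, so your deferral of that arithmetic is no worse than the published proof; your flagged delicate points (the sign of $L_k=-4^{2^k-1}$, the odd-degree fiber at $k=0$, the shifted product ranges) are exactly the right ones, and positivity of the determinant provides a global sign check. One caution if you compare against the printed proof: the paper's two displayed determinant computations for $(\tfrac12,0)$ and $(0,\tfrac12)$ have their labels interchanged relative to Theorem \ref{thm:1} (for instance the factor $\frac{3+5/4}{4}=\frac{17}{16}$, the source of the prime $17$, belongs with $(\tfrac12,0)$); your accounting agrees with the theorem statement, not with those mislabeled displays.
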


The proof of Theorem \ref{thm:3} is postponed till \S\ref{sec:pfthm3}.


There is an analog of the matrix-tree theorem for the magnetic Laplacian determinant, established by Forman \cite{Forman} and Kenyon \cite{kenyon}. 
To explain this, we recall some definitions from \cites{kenyon, KK17}, and refer the reader to these papers for more details.
A cycle-rooted tree, or \emph{unicycle}, is a tree plus an extra edge to form a single cycle.
A \textbf{cycle-rooted spanning forest (CRSF)} is a spanning forest whose connected components are unicycles.
See Figure \ref{fig:CRSF} for an illustration.

\begin{figure}
\centering
\includegraphics[width=0.3\textwidth]{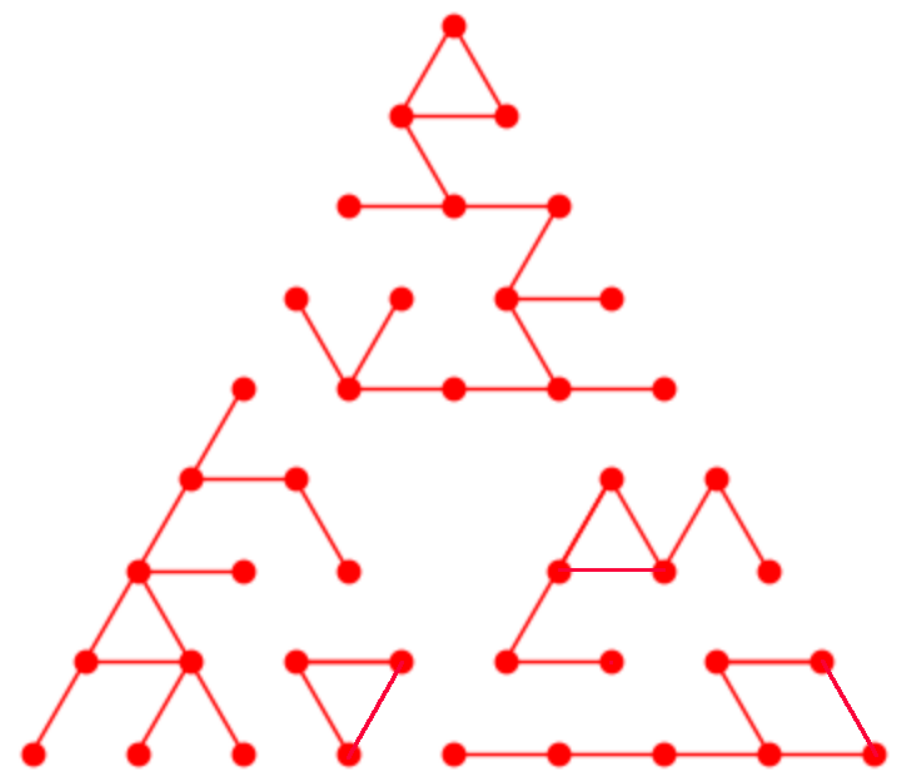}
\caption{An instance of a cycle-rooted spanning forest on the level-3 gasket graph, generated via the sampling algorithm of Kassel and Kenyon \cite{KK17}*{p.\@ 938} based on loop-erased random walks.
Image courtesy of Quan Vu.}
\label{fig:CRSF}
\end{figure}

Fix a connected graph $G$, a directed edge conductance function ${\bf c}$ on $\{\pm E\}$, and a line bundle connection $\omega$ on $G$.
We would like to assign a probability measure on the set of all CRSFs thereon.
Declare that each oriented CRSF (OCRSF) occurs with probability proportional to $\prod_{e\in \text{bushes}} {\bf c}(e) \prod_{\gamma\in{\rm cycles}} {\bf C}(\gamma)(1- \omega(\gamma))$, 
where the first product runs over all edges in the bushes (that is, not in the cycles),
the second product runs over all cycles,
${\bf C}(\gamma)$ is the product of the semiconductances\footnote{The semiconductance of an undirected edge with end vertices $x$ and $y$ is defined as $\frac{1}{2}({\bf c}_{xy}+{\bf c}_{yx})$.} along $\gamma$,
and $\omega(\gamma)$ is the holonomy of $\gamma$. 
The following proposition says that $\det(\mathcal{L}^\omega_{(G,{\bf c})})$ gives the partition function which makes the said CRSF measure a probability measure.

\begin{proposition}[Matrix-CRSF theorem \cite{kenyon}*{Theorem 6}]
\label{prop:MCRSFT}
Let $\mathcal{L}^\omega_{(G,{\bf c})}$ be the line bundle Laplacian \eqref{eq:LBLap}.
Then 
\begin{align}
\label{eq:MCRSFT}
\det\left(\mathcal{L}^\omega_{(G,{\bf c})}\right)= \sum_{{\rm OCRSFs}} \prod_{e\in {\rm bushes}} {\bf c}(e) \prod_{\gamma \in {\rm cycles}} {\bf C}(\gamma)\left(1-\omega(\gamma)\right).
\end{align}
\end{proposition}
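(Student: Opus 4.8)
The plan is to prove \eqref{eq:MCRSFT} by expanding $\det(\mathcal{L}^\omega_{(G,{\bf c})})$ through the Leibniz (permutation) formula and then reorganizing the surviving terms into oriented cycle-rooted spanning forests. Writing the matrix of $\mathcal{L}^\omega_{(G,{\bf c})}$ with diagonal entries $\mathcal{L}^\omega_{xx}=\sum_{y\sim x}{\bf c}_{xy}=:d_x$ and off-diagonal entries $\mathcal{L}^\omega_{xy}=-{\bf c}_{xy}\omega_{xy}$ (for $y\sim x$, and $0$ otherwise), the Leibniz expansion reads $\det\mathcal{L}^\omega=\sum_{\sigma}\operatorname{sgn}(\sigma)\prod_{x\in V}\mathcal{L}^\omega_{x,\sigma(x)}$, where only those $\sigma$ whose every nontrivial cycle traverses edges of $G$ survive. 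The first step is to expand each diagonal factor $d_x=\sum_{y\sim x}{\bf c}_{xy}$ as a sum over a choice of out-neighbor, thereby promoting every fixed point of $\sigma$ to a chosen oriented \emph{bush} edge $x\to y(x)$. After this expansion each vertex carries exactly one outgoing edge---either its image under a nontrivial $\sigma$-cycle or its chosen bush edge---so the data is encoded by a functional graph $F$ (out-degree one at every vertex) whose edges lie in $G$, together with a designation of which cycles of $F$ are realized as genuine cycles of $\sigma$.

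The key combinatorial observation is that every such out-degree-one functional graph decomposes uniquely into connected components, each of which is a unicycle; that is, $F$ is precisely an OCRSF, and conversely. I would then group the terms of the expansion according to the underlying OCRSF $F$. The bush vertices (those not lying on any cycle of $F$) are forced to be fixed points of $\sigma$, since bush vertices lie on no cycle of the functional graph, and they contribute $\prod_{e\in\text{bushes}}{\bf c}(e)$ with sign $+1$.

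The crux is the sign and holonomy bookkeeping on each cycle. For a fixed OCRSF $F$ with cycles $\gamma_1,\dots,\gamma_m$, each $\gamma_j$ (of length $k_j$) can be realized in exactly two ways: as a genuine $k_j$-cycle of $\sigma$, contributing $\operatorname{sgn}=(-1)^{k_j-1}$ and $\prod_{e\in\gamma_j}(-{\bf c}(e)\omega(e))=(-1)^{k_j}(\prod_{e\in\gamma_j}{\bf c}(e))\,\omega(\gamma_j)$, for a net factor $-(\prod_{e\in\gamma_j}{\bf c}(e))\,\omega(\gamma_j)$; or with all its vertices as fixed points of $\sigma$ whose bush choices happen to follow $\gamma_j$, contributing $+\prod_{e\in\gamma_j}{\bf c}(e)$. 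These sum to $(\prod_{e\in\gamma_j}{\bf c}(e))(1-\omega(\gamma_j))$. Because the permutation sign factorizes over the disjoint cycles and fixed points, summing over the two realizations of each $\gamma_j$ independently yields, for each $F$, the weight $\prod_{e\in\text{bushes}}{\bf c}(e)\,\prod_{j}(\prod_{e\in\gamma_j}{\bf c}(e))(1-\omega(\gamma_j))$, which is the summand on the right-hand side of \eqref{eq:MCRSFT}.

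I expect the main obstacle to be twofold: making the sign factorization airtight---in particular checking that the product-of-entries sign $(-1)^{k_j}$ and the permutation sign $(-1)^{k_j-1}$ combine correctly and independently across components---and reconciling the edge-conductance product $\prod_{e\in\gamma}{\bf c}(e)$ with the semiconductance product ${\bf C}(\gamma)$ of Kenyon's normalization. The latter is exactly where orientation enters: summing an \emph{unoriented} CRSF over the two orientations of each cycle produces $(1-\omega(\gamma))(1-\omega(\gamma^{-1}))=|1-\omega(\gamma)|^2$ and replaces $\prod_{e}{\bf c}(e)$ by the symmetric ${\bf C}(\gamma)$, whereas in the oriented formulation stated here each orientation is counted once, giving the single factor $(1-\omega(\gamma))$. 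An alternative, possibly cleaner, route is Cauchy--Binet applied to the div-grad factorization $\mathcal{L}^\omega=d^*d$ from \eqref{eq:LBLap}: the sum then ranges over $|V|$-element edge subsets of $E$, the relevant minor of $d$ vanishes unless each component of the chosen subset is a unicycle, and evaluating the nonvanishing minors on a single unicycle reproduces the factor $(1-\omega(\gamma))$. In that approach the crux migrates to the evaluation of the twisted incidence minor of one cycle, which is the computation $\det$ of a cyclically structured matrix equal to $1-\omega(\gamma)$ up to the conductance monomial.
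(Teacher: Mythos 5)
The first thing to note is that the paper contains no proof of this proposition: it is quoted from Kenyon \cite{kenyon}*{Theorem 6}, whose own argument is the Cauchy--Binet route you sketch at the end (expand $\det(d^*d)$ over $|V|$-element edge subsets; a minor vanishes unless every component of the subset is a unicycle, and the twisted incidence minor of a single cycle evaluates to $1-\omega(\gamma)$ up to the conductance monomial). So your ``alternative, possibly cleaner, route'' is in fact the literature's route, while your main route---Leibniz expansion, expanding each diagonal entry $d_x=\sum_{y\sim x}{\bf c}_{xy}$ to produce a functional graph, decomposing out-degree-one graphs into unicycles, and pairing each cycle's two realizations (genuine $\sigma$-cycle versus fixed points whose bush choices follow the cycle) to produce the factor $1-\omega(\gamma)$---is a correct, classical alternative in the spirit of Forman. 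Your sign bookkeeping is right: the net factor $(-1)^{k_j-1}\cdot(-1)^{k_j}=-1$ per realized cycle, and the permutation sign factorizes over disjoint cycles, so the realizations can be summed independently. One omission worth repairing: the functional-graph expansion also produces components whose ``cycle'' is a doubled edge $x\to y\to x$, which is not a cycle of any CRSF; these must be shown to drop out, and they do, because $\omega_{xy}\omega_{yx}=1$ forces the factor $1-\omega(\gamma)$ to vanish (equivalently, the transposition term exactly cancels the paired fixed-point term).

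The one genuine error is in your proposed reconciliation with the semiconductance normalization. Your Leibniz computation yields the \emph{directed} conductance product $\prod_{e\in\gamma}{\bf c}(e)$ along each oriented cycle, and that is the correct answer; the orientation-pairing you invoke does not convert it into ${\bf C}(\gamma)$. Summing the two orientations of a cycle gives $p\,(1-\omega(\gamma))+q\,(1-\omega(\gamma)^{-1})$, where $p$ and $q$ are the two directed products, and this differs from ${\bf C}(\gamma)\left(2-\omega(\gamma)-\omega(\gamma)^{-1}\right)$ whenever ${\bf c}$ is asymmetric: expanding $\prod_{e\in\gamma}\frac{1}{2}({\bf c}_{xy}+{\bf c}_{yx})$ produces $2^{|\gamma|}$ mixed-orientation terms, only two of which are $p$ and $q$. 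A concrete check: on a triangle with ${\bf c}_{01}={\bf c}_{12}={\bf c}_{20}=2$, the reversed conductances equal to $1$, and $\omega_{01}=\omega_{12}=\omega_{20}=e^{i\pi/3}$ (so $\omega(\gamma)=-1$), a direct computation gives $\det\mathcal{L}^\omega=18=8\cdot 2+1\cdot 2$, matching your directed-product formula, whereas the semiconductance reading gives $(3/2)^3\cdot 4=13.5$. So the obstacle you flagged resolves in favor of your formula: \eqref{eq:MCRSFT} holds with ${\bf C}(\gamma)$ understood as the product of the conductances of the directed edges along the oriented cycle $\gamma$ (as in Kenyon's asymmetric statement), and the semiconductance reading is valid only when ${\bf c}_{xy}={\bf c}_{yx}$ on cycle edges---precisely the symmetric situation of the paper's subsequent remark, where the two readings coincide. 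With that reading, and the two-cycle cancellation made explicit, your Leibniz proof is complete.
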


\begin{remark}
Note that if ${\bf c}_{xy}= {\bf c}_{yx}$ for all $xy\in E$, then \eqref{eq:MCRSFT} may be written as a sum over \emph{unoriented} CRSFs:
\[
\det\left(\mathcal{L}^\omega_{(G,{\bf c})}\right)= \sum_{{\rm CRSFs}} \prod_{e} {\bf c}(e) \prod_{\gamma \in {\rm cycles}} \left(2-\omega(\gamma)-\frac{1}{\omega(\gamma)}\right),
\]
where the first product is over all edges in the CRSF \cite{kenyon}*{Theorem 5}.
\end{remark}

Like the UST process, the CRSF process is also a determinantal point process on the edge set, with kernel $d\left(\mathcal{L}^\omega_{(G,{\bf c})}\right)^{-1} d^*$ \cite{kenyon}*{Theorem 2}. 
In \cite{KK17}*{p.\@ 938} Kassel and Kenyon gave an elementary sampling algorithm for CRSFs based on loop-erased random walks (LERWs) and cycle popping \`a la Wilson \cite{Wilson96}, valid for any holonomy $e^{2\pi i\gamma}$ with $\gamma \in [-\frac{1}{4},\frac{1}{4}]$.
Using this algorithm, in combination with facts from LERWs and Brownian loop soups \cite{LTF07}, they proved convergence to a loop measure on an oriented Riemannian surface from the CRSF processes on discretizations of the said surface \cite{KK17}*{Theorem 20}.


\subsection{Asymptotic complexity and a large deviations result}
\label{sec:entropy}

It is an open problem to study local properties and scaling limits of the CRSF measures on $SG$.
That said, we can use our results on the magnetic Laplacian determinant (Theorem \ref{thm:3}) to quantify the \textbf{asymptotic complexity} of the CRSF measures.

Let $G_\infty$ be an infinite connected graph which can be exhausted by a sequence of finite connected graphs $\{G_N\}_N$.
We define the asymptotic complexity associated with $\mathcal{L}^\omega_\infty$ on $G_\infty$ by
\begin{align}
\label{eq:ac}
\mathfrak{h}(G_\infty, \mathcal{L}^\omega_\infty) := \lim_{N\to\infty} \frac{\log \left(w(G_N) {\rm det} ' (\mathcal{L}^\omega_N)\right)}{|V_N|}
\end{align}
provided that the RHS limit exists.

The formula \eqref{eq:ac} is classical for the graph Laplacian, \emph{i.e.,} for the enumeration of spanning trees.
In \cites{Lyons, Lyons10} R.\@ Lyons introduced the notion of \textbf{tree entropy} on $G_\infty$, gave several equivalent formulations---one of which is the logarithm of a Fuglede-Kadison determinant \cite{FK} of the ``continuum'' Laplacian---and proved that his tree entropy equals \eqref{eq:ac}.\footnote{If $G_\infty$ has bounded degree, the proof in \cite{Lyons}*{Theorem 4.1} suffices. If $G_\infty$ has unbounded degree, then the proof proceeds according to \cite{Lyons10}*{Theorem 3.1}, which is based on von Neumann algebras.}\footnote{The limit of USTs on an infinite connected graph is a spanning forest. On $\mathbb{Z}^d$ the limit is a tree iff $d\leq 4$ \cites{Pemantle91}.}
For old and new results on tree entropy for various graphs, see \cites{AnemaTsougkas, Lyons, Lyons10, CTT}.
As an example, from \eqref{det1} it is direct to show that the tree entropy on $SG$ equals (\emph{cf.\@} \cite{AnemaTsougkas}*{Corollary 5.2})\footnote{
In \eqref{eq:TESG} the weights associated to the logarithmic factors are probability weights.
This is merely coincidental: for the graphical $(d-1)$-dimensional Sierpinski simplex, the tree entropy equals $\frac{d-2}{d}\log 2+\frac{d-2}{d-1}\log d + \frac{d-2}{d(d-1)}\log (d+2)$ \cite{CTT}*{Corollary 4.1}. 

More generally, the tree entropy of a unimodular random infinite connected weighted graph can take values in $[-\infty, \infty)$. 
For an example of a unimodular random graph with tree entropy equal to $-\infty$, see \cite{Lyons10}*{pp.\@ 308-309}.
}
\begin{align}
\label{eq:TESG}
\mathfrak{h}(SG, \mathcal{L}^{(0,0)}_\infty) = \frac{\log 2}{3} + \frac{\log 3}{2} + \frac{\log 5}{6} = 1.04859\dots.
\end{align}

Our next result gives the asymptotic complexity of each of the three magnetic Laplacians on $SG$.

\begin{corollary}[Asymptotic complexity of the CRSF measures]
\label{cor:ent}
\begin{align*}
\mathfrak{h}&(SG, \mathcal{L}^{(\frac{1}{2},\frac{1}{2})}_\infty) = \frac{\log 2}{3} + \frac{\log 3}{9} + \frac{\log 5}{9} \\
&+ \frac{2}{9} \cdot \frac{1}{3}\sum_{k=0}^\infty \left(\frac{2}{3}\right)^k \frac{\log(H(k)+\frac{1}{2})}{2^{k+1}} 
+ \frac{2}{9}  \cdot \frac{1}{3}\sum_{k=0}^\infty \left(\frac{2}{3}\right)^k \frac{\log(H(k)+\frac{5}{2})}{2^{k+1}},\\
\mathfrak{h}&(SG, \mathcal{L}^{(\frac{1}{2},0)}_\infty) = \frac{13}{27}\log 2 + \frac{\log 3}{27} + \frac{5}{27}\log 5 +\frac{\log 7}{9} + \frac{\log 17}{27} \\
&+ \frac{2}{27} \cdot \frac{1}{3}\sum_{k=0}^\infty \left(\frac{2}{3}\right)^k \frac{\log(\tilde{H}(k)+\frac{1}{2})}{2^{k+1}} 
+ \frac{2}{27}  \cdot \frac{1}{3}\sum_{k=0}^\infty \left(\frac{2}{3}\right)^k \frac{\log(\tilde{H}(k)+\frac{5}{2})}{2^{k+1}},\\
\mathfrak{h}&(SG, \mathcal{L}^{(0,\frac{1}{2})}_\infty) =
\frac{13}{27}\log 2+ \frac{14}{27} \log 3+ \frac{\log 7}{27} \\
   &+ \frac{2}{27} \cdot \frac{1}{3}\sum_{k=0}^\infty \left(\frac{2}{3}\right)^k \frac{\log(\hat{H}(k)+\frac{1}{2})}{2^{k+1}} 
+ \frac{2}{27}  \cdot \frac{1}{3}\sum_{k=0}^\infty \left(\frac{2}{3}\right)^k \frac{\log(\hat{H}(k)+\frac{5}{2})}{2^{k+1}},
\end{align*}
where $H(k)$, $\tilde{H}(k)$, and $\hat{H}(k)$ were defined in Theorem \ref{thm:3}.
\end{corollary}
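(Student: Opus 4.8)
The plan is to substitute the three closed-form determinant formulae of Theorem \ref{thm:3} into the defining limit \eqref{eq:ac} and perform a termwise asymptotic analysis as $N\to\infty$. Taking the normalization weight $w(G_N)=\psi(G_N)$ as in \eqref{eq:matrixtree}, the prefactor $\tfrac{1}{\psi(G_N)}$ that heads every formula in Theorem \ref{thm:3} cancels exactly, so that $w(G_N)\,{\rm det}'(\mathcal{L}^{(\alpha,\beta)}_N)$ equals just the product of prime powers times the two $H$-products (note that $0\notin\sigma(\mathcal{L}^{(\alpha,\beta)}_N)$ in each of these three cases, so ${\rm det}'=\det$). First I would take logarithms and divide by $|V_N|=\dim_N=\tfrac{3^{N+1}+3}{2}\sim\tfrac32\,3^N$, which splits \eqref{eq:ac} into two kinds of contributions to be treated separately: the prime-power exponents and the $H$-product exponents.

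For the prime-power factors, each exponent has the shape $c\,3^{N-j}+(\text{lower order})$, where the lower-order part is an affine function of $N$ (such as the $-N-\tfrac32$ in the exponent of $3$ for the $(\tfrac12,\tfrac12)$ case). Dividing by $|V_N|$, the affine remainder contributes $O(N\,3^{-N})\to 0$, while the leading term contributes the rational multiple $\tfrac{c\,3^{N-j}}{\frac32 3^N}\to\tfrac{2c}{3^{\,j+1}}$ of $\log p$. A direct check then reproduces every rational coefficient in the statement; for instance in the $(\tfrac12,\tfrac12)$ case the exponents $\tfrac{3^N}{2},\tfrac{3^{N-1}}{2},\tfrac{3^{N-1}}{2}$ of $2,3,5$ yield $\tfrac{\log 2}{3},\tfrac{\log 3}{9},\tfrac{\log 5}{9}$, and the $7,17$ factors in the remaining two cases are handled identically.

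Next I would treat the $H$-products. The factor $\bigl(H(k)+\tfrac12\bigr)^{(3^{N-k-2}+3)/2}$ contributes $\tfrac{3^{N-k-2}+3}{2}\log\bigl(H(k)+\tfrac12\bigr)$ to the logarithm, so after dividing by $|V_N|$ the full $\tfrac34$-series contribution is $\sum_{k=0}^{N-2}\tfrac{3^{N-k-2}+3}{3^{N+1}+3}\log\bigl(H(k)+\tfrac12\bigr)$, with an analogous expression for the $\tfrac54$-series built from $H(k)+\tfrac52$. For each fixed $k$ the coefficient tends to $3^{-k-3}$, which I would rewrite as $\tfrac29\cdot\tfrac13\bigl(\tfrac23\bigr)^k\tfrac{1}{2^{k+1}}$ to match the stated form (this rewriting reflects that $H(k)+\tfrac12$ is the product of the $2^{k+1}$ eigenvalues at decimation depth $k$, weighted by the asymptotic fraction of the spectrum they occupy). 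The one genuine point of care is justifying the interchange of $\lim_N$ with the summation, whose number of terms grows with $N$; this is an application of Tannery's theorem (dominated convergence for series), for which I must produce a summable, $N$-independent dominating sequence.

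The dominating bound is where the recursion $H(k)=[H(k-1)]^2-\tfrac{15}{4}$ enters, and this is the step I expect to be the crux, although it remains elementary. Since $H(0),\tilde H(0),\hat H(0)>\tfrac52$ (the larger fixed point of $x\mapsto x^2-\tfrac{15}{4}$), each sequence increases to $+\infty$ and obeys $H(k)\le[H(k-1)]^2$, whence by induction $\log H(k)\le 2^k\log H(0)$ and therefore $\log\bigl(H(k)+\tfrac12\bigr)=O(2^k)$. Consequently the summands are $O\bigl(3^{-k}2^k\bigr)=O\bigl((\tfrac23)^k\bigr)$, the infinite series converges, and one checks that the finite-$N$ coefficients are bounded by $C\,3^{-k-3}\log\bigl(H(k)+\tfrac12\bigr)$ uniformly in $N$ for a universal constant $C$, which is exactly the required dominating sequence. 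Passing to the limit yields the stated series; the boundary mismatch between the upper summation limits $N-2$ and $N-3$ (respectively $N-3$ and $N-4$) is immaterial, since it affects only finitely many terms whose coefficients vanish in the limit. The three cases are structurally identical, differing only in the explicit prime-power exponents supplied by Theorem \ref{thm:3} and in the initial data $H(0),\tilde H(0),\hat H(0)$.
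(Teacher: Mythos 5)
Your proposal is correct and follows essentially the same route as the paper: substitute the determinant formulae of Theorem \ref{thm:3} into \eqref{eq:ac}, extract the rational coefficients from the prime-power exponents (with the affine-in-$N$ remainders vanishing), and control the inhomogeneous $H$-products by noting that the finite-$N$ coefficients tend to $3^{-k-3}$ while $\log(H(k)+\tfrac12)=O(2^k)$, so the double limit can be interchanged. The only cosmetic difference is that the paper establishes convergence of $\xi_k:=2^{-k}\log(H(k)+\tfrac12)$ via a telescoping estimate on the recursion $H(k)=[H(k-1)]^2+C$ and then collects the $\Theta(3^{-N})$ errors into $\Theta((\tfrac23)^N)$, whereas you use the one-sided bound $\log H(k)\le 2^k\log H(0)$ together with Tannery's theorem; these are interchangeable forms of the same dominated-convergence argument.
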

These follow from Theorem \ref{thm:3} via a computation that is briefly described in \S\ref{sec:pfthm3}. 
If we replace each infinite sum by the corresponding partial sum up to $k=40$, the following lower bounds are obtained: 
\begin{align*}
\mathfrak{h}(SG, \mathcal{L}^{(\frac{1}{2},\frac{1}{2})}_\infty) \geq 1.26388, \quad
\mathfrak{h}(SG, \mathcal{L}^{(\frac{1}{2},0)}_\infty) \geq 1.41685, \quad
\mathfrak{h}(SG, \mathcal{L}^{(0,\frac{1}{2})}_\infty) \geq 1.30625.
\end{align*}
Thus each of the three asymptotic complexities is larger than the tree entropy \eqref{eq:TESG}.

The CRSF asymptotic complexity has a probabilistic interpretation which we explain now.
Let $G$ be a finite connected graph, and $B$ be a subset of $V(G)$ which we declare as the boundary set.
An \textbf{essential CRSF} on $(G,B)$ is a spanning subgraph of $G$, each of whose connected components is either a unicycle not containing any vertex in $B$, or a tree containing a unique vertex in $B$.
The corresponding matrix-CRSF theorem is the analog of Proposition \ref{prop:MCRSFT}, where on the LHS the Laplacian carries Dirichlet boundary condition on $B$,
\begin{align}
(\mathcal{L}^\omega_{(G,B,{\bf c})} u)(x) =  \sum_{y\sim x} {\bf c}_{xy} (u(x) - \omega_{xy} u(y)), \quad u\in \mathbb{C}^{V\setminus B},
\end{align}
the sum being over the neighbors $y$ of $x$ ($y$ can be in $B$); and on the RHS the sum runs over all oriented essential CRSFs.
Let us denote the essential CRSF measure on $(G, B, {\bf c}, \omega)$ by $\mathbb{P}^\omega_{(G,B, {\bf c})}$.
Since the conductance will not play a role in the remainder of this discussion, we will suppress the subscript ${\bf c}$ in what follows.

Let $\mathcal{L}^{\rm Id}_{(G,B)}$ be the magnetic Laplacian on $(G,B)$ with the trivial connection.
It is easy to see from the definition of the CRSF measure that
\begin{align}
\mathbb{P}^\omega_{(G,B)}[\text{no loops}] = \frac{\det \left(\mathcal{L}^{\rm Id}_{(G,B)}\right)}{\det\left(\mathcal{L}^\omega_{(G,B)}\right)}.
\end{align}
We simplify further to get
\begin{align}
\label{eq:logP}
\frac{1}{|V(G)|}\log\mathbb{P}^\omega_{(G,B)}[\text{no loops}] = \frac{\log\left(w(G) \det\left(\mathcal{L}^{\rm Id}_{(G,B)}\right) \right)}{|V(G)|} -\frac{\log\left(w(G) \det\left(\mathcal{L}^\omega_{(G,B)}\right) \right)}{|V(G)|}.
\end{align}
\emph{cf.\@} \cite[Proposition 3.6 and Corollary 3.7]{KasselLevy}.

Suppose we have an increasing sequence of graphs with boundary $\{(G_N,B_N)\}_N$ tending to $G_\infty$ with $\frac{|B_N|}{|V(G_N)|} \to 0$.
Furthermore, suppose that the essential tree entropy and the essential CRSF asymptotic complexity exist. 
Then \eqref{eq:logP} says that the difference of the two asymptotic complexities gives the rate of exponential decay in the probability of observing no loops under $\mathbb{P}^\omega_{(G_N, B_N)}$ as $N\to\infty$; that is, we may define
\begin{align}
\mathfrak{h}_{\rm loop}(G_\infty, \mathcal{L}_\infty^\omega) := \mathfrak{h}(G_\infty, \mathcal{L}^\omega_\infty) - \mathfrak{h}(G_\infty, \mathcal{L}^{\rm Id}_\infty).
\end{align}
Then from \eqref{eq:logP} we obtain
\begin{align}
\label{eq:loopentropy}
\lim_{N\to\infty} \frac{1}{|V(G_N)|}\log\mathbb{P}^\omega_{(G_N, B_N)}[\text{no loops}] = -\mathfrak{h}_{\rm loop}(G_\infty, \mathcal{L}^\omega_\infty).
\end{align}

Let us apply this principle to $SG$.
Recall that in defining the magnetic Laplacians on $SG$, we did not impose any boundary condition.
That said, we can add a single point $b$ and connect it to the origin by an edge of conductance $c$, and regard $G_N \cup \{b\}$ as the graph with boundary $B=\{b\}$.
This one-point modification introduces correction terms on the RHS of \eqref{eq:loopentropy} that vanish as $c\downarrow 0$ for every $N$.
As a result, we can apply \eqref{eq:TESG} and Corollary \ref{cor:ent} to obtain the following asymptotic result.

\begin{corollary}
\label{cor:loopsoupentropy}
Let $\mathbb{P}^{(\alpha,\beta)}_{N,c}$ be the essential CRSF measure on $G_N \cup \{b\}$ where an edge of conductance $c$ connects $o$ and $b$, and with fluxes $\alpha$ and $\beta$ as before.
For $(\alpha,\beta) \in \left\{(0,\frac{1}{2}), (\frac{1}{2},0), (\frac{1}{2},\frac{1}{2})\right\}$,
\begin{align}
\lim_{N\to\infty} \lim_{c\downarrow 0} \frac{1}{|V_N|} \log \mathbb{P}^{(\alpha, \beta)}_{N,c}\left[{\rm no~loops}\right] = -\mathfrak{h}_{\rm loop}(SG, \mathcal{L}^{(\alpha,\beta)}_\infty),
\end{align}
where $\mathfrak{h}_{\rm loop}(SG, \mathcal{L}^{(\alpha,\beta)}_\infty) =\mathfrak{h}(SG,\mathcal{L}^{(\alpha,\beta)}_\infty)- \mathfrak{h}(SG, \mathcal{L}^{(0,0)}_\infty)$, and each term on the RHS was defined in Corollary \ref{cor:ent} and \eqref{eq:TESG}, respectively.
\end{corollary}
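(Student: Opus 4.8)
The plan is to start from the identity \eqref{eq:logP}, which itself follows from the essential matrix-CRSF theorem (the boundary analogue of Proposition~\ref{prop:MCRSFT}), applied to the augmented graph $(G_N\cup\{b\},\{b\})$ with the single boundary edge $ob$ of conductance $c$. This expresses
$$
\frac{1}{|V_N|+1}\log\mathbb{P}^{(\alpha,\beta)}_{N,c}[\text{no loops}]
= \frac{\log\!\big(w\,\det\mathcal{L}^{\rm Id}_{N,c}\big)}{|V_N|+1} - \frac{\log\!\big(w\,\det\mathcal{L}^{(\alpha,\beta)}_{N,c}\big)}{|V_N|+1},
$$
where $w=w(G_N\cup\{b\})$ and $\mathcal{L}^{\omega}_{N,c}$ is the Dirichlet magnetic Laplacian on $\mathbb{C}^{V_N}$. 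The task thus reduces to computing the two augmented asymptotic complexities $\lim_N\frac{1}{|V_N|}\log\big(w\,\det\mathcal{L}^{\omega}_{N,c}\big)$ for $\omega\in\{{\rm Id},(\alpha,\beta)\}$ and showing that \emph{for each fixed} $c>0$ they coincide with the bulk values $\mathfrak h(SG,\mathcal L^{(0,0)}_\infty)$ of \eqref{eq:TESG} and $\mathfrak h(SG,\mathcal L^{(\alpha,\beta)}_\infty)$ of Corollary~\ref{cor:ent}, after which the outer $c\downarrow 0$ limit is immediate. Since $|B_N|/|V_N|=1/|V_N|\to 0$ and $w(G_N\cup\{b\})$ differs from $\psi(G_N)$ only by the bounded factors coming from $\deg(b)=1$ and the modified $\deg(o)$, these prefactor changes are $o(|V_N|)$ and may be discarded.

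For the flux-$(\alpha,\beta)$ term I would write the Dirichlet operator as the rank-one perturbation $\mathcal L^{(\alpha,\beta)}_{N,c}=\mathcal L^{(\alpha,\beta)}_N+c\,E_{oo}$ of the bulk magnetic Laplacian, which Theorem~\ref{thm:1} shows to be invertible: its smallest eigenvalue is $\tfrac12$, $\tfrac12$, $\tfrac14$ in the three cases, hence bounded below by $\tfrac14$ uniformly in $N$. The matrix-determinant lemma then gives
$$
\log\det\mathcal L^{(\alpha,\beta)}_{N,c} = \log\det\mathcal L^{(\alpha,\beta)}_N + \log\!\big(1+c\,G^{(\alpha,\beta)}_N(o,o)\big),
$$
with $G^{(\alpha,\beta)}_N(o,o)=\langle\delta_o,(\mathcal L^{(\alpha,\beta)}_N)^{-1}\delta_o\rangle$. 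The uniform spectral gap bounds $(\mathcal L^{(\alpha,\beta)}_N)^{-1}$ uniformly, so $G^{(\alpha,\beta)}_N(o,o)$ stays bounded and in fact converges to the diagonal Green's function on $G_\infty$; the correction is therefore $O(1)$ in $N$ and vanishes after division by $|V_N|$. Combined with Theorem~\ref{thm:3} and Corollary~\ref{cor:ent}, this identifies the augmented complexity with $\mathfrak h(SG,\mathcal L^{(\alpha,\beta)}_\infty)$.

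The trivial-connection term requires regularizing the zero mode. Here $\mathcal L^{(0,0)}_N$ has a \emph{simple} zero eigenvalue with normalized eigenvector $\phi_0=|V_N|^{-1/2}\mathbf{1}$ (Perron–Frobenius), so $\det\mathcal L^{(0,0)}_N=0$ and one must work with ${\det}'$ as in \eqref{det1}. First-order perturbation theory for the coupling $c\,E_{oo}$ lifts this eigenvalue to $c\,|\langle\delta_o,\phi_0\rangle|^2+O(c^2)=c/|V_N|+O(c^2)$, while shifting the nonzero eigenvalues by $O(c)$, whence
$$
\log\det\mathcal L^{\rm Id}_{N,c} = \log{\det}'\mathcal L^{(0,0)}_N + \log\!\frac{c}{|V_N|} + o(1).
$$
Dividing by $|V_N|$, the term $\tfrac{1}{|V_N|}\big(\log c-\log|V_N|\big)$ is negligible, so the augmented complexity equals $\mathfrak h(SG,\mathcal L^{(0,0)}_\infty)$ of \eqref{eq:TESG}. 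Subtracting the two limits gives $\lim_N\frac{1}{|V_N|}\log\mathbb P^{(\alpha,\beta)}_{N,c}[\text{no loops}]=\mathfrak h(SG,\mathcal L^{(0,0)}_\infty)-\mathfrak h(SG,\mathcal L^{(\alpha,\beta)}_\infty)=-\mathfrak h_{\rm loop}(SG,\mathcal L^{(\alpha,\beta)}_\infty)$, independent of $c$, and the stated formulas follow by inserting Corollary~\ref{cor:ent} and \eqref{eq:TESG}.

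\textbf{Main obstacle.} The delicate issue is the handling of the two limits. At \emph{fixed} $N$ the $c\downarrow 0$ operation is singular: $\det\mathcal L^{\rm Id}_{N,c}$ vanishes linearly in $c$ (loop-free configurations all require the edge $ob$), whereas $\det\mathcal L^{(\alpha,\beta)}_{N,c}\to\det\mathcal L^{(\alpha,\beta)}_N\neq 0$, so $\mathbb P^{(\alpha,\beta)}_{N,c}[\text{no loops}]\to 0$ and a naive evaluation would produce $-\infty$. The finite free energy emerges only after taking $N\to\infty$ first, which is precisely where the regularization factor $\log(c/|V_N|)$ becomes harmless. I therefore expect the bulk of the work to be a uniform-in-$N$ control of the zero-mode splitting and of $G^{(\alpha,\beta)}_N(o,o)$ strong enough to justify evaluating the iterated limit in \eqref{eq:loopentropy} in this order, together with verifying that the essential-CRSF asymptotic complexity genuinely exists so that \eqref{eq:loopentropy} applies to the augmented sequence.
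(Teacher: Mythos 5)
Your proposal follows the same route as the paper's (very terse, one-paragraph) justification: apply the boundary matrix-CRSF identity \eqref{eq:logP} to the augmented pair $(G_N\cup\{b\},\{b\})$, control the two Dirichlet determinants, and invoke \eqref{eq:TESG} and Corollary \ref{cor:ent}. Your fleshing-out of the two determinant estimates is correct. For the magnetic term, the matrix-determinant lemma gives $\det\mathcal{L}^{(\alpha,\beta)}_{N,c}=\det\mathcal{L}^{(\alpha,\beta)}_N\bigl(1+c\,G^{(\alpha,\beta)}_N(o,o)\bigr)$ up to degree-normalization constants, and the uniform lower bound $\tfrac14$ on the spectrum (read off from Theorem \ref{thm:1}: minima $\tfrac12,\tfrac12,\tfrac14$ in the three cases) keeps $G^{(\alpha,\beta)}_N(o,o)$ bounded uniformly in $N$, so this correction is $O(1)$ and dies after division by $|V_N|$. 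For the trivial connection, the same lemma is in fact exact: $\det\mathcal{L}^{\rm Id}_{N,c}$ equals a constant multiple of $c$ times the $oo$-cofactor, i.e.\ $\log\det\mathcal{L}^{\rm Id}_{N,c}=\log{\det}'\mathcal{L}^{(0,0)}_N+\log c-\log|V_N|+O(1)$, so no perturbative error terms are even needed. Hence for each \emph{fixed} $c>0$ the two augmented complexities converge to $\mathfrak h(SG,\mathcal{L}^{(\alpha,\beta)}_\infty)$ and $\mathfrak h(SG,\mathcal{L}^{(0,0)}_\infty)$, and the difference is $c$-independent.

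One point deserves emphasis: the ``main obstacle'' you flag is real, but it is a defect of the statement as printed, not of your argument, and you should not expect to overcome it. At fixed $N$, every loop-free essential CRSF must use the edge $ob$ (the tree component containing $b$ has no other access to $b$), so $\mathbb{P}^{(\alpha,\beta)}_{N,c}[\text{no loops}]$ is of order $c$ with $N$-dependent constants, and the inner limit $\lim_{c\downarrow 0}$ at fixed $N$ is literally $-\infty$; correspondingly, the paper's claim that the one-point corrections ``vanish as $c\downarrow 0$ for every $N$'' is true for the magnetic determinant but false for the trivial-connection determinant, which vanishes identically in that limit. No uniform-in-$N$ control of the zero-mode splitting or of $G^{(\alpha,\beta)}_N(o,o)$ can justify the iterated limit in the displayed order, so you should drop the final hedge in your last paragraph: what you actually prove --- $N\to\infty$ at fixed $c$, with a $c$-independent answer, after which $c\downarrow 0$ is vacuous --- is the only tenable reading of the corollary and is exactly the content of the paper's intended argument via \eqref{eq:loopentropy}.
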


\subsection{Proofs of Theorem \ref{thm:3} and Corollary \ref{cor:ent}}
\label{sec:pfthm3}

First let us recall some simple identities. 
Let $P(x) = a_d x^d+ a_{d-1} x^{d-1}+ \cdots + a_0$ be a polynomial of degree $d$. Then
\begin{align*}
\{z: z\in P^{-1}(\alpha)\} &=  \left\{z: \alpha =a_d z^d + a_{d-1}z^{d-1}+\cdots + a_0 \right\}\\
&=\left\{z: a_d z^d + a_{d-1} z^{d-1} +\cdots + (a_0-\alpha)=0\right\}.
\end{align*}
It follows that
\begin{equation}
\label{eq:simple}
\sum_{z\in P^{-1}(\alpha)} z = - \frac{a_{d-1}}{a_d} \quad \text{and} \quad
\prod_{z \in P^{-1}(\alpha)} z  = (-1)^d \frac{a_0-\alpha}{a_d}.
\end{equation}
Assume $R(x)=b_2 x^2+ b_1 x$ is a quadratic polynomial function with the property that $R(0)=0$.
Using \eqref{eq:simple} and induction on $n$, it is easy to deduce that
\begin{align}
\label{eq:prodzR}
\prod_{z\in R^{-n}(\alpha)} z = -\alpha \frac{b_2}{(b_2)^{2^n}}.
\end{align}
See \cite{AnemaTsougkas}*{Lemma 3.3} for a more general version where $R$ is a rational function satisfying $R(0)=0$.
Let us note that the computation of ${\rm det}'(\mathcal{L}^{(0,0)}_N)$ uses \eqref{eq:prodzR} to treat the preimages $R(0,0,\cdot)^{-k}(\alpha)$, $\alpha\in \{\frac{3}{4}, \frac{5}{4}\}$.
For details we refer the reader to \cite{AnemaTsougkas}*{Theorem 5.1}.


For the computation of $\det(\mathcal{L}_N^{(\alpha,\beta)})$ where $(\alpha,\beta)\in \{(\frac{1}{2},0), (0,\frac{1}{2}), (\frac{1}{2}, \frac{1}{2})\}$, we need to involve two additional quadratic polynomials $P$ and $Q$, without the requirement that $P(0)=0$ or $Q(0)=0$. 

\begin{lemma}
\label{lem:recur}
Let $P(x)=a_2 x^2 + a_1 x + a_0$ and $R(x) = b_2 x^2+ b_1 x$. Then
\begin{align*}
F(n, \alpha):=\prod_{z\in P^{-1}(R^{-n}(\alpha))} z 
= c_{n,1}\alpha + c_{n,0},
\end{align*}
where $\displaystyle c_{n,1}= -\frac{b_2}{(a_2 b_2)^{2^n}}$, $\displaystyle c_{n,0}=\frac{1}{(a_2 b_2)^{2^n}} \left(H(n)-\frac{b_1}{2}\right)$, and $H(n)$ satisfies the recurrence relation
\begin{align}
\label{eq:Grec}
H(0)= a_0 b_2 + \frac{b_1}{2}, \quad \text{and for } n\geq 1,~
H(n) = [H(n-1)]^2 + \frac{b_1(2-b_1)}{4}.
\end{align}
\end{lemma}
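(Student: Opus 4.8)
The plan is to peel off the outer map $P$ first and then dispatch the iterated backward orbit of $R$ by a telescoping identity. Writing $R^{-n}(\alpha)$ for the $2^n$ points (with multiplicity) in the $n$-th backward orbit of $\alpha$ under $R$, I would begin by factoring the product over the two layers,
\[
F(n,\alpha) = \prod_{w\in R^{-n}(\alpha)} \prod_{z\in P^{-1}(w)} z.
\]
Since $P$ has degree $2$ with leading coefficient $a_2$ and constant term $a_0$, the inner product is read off from \eqref{eq:simple} as $\prod_{z\in P^{-1}(w)} z = \frac{a_0-w}{a_2}$. Hence
\[
F(n,\alpha) = \frac{1}{a_2^{2^n}} \prod_{w\in R^{-n}(\alpha)} (a_0 - w),
\]
and it remains only to evaluate $\Pi_n(\xi) := \prod_{w\in R^{-n}(\alpha)}(\xi - w)$ at the particular point $\xi = a_0$.

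The key step, and the observation that makes the whole computation collapse, is a one-step backward identity for $R$. Given $v$, its two preimages $w$ solve $b_2 w^2 + b_1 w - v = 0$, so by Vieta's formulas $\prod_{w\in R^{-1}(v)}(\xi - w) = \xi^2 + \frac{b_1}{b_2}\xi - \frac{v}{b_2} = \frac{R(\xi) - v}{b_2}$, using $R(\xi) = b_2\xi^2 + b_1\xi$. Grouping the backward orbit as $R^{-n}(\alpha) = R^{-1}\bigl(R^{-(n-1)}(\alpha)\bigr)$ then produces the recursion $\Pi_n(\xi) = b_2^{-2^{n-1}}\,\Pi_{n-1}(R(\xi))$, with base case $\Pi_0(\xi) = \xi - \alpha$. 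Iterating $n$ times and summing the exponents $2^{n-1}+\cdots+2^0 = 2^n-1$ gives $\Pi_n(\xi) = b_2^{-(2^n-1)}\bigl(R^{\circ n}(\xi) - \alpha\bigr)$. Substituting $\xi = a_0$ and combining with the display above yields the closed form $F(n,\alpha) = \dfrac{b_2\bigl(R^{\circ n}(a_0) - \alpha\bigr)}{(a_2 b_2)^{2^n}}$.

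Finally I would read off the affine dependence on $\alpha$: the coefficient of $\alpha$ is $-\frac{b_2}{(a_2 b_2)^{2^n}} = c_{n,1}$, and the constant term is $\frac{b_2 R^{\circ n}(a_0)}{(a_2 b_2)^{2^n}}$. Setting $H(n) := b_2 R^{\circ n}(a_0) + \frac{b_1}{2}$ matches this constant term against the claimed $c_{n,0} = \frac{1}{(a_2 b_2)^{2^n}}\bigl(H(n) - \frac{b_1}{2}\bigr)$, so it suffices to check that this $H(n)$ obeys the stated recurrence. Writing $y_n = R^{\circ n}(a_0)$, one has $y_n = b_2 y_{n-1}^2 + b_1 y_{n-1}$ and $[H(n-1)]^2 = b_2^2 y_{n-1}^2 + b_1 b_2 y_{n-1} + \frac{b_1^2}{4}$, whence a direct comparison gives $H(n) = [H(n-1)]^2 - \frac{b_1^2}{4} + \frac{b_1}{2} = [H(n-1)]^2 + \frac{b_1(2-b_1)}{4}$, with $H(0) = a_0 b_2 + \frac{b_1}{2}$. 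There is no serious obstacle in this argument; the only point demanding care is the bookkeeping of the powers of $b_2$ in the telescoping, which the identity $\prod_{w\in R^{-1}(v)}(\xi - w) = (R(\xi)-v)/b_2$ handles cleanly.
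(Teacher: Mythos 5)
Your proof is correct, and it takes a genuinely different route from the paper's. The paper proves the lemma by induction on $n$, peeling off one backward step of $R$ from the outside via $F(n+1,\alpha)=F\bigl(n,R^{-1}_{(1)}(\alpha)\bigr)\,F\bigl(n,R^{-1}_{(2)}(\alpha)\bigr)$ and Vieta's formulas for the sum and product of the two preimages; this yields a coupled quadratic recurrence for the coefficients $c_{n,1},c_{n,0}$, which is then solved by two successive changes of variables, so that $H(n)$ appears only as a bookkeeping device. You instead peel off $P$ first and exploit the fact that the monic polynomial whose root set (with multiplicity) is $R^{-n}(\alpha)$ equals $b_2^{-(2^n-1)}\bigl(R^{\circ n}(\xi)-\alpha\bigr)$ --- your one-step identity $\prod_{w\in R^{-1}(v)}(\xi-w)=(R(\xi)-v)/b_2$ is precisely this fact telescoped --- arriving at the closed form $F(n,\alpha)=b_2\bigl(R^{\circ n}(a_0)-\alpha\bigr)/(a_2b_2)^{2^n}$. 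All the bookkeeping checks out: the factor $b_2^{-2^{n-1}}$ per step (since $|R^{-(n-1)}(\alpha)|=2^{n-1}$), the exponent sum $2^n-1$, and the verification $[H(n-1)]^2=b_2^2y_{n-1}^2+b_1b_2y_{n-1}+\tfrac{b_1^2}{4}$ giving \eqref{eq:Grec}. Your route buys something the paper's induction leaves invisible: the explicit identification $H(n)=b_2\,R^{\circ n}(a_0)+\tfrac{b_1}{2}$, i.e., $H$ is the forward orbit of $a_0$ under the conjugate of $R$ by the affine map $y\mapsto b_2y+\tfrac{b_1}{2}$, which explains structurally why the recurrence is the monic centered form $z\mapsto z^2+\tfrac{b_1(2-b_1)}{4}$, and which recovers the paper's remark that $a_0=0$ forces $H(n)\equiv\tfrac{b_1}{2}$ and $c_{n,0}=0$ (since $R(0)=0$). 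It also extends with no extra effort to Lemma \ref{lem:recur2}: the same telescoping gives $\tilde F(n,\alpha)=b_2\bigl(R^{\circ n}(P(q_0))-\alpha\bigr)/(q_2^2a_2b_2)^{2^n}$, consistent with $\tilde H(0)=b_2P(q_0)+\tfrac{b_1}{2}$ there. The paper's inductive scheme, for its part, is more mechanical and does not require guessing or deriving a closed form, which is why it transfers verbatim to the three-layer setting.
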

\begin{proof}
We prove this by induction on $n$.
When $n=0$, we have by \eqref{eq:simple} that $F(0,\alpha)=\frac{a_0-\alpha}{a_2}$.
Now suppose $F(n,\alpha)= c_{n,1} \alpha + c_{n,0}$ holds.
Then denoting the two preimages of $R$ by $R^{-1}_{(1)}$ and $R^{-1}_{(2)}$, we have
\begin{align*}
F&(n+1,\alpha) = 
\prod\left\{P^{-1}(R^{-(n+1)}(w)): w=\alpha\right\}
=
\prod\left\{P^{-1}(R^{-n}(w)): w\in R^{-1}(\alpha)\right\}\\
&= F(n, R^{-1}_{(1)}(\alpha)) F(n, R^{-1}_{(2)}(\alpha))\\
&= c_{n,1}^2 \left(R^{-1}_{(1)}(\alpha) R^{-1}_{(2)}(\alpha)\right)
+c_{n,0}c_{n,1} \left(R^{-1}_{(1)}(\alpha) + R^{-1}_{(2)}(\alpha)\right)
+ c_{n,0}^2 \quad (\text{induction hypothesis})\\
&= c_{n,1}^2 \left(\frac{-\alpha}{b_2}\right) + c_{n,0} c_{n,1} \left(-\frac{b_1}{b_2}\right) + c_{n,0}^2 \quad (\text{by }\eqref{eq:simple}) \\
&= \left(-\frac{c_{n,1}^2}{b_2}\right)\alpha + \left[c_{n,0}^2 - \frac{b_1}{b_2} c_{n,0} c_{n,1} \right].
\end{align*}
We have thus obtained a system of quadratic recurrence relations
\[
c_{n+1, 1} = -\frac{1}{b_2} c_{n,1}^2, \quad c_{n+1,0} = c_{n,0}^2 - \frac{b_1}{b_2} c_{n,0} c_{n,1}
\]
with initial condition $c_{0,1}=-\frac{1}{a_2}$ and $c_{0,0}=\frac{a_0}{a_2}$.
It readily follows that $c_{n,1} = -b_2 /(a_2 b_2)^{2^n}$ and
\[
c_{n+1,0} = c_{n,0}^2 + \frac{b_1}{(a_2 b_2)^{2^n}} c_{n,0} .
\]
To better see the latter relation, we perform a change of variables $g(n) = c_{n,0}+ \frac{1}{2}\frac{b_1}{(a_2 b_2)^{2^n}}$ to obtain
\[
g(n+1) = [g(n)]^2 + \frac{1}{4}\frac{b_1(2-b_1)}{(a_2 b_2)^{2^{n+1}}}.
\]
Making another change of variables to $H(n) = g(n) (a_2 b_2)^{2^n}$, we deduce
\eqref{eq:Grec}.
\end{proof}

In general the quadratic recurrence \eqref{eq:Grec} cannot be solved in closed form, unless the constant term on the RHS is either $0$ or $-2$ \cite{QuadRec}, or under special initial conditions. 
To give an example of a special initial condition: if $a_0=0$, then $G(n)=\frac{b_1}{2}$---and hence $c_{n,0}=0$---for all $n$.

\begin{lemma}
\label{lem:recur2}
Let $P$ and $R$ be as in Lemma \ref{lem:recur}, and $Q(x)= q_2 x^2+ q_1 x + q_0$. Then
\begin{align*}
\tilde{F}(n,\alpha) :=\prod_{z\in Q^{-1}\circ  P^{-1}(R^{-n}(\alpha))} z = \tilde{c}_{n,1}\alpha+ \tilde{c}_{n,0},
\end{align*}
where $\displaystyle \tilde{c}_{n,1}=-\frac{b_2}{(q_2^2 a_2 b_2)^{2^n}} $, $\displaystyle \tilde{c}_{n,0}= \frac{1}{(q_2^2 a_2 b_2)^{2^n}} \left(\tilde{H}(n)-\frac{b_1}{2}\right)$, and $\tilde{H}(n)$ satisfies the recurrence relation
\begin{align}
\tilde{H}(0)=a_2 b_2\left(q_0^2+ q_0\frac{a_1}{a_2} + \frac{a_0}{a_2}\right) + \frac{b_1}{2}, \quad \text{and for } n\geq 1,~
\tilde{H}(n) = [\tilde{H}(n-1)]^2 + \frac{b_1(2-b_1)}{4}.
\end{align}
\end{lemma}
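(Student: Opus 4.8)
The plan is to mimic the proof of Lemma \ref{lem:recur} almost verbatim by induction on $n$, observing that introducing the extra quadratic $Q$ alters only the initial data and the ``base'' constant in the closed form, not the structure of the recursion. The key identity driving the induction is the factorization of the product over preimages: since $R^{-(n+1)}(\alpha)=\bigsqcup_{v\in R^{-1}(\alpha)} R^{-n}(v)$, we have
\[
\tilde{F}(n+1,\alpha) = \prod_{z\in Q^{-1}(P^{-1}(R^{-(n+1)}(\alpha)))} z = \prod_{v\in R^{-1}(\alpha)} \tilde{F}(n,v) = \tilde{F}(n,v_1)\,\tilde{F}(n,v_2),
\]
where $\{v_1,v_2\}=R^{-1}(\alpha)$.

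First I would dispatch the base case $n=0$. Writing $P^{-1}(\alpha)=\{w_1,w_2\}$, Vieta's formulas \eqref{eq:simple} give $w_1+w_2=-a_1/a_2$ and $w_1w_2=(a_0-\alpha)/a_2$, while each $w_i$ contributes $\prod_{z\in Q^{-1}(w_i)} z=(q_0-w_i)/q_2$, again by \eqref{eq:simple}. Thus
\[
\tilde{F}(0,\alpha)=\frac{(q_0-w_1)(q_0-w_2)}{q_2^2}=\frac{1}{q_2^2}\left(q_0^2+q_0\frac{a_1}{a_2}+\frac{a_0-\alpha}{a_2}\right),
\]
which is affine in $\alpha$ and yields $\tilde{c}_{0,1}=-1/(q_2^2 a_2)$ and $\tilde{c}_{0,0}=q_2^{-2}(q_0^2+q_0 a_1/a_2+a_0/a_2)$, in agreement with the claimed formulas evaluated at $n=0$.

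For the inductive step I would substitute the hypothesis $\tilde{F}(n,v)=\tilde{c}_{n,1}v+\tilde{c}_{n,0}$ into the factorization above and use $v_1v_2=-\alpha/b_2$, $v_1+v_2=-b_1/b_2$ (from \eqref{eq:simple}, recalling $R(0)=0$) to get $\tilde{F}(n+1,\alpha)=-(\tilde{c}_{n,1}^2/b_2)\alpha+(\tilde{c}_{n,0}^2-(b_1/b_2)\tilde{c}_{n,0}\tilde{c}_{n,1})$. This confirms $\tilde{F}(n+1,\cdot)$ is affine and produces the recurrences
\[
\tilde{c}_{n+1,1}=-\frac{\tilde{c}_{n,1}^2}{b_2},\qquad \tilde{c}_{n+1,0}=\tilde{c}_{n,0}^2-\frac{b_1}{b_2}\tilde{c}_{n,0}\tilde{c}_{n,1}.
\]

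The decisive observation --- and the only place any checking is needed --- is that these are \emph{exactly} the recurrences solved in the proof of Lemma \ref{lem:recur}; only the base constant $a_2b_2$ there is replaced by $q_2^2a_2b_2$ here, with different initial data. Hence I would solve them verbatim: the first gives $\tilde{c}_{n,1}=-b_2/(q_2^2a_2b_2)^{2^n}$ by a one-line induction, and substituting this into the second reduces it to $\tilde{c}_{n+1,0}=\tilde{c}_{n,0}^2+b_1(q_2^2a_2b_2)^{-2^n}\tilde{c}_{n,0}$. The same change of variables $\tilde{H}(n):=\tilde{c}_{n,0}(q_2^2a_2b_2)^{2^n}+\tfrac{b_1}{2}$ linearizes the rescaling and yields $\tilde{H}(n)=[\tilde{H}(n-1)]^2+b_1(2-b_1)/4$, while the base case gives $\tilde{H}(0)=\tilde{c}_{0,0}(q_2^2a_2b_2)+\tfrac{b_1}{2}=a_2b_2(q_0^2+q_0a_1/a_2+a_0/a_2)+\tfrac{b_1}{2}$. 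I do not anticipate a genuine obstacle; tracking the extra factor $q_2^2$ in the base and the modified $\tilde{H}(0)$ is the only delicate bookkeeping. As an independent sanity check, one may note that $P^{-1}(R^{-n}(\alpha))$ is precisely the zero set of the degree-$2^{n+1}$ polynomial $R^{\circ n}(P(\cdot))-\alpha$ (leading coefficient $a_2^{2^n}b_2^{2^n-1}$), so applying \eqref{eq:simple} to $Q$ on each preimage and evaluating this polynomial at $q_0$ gives the compact closed form $\tilde{F}(n,\alpha)=(R^{\circ n}(P(q_0))-\alpha)\big/\big((q_2^2a_2)^{2^n}b_2^{2^n-1}\big)$, which must coincide with the affine expression $\tilde{c}_{n,1}\alpha+\tilde{c}_{n,0}$.
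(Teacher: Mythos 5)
Your proof is correct and follows essentially the same route as the paper's: the base case $n=0$ is computed exactly as in the paper via Vieta's formulas, and the inductive step you write out explicitly is precisely the one the paper omits with the remark that it mirrors Lemma \ref{lem:recur} up to the replacement of $a_2b_2$ by $q_2^2a_2b_2$ and the modified initial data. Your closing sanity check, identifying $\tilde{F}(n,\alpha)$ with $\bigl(R^{\circ n}(P(q_0))-\alpha\bigr)\big/\bigl((q_2^2a_2)^{2^n}b_2^{2^n-1}\bigr)$ by evaluating the degree-$2^{n+1}$ polynomial $R^{\circ n}(P(\cdot))-\alpha$ at $q_0$, is a correct and pleasant addition not present in the paper, though not a genuinely different proof.
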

\begin{proof}
We focus on the initial step $n=0$:
\begin{align*}
\tilde{F}(0,\alpha) &= 
\left(
Q^{-1}_{(1)}(P^{-1}_{(1)}(\alpha))
Q^{-1}_{(2)}(P^{-1}_{(1)}(\alpha))
\right)
\left(
Q^{-1}_{(1)}(P^{-1}_{(2)}(\alpha))
Q^{-1}_{(2)}(P^{-1}_{(2)}(\alpha))
\right)\\
&= \left(\frac{q_0- P^{-1}_{(1)}(\alpha)}{q_2}\right)
 \left(\frac{q_0- P^{-1}_{(2)}(\alpha)}{q_2}\right) \quad (\text{by \eqref{eq:simple}})\\
 &= \frac{1}{q_2^2} \left(q_0^2 - q_0 (P^{-1}_{(1)}(\alpha) + P^{-1}_{(2)}(\alpha)) + P^{-1}_{(1)}(\alpha) P^{-1}_{(2)}(\alpha)\right) \\
 &=\left(- \frac{1}{q_2^2 a_2}\right) \alpha + \frac{1}{q_2^2}\left(q_0^2+q_0\frac{a_1}{a_2} + \frac{a_0}{a_2}\right) \quad (\text{by \eqref{eq:simple}}).
\end{align*}
The induction step is the same as in the proof of Lemma \ref{lem:recur}, except for the changes triggered by the initial conditions. The details are therefore omitted.
\end{proof}

\begin{proof}[Proof of Theorem \ref{thm:3}]
Recall Theorem \ref{thm:1}.
For $\det(\mathcal{L}^{(\frac{1}{2}, \frac{1}{2})}_N)$, we use Lemma \ref{lem:recur} to treat the preimages $R(\frac{1}{2},\frac{1}{2},\cdots)^{-1}\circ R(0,0,\cdot)^{-k}(\alpha)$, $\alpha\in \{\frac{3}{4},\frac{5}{4}\}$. Taking into account multiplicities we obtain
\begin{align*}
\det&(\mathcal{L}^{(\frac{1}{2},\frac{1}{2})}_N)=
\left(\frac{1}{2}\right)^{\frac{3^N+3}{2}} \left(\frac{3}{4}\right)^{\frac{3^{N-1}-1}{2}} \left(\frac{5}{4}\right)^{\frac{3^{N-1}+3}{2}} \cdot 2 \\ 
&\times 
\left[
\prod_{k=0}^{N-2} \left(\left(4\cdot \frac{3}{4} + H(k)-\frac{5}{2}\right)\frac{1}{16^{2^k}}\right)^{\frac{3^{N-k-2}+3}{2}}
\right]
\left[
\prod_{k=0}^{N-3} \left(\left(4\cdot \frac{5}{4} + H(k)-\frac{5}{2}\right)\frac{1}{16^{2^k}}\right)^{\frac{3^{N-k-2}-1}{2}}
\right],
\end{align*}
where $H(0)=26.5$, and for $k\geq 1$, $H(k)=[H(k-1)]^2 - \frac{15}{4}$.

For the other two determinants, we apply Lemma \ref{lem:recur2} to treat the preimages and obtain
\begin{align*}
\det&(\mathcal{L}^{(0,\frac{1}{2})}_N)=
\left(\frac{1}{2}\right)^{\frac{3^N+3}{2}} \left(\frac{5}{4}\right)^{\frac{3^{N-1}-1}{2}} \left(\frac{7}{4}\right)^{\frac{3^{N-1}+3}{2}}\cdot \left(\frac{3+\frac{3}{4}}{4}\right)^{\frac{3^{N-2}-1}{2}} \left(\frac{3+\frac{5}{4}}{4}\right)^{\frac{3^{N-2}+3}{2}} \\
&\times
\left[
\prod_{k=0}^{N-3}
\left(
\left(
4\cdot\frac{3}{4} + \tilde{H}(k) - \frac{5}{2}
\right)
\frac{1}{64^{2^k}}
\right)
^{\frac{3^{N-k-3}+3}{2}}
\right]
\left[
\prod_{k=0}^{N-4}
\left(
\left(
4\cdot \frac{5}{4} +\tilde{H}(k) - \frac{5}{2}
\right)
\frac{1}{64^{2^k}}
\right)
^{\frac{3^{N-k-3}-1}{2}}
\right],
\end{align*}
where $\tilde{H}(0)=302.5$, and for $k\geq 1$, $\tilde{H}(k) = [\tilde{H}(k-1)]^2 - \frac{15}{4}$.

\begin{align*}
\det&(\mathcal{L}^{(\frac{1}{2},0)}_N)=
\left(\frac{1}{4}\right)^{\frac{3^{N-1}+3}{2}} \left(\frac{3}{4}\right)^{\frac{3^{N-1}-1}{2}} \left(\frac{3}{2}\right)^{\frac{3^N+3}{2}}\cdot \left(\frac{1+\frac{3}{4}}{4}\right)^{\frac{3^{N-2}-1}{2}} \left(\frac{1+\frac{5}{4}}{4}\right)^{\frac{3^{N-2}+3}{2}} \\
&\times 
\left[
\prod_{k=0}^{N-3}
\left(
\left(
4\cdot\frac{3}{4} + \hat{H}(k) - \frac{5}{2}
\right)
\frac{1}{64^{2^k}}
\right)
^{\frac{3^{N-k-3}+3}{2}}
\right]
\left[
\prod_{k=0}^{N-4}
\left(
\left(
4\cdot \frac{5}{4} +\hat{H}(k) - \frac{5}{2}
\right)
\frac{1}{64^{2^k}}
\right)
^{\frac{3^{N-k-3}-1}{2}}
\right],
\end{align*}
where $\hat{H}(0)=86.5$, and for $k\geq 1$, $\hat{H}(k) = [\hat{H}(k-1)]^2 - \frac{15}{4}$.
\end{proof}

\begin{proof}[Proof of Corollary \ref{cor:ent}]
We present the proof of $\mathfrak{h}(SG, \mathcal{L}^{(\frac{1}{2},\frac{1}{2})}_\infty)$ only, the other two being similar.
In particular, we only demonstrate the contribution to the asymptotic complexity from one of the inhomogeneous products, say,
\begin{equation}
\begin{aligned}
\label{eq:asymp}
\frac{1}{|V_N|}\log\left[\prod_{k=0}^{N-2} \left(4\cdot \frac{3}{4}+H(k)-\frac{5}{2}\right)^{\frac{3^{N-k-2}+3}{2}}\right]
&=\frac{2}{3^{N+1}+3}\sum_{k=0}^{N-2} \frac{3^{N-k-2}+3}{2} \log\left(H(k)+\frac{1}{2}\right) \\
&=\frac{2}{9}\cdot \frac{1}{3}\sum_{k=0}^{N-2} \left(\frac{3^{-k}}{2}+ \Theta(3^{-N})\right) \log\left(H(k)+\frac{1}{2}\right).
\end{aligned}
\end{equation}
Let us observe that if there exists a fixed constant $C$ such that $H(k) = [H(k-1)]^2 +C$ for all $k\geq 1$, then
\[
\log H(k) = 2\log H(k-1) +\log \left(1+\frac{C}{[H(k-1)]^2}\right),
\]
or
\[
\frac{\log H(k)}{2^k} - \frac{\log H(k-1)}{2^{k-1}} = \frac{1}{2^k} \log\left(1+\frac{C}{|H(k-1)|^2}\right) \leq \frac{1}{2^k} \frac{C}{|H(k-1)|^2}.
\]
where we used the inequality $1+x\leq e^x$.
Since $\sum_k \frac{1}{2^k |H(k-1)|^2}$ is summable, it follows that $\lim_{k\to\infty} 2^{-k} \log H(k)$ exists.
By the same rationale, $\xi_k := 2^{-k} \log(H(k)+\frac{1}{2})$ converges as $k\to\infty$.
Thus we can rewrite the RHS of \eqref{eq:asymp} as
\[
\frac{2}{9}\cdot \frac{1}{3} \sum_{k=0}^{N-2} \left(\frac{3^{-k} 2^k}{2} + \Theta(3^{-N})2^k\right) \xi_k = \frac{2}{9}\cdot \frac{1}{3} \sum_{k=0}^{N-2}\left( \frac{2}{3}\right)^k \frac{\xi_k}{2} + \Theta\left(\left(\frac{2}{3}\right)^N\right),
\]
with the intention of collecting terms of order unity.
(Above we used the identity $\frac{1}{3}\sum_{k=0}^\infty \left(\frac{2}{3}\right)^k =1$. 
Also, roughly speaking, the reason for the factor $2$ in $\xi_k/2$ is to account for the double preimages under $R(\frac{1}{2},\frac{1}{2},\cdot)$.)
\end{proof}

\section{Open questions}

We end this paper with several open questions.


\subsubsection*{Spectral questions}

In this paper we focused almost exclusively on the eigenvalues of $\mathcal{L}_N^\omega$, without going into details the structure of the eigenfunctions.
Looking back at the proofs, we saw that the eigenfunctions associated to some, but not all, exceptional values in $\sigma(\mathcal{L}_N^\omega)$ vanish on $V_{N-1}$.
A more careful analysis will show that these eigenfunctions are \emph{localized} with finite support.
The open question is whether the magnetic Laplacian eigenfunctions with finite support are complete, as was the case for the graph Laplacian \cite{Teplyaev}.
(See also \cite{Quint} which considered Dirichlet boundary condition at the origin.)
It seems that the answer should be affirmative at least in the case $\alpha,\beta \in \{0,\frac{1}{2}\}$.

A related problem is to obtain the magnetic spectrum on the compact gasket $K=\overline{\bigcup_{N=0}^\infty \mathfrak{G}_N}$, rather than on the infinite lattice $G_\infty$.
While in the latter case we used that $\lambda\in \sigma(\mathcal{L}_N^{(\alpha,\beta)})$ iff $R(\alpha,\beta,\lambda) \in \sigma(\mathcal{L}_{N-1}^{(\alpha_\downarrow, \beta_\downarrow)})$ for the non-exceptional values $\lambda$, now we find, in the former case, a family of scaling parameters $\{\Lambda(\alpha,\beta)>0: (\alpha,\beta)\in \mathbb{T}^2\}$ such that for each $(\alpha,\beta)\in \mathbb{T}^2$, the limit $\lim_{N\to\infty} [\Lambda(\alpha,\beta)]^N \mathcal{U}^{\circ -N} (\alpha,\beta,w)$ exists for $w$ in a finite subset of $\mathbb{R}$.
The difficulty lies in our lack of explicit understanding of the backward dynamics of $\mathcal{U}$.


\subsubsection*{Point processes induced by the magnetic Laplacian determinants}

To the best of the authors' knowledge, this is the first time CRSFs have been considered in a setting outside of discrete approximations of Euclidean spaces or Riemannian manifolds.
Recall that key results on the properties of the CRSF measure were obtained by Kassel and Kenyon \cite{KK17}, in the context of compact Riemannian surfaces approximated conformally by graphs.
More recently, Finski \cite{finski} established convergence of the CRSF measures for a rank-2 vector bundle on a class of flat surfaces which can be tiled by squares of equal sizes.

Thanks to the explicit characterization of $\sigma(\mathcal{L}_N^{(\alpha,\beta)})$ in the case $\alpha,\beta\in \{0,\frac{1}{2}\}$, we were able to compute the probability of observing no cycles under the CRSF measure on $SG$.
However this is only one aspect of the CRSF measure.
For instance, the following question remains open:

\begin{openquestion}
Characterize the limit point(s) of the sequence of CRSF measures on finite approximations of $SG$.
\end{openquestion}

This question may be solved if one can find the \emph{zeta-regularized determinant}  (also known as the \emph{analytic torsion}) on the compact limit space $K=\overline{\bigcup_{N=0}^\infty \mathfrak{G}_N}$ (as opposed to $G_\infty$), and derive a \emph{renormalized logarithm} of this determinant.
For vector bundle Laplacians we mention the following results:
rank-1 or rank-2 bundles on Euclidean or Riemannian surfaces, see \cite{KK17}*{Theorem 17} and \cite{finski}*{Theorem 1.8}; 
rank-1 bundles on the $d$-dimensional tori, see \cite{Friedli}.
While the problem is open for rank-1 bundles on $SG$, we mention that in \cite{CTT}, the zeta-regularized determinant for the graph Laplacian, and its renormalized logarithm, were computed on $SG$ and several other self-similar fractals.
There it was crucial to derive a functional equation involving the spectral decimation function $R$, and invoke analytic properties of spectral zeta functions.
We are optimistic that this can be done for the magnetic Laplacians in the case $\alpha,\beta\in \{0,\frac{1}{2}\}$.

Local statistics of loops under the CRSF measure on $SG$ is also an open question. 
Based on numerical simulations, we observe a hierarchy in the distribution of unicycle lengths (peaks centered around $3^j$, $j\in \mathbb{N}$), reflecting the spatial self-similarity of $SG$.
 
\begin{openquestion}
Characterize the spectrum of the twisted Laplacian on $SG$ endowed with a rank-2 Hermitian vector bundle with a $SL_2(\mathbb{C})$ connection.
\end{openquestion}

Finally we believe that studying the magnetic Laplacian and the induced CRSF loop measures may shed light on properties of the abelian sandpile model under stationarity.
Recall the well-known bijective correspondence between the sandpile group and spanning trees \cite{MajumdarDhar}.
Kassel and Kenyon \cite{KK17}*{\S6, Question 9} have asked if the loop measures may lead to a better understanding of waves of sandpile avalanches.
On $SG$ we have two specific questions: to prove that the sandpile avalanches exhibit a power law modulated by log-periodic oscillations, which was numerically observed in \cites{Stanley96, DV98, DPV01};\footnote{
A power law modulated by log-periodic oscillations was proved for the growth of deterministic single-source abelian sandpile on $SG$ \cite{CKF19}.}
and to find the sandpile height distributions (or their moments, such as the sandpile density).\footnote{See \cite{MatterThesis}*{Chapter 5} for the proof of sandpile height distributions on the Hanoi tower graphs, a variant of $SG$. A nice exposition of the connection between sandpile density and the looping rate on periodic planar graphs is \cite{KW16}.}

\subsubsection*{Experimental realization of the butterfly}

Last but not least, thanks to advances in scanning electron microscopy and nanoscale engineering over the past 3 decades, there has been impressive progress on measurements of electronic band structures in various (meta)materials, including finite approximations of $SG$.
The most recent work we are aware of is \cite{Kempkes}.
It would be satisfying to see the Hofstadter-Sierpinski butterfly (Figure \ref{fig:newspectrum}) come alive through a laboratory experiment.

\lstset{language=Matlab,%
    basicstyle=\ttfamily\small,%
    breaklines=true,%
    morekeywords={matlab2tikz},
    keywordstyle=\color{blue},%
    morekeywords=[2]{1}, keywordstyle=[2]{\color{black}},
    identifierstyle=\color{black},%
    stringstyle=\color{mylilas},
    commentstyle=\color{mygreen},%
    showstringspaces=false,
    numbers=left,%
    numberstyle={\tiny \color{black}},
    numbersep=9pt, 
    emph=[1]{for,end,break},emphstyle=[1]\color{red}, 
}

\appendix
\section{Numerical approximation of the filled Julia set in Figure \ref{fig:newspectrum}} \label{app}

To numerically generate the filled Julia set of the map $\mathcal{U}$, we initialize with a uniform sample of points $w=(\alpha,\lambda)$ in the rectangle $[0,1]\times [0,2]$.
We then discard points $w$ for which $|\mathcal{U}^k(w)|$ exceeds a threshold ($10$) after $k(=20)$ iterations, and keep those points which remain bounded within.
Below is a working MATLAB code.

\begin{lstlisting}[
	basicstyle=\ttfamily\scriptsize,
]
clear;clc

dpts=301;
lb=0;
ub=2;
th=10;
num_iter=20;

x=@(a,b,l) cos(2*pi*a);
xs=@(a,b,l) sin(2*pi*a);
y=@(a,b,l) cos(2*pi*b);
ys=@(a,b,l) sin(2*pi*b);
cosaplusb=@(a,b,l) x(a,b,l)*y(a,b,l)-xs(a,b,l)*ys(a,b,l);
cosa2plusb=@(a,b,l) (x(a,b,l)^2-xs(a,b,l)^2)*y(a,b,l)-2*xs(a,b,l)*x(a,b,l)*ys(a,b,l);
sinaplusb=@(a,b,l) xs(a,b,l)*y(a,b,l)+x(a,b,l)*ys(a,b,l);
sina2plusb=@(a,b,l) 2*xs(a,b,l)*x(a,b,l)*y(a,b,l)+ys(a,b,l)*(x(a,b,l)^2-xs(a,b,l)^2);
A=@(a,b,l) 16*l^2-(32+4*x(a,b,l))*l+15+4*x(a,b,l)+cosaplusb(a,b,l);
D=@(a,b,l) -l^3+3*l^2-45/16*l+13/16-y(a,b,l)/32;
re_psi=@(a,b,l) (1-l)^2-1/16+(1-l)/4*(2*x(a,b,l)+cosa2plusb(a,b,l))...
    +1/16*(x(a,b,l)^2-xs(a,b,l)^2+2*cosaplusb(a,b,l));
im_psi=@(a,b,l) -(1-l)/4*(2*xs(a,b,l)+sina2plusb(a,b,l))-(1/16)*(2*x(a,b,l)*xs(a,b,l)+2*sinaplusb(a,b,l));
R=@(a,b,l) 1+(A(a,b,l)-64*D(a,b,l)*(1-l))/(16*sqrt(re_psi(a,b,l)^2+im_psi(a,b,l)^2));

aset=linspace(0,1,dpts);
lset=linspace(lb,ub,dpts);
figure
hold on

for i=1:dpts
    for j=1:dpts
        al=aset(i);
        be=aset(i);
        lmd=lset(j);
        count=0;
        while abs(lmd)<th
            count=count+1;
            psi=re_psi(al,be,lmd)+1i*im_psi(al,be,lmd);
            theta=angle(psi);
            if count==num_iter
                plot(aset(i),lset(j),'.','color','k')
                break
            end
            lmd=R(al,be,lmd);
            al_dummy=al;
            be_dummy=be;
            al=mod(3*al_dummy+be_dummy+3*theta/2/pi,1);
            be=mod(3*be_dummy+al_dummy-3*theta/2/pi,1);
        end
    end
end
xlabel('\alpha')
ylabel('\lambda')
title('Filled Julia set of U')
\end{lstlisting}

\bibliographystyle{alpha}
\bibliography{bibliography}

\end{document}